\newcommand{\aref}[1]{\hyperref[#1]{Apêndice~\ref{#1}}} % renomear seção A.1 para Apêndice A.1 (usar \aref)
\newcommand{\quadroname}{Quadro}
\newcommand{\listofquadrosname}{Lista de quadros}
\newcommand{\ket}[1]{\mbox{$ | #1 \rangle $}}
\newcommand{\bra}[1]{\mbox{$ \langle #1 | $}}
\begin{document}
\selectlanguage{English}
\frenchspacing % Retira espaço extra obsoleto entre as frases.

\pretextual
% Define estrutura para teoremas, definições, corolários e lemas
\newtheorem{teoremas}{Theorem}
[chapter]
\newtheorem{definicoes}{Definition}
[chapter]
\newtheorem{lemas}{Lemma}
[chapter]
\newtheorem{corolarios}{Corollary}
[chapter]
\newtheorem{remark}{Remark}
[chapter]
\newtheorem{postuladothermo}{Thermodynamics Postulate}
\newtheorem{postuladoq}{QM Postulate}
\newtheorem{postuladoq2}{QM Postulate (Density Matrix)}
\newtheorem{principle}{Principle}
\begin{center}
\textbf{\foreignlanguage{brazil}{FEDERAL UNIVERSITY OF PARANÁ} \\
\foreignlanguage{brazil}{PHYSICS GRADUATE PROGRAM}} \\
\vspace{6cm}

\textbf{THALES AUGUSTO BARBOSA PINTO SILVA}
\vspace{3.5cm}

\large{\textbf{A DEFINITION OF QUANTUM MECHANICAL WORK}}
\vspace{3cm}

\normalsize
\textbf{ \foreignlanguage{brazil}{DISSERTATION}}
\vfill

\textbf{CURITIBA} \\
\textbf{2018}
\end{center}
%\includepdf[pages={1}]{capa_thales2.pdf}
\begin{center}
THALES AUGUSTO BARBOSA PINTO SILVA \\
\vspace{8.5cm}

A DEFINITION OF QUANTUM MECHANICAL WORK
\normalsize
\vspace{2cm}
\end{center}

\begin{flushright}
\begin{minipage}[9.6cm]{8cm}

\foreignlanguage{brazil}{Master's degree dissertation accepted by the Federal University of Paraná.}

\vspace{0.5cm}
\noindent

Supervisor: Prof. Renato Moreira Angelo$^{1}$ \linebreak
\\
Chair: Prof. Frederico Borges de Brito$^{2}$,\linebreak
Prof. Wilson Marques Junior$^{1}$. \linebreak
\\
(1) Department of Physics, Federal University of Paraná, Curitiba, Brazil.
\\
(2) Physics Institute of São Carlos, University of São Paulo, São Carlos, Brazil.
\end{minipage}
\end{flushright}

\vfill

\begin{center}
\textbf{CURITIBA}\\ 
\textbf{2018}
\end{center}

%\includepdf[pages={1}]{ficha_catalografica.pdf}
%\includepdf[pages={1}]{termo_aprovacao.pdf}
%\input{1-iniciais/termoAprovacao}

%\input{agradecimentosfisica}
\newpage
\begin{epigrafe}
	\vspace*{\fill}
\begin{flushright}
	\textit{"Order is manifestly maintained in the universe..."\\
		James Prescott Joule}\\
\end{flushright}
\begin{flushright}
	\textit{"It appears to me, extremely difficult, if not quite impossible, \\
		to form any distinct idea of anything, capable of being excited and communicated,\\
		 in the manner the heat was excited and communicated in these experiments,\\
		  except it be motion."\\
		Count Rumford}
\end{flushright}
\begin{flushright}
	\textit{"[The] conviction that every mathematical problem \\
		can be solved is a powerful	incentive to us as we work. \\
		We hear within us the perpetual call: There is the problem.\\
		Seek its solution. You can find it by pure thinking,\\
		for in mathematics there is no ignorabimus!" \\
		David Hilbert}
\end{flushright}
\end{epigrafe}
%\begin{center}
\chapter*{ABSTRACT}
%\end{center}

\noindent Quantum Mechanics can be seen as a mathematical framework that describes experimental results associated with microscopic systems. On the other hand, the theory of Classical Thermodynamics (along with Statistical Physics) has been used to characterize macroscopic systems in a general way, whereby mean quantities are considered and the connections among them are formally described by state equations. In order to relate these theories and build up a more general one, recent works have been developed to connect the fundamental ideas of Thermodynamics with quantum principles. This theoretical framework is sometimes called Quantum Thermodynamics. Some key concepts associated with this recent theory are work and heat, which are very well established in the scope of Classical
Thermodynamics and compose the energy conservation principle expressed by the first law of Thermodynamics. Widely accepted definitions of heat and work within the context of Quantum Thermodynamics were introduced by Alicki in his 1979 seminal work.  Although such definitions can be shown to directly satisfy the  first law of Thermodynamics and have been successfully applied to many contexts, there seems to be no deep foundational justification for them. In fact, alternative definitions have been proposed with basis on analogies with Classical Thermodynamics. In the present dissertation, a definition of quantum mechanical work is introduced which preserves the mathematical structure of the classical concept of work without, however, in any way invoking the notion of trajectory. By use of Gaussian states and the Caldirola-Kanai model, a case study is conducted through which the proposed quantum work is compared with Alicki's definition, both in quantum and semiclassical regimes, showing promising results. Conceptual inadequacies of Alicki's model are found in the classical limit and possible interpretations are discussed for the presently introduced notion of work. Finally, the new definition is investigated in comparison with a classical-statistical approach for superposition and mixed states.  \\[1.5cm]  
\noindent \textbf{Keywords:} Work. Quantum Thermodynamics. Caldirola-Kanai model.

\mbox{}
\thispagestyle{empty}
\newpage
%\newpage
%\listoffigures*
%\pagebreak

%\newpage
%\listoftables*
%\pagebreak

%\newpage
%\pdfbookmark[0]{\listofquadrosname}{loq}
%\listofquadros*
%\cleardoublepage

%\begin{siglas}
%	\item[SAOS] \textit{Small Amplitude Oscillatory Strain}
%	\item[LAOS] \textit{Large Amplitude Oscillatory Strain}
%	\item[MH] \textit{Material Hipotético: utilizado para análise dos resultados}	
%\end{siglas}

\newpage

\tableofcontents*
\pagebreak
\setcounter{table}{0}
\textual

\chapter{INTRODUCTION}
\label{cap:int}

\setcounter{page}{10}
\par The derivation of an accurate framework for the physical description of macroscopic systems always challenged the scientific community. An important step in this direction was the establishment of Classical Mechanics. Regarding the equations of motion, precise results were obtained for few components macroscopic systems, and it was expected that for systems with large number of constituents, this precision would also be verified \cite{laplace2012pierre}. Such belief was proved true for simple macroscopic systems, such as symmetrical rigid bodies \cite{arnol2013mathematical,goldstein2011classical,kleppner2013introduction}. However, for systems involving more intricate relations among its constituents and  internal degrees of freedom not of simple description, a formal proof and verification with experimental data were not possible through purely classical arguments. Such lack was justified, mainly, due to operational limitations \cite{schwabl2006statistical,tolman1938principles,landau1968statistical}: 
\begin{enumerate}[label=(\roman*)]
	\item it is necessary a large amount of calculations (and therefore computation) in order to determine the dynamical variables at each instant of time for each element composing the system and 
	\item the precise knowledge of the initial condition may be virtually impossible to determine, depending on the number of degrees of freedom of the system.
\end{enumerate}
Thermodynamics then arose as a form to work around those limitations, where few variables are needed to describe macroscopic multi particles system behavior \cite{callen1985thermodynamics,moran2010fundamentals,gurtin2010mechanics}. Such empirical model has been developed since the mid eighteenth century, in parallel with Classical Mechanics \cite{gemmer20044}.
\par Classical Thermodynamics, so to speak, has been applied in different industrial sectors, since its birth, such as combustion engines, turbines, boilers and compressors, for instance \cite{moran2010fundamentals}. Nowadays, its tools, under the scope of continuum thermodynamics, provide useful prediction for the dynamics of new materials \cite{gurtin2010mechanics,truesdell2004non,silva2015elastoplasticity,EUeHPAD,macosko1994rheology}. In the general standard form in which the theory is founded, its main hypotheses are treated, roughly, considering systems in equilibrium or quasi-equilibrium. From the laws of Thermodynamics and its underlying hypotheses, state equations can be deduced correlating a few measurable macroscopic variables, which gives to Thermodynamics considerable predictive power to deal with macroscopic systems. Thermodynamics has then established itself as a fundamental branch of Physics.
\par By the end of the eighteenth century, the scientific community looked for a mechanical basis for Thermodynamics \cite{gemmer20044}. Under this perspective, it was reasonable to think that any macroscopic system, then believed to be composed of particles, should be describable by the dynamics of its constituents. Firstly, Thermodynamics was tried to be derived entirely from the realistic paradigm settled by Classical Mechanics. Such approach was pursued by Boltzmann who made significant efforts in this direction \cite{gemmer20044}. However, as Boltzmann himself verified in subsequent works, such coupling could only be achieved within the use of statistical arguments. As a consequence Maxwell, Boltzmann, Gibbs, Poincaré and others proposed a statistical point of view to study the dynamics of many body systems \cite{arnol1968ergodic}. Such approaches gave rise to the theory which is now called Statistical Physics \cite{schwabl2006statistical,tolman1938principles,landau1968statistical,kubo1957statistical,reif2009fundamentals,chandler1987introduction,schrodinger1989statistical}.  
\par The  introduction of statistical concepts such as equilibrium ensembles (micro-canonical, canonical and grand-canonical, for instance) and the corresponding distributions, together with the ergodic hypothesis, provided to Statistical Physics powerful tools to connect the experimentally verified results of Thermodynamics with the Classical Mechanics equations of motion. It was therefore established a connection between microscopic properties of the constituents of a system (micro-states) with macroscopic variables (macro-states). Questions with respect to out-of-equilibrium systems were treated, in the relatively recent work of Jarzinsky \cite{jarzynski1997nonequilibrium}: the so called fluctuation theorems \cite{campisi2011colloquium} were developed in other to restrict non-equilibrium variables in a general way. Despite the broad applicability of the Statistical Physics, some fundamental question remain open: Until which size of a system, would its thermodynamical quantities provide accurate results? What features a system must have for the equilibrium and ergodic hypotheses to be valid? From about which size of the system must statistical considerations be taken into account? In order to answer these questions, first it was essential to establish a theory that could successfully account for experimental data related to microscopic systems,  a task not accomplished by Classical Mechanics \cite{sakurai2017modern,eisberg1979fisica}. In order to fulfill such theoretical vacancy, the Quantum Mechanics was born and developed ever since, providing accurate results. The Statistical Physics would therefore be extended in a way such that the Quantum Mechanics could be accommodated.
\par  The Quantum Statistics is frequently considered within the Von Neumann density matrix formalism and presents precise results. As well-known example of its application, it can be mention the ideal paramagnetic spin-$\frac{1}{2}$ system \cite{salinas1997introduccao}. Such formalism also provided great advances in Information Theory \cite{nielsen2010quantum,bilobran2015measure,goold2016role}. As a consequence of the development of Quantum Statistics in equilibrium systems, it was possible to establish a connection of Quantum Mechanics with thermodynamical variables for macroscopic systems in equilibrium, analogous to the classical case \cite{schwabl2006statistical,tolman1938principles,landau1968statistical,kubo1957statistical,reif2009fundamentals,chandler1987introduction}. With respect to non-equilibrium cases, the fluctuation theorems, mentioned above for the classical cases, were extended to the realm of Quantum Mechanics \cite{campisi2011colloquium,aaberg2018fully,perarnau2017no,hanggi2015other,alhambra2016fluctuating,jarzynski2017stochastic}. It can be argued that the couple Quantum Mechanics and Statistical Physics has provided a robust formal basis for the description of a large class of physical systems. However, the questions raised in the previous paragraph can still be made. It is by no means clear what size a system must have for the hypotheses of Statistical Physics to remain applicable. In addition, one may ask whether the thermodynamical laws will hold for few-particle systems. Since Quantum Mechanics has provided more successful descriptions for microscopic few-particle systems, the attempts to answer those questions are considered within the scope of Quantum Mechanics. Recently, a new area treating few-particle system, relating Thermodynamics with Quantum Mechanics, was born: some authors called it Quantum Thermodynamics (QTh) \cite{gemmer20044,valente2017quantum,youssef2009quantum,alipour2016correlations}.
% Enriquecer essa parte!!!!!!
% Frequentemente associada com qos, pq possui interação entre um sist de interesse e um banho;
% Foi tratado em casos de um sistema pequeno com interação de muitos corpos, considerações de fraca interação, de reservatorio, como born-markov e outras. Resultados gerais foram obtidos para casos em que caso o sistemao, ao qual um sisteminha esteja em contato, seja muito grande, a aproximação fisica estatística é valida. Em outro caso, para pequenas interações, provou-se que a média temporal está conectada com a média de ensemble. Hj procura-se considerar interações fortes, poucas partículas, particulas interagindo com ancila; Neste sentido, sistemas não markovianos forma considerados, e interações fortes; Em alguns trabalhos, buscou-se determinar dinâmicamente a temperatura ouu até que ponto a temperatura, dita em equilibrio, pode ser medida. Em outros trabalhos, foi conectado com teoria da informação onde é discutido a relação com o demo de maxwell e outras coisas. Dentre os aspectos mais discutidos na qt é o trabalho.  
\par QTh is frequently considered within the scope of open quantum systems (OQS), where the dynamics of a system is investigated upon interaction with an ideally large environment \cite{gemmer20044}. Recently, general results were obtained in order to approximate Quantum Mechanics to statistical ensemble considerations \cite{popescu2006entanglement,tasaki1998quantum}: by regarding high dimensional weakly interacting quantum systems, its parts could, to a good approximation, behave as statistical ensembles.  
\par Dissipation effects and the evolution of an open subsystem in a thermal bath have been studied, under the master equation formalism \cite{louisell1973quantum,gardiner2004quantum,wen2004quantum}. The bath was considered as having multiple degrees of freedom, weak interaction or/and no memory, such that Born-Markov approximations could be considered \cite{karlewski2014time}. However, these approximations become inappropriate for systems involving a small number of parts.  In these cases, non-Markovian regimes and strong-interacting models are to be employed  \cite{tan2011non,tu2008non,zhang2012general}. QTh has also been connected to Information Theory, where several results have been reported in connection with entanglement, reality, locality, correlations, Maxwell's demon, among others \cite{nielsen2010quantum,maruyama2009colloquium,micadei2017reversing}.
\par Traditional concepts of classical Thermodynamics have been revisited in the domain of QTh. In Ref. \cite{kliesch2014locality} the locality of temperature has been analyzed, by questioning until which size of a subsystem, the equilibrium temperature offers good agreement with a measure of it. Also, temperature has been investigated as a dynamical variable \cite{rugh1997dynamical}. Heat and work, by their turn, are the thermodynamical concepts more often invoked in QTh discussions \cite{campisi2011colloquium,aaberg2018fully,perarnau2017no,hanggi2015other,alhambra2016fluctuating,jarzynski2017stochastic,valente2017quantum,youssef2009quantum,alipour2016correlations,allahverdyan2005work,tonner2005autonomous,lorch2018optimal,bender2000quantum,alicki1979quantum}. The present dissertation aims at proposing a quantum mechanical notion for the work imparted on a single quantum particle. 
\section{\textbf{Problem definition}}
%Difícil extrair diretamente da clássica. O presente trabalho busca obter uma forma para trabalho. Classicamente ppro quantico é um problema, por causa das trajetórias. Muitos trabalhos, mas não há conclusões gerais: Foram consideradas dois tipos gerais de abordagem: autonomous e non.  Non- alicki, fluctuations theorems, experimental data, versatility. Foi critricado: falar as criticas; Já no contexto de autonomous, mechanism, abordagens diferentes, abordagens efetivas que resultam em não autonomous e alicki. Problema aberto. Aqui nós verificamos que fdx pode ser escrita diferentemente "By considering work as a stochastic variable, the so-called fluctuations theorems (REFS) were obtained for a vastly class of system (REFS), with experimental backup for some of them (REFS). "
\par  In Classical Mechanics, work has a very well established definition \cite{arnol2013mathematical,goldstein2011classical,kleppner2013introduction}. The work $\mathcal{W}_{cl}$ performed by a force $\underline{f}$ on a particle which undergoes a displacement $d\underline{x}$ in a path $c$ is given by
\begin{equation}
\mathcal{W}_{cl}=\int_{c}\underline{f}\cdot\underline{dx}.\label{eq:wcl1}
\end{equation} 
Throughout this dissertation, the notation $\underline{\left(\cdot\right)}$ is considered to denote finite-dimensional real vectors, i.e.,  $\underline{v}$ denotes a vector in $\mathbb{R}^{N}$, with $N>0$ being a natural number. In Eq. \eqref{eq:wcl1}, $\underline{f}$ and $\underline{x}$ stands for \emph{three-dimensional} vectors referring to force and position, respectively.  
\par The direct adaptation of the expression \eqref{eq:wcl1} into the quantum domain was not conducted, perhaps because there is no trivial quantum counterpart for the classical notions of trajectory, force, and displacement \cite{sakurai2017modern,perarnau2017no}. Therefore, alternative approaches have been considered in the quantum realm. 
\par A definition of work, considered by a relatively great part of the scientific community, is the one first introduced by Alicki \cite{alicki1979quantum}. In his paper, Alicki based his definition considering that the work applied on a system must be associated with its Hamiltonian time rate; the heat, on the other hand, is related with time changes in the density matrix. As a consequence, it was established a quantum form for the first law of thermodynamics. Considering the Hamiltonian $H$ and the system density matrix $\rho$ one has that the total energy $E=\mathrm{Tr}\left[\rho(t)H(t)\right]$ changes in time can be divided as 
\begin{equation}
	\displaystyle\Delta E(t_{f}-t_{i})=\underbrace{\displaystyle\int_{t_{i}}^{t_{f}}\mathrm{Tr}\left[\frac{\partial\rho(t)}{\partial t}H(t)\right]dt}_{\displaystyle \mathcal{Q}_{ak}(t_{f}-t_{i})}+\underbrace{\displaystyle\int_{t_{i}}^{t_{f}}\mathrm{Tr}\left[\rho(t)\frac{\partial H(t)}{\partial t}\right]dt}_{\displaystyle \mathcal{W}_{ak}(t_{f}-t_{i})},\label{eq:alickifirstime}
\end{equation}
where $\mathcal{Q}_{ak}(t_{f}-t_{i})$ and $\mathcal{W}_{ak}(t_{f}-t_{i})$ are the heat and work that are imposed on the system in the time interval $t_{f}-t_{i}$ and $\mathrm{Tr}$ is the trace operation. Alicki's approach is widely adopted, especially for derivations of the aforementioned fluctuation theorems. By use of those theorems, experimental results have been obtained which provide conceptual support for the theory  \cite{campisi2011colloquium,ribeiro2016quantum}. However, Alicki's definition has also been criticized by some authors \cite{goold2016role,weimer2008local}. It is remarked here that this definition is not unique, that is, it is just a particular form into which the total energy change can be decomposed. There is, \emph{a priori}, no fundamental principle forcing one to make the particular identifications suggested in Eq. \eqref{eq:alickifirstime}, although the argument seems intuitive for classic examples \cite{schwabl2006statistical,reif2009fundamentals,bender2000quantum}. In addition, it is no clear whether this approach can indeed furnish the correct classical limit.
\par There is another definition related with work that is frequently mentioned under the scope of autonomous quantum systems \cite{tonner2005autonomous,lorch2018optimal,weimer2008local}. For a given source of energy, the flows into the system is considered to be exclusively due to work when there is thermal isolation and exclusively due to heat when no form of work takes place. Such definition is based on arguments that resembles Classical Thermodynamics and provide the possibility of development of theoretical thermal machines using quantum mechanical resources. However, since no explicit definition of work or heat was made, one might argue, with basis on quantum fluctuations, that it should not be discarded that some classical forms of work could be viewed, in the quantum domain, as heat. Likewise, heat transfer could somehow manifests itself as a form of quantum work. 
\par The approach to be adopted in the present work aims at developing a definition of work in a quantum mechanical form, departing from the classical well-known form \eqref{eq:wcl1}. Here, the problem of the absence of trajectories in Quantum Mechanics is put aside by considering the following argument. If the classical trajectory $\underline{x}(t)$ of a particle of mass $m$ is a continuous function of $t$, $\underline{\dot{x}}(t)$ is the velocity of the particle, and  $\underline{f}(t) = m\underline{\ddot{x}}(t)$ is the resultant force, then the work \eqref{eq:wcl1} imparted on the particle can be re-written as

\begin{equation}
\mathcal{W}_{cl}=\int_{c(t)}\underline{f}\cdot\underline{\dot{x}}dt=\int_{c(t)} m\underline{\ddot{x}}\cdot\underline{\dot{x}}dt.\label{eq:trabclass2}
\end{equation} 

Is this form more susceptible to a quantum generalization than \eqref{eq:wcl1}? The definition to be formalized in the text departs from the answer to this question. 
\par  By analogy with the classical definition \eqref{eq:trabclass2}, a mechanical formula for work is introduced, using the Heisenberg picture of Quantum Mechanics and usual methods of expectation value calculation. As an immediate consequence, the obtained quantity is shown not to be a state function, since it depends on a time interval. Also, using the uncertainty principle an interesting aspect underlying the proposed definition for generic systems is discussed. As a case study, the definition of work proposed is calculated for the Caldirola-Kanai Hamiltonian, which is often used to describe the dynamics of a damped oscillator \cite{kanai1948quantization,caldirola1941forze}. A comparison between the results regarding the proposed definition with the Alick model is provided and the adequacy of these proposals in correctly describing the classical limit is discussed.
% Abrir a discussão para um trabalho mecanico semelhante ao classico, calor,chama atenção para flutuações e a influencia na termodinamica.
\par The conceptual grounding on which the work is based, is presented in chapters \ref{cap:fundamentacao} and \ref{cap:fundamentacao2}, where the main hypotheses, results, and definitions of Thermodynamics and Statistical Physics are briefly reviewed. A summary of the fundamental aspects of Quantum Mechanics and QTh is then presented, with particular emphasis to notions related to work and heat.
\par The definition of work proposed in this dissertation is introduced in chapter \ref{cap:definicaotrab}, where some properties of the defined notion are discussed and preliminary results exposed for generic systems.
\par The central case study of this dissertation is presented in chapter  \ref{cap:kanaicaldirola}, where the Caldirola-Kanai Hamiltonian is discussed and simulations of the proposed definition of work are shown. Most importantly, the results regarding the proposed definition are compared with Alick's proposal, and the adequacy of both are assessed in reproducing the classical predictions. 
\par Finally, in chapter \ref{cap:consid}, some remarks are made considering the main aspects related to the results, highlighting the advantages, limitations, and properties of the proposed definition.

\chapter{THEORETICAL BACKGROUND: CLASSICAL PERSPECTIVE}
\label{cap:fundamentacao}
% Falar que o trabalho foi primeiramente definido na mecânica: depois
The introduction of the concept of work was first established in the domain of Classical Mechanics and then introduced in the fundamental laws of Thermodynamics. The connection between both theories was then made through statistical assumptions. Throughout this chapter, it is discussed the concept of work and the related concepts under a somewhat historical sequence. First, it is analyzed how work is defined in Classical Mechanics. Then it is investigated how it was introduced in Thermodynamics and related concepts are discussed. Finally, the theory of Statistical Physics, aligned with Classical Mechanics, is considered and some properties to be used in subsequent chapter are discussed/reviewed.
\section{\textbf{Classical Mechanics and energy}}
\label{sec:classical}
% Conservação de energia: algo fundamental e adicionado com o tempo. Exemplo: caso colisão inelástica, som, calor, etc. Agora no caso do bolco sobre a mesa: espera-se que a superfície do bloco e da mesa irão aumentar e depois? 
% Definição de trabalho na mecânica, potencial conservativo dissipativo, trabalho dissipativo. Terminar falando do que vai ser visto: na mecânica quantica um termo de aleatoriedade é necessário.
\par In a classical system, a particle is considered to be a compact system with mass $m$. In standard analysis, it is not taking into account internal degrees of freedom associated with the particle: for it does not rotate around itself, as in the case of rigid bodies motion, for instance. The motion of this particle can be described \cite{arnol2013mathematical} by a mapping $\underline{x}:I\rightarrow \mathbb{R}^{3}$, where the interval $I\in \mathbb{R}$ is an representation of a time interval. The position of the particle, identified by this motion at an instant $t$ (a point in the interval $I$), is therefore just $\underline{x}(t)$. 
\par There may be forces (gravitational or Coulombian for instance) acting on such a particle. Their representation, in the mathematical structure of Classical Mechanics, can be described by a vector field\footnote{The field is here considered to be continuous: this consideration is sufficient to prove that the integral in Eq. \eqref{eq:wcl2} exists \cite{arnol2013mathematical}.} $\underline{f}:\mathbb{R}^{3}\rightarrow\mathbb{R}^{3}$. The work $\mathcal{W}_{cl}$\footnote{The subindex $\left(\cdot\right)_{cl}$ here is considered for classical systems, to be differentiate from quantum ones in subsequent chapters.} done on the particle,  by the external force $\underline{f}$, on a curve $c$ of finite length, is defined as
\begin{equation}
	\mathcal{W}_{cl}= \int_{c}\underline{f}\cdot\underline{dx}.\label{eq:wcl2}
\end{equation}
Newton's second law establishes a connection between the resultant force $\underline{f}_{R}$(the sum of all the forces $\underline{f}_{1}$,$\underline{f}_{2},\cdots,\underline{f}_{N}$ applied on a system) with its acceleration\footnote{Here the mass is considered to be constant since the system is composed of only one particle, which does not change its mass during time (non-relativistic regime).}:
\begin{equation}
	\underline{f}_{R}=\sum_{i=1}^{n}\underline{f}_{i}=m\underline{\ddot{x}}.
\end{equation}
Therefore, the work done by all the forces on the particle $\mathcal{W}_{cl}^{R}$ is computed as
\begin{equation}
	\mathcal{W}_{cl}^{R}=\int_{c}\underline{f}_{R}\cdot\underline{dx}=\int_{c}m\underline{\ddot{x}}\cdot\underline{dx}=m\int_{t_{i}}^{t_{f}}\underline{\ddot{x}}\cdot\underline{\dot{x}}dt=\left.\frac{m\dot{x}^{2}}{2}\right|_{t_{i}}^{t_{f}},
\end{equation}
where $t_{i}$ and $t_{f}$ are the time before and after the particle travels through the path $c$. The scalar $\frac{m\dot{x}^{2}}{2}$ is defined as the \emph{kinetic energy} $K_{cl}$ of the particle. The result given above can therefore be shortly stated as
\begin{equation}
\mathcal{W}_{cl}^{R}\left(t_{f}-t_{i}\right)=K_{cl}(t_{f})-K_{cl}(t_{i}),\label{eq:teotrabk}
\end{equation}
that is, \emph{the total work done on a particle equals the change in its kinetic energy}. The result \eqref{eq:teotrabk} may, for some class of problems, be used to determine the motion features of a particle  in an easier manner as one would do by solving Newton's second law equation directly (or by using Lagrange-Euler or Hamilton equations). 
\par It is important to remark that the definition of kinetic energy can be seen as an intrinsic particle property: knowing particle's velocity and mass, its kinetic energy can be determined without any knowledge of the configuration of the rest of the system with which the particle interacts. Therefore it can be stated, in these terms, that the kinetic energy is the energy that ultimately "belongs to the particle", that is, if one aims to define an energy that somehow is intern or belongs exclusively to the particle, then the strongest candidate should be the kinetic energy. 
\par The result \eqref{eq:teotrabk} is not restrictive for a system of only one particle: considering the generalized position vector $\underline{x}_{T}=\left(\underline{x}_{1},\underline{x}_{2},...,\underline{x}_{N}\right)$ and generalized resultant forces $\underline{f}_{T}=\left(\underline{f}_{R,1},\underline{f}_{R,2},...,\underline{f}_{R,N}\right)$ of a system of $N$ particles, with $\underline{x}_{i}$ and $\underline{f}_{R,i}$ being the position and the resultant force associated with the $i$-th particle, it can be proved \cite{arnol2013mathematical} that 
\begin{equation}
	K_{T}(t_{f})-K_{T}(t_{i})=\sum_{i=1}^{N}K_{cl,i}(t_{f})-\sum_{i=1}^{N}K_{cl,i}(t_{i})=\int_{\underline{x}_{T}(t_{i})}^{\underline{x}_{T}(t_{f})}\underline{f}_{T}\cdot\underline{dx}_{T}
\end{equation}
\par An important class of problems in physics involves conservative systems, which make reference to another kind of energy, \emph{viz.} the potential energy.
\begin{definicoes}
	A system is called conservative if the forces depend only on the location of a point in the system $\left(\underline{f}_{T}=\underline{f}_{T}(\underline{x}_{T})\right)$ and if the work of $\underline{f}_{T}$ along any path depends only on the initial and final points of the path.
	\label{def:conservative}
\end{definicoes} 
\par Definition \ref{def:conservative} can not be overestimated: as will be seen, it gives a primitive intuition related with energy in a mechanical system. In particular, it reveals a formal link with the concept of potential energy\footnote{All theorems stated in this section were proved at Chapter 10 of \cite{arnol2013mathematical}.}.
\begin{teoremas}
	For a system to be conservative it is necessary and sufficient that a potential energy $U(\underline{x}_{T})$ exists such that \footnote{The derivatives associated with $\nabla_{\underline{x}_{T}}$ are considered with respect to the the motion $\underline{x}_{T}$, a notation that will be maintained throughout the present work.}
	\begin{equation}
	\underline{f}_{T}=-\nabla_{\underline{x}_{T}}U.\label{eq:potentialforce}
	\end{equation}
	\label{th:potentialenergy}
\end{teoremas}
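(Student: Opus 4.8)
The plan is to prove the equivalence by establishing its two implications separately, with the fundamental theorem of calculus for line integrals as the main tool. Throughout I would work on the configuration space $\mathbb{R}^{3N}$, which is path-connected, so that line integrals between any two configurations make sense.

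\emph{Sufficiency.} Assume a function $U(\underline{x}_{T})$ exists with $\underline{f}_{T}=-\nabla_{\underline{x}_{T}}U$. Then $\underline{f}_{T}$ depends only on the configuration $\underline{x}_{T}$, and along any path $c$ parametrized by $\underline{x}_{T}(t)$, $t\in[t_{i},t_{f}]$, from $\underline{x}_{T}(t_{i})=\underline{a}$ to $\underline{x}_{T}(t_{f})=\underline{b}$, the chain rule gives $\underline{f}_{T}\cdot\underline{\dot{x}}_{T}=-\nabla_{\underline{x}_{T}}U\cdot\underline{\dot{x}}_{T}=-\tfrac{d}{dt}U(\underline{x}_{T}(t))$, hence $\int_{c}\underline{f}_{T}\cdot\underline{dx}_{T}=U(\underline{a})-U(\underline{b})$, which depends only on the endpoints. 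By Definition \ref{def:conservative} the system is conservative.

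\emph{Necessity.} Assume the system is conservative. Fix a reference configuration $\underline{x}_{T}^{0}$ and define
\begin{equation}
U(\underline{x}_{T}):=-\int_{\underline{x}_{T}^{0}}^{\underline{x}_{T}}\underline{f}_{T}\cdot\underline{dx}_{T}',
\end{equation}
the line integral being taken along any path joining $\underline{x}_{T}^{0}$ to $\underline{x}_{T}$. Path-independence (Definition \ref{def:conservative}) makes this value unambiguous, so $U$ is well defined. It then remains to verify $\nabla_{\underline{x}_{T}}U=-\underline{f}_{T}$. To compute $\partial U/\partial x_{k}$ at a configuration $\underline{x}_{T}$, I would take any admissible path reaching $\underline{x}_{T}$ and append to it a short straight segment of length $h$ along the $k$-th coordinate direction $\underline{e}_{k}$; path-independence then yields $U(\underline{x}_{T}+h\,\underline{e}_{k})-U(\underline{x}_{T})=-\int_{0}^{h}f_{T,k}(\underline{x}_{T}+s\,\underline{e}_{k})\,ds$, so dividing by $h$ and letting $h\to 0$, continuity of $\underline{f}_{T}$ and the fundamental theorem of calculus give $\partial U/\partial x_{k}=-f_{T,k}$.

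The main obstacle is precisely this last differentiability step: one must confirm that the line-integral prescription produces a differentiable function whose gradient is $-\underline{f}_{T}$, and this is where path-independence is used in an essential way — to reroute the integration path so it terminates with a coordinate segment — together with the continuity of the force field, needed to pass to the limit. The sufficiency direction, by contrast, is a routine application of the chain rule.
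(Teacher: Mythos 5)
Your proposal is correct, and it is essentially the standard argument: the dissertation itself offers no proof of this theorem, deferring instead (via the footnote) to Chapter 10 of Arnold's \emph{Mathematical Methods of Classical Mechanics}, where the potential is constructed exactly as you do — as minus the work integral from a fixed reference configuration, with path-independence guaranteeing well-definedness and continuity of $\underline{f}_{T}$ plus the fundamental theorem of calculus giving $\nabla_{\underline{x}_{T}}U=-\underline{f}_{T}$. Both halves of your argument, including the rerouting of the path to end in a coordinate segment for the differentiability step, are sound and complete under the paper's standing assumption that the force field is continuous.
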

\par Theorem \ref{th:potentialenergy} permits to define the potential energy $U$, which plays a prominent role in the total energy.
\begin{definicoes}
	If there are forces acting on the system such that there is a potential $U$ which satisfies Eq. \eqref{eq:potentialforce}, then the mechanical energy associated with this system is defined as
	\begin{equation}
		E=K_{cl}+U. \label{eq:totenergy}
	\end{equation}
\end{definicoes}
This definition does not clarify, on its own, the relevance of the notion of energy. This task is accomplished by the following result.
\begin{teoremas}
	The mechanical energy of a conservative system is preserved under the motion: $E(t_{f})=E(t_{i})$. 
	\label{th:conservation}
\end{teoremas}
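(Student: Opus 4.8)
The plan is to combine the generalized work--energy theorem already established in Eq.~\eqref{eq:teotrabk} (and, more precisely, its $N$-particle version stated immediately afterward) with the characterization of conservative systems furnished by Theorem~\ref{th:potentialenergy}. First I would fix two instants $t_{i}<t_{f}$ and consider the motion $\underline{x}_{T}(t)$ of the system between them. The work--energy theorem then gives $K_{T}(t_{f})-K_{T}(t_{i})=\int_{\underline{x}_{T}(t_{i})}^{\underline{x}_{T}(t_{f})}\underline{f}_{T}\cdot\underline{dx}_{T}$, where the line integral is taken along the actual trajectory.

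Next, since the system is conservative, Theorem~\ref{th:potentialenergy} supplies a potential energy $U(\underline{x}_{T})$ with $\underline{f}_{T}=-\nabla_{\underline{x}_{T}}U$. Substituting this into the line integral and applying the fundamental theorem of calculus for gradient fields (equivalently, using the chain rule $\frac{d}{dt}U(\underline{x}_{T}(t))=\nabla_{\underline{x}_{T}}U\cdot\underline{\dot{x}}_{T}$ and integrating in $t$), the integral collapses to $-\bigl(U(\underline{x}_{T}(t_{f}))-U(\underline{x}_{T}(t_{i}))\bigr)$. Hence $K_{T}(t_{f})-K_{T}(t_{i})=U(\underline{x}_{T}(t_{i}))-U(\underline{x}_{T}(t_{f}))$, and rearranging yields $K_{T}(t_{f})+U(t_{f})=K_{T}(t_{i})+U(t_{i})$, i.e.\ $E(t_{f})=E(t_{i})$ by the definition of the mechanical energy in Eq.~\eqref{eq:totenergy}.

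An equivalent and slightly shorter route is purely differential: writing $E(t)=\tfrac{m}{2}\dot{x}^{2}+U(\underline{x}_{T}(t))$ and differentiating along the motion, $\frac{dE}{dt}=\underline{\dot{x}}_{T}\cdot\bigl(m\underline{\ddot{x}}_{T}+\nabla_{\underline{x}_{T}}U\bigr)=\underline{\dot{x}}_{T}\cdot(\underline{f}_{T}-\underline{f}_{T})=0$, where Newton's second law $m\underline{\ddot{x}}_{T}=\underline{f}_{T}$ together with Eq.~\eqref{eq:potentialforce} were used; since $E$ has vanishing time derivative it is constant, which is the claim.

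There is no genuine obstacle here beyond bookkeeping: the only point requiring a little care is that the line integral appearing in the work--energy theorem is path-dependent in general, so one must convert it into a difference of potential values using the \emph{actual} trajectory rather than an arbitrary path --- though, by the very definition of a conservative system (Definition~\ref{def:conservative}), this choice is immaterial anyway. One should also note that the argument tacitly uses the standing regularity hypothesis of the chapter, namely that $\underline{x}_{T}(t)$ is smooth enough for $\underline{\ddot{x}}_{T}$ to exist and the chain rule to apply.
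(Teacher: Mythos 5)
Your proof is correct and is essentially the standard argument: the dissertation itself does not prove Theorem \ref{th:conservation} (it defers to Chapter 10 of Arnold via the footnote accompanying Theorem \ref{th:potentialenergy}), and the argument given there is precisely the one you reconstruct, namely combining the work--energy relation \eqref{eq:teotrabk} with $\underline{f}_{T}=-\nabla_{\underline{x}_{T}}U$ from Theorem \ref{th:potentialenergy} (equivalently, showing $\frac{dE}{dt}=0$ along the motion). Your remark that the line integral must be evaluated along the actual trajectory, with path-independence guaranteed by Definition \ref{def:conservative}, is the right point of care and nothing further is missing.
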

\par Theorem \ref{th:conservation} states that, if only conservative forces are suppose to act on the system, then there is a measurable scalar property, called mechanical energy, that is preserved. The premise of the theorem, \emph{viz.} that the forces must be conservative, seems, at a first sight, too restrictive. To highlight the generality of the theorem another result should be recalled.
\begin{teoremas}
	If there are only forces of interaction between the particles composing a system, and they depend only on the distance between the particles, then the system is conservative. 
	\label{th:interactiondistanceforces}
\end{teoremas}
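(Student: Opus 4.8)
The plan is to exhibit an explicit potential energy $U(\underline{x}_{T})$ and then invoke Theorem \ref{th:potentialenergy} (equivalently Definition \ref{def:conservative}). Since the only forces present are pairwise interactions, the resultant force on the $i$-th particle is $\underline{f}_{R,i}=\sum_{j\neq i}\underline{f}_{ij}$, where $\underline{f}_{ij}$ denotes the force exerted on particle $i$ by particle $j$. The hypothesis that $\underline{f}_{ij}$ depends only on the distance $r_{ij}:=\lvert\underline{x}_{i}-\underline{x}_{j}\rvert$, together with the isotropy of space and Newton's third law, is to be read as saying that the interaction is \emph{central}: $\underline{f}_{ij}=g_{ij}(r_{ij})\,(\underline{x}_{i}-\underline{x}_{j})/r_{ij}$ for a continuous scalar function $g_{ij}$ with $g_{ij}=g_{ji}$, so that $\underline{f}_{ji}=-\underline{f}_{ij}$.

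First I would fix, for each unordered pair $\{i,j\}$, a $C^{1}$ primitive $V_{ij}$ of $-g_{ij}$, i.e. $V_{ij}'(r)=-g_{ij}(r)$; such a primitive exists because $g_{ij}$ is continuous. Then I would define, on the configuration space where no two particles coincide,
\[
U(\underline{x}_{T})=\sum_{i<j}V_{ij}(r_{ij}).
\]
Next I would verify $\underline{f}_{T}=-\nabla_{\underline{x}_{T}}U$ block by block. Using $\nabla_{\underline{x}_{i}}r_{ij}=(\underline{x}_{i}-\underline{x}_{j})/r_{ij}$ and the chain rule one gets $\nabla_{\underline{x}_{i}}V_{ij}(r_{ij})=-g_{ij}(r_{ij})(\underline{x}_{i}-\underline{x}_{j})/r_{ij}=-\underline{f}_{ij}$, and symmetrically $\nabla_{\underline{x}_{j}}V_{ij}(r_{ij})=-\underline{f}_{ji}$. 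Summing over all partners $j\neq i$ yields $\nabla_{\underline{x}_{i}}U=-\underline{f}_{R,i}$ for every $i$, hence $\nabla_{\underline{x}_{T}}U=-\underline{f}_{T}$. By Theorem \ref{th:potentialenergy}, the existence of such a $U$ is necessary and sufficient for the system to be conservative, which finishes the argument. (One could instead bypass Theorem \ref{th:potentialenergy} and check path independence directly: along any curve $c$ from $\underline{x}_{T}(t_{i})$ to $\underline{x}_{T}(t_{f})$, $\int_{c}\underline{f}_{T}\cdot\underline{dx}_{T}=-\int_{c}\nabla_{\underline{x}_{T}}U\cdot\underline{dx}_{T}=U(\underline{x}_{T}(t_{i}))-U(\underline{x}_{T}(t_{f}))$ by the fundamental theorem of calculus, which depends only on the endpoints, and moreover $\underline{f}_{T}=\underline{f}_{T}(\underline{x}_{T})$ by construction, so both clauses of Definition \ref{def:conservative} hold.)

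The main obstacle is conceptual rather than computational: one must extract from the wording "forces of interaction that depend only on the distance between the particles" the facts that each $\underline{f}_{ij}$ is \emph{central} (directed along $\underline{x}_{i}-\underline{x}_{j}$) and antisymmetric ($\underline{f}_{ij}=-\underline{f}_{ji}$); without centrality, a merely distance-dependent vector field need not be a gradient, and the theorem would fail. A secondary, minor point is to restrict the domain so as to exclude the coincidence locus $\underline{x}_{i}=\underline{x}_{j}$, where $r_{ij}$ and hence $V_{ij}(r_{ij})$ may fail to be differentiable; this is the standard convention for point-particle interactions and does not affect the conclusion on the physically relevant configuration space.
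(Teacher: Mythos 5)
Your proof is correct. Note that the dissertation does not actually prove this theorem at all — it is stated with a footnote deferring to Arnold's \emph{Mathematical Methods of Classical Mechanics} — and your argument is precisely the standard one found there: build $U(\underline{x}_{T})=\sum_{i<j}V_{ij}(r_{ij})$ from primitives of the pairwise central force laws, verify $\underline{f}_{T}=-\nabla_{\underline{x}_{T}}U$ block by block, and conclude via Theorem \ref{th:potentialenergy}. Your remark that the hypothesis must be read as the forces being \emph{central} (directed along $\underline{x}_{i}-\underline{x}_{j}$, with $\underline{f}_{ij}=-\underline{f}_{ji}$), and not merely distance-dependent in magnitude, is exactly the point the dissertation's paraphrase glosses over; the original statement in the cited reference includes this condition explicitly, and without it the claim would indeed fail, so flagging it (together with the exclusion of the coincidence locus) is the right thing to do.
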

The forces treated in the scope of classical physics are mostly distance dependent (Coulombian and Gravitational, for instance). For interaction forces of such kind, the system is considered conservative. However, there are cases in which, though the interaction forces are only distance-dependent, there are external forces that yields a non-conservative character to the system. This is the case, for example, of drag forces acting on a rigid body or a thermal coupling with a reservoir. It is thus established the following definition, connected with the recognition of dissipative effects.
\begin{definicoes}
	A decrease $E(t_{i})-E(t_{f})$ in the mechanical energy is called an increase in the non-mechanical\footnote{The terminology "mechanical" and "non-mechanical" is justified from a purely \emph{historical} point of view, since one cannot always infer the precise nature of energy dissipation. Also, there are some conservative systems, in which the potential energy $U$ is due to a non-mechanical interaction (Coulombian, for example) and contributes to the mechanical energy (see Eq. \eqref{eq:totenergy}).} energy $E_{d}$: 
	\begin{equation}
		-\Delta E=E(t_{i})-E(t_{f})=E_{d}(t_{f})-E_{d}(t_{i})=\Delta E_{d}.
	\end{equation}
	\label{def:dissipative}
\end{definicoes}
\par The definition \ref{def:dissipative} expresses the idea that the total energy is conserved, i.e.,the sum of energies $E+E_{d}$ is conserved. It is generally considered that some mechanism extracts mechanical energy in the form $E_{d}$ of a non-conservative system. The statement that the energy will actually flow to another system and will not disappear, is intrinsically contained in the term $E_{d}$ and is a widely accepted physical \emph{principle}. As mentioned by Callen \cite{callen1985thermodynamics}, at the beginning of the chapter where he introduces internal energy, \emph{"The development of the principle of conservation of energy has been one of the most significant achievements in the evolution of physics. The present form of the principle was not discovered in one magnificent stroke of insight but has been slowly and laboriously developed over two and a half centuries. The first recognition of a conservation principle, by Leibnitz in 1693, referred only to the sum of the kinetic energy ($\frac{m v^{2}}{2}$) and the potential energy ($mgh$) of a simple mechanical mass point in the terrestrial gravitational field. As additional types of systems were considered, the established form of the conservation principle repeatedly failed, but in each case it was found possible to revive it by the addition of a new mathematical term - a 'new kind of energy'. Thus consideration of charged systems necessitated the addition of the Coulomb interaction energy ($\frac{Q_{1}Q_{2}}{r}$) and eventually of the energy of the electromagnetic field. In 1905 Einstein extended the principle to the relativistic region, adding such terms as the relativistic rest-mass energy. In the 1930’s Enrico Fermi postulated the existence of a new particle, called the neutrino, solely for the purpose of retaining the energy conservation principle in nuclear reactions. Contemporary research in nuclear physics seeks the form of interaction between nucleons within a nucleus in order that the conservation principle may be formulated explicitly at the sub-nuclear level. Despite the fact that unsolved problems of this type remain, the energy conservation principle is now accepted as one of the most fundamental, general, and significant principles of physical theory"}. The connection that will be established between Thermodynamics and mechanics, in order to define work, starts from the very intuition of energy provided in the previous discussion. In order to make such connection explicit, in what follows a simple example of mechanical system is discussed. 
\subsection{Sliding block with a spring}
\par Consider a system composed of a block attached to a table by a spring, as schematically depicted in Fig. \ref{fig:slidingnofriction}.
\begin{figure}[!h]
	\begin{center}
		\includegraphics[angle=0, scale=0.5]{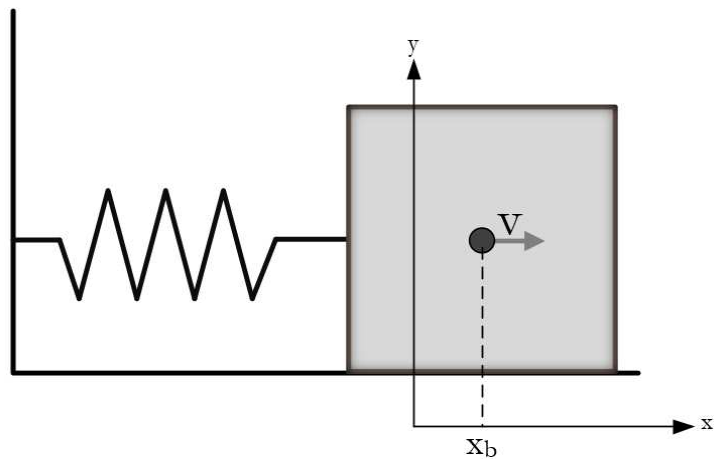}
		\caption{Sliding block with center of mass position $x_{b}$ and velocity $v$ on a frictionless table: the energy flows from the spring to the block and vice-versa.}
		\label{fig:slidingnofriction}
	\end{center}
\end{figure} 
The following aspects are assumed:
\begin{itemize}
	\item The spring has negligible mass; 
	\item The table and the block have no friction with each other;
	\item The spring potential is modeled as $U=\frac{k}{2} x_{b}^{2}$, where $x_{b}$ is the displacement of the center of mass of the block with respect to its equilibrium position;
	\item The table is considered as an inertial frame;
	\item The table's mass $m_{t}$ is very large compared to the block's $m_{b}$, that is, $\frac{m_{b}}{m_{t}}\ll 1$;
	\item The block is initially with speed $v_{0}>0$ at its equilibrium position ($x_{b}=0$).
\end{itemize}  
From these conditions and Hooke's law, it can be deduced that the movement of the center of mass of the block will be sinusoidal. Since the system is conservative, the mechanical energy of the spring-block system will be preserved. In terms of energy flow, the total mechanical energy of the block, initially kinetic, will flow to the spring until the motion reach its maximum amplitude, where all the energy will be stored as potential one. Then the spring potential energy will flow back to the block in a kinetic energy form, and this energetic cycle is repeated virtually infinite times. It can be noted that the problem was solved for the center of mass of the block. What happens with its constituents? An intrinsic hypothesis related is that the block is a rigid body and, although the particles have their own degrees of freedom in microscopic scale, they do not affect the center of mass motion significantly: these details becomes irrelevant for solving the problem. On the other hand, when dissipative or/and thermal effects are considered, the same is not necessarily true. To discuss this idea, we consider the same system as above but let the interface block-table be rough. The model with friction included is illustrated in Fig.  \ref{fig:slidingfriction}.
\begin{figure}[h]
	\begin{center}
		\begin{subfigure}[h]{0.4\textwidth}
			\includegraphics[angle=0, scale=0.3]{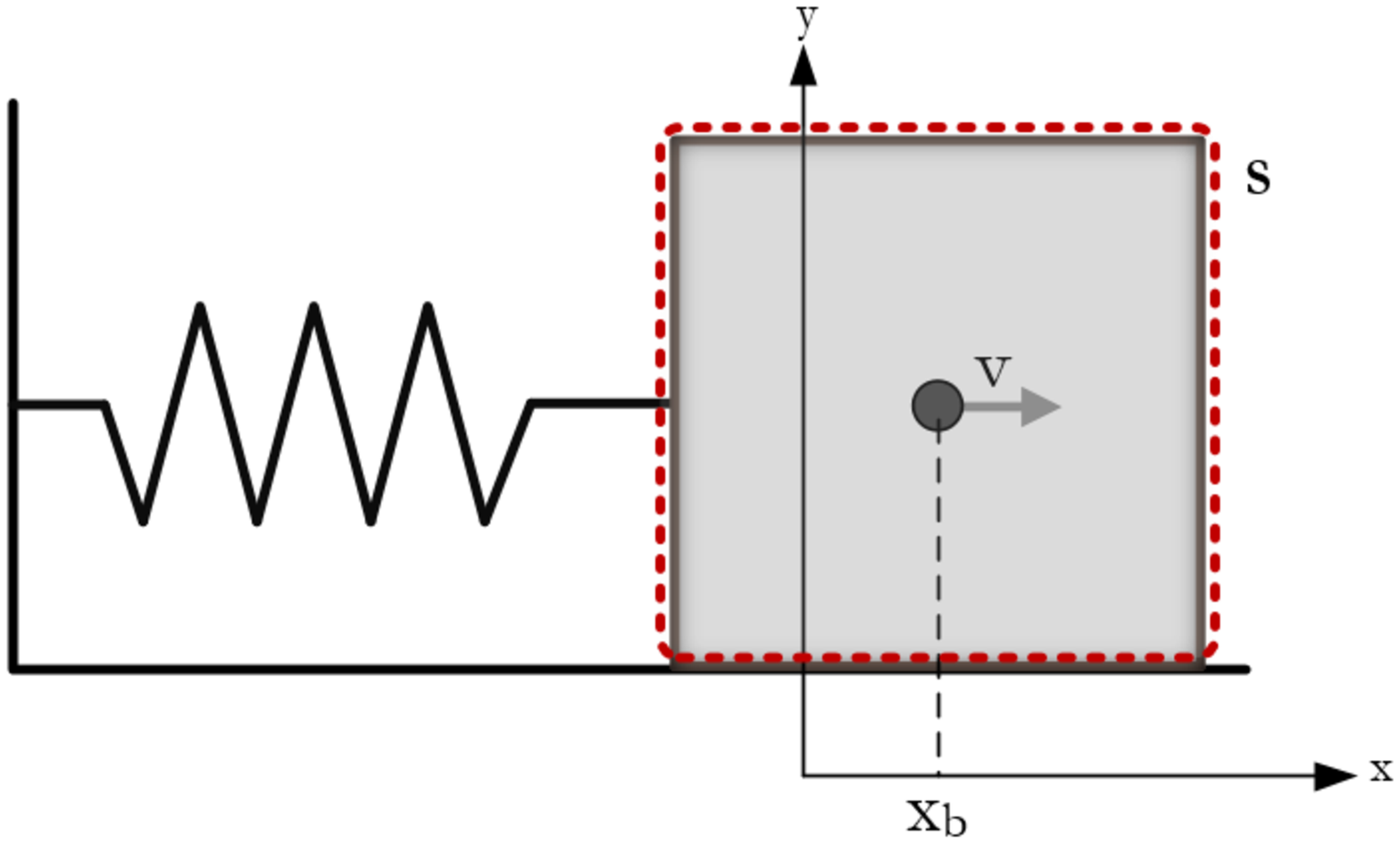} 
			\caption{S: block.}
			\label{fig:slidingfrictiona}
		\end{subfigure}
		\begin{subfigure}[h]{0.4\textwidth}
			\includegraphics[angle=0, scale=0.3]{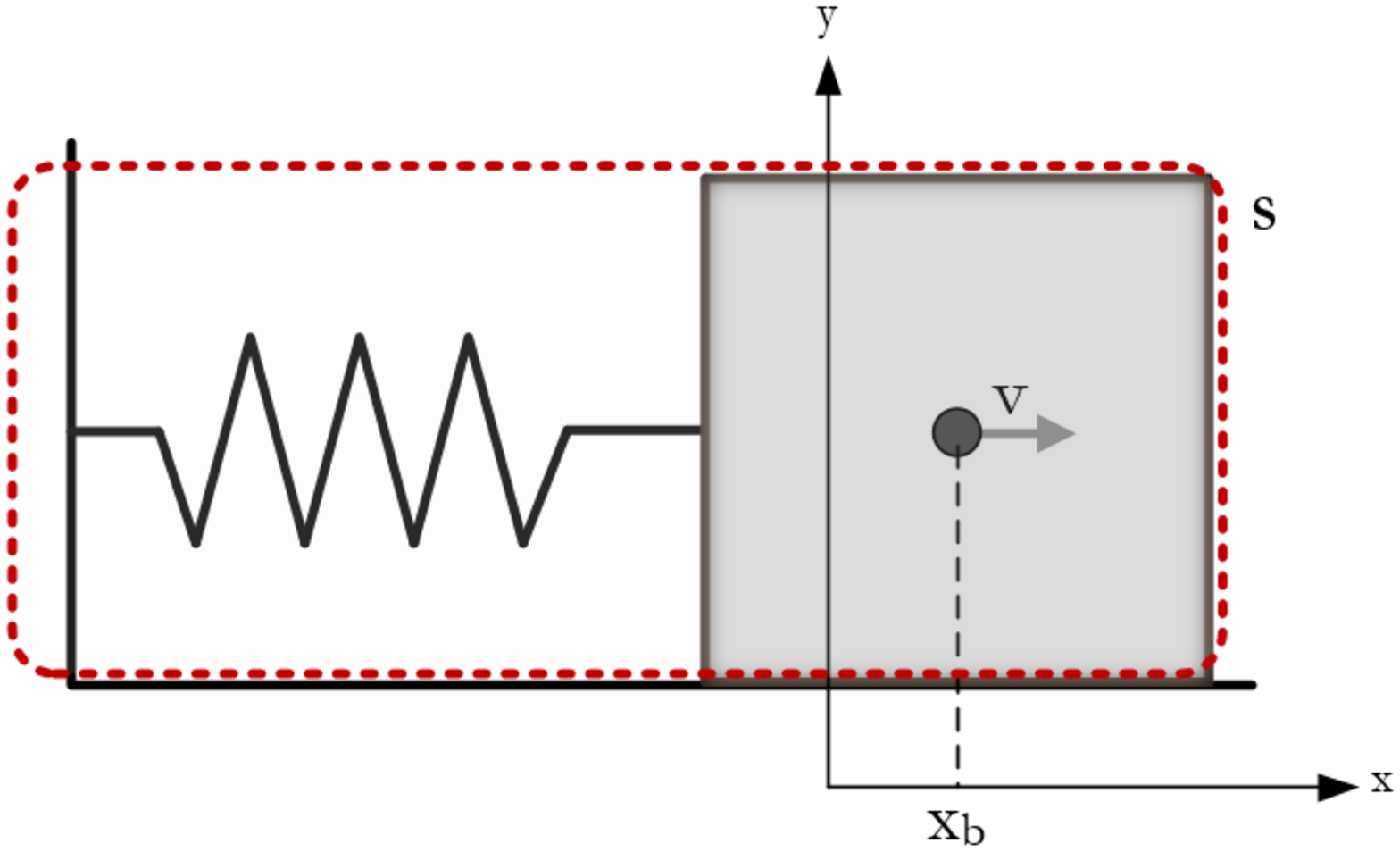}
			\caption{S: block+spring.}
			\label{fig:slidingfrictionb}
		\end{subfigure}
		\begin{subfigure}[h]{0.4\textwidth}
			\includegraphics[angle=0, scale=0.3]{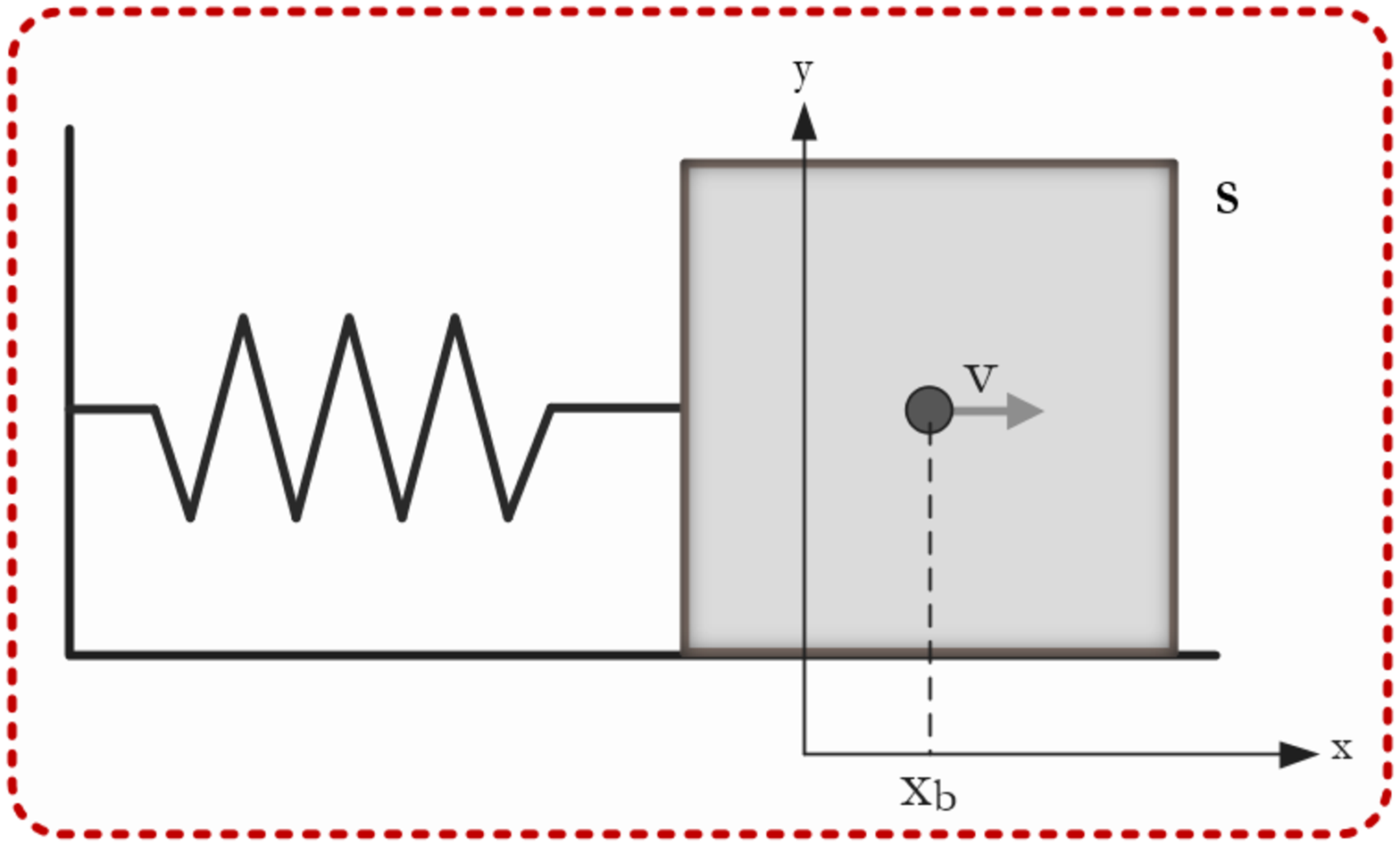}
			\caption{S: block+spring+table.}
			\label{fig:slidingfrictionc}
		\end{subfigure}
		\caption{Sliding block on table with equilibrium position fixed in $x=0$ and speed $v$, on a table with friction effects. The system (and therefore the internal energy) can be considered differently.}
		\label{fig:slidingfriction}
	\end{center}
\end{figure}
\par The spring-block system is not considered conservative due to the friction term: the mechanical energy decreases at an exponential rate\footnote{For a system in which the friction coefficient is considered constant.}. The kinetic energy of the block, which at the beginning of the motion equals to the total mechanical energy of the system, is converted into potential energy, stored in the spring, and posteriorly disappears due to friction\cite{kleppner2013introduction,david1988fundamentals}. What happened to the mechanical energy of the system? 
\par The definition of "system" ("S" represented in the figures) determines the form in which energy is qualified, whether it is internal energy or external work. Since the energy of the whole system (table, spring, block) is supposed to be conserved, then independently of what is assigned as the system "S", the changes in its internal energy $E_{\text{int}}$ must equal the work applied by external systems $\mathcal{W}_{\text{ext}}$ in it, that is
\begin{equation}
	\Delta E_{\text{int}}= \mathcal{W}_{\text{ext}}.\label{eq:intext}
\end{equation}
When the table and the spring are regarded as being outside the system "S", then all the internal energy is kinetic energy and can be described by a macroscopic degree of freedom $x_{b}$. It changes due to external work. This situation is depicted in Fig. \ref{fig:slidingfrictiona} and can be equated as
 \begin{equation}
 	\Delta E_{\text{int}}=\Delta \mathcal{W}_{\text{ext}}\Rightarrow \Delta K=\mathcal{W}_{\text{spring}}+\mathcal{W}_{\text{friction}},
 \end{equation}
where $\mathcal{W}_{\text{spring}}$ is the work done by the spring and $W_{\text{friction}}$ by friction forces. When only the table is outside the system, as in Fig. \ref{fig:slidingfrictionb}, the friction contributes with the external work, for which the description is already addressed by Classical Mechanics (although the friction force can be seen as an effective description).  In this case, it can be said that the friction "removes" internal energy of the system and
\begin{equation}
\Delta E_{\text{int}}=\Delta \mathcal{W}_{\text{ext}}\Rightarrow \Delta K+\Delta U_{\text{spring}}=\mathcal{W}_{\text{friction}},
\end{equation}
where the potential energy of the spring $U_{\text{spring}}=\frac{kx_{b}^{2}}{2}$ is now regarded as part of the internal energy. On the other hand, regarding the table as part of the system, represented in Fig. \ref{fig:slidingfrictionc}, results in
\begin{equation}
\Delta E_{\text{int}}=0\Rightarrow \Delta K+\Delta U_{\text{spring}}+\Delta E_{\text{friction}}=0\label{eq:wspring}
\end{equation}
and the friction effects are treated as internal energy variations $\Delta E_{\text{friction}}$, without a description in terms of potential energy, since the friction force is nonconservative. In other words, the internal energy now includes a nonconservative macroscopically inaccessible energy which is called frequently as thermal energy \cite{david1988fundamentals}. Then Eq. \eqref{eq:wspring} can be rewritten as
\begin{equation}
 \Delta E_{\text{int}}=0\Rightarrow \Delta K+\Delta U_{\text{spring}}+\Delta E_{\text{thermal}}=0\label{eq:wspring2}
\end{equation}
being $E_{\text{thermal}}$ the thermal energy. This sort of energy is frequently associated with a microscopic form of energy. 
\par The constituent particles of the system have movements that, due to operational limitations, cannot be accurately tracked. In fact, Classical Mechanics does not apply to the atomic scale (even if it did, one could not compute $10^{23}$ trajectories) and Quantum Mechanics not even accommodate the notion of trajectory. The thermodynamical perspective is therefore invoked, where a "coarseness process" (to be described later) is made so that the motion of each particle is considered to be random. This randomness is frequently associated with thermal energy \cite{callen1985thermodynamics}. Hence, the kinetic energy related to the center of mass of the block diminishes while their constituents' kinetic energy increases: this is effectively verified by measuring the temperature increase at the bottom of the block, for instance. In other words, the "organized" macroscopically-accessible kinetic energy carried by the block center of mass is pulverized into infinitely many "random" microscopically-untrackable tiny amounts of kinetic energy carried by point masses. This generic perspective will be imported to the discussion around the definition of work to be proposed in the chapter \ref{cap:definicaotrab}.  
\par In summary, what is generally called "dissipation" actually is, in classical mechanical terms, the conversion of the energy carried by a macroscopic degree of freedom into internal microscopic ones. For this apparently random dynamics, Classical Mechanics tells a story in terms of work and mechanical energy. Such interpretation is not restrictive to the system above and can be adopted for other systems. Thermodynamics approaches this microscopic scenario by means of a macroscopic mimic, as put by Callen \cite{callen1985thermodynamics}: \emph{"Thermodynamics, in contrast (to Mechanics and Electromagnetism), is concerned with the macroscopic consequences of the myriads of atomic coordinates that, by virtue of the coarseness of macroscopic observations, do not appear explicitly in a macroscopic description of a system"}. In the example treated in this section, the "macroscopic description" refers to the analysis of the kinetic energy of the block center of mass, the spring potential energy, and the lost of mechanical energy. The Thermodynamics (with a Statistical Physics theoretic background), on the other hand, provides a description of the microscopic, thermal, effects related to the system, in terms of macroscopic observable properties. 
 
\section{\textbf{Classical Thermodynamics}}
Thermodynamics provides good results for general macroscopic systems. Due to definitions of physical quantities like work, heat, temperature, entropy, and other thermodynamics concepts, it was possible to describe and invent a vast number of mechanisms, machines, engines that are applied in different industrial and home sectors \cite{moran2010fundamentals}. Due to the great achievements of the theory with regards to macroscopic phenomena, it is an aim for some scientists, engineers and mathematicians to describe what are the limits in which the theory may be applied: what is the minimum number of particles in which the above concepts may be used to thermodynamically describe a system? Is there a minimum volume (size) that the system must have for those concepts to be applied? These questions are currently treated at the scope of QT, which will be treated latter in this dissertation. However, to understand and to be able to answer those questions, it is essential to be able to understand in a broader manner the foundation of thermodynamics, that gives the concepts of heat, work, temperature, and others, physical meaning in order that prediction and understanding could be achieved. With this purpose in mind the main features related with the concept of work and heat are treated in this section. First, it is considered the equilibrium hypothesis and the intrinsic coarseness related with classical Thermodynamics
% Intro do capitulo: a termodinamica é muito boa para descrever sistemas macroscopicos gerais. Por causa de suas definições de trabalho, calor, energia interna, temperatura, foi possivel descrever muitos problemas de aplicação fazendo maquinas, motores etc. Tendo em vista essa larga aplicabilidade dessa teoria busca-se verificar até qual ponto a teoria se estende: qual é o número minimo de particulas consideradas? Menor volume? Funciona para poucas partículas? Buscase responder essas perguntas. Primeiro deve-se entender como esses conceitos são definidos na termodinâmica,para depois tentat importar para sistemas de poucas partículas, o que se quer fazer no âmbito da termo quantica;

\subsection{Thermodynamical Equilibrium}
% Explicar porque a hipótese de equilíbrio é razoável para sistemas de muitas particulas, macroscopicas: estados estáticos.Colocar a frase "Only those few particular combinations of atomic coordinates that are essentially time independent are macroscopically observable." Explicar pq a energia, volume, momento etc são bons candidatos para serem usados na termodinamicas. Definir equilíbrio e postulado 1. Falar da circularidade da definição de equilibrio. Falar que na estatística/ teoria ergódica almeja-se retirar essa circularidade, partindo de principios mais fundamentais. The system must pass through all the atomic states consistent with the given boundary conditions. Solid can not occur and fluid occurs due to randomness. Falar sobre processos quasi estaticos e irreversibilidade.
\par The equilibrium hypothesis is crucial to the development of classical Thermodynamics \cite{schwabl2006statistical,callen1985thermodynamics}. According to Callen, equilibrium can be stated as follows: \emph{"in all systems there is a tendency to evolve toward states in which the properties are determined by previously applied external influences. Such simple terminal states are, by definition, time independent. They are called equilibrium states"}. From the knowledge given by the atomic theory, the condition that all variables related to the constituents of a material are unchanged throughout time is virtually impossible to be achieved. However, the states mentioned in Callen's definition are not particle states. Rather, they are "macro states", those related with macroscopic properties of the system (as energy, linear momentum, among others), which are presumed not to significantly vary upon interaction with the external world (environment).  The success of this approach, which provides accurate predictions and satisfactory explanations for experiments, derives from the fact that macroscopic measurements are extremely slow and coarse, when compared with the atomic scales of time and length \cite{callen1985thermodynamics}. Therefore, although each constituent variable of a macroscopic system cannot be considered time-independent, the quantity resulting from averaging over all particles does not significantly vary with time, this meaning that significant fluctuations around mean values cannot be detected. It follows from this that the system as whole can be described by its essentially time-independent macroscopic properties, the so-called thermodynamic coordinates. As good candidates for such coordinates, one usually employs those variables subject to conservation principles, such as energy, linear and angular momentum. Since it is common to adopt the system center of mass as reference frame, the remaining significant quantities turn out to be energy, volume, and mole numbers. These ideas are summed up in the following Postulate \cite{callen1985thermodynamics}.
\begin{postuladothermo}
	There exist particular states (called equilibrium states) of simple systems that, macroscopically, are characterized completely by the internal energy $U_{i}$, the volume $V$, and the mole numbers $N_{1}$, $N_{2}$,...$N_{r}$ of the chemical components.
	\label{postthermo1}
\end{postuladothermo}
\par At first sight, one may consider that the microscopic variables of motion do not contribute effectively for thermodynamical properties, since one measures essentially macroscopic properties. However, as it will be shown in the Statistical Physics scope, the very concept of temperature (and pressure as well) has a tight relation with microscopic configuration of motion. However such a connection is established via random motion considerations. 
\par As one may verify from the Postulate \ref{postthermo1}, the definition of internal energy is fundamental for the description of the states considered in Thermodynamics. As it will be seen, such property can be defined by path-independent function written in terms of thermodynamic coordinates which the system passes through. To make this point, we firstly need to discuss energy measurements.
\subsection{Internal energy, work and heat}
\label{subsec:measureenergy}
%Talk about the importance of walls in order to restrict the system
% Falar da medição de energia interna, como isso foi feito por joule (ver Schwabl), a necessidade de que trabalho fosse feito: na mecânica, como visto, tem uma discrição definitiva para trabalho. Isso pode ser convertido em energia interna e posteriormente em calor: mesmo caso que o bloquinho. Questionar: We can considerate that no effective dissipation happens in the above mentioned mechanical procedure. This is the case for macroscopic system. And what about microscopically speaking? How one can assume adiabatic wall and still considerers the work done by an atom? Falar sobre a radiação; Talk also about extraction and supply of work (sign convention).
\par How can the internal energy of a macroscopic system be measured? The general form $\Delta E_{\text{int}}=\mathcal{W}_{\text{ext}}$ of Eq. \eqref{eq:intext} answers this question: by tracking a macroscopic degree of freedom which is able to reveal the external work imparted on the system\footnote{Measuring $\mathcal{W}_{\text{ext}}$ gives the change in the internal energy, not an absolute value $E_{\text{int}}$}.
\par As a concrete instance, consider a container within which the inner constituents remain ideally isolated from the surroundings (adiabatic walls). Assume that some kind of work is done on the system  (for instance, by moving a wall or applying a torque in a paddle shaft). Since the system is thermally isolated, it is expected that the work done be converted into some other kind of energy, internal to the system. Therefore, if one assumes a value of internal energy $U_{fi}$ to a specific state, called the energy of fiducial state, one can measure the energy of the same system after some work $\mathcal{W}$ has been done, $U_{i}=U_{fi}+\mathcal{W}$, where $U_{i}$ is then called the thermodynamic internal energy of the system. As a fundamental idea, necessary for this measurement, is that the walls are adiabatic. If such imposition were not made, it would not be possible to established directly the connection of work with the variation of energy. Also, notice that the hypothesis of energy conservation had to be implicitly considered. Of course, the accuracy of the results depends on how precisely the hypotheses are satisfied, for there always is some leak of heat in realistic walls. 
\par It is worth noticing that energy considerations provide the sight of a fundamental distinction between Classical Mechanics and Thermodynamics. If, by adopting a Classical Mechanics perspective, one exclusively looked at the energy transported by the center of mass of the system, then one would wrongly state that the energy that the system received via external work would had been lost. In contrast, by admitting that the energy flowed to the interior, one can assert that the energy has just been transformed into unsearchable degrees of freedom. This is, in fact, assumed by Thermodynamics from the very beginning; it is not part of its concerns to furnish a detailed microscopic view of the universe. 
\par Another important aspect is related with the fact that the system was assumed to be "isolated", that is, no heat flux was allowed. But how can it be guaranteed that the mechanism that provides work does not provide some sort of heat? Presumably, the aforementioned experiment would lead to different results if the paddle that stirs the system were at a temperature much higher than that of the system. How can heat be distinguished from work? 
\par The distinction operationally emerges in the statement of the \emph{first law of Thermodynamics}: \emph{"the heat flux to a system in any process (at constant mole numbers) is simply the difference in internal energy between the final and initial states, diminished by the work done in that process"}. Mathematically,
\begin{equation}
	\delta \mathcal{Q}=dU_{i}-\delta \mathcal{W},\label{eq:firstlaw}
\end{equation}
where $\delta \mathcal{Q}$ and $\delta \mathcal{W}$ are infinitesimal amounts of energy under the form of heat and work, respectively, whereas $dU_{i}$ accounts for the resulting change in the internal energy. Here the symbol $\delta$ denotes an inexact differential, whose meaning is object of a vast discussion in the literature \cite{schwabl2006statistical,callen1985thermodynamics,chandler1987introduction,pathria1996}. One might argue, at a first sight, that the first law provides a definition for heat. However, this position cannot be maintained because the very notion of internal energy, as constructed above, demands the absence of heat transfer, which makes the argument be logically cyclic.
\par Despite these conceptual difficulties, it is still possible to obtain some information about heat in practical situations by looking at changes in the temperature of the system. This, however, does not yield a clear mechanical picture or general definition for heat. One could try a better understanding by considering the classical theory of heat transfer, where conduction, convection, and irradiation appears as basic forms of heat transfer \cite{incropera1999fundamentos}. However, once again one may ask for the basic mechanisms behind these  phenomena, in which case no trivial answer comes as well. 
\par An irradiation process is considered, for instance. From a macroscopic perspective, as radiation enters the system no net work occurs;  in this case, it is commonly said that heat enters the system. On the other hand, from a microscopic viewpoint, an atom receives both energy and momentum, with two important consequences. One, the center of mass of the atom gets a kickback (its kinetic energy changes), which can be thought of as deriving from work. Two, the volume of the atom increases, which in a semiclassical description is explained by the excitation of the atom (see Fig. \ref{fig:photonabsor} for the Bohr description of a hydrogen atom, where an electron jumps to a larger orbit upon absorption of a photon).  If this volume increase is understood to be due to work, as is usually done at macroscopic level, then it follows that, from a microscopic perspective, irradiation can be accounted solely in terms of work. It is fair then to ask whether heat can actually be define under an atomic prescription. This discussion will be recovered in the QT context, in the next chapter.
\begin{figure}[!h]
	\begin{center}
		\includegraphics[angle=0, scale=0.45]{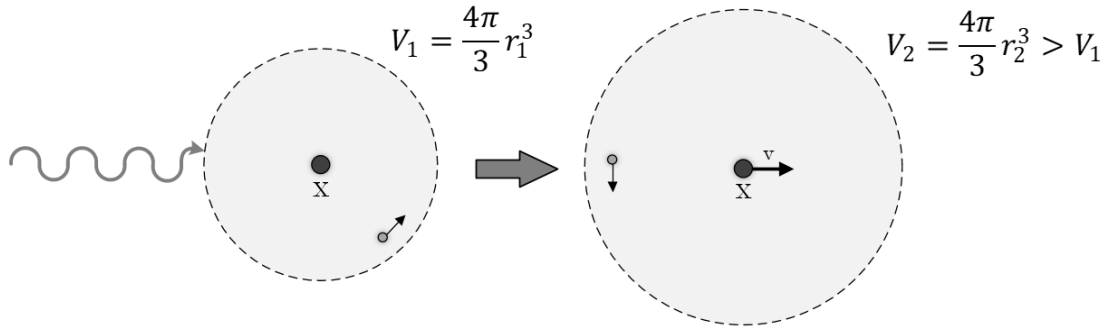}
		\caption{Schematic representation of the absorption of a photon by a hydrogen atom according to Bohr's model. Before the absorption, the mean volume of the atom is $V_{1}=\frac{4\pi r_{1}^{3}}{3}$. Upon absorption, the electron jumps to an outer orbit, which makes the atom volume increase to $V_{2}=\frac{4\pi r_{2}^{3}}{3}$. }
		\label{fig:photonabsor}
	\end{center}
\end{figure}
\par A most explicit distinction of work and heat is made qualitatively, in the scope of classical Thermodynamics. As Callen \cite{callen1985thermodynamics} writes \emph{"But it is equally possible to transfer energy via the hidden atomic modes of motion as well as via those that happen to be macroscopically observable. An energy transfer via the hidden atomic modes is called heat"}. From such perspective, there is an unknowable or unmeasurable atomic motion, that can transfer energy from a system to another. Such form of energy transfer is called heat. It can be concluded, therefore, that what is called heat is nothing but energy transfer mediated by microscopic work performed at scales inaccessible to macroscopic monitoring. In the following subsections, some aspects related to Thermodynamics general structure are succinctly treated for the sake of completeness only. 
\subsection{Thermodynamics problem, entropy and temperature }
\par The problem that Thermodynamics is concerned with can be summarized in the following form \cite{callen1985thermodynamics}: \emph{"the single, all-encompassing problem of Thermodynamics is the determination of the equilibrium state that eventually results after the removal of internal constraints in a closed, composite system"}. It can be regarded the example of a container with two gases, divided by an adiabatic, impermeable wall in equilibrium. Each one of the parts of the system is considered as being at equilibrium and there is no \emph{internal} constraint in each part of the container. However, the two gases are separated by an impermeable adiabatic wall. After the removal of the wall, the system with the two gases will be now considered as the new closed system. The Thermodynamics problem is to determine which will be the equilibrium state to be achieved terminally. In order to determine, from all the possible equilibrium states, what is the terminal one, it is imposed the following Postulates.
\begin{postuladothermo}
	"There exists a function (called the entropy $S$) of the extensive parameters of any composite system, defined for all equilibrium states and having the following property: the values assumed by the extensive parameters in the absence of an internal constraint are those that maximize the entropy over the manifold of constrained equilibrium states".\label{post:entropy}
\end{postuladothermo}
\begin{postuladothermo}
	"The entropy of a classical composite system is additive over the constituent subsystems. The entropy is continuous and differentiable and is a monotonically increasing function of the energy".
\end{postuladothermo}
\par It is important to remark that this does not necessarily imply that Thermodynamics does not involve out of equilibrium processes, but that only terminal, equilibrium states will be analyzed in Thermodynamics.
\par Within these Postulates, the definition of temperature emerges: by maximizing the entropy of a two component ($1$ and $2$) system, in which each component interacts with each other by transferring energy, there is a quantity $\left.\frac{\partial U_{i}}{\partial S}\right|_{V,N_{1},...,N_{r}}$ that equals for both system, i.e. $\left.\frac{\partial U_{i}^{1}}{\partial S_{1}}\right|_{V_{1},N_{1}^{1},...,N_{r}^{1}}=\left.\frac{\partial U_{i}^{2}}{\partial S_{2}}\right|_{V_{2},N_{1}^{2},...,N_{r}^{2}}$. Such definition is a manifestation of the sometimes called zeroth law of Thermodynamics, which stipulates \emph{"the existence of a common parameter for two or more physical systems in mutual equilibrium"} \cite{pathria1996}. This maximizing process is also made in the scope of Statistical Physics and therefore is a result that intersects both Thermodynamics and Statistical Physics. 
\par Finally, one has the following Postulate:
\begin{postuladothermo}
	"The entropy of any classical system vanishes in the state for which $\left.\frac{\partial U_{i}}{\partial S}\right|_{V,N_{1},...,N_{r}}=0$".
\end{postuladothermo}
\par The Postulate is considered as an extension of the so-called Nernst Postulate \cite{callen1985thermodynamics}(or theorem \cite{schwabl2006statistical,schrodinger1989statistical}) or third law. Since there is no mention of the above Postulate in the rest of the present text, it is no further commented about the third law. In the sequence, it is treated the quasi-static considerations and its importance to the establishment of Thermodynamics.
\subsection{Second law}
\par The second law can be stated mainly in three forms: the Clausius definition, the Kelvin one and another based on a mathematical inequality \cite{schwabl2006statistical,callen1985thermodynamics,moran2010fundamentals}. Since they are all equivalent, only the mathematical inequality will be treated here. For a system under any thermodynamic process it holds that 
\begin{equation}
TdS\geq \delta \mathcal{Q}.
\end{equation}
A necessary (but not sufficient) condition for the equality to occur is that the system must pass through a quasi-static process\footnote{Accordingly with Schwabl \cite{schwabl2006statistical}, \emph{"a quasistatic process takes place slowly with respect to the characteristic relaxation time of the system, i.e. the time within which process is the one which the system passes from a non equilibrium state to an equilibrium state, so that the system remains in equilibrium at each moment during such a process."}}. In cases where the process is not quasi-static \emph{"turbulent flows and temperature fluctuations take place, leading to the irreversible production of heat"}\cite{schwabl2006statistical}. A sufficient condition for the equality to hold is that the system must be under a quasi-static reversible process.  For such processes, heat is directly  connected with entropy: $TdS=\delta \mathcal{Q}$. As will be seen in the Statistical Physics treatment of entropy, entropy is linked with the number of microstates which the system can access. Therefore, in this case, the heat is connected with the microscopic properties of the material. The work, on the other hand, is frequently associated with macroscopic measurable properties. In fact, according with Chandler \cite{chandler1987introduction} \emph{"the work term has the general form $\delta \mathcal{W}=\underline{f}\cdot \underline{dX}$, where $\underline{f}$ is the applied 'force', and $\underline{X}$ stands for a mechanical extensive variable"}. The extensibility can be related with the size of the system, which is a macroscopically inferable notion. From such statements, it can be concluded that heat is connected with microscopic properties and work with macroscopic ones. This recovers to the discussion made on subsection \ref{subsec:measureenergy}.
\subsection{Summary of Thermodynamics' scenario}
\par The main aspects treated above can be put as follows:
\begin{itemize}
	\item \emph{Equilibrium states}: terminal time independent states toward which a system evolves, in which the properties are determined by previously applied external influences; depend on the macroscopically measurable variables of the system such as internal energy, number of constituents, volume and others;
	\item \emph{Entropy}:  an additive function defined for all equilibrium states whose maximization determines the equilibrium state;
	\item \emph{Thermal equilibrium:} $\left.\frac{\partial U_{i}^{1}}{\partial S_{1}}\right|_{V_{1},N_{1}^{1},...,N_{r}^{1}}=\left.\frac{\partial U_{i}^{2}}{\partial S_{2}}\right|_{V_{2},N_{1}^{2},...,N_{r}^{2}}\rightarrow T_{1}=T_{2}$;
	\item \emph{Heat transfer}: qualitatively, an energy transfer via hidden atomic modes;
	\item \emph{Work transfer}: An energy transfer of the form $\underline{f}\cdot \underline{dX}$, where $\underline{f}$ is the applied "force", and $\underline{X}$ stands for a mechanical extensive variable; defined in agreement with other branches of physics, like Classical Mechanics, Electrodynamics, among others;
	\item \emph{First law of Thermodynamics:} $\delta \mathcal{Q}=dU_{i}-\delta \mathcal{W}$;
	\item \emph{Second law of Thermodynamics:} $TdS\geq \delta \mathcal{Q}$;
	\item \emph{Third law of Thermodynamics:} $S\rightarrow 0$ as $T\rightarrow 0$.
\end{itemize}
Thermodynamics can be viewed, in essence, as an empirical theory. Interestingly, it is nevertheless possible to support it with mechanical principles supplemented with statistical aspects.
% Schwabl: a primeira lei é com $\delta Q$ e $\delta W$ e  na segunda $\delta Q\leq TdS$
\section{\textbf{Statistical Physics}}
\par A macroscopic system is composed of many internal degrees of freedom. In the case of standard examples within the scope of Thermodynamics, the number of degrees of freedom is expected to have the order of magnitude of the Avogadro number. However, it is currently impossible to solve all equations of motion of a system with so many degrees of freedom. A different perspective is the statistical one: infinitely many copies of the system are considered (ensemble) that obey some restriction, as for instance a fixed value of energy or number of constituents. 
\par The statistical description is useful not only in cases where one cannot compute all the equations; it can be used also in cases where there is not a precise knowledge of some conditions around the problem. For example, in Classical Mechanics, from the equations of motion, one may determine the position and momentum and other properties in every instant of time, given that the initial condition is known. However, from an experimental point of view, the initial conditions have an irreducible operational uncertainty, due to the apparatus imprecision or some external influence (the measurer, for instance). Therefore one may, for example, determine the \emph{mean values} and \emph{variance} associated with those initial conditions. Such errors naturally limit the predictions for such experiments. In particular, such uncertainty will propagate in time and, therefore, blur the knowledge about the future of the system. That does not mean necessarily that Classical Mechanics is not precise, but only that the initial conditions are not determined with full precision. In order to work around such problem, one may consider a distribution associated with the initial condition. From such distribution, one may compute expectation values and variances at all times. In taking into account such experimental uncertainties in the initial conditions (which are critical for chaotic systems), this approach provides a "fairer" (though more limited) predictive power to Classical Mechanics. A fundamental tool in this scenario is the establishment of an "equation of motion" for the probability distribution itself. This point is discussed next.
% Introdução: como mencionado, a abordagem via estatistica é necessária para descrever o sistema devido ao numero gigante de calculos e de armazenamento;
% Definir Espaço de fases, 
% Primeiramente aspectos gerais de distribuição sao tratados: Liouville theorem
% Depois trata-se de sistemas onde a estatistica é necessária para sistemas de uma partícula: ignorancia subjetiva
% Depois os resultados que justificam a termo são tratados: enesembles em equilibrio, equilibrio estatístico, equiprobabilidade , ensemble microcanonico, canônico, hipótese ergódica, definição de entropia, maximização da entropia e termodinamica associada a primeira lei; 
\subsection{Liouville Theorem}
\par A phase space is defined as the space spanned by the $6N$ variables (coordinates and momenta) of a classical system of $N$ particles \cite{schwabl2006statistical}. Therefore, the various states of a system, may be mathematically expressed by points in this space \cite{landau1968statistical}. The Liouville theorem \cite{schwabl2006statistical,tolman1938principles,de2011topicos} states that any statistical distribution in the phase state, of a closed system subjected to the Hamilton equations of motion, has the property:
\begin{equation}
	\frac{d \rho_{cl}}{dt}=0.
\end{equation} 
Note that this does not imply that $\frac{\partial \rho_{cl}}{\partial t}=0$. This theorem has the following consequences:
\begin{itemize}
	\item The volume elements in phase space do not change, that is $d\underline{q}(t)d\underline{p}(t)=d\underline{q}(0)d\underline{p}(0)$\footnote{Here  $d\underline{q}(t)=dq_{x}^{1}dq_{y}^{1}dq_{z}^{1}dq_{x}^{2}...dq_{z}^{N}$ and $d\underline{p}(t)=dp_{x}^{1}dp_{y}^{1}...dp_{z}^{N}$ is the differential volume of the phase space of a system of $N$ particles.};
	\item $\rho_{cl}(\underline{q}(\underline{q}_{0},\underline{p}_{0},t),\underline{p}(\underline{q}_{0},\underline{p}_{0},t),t)=\rho_{cl}(\underline{q}_{0},\underline{p}_{0},0)$, i.e., the Hamiltonian flow of trajectories implies that the local density will not change from the point $\left(\underline{q}_{0},\underline{p}_{0}\right)$ to $\left(\underline{q}(\underline{q}_{0} ,\underline{p}_{0},t),\right.$ $\left.\underline{p}(\underline{q}_{0},\underline{p}_{0},t)\right)$;
	\item By considering the Poisson bracket notation\footnote{$\left\{u,v\right\}\equiv \sum_{i=1} \left[\frac{\partial u}{\partial p_{i}}\frac{\partial v}{\partial q_{i}}-\frac{\partial u}{\partial q_{i}}\frac{\partial v}{\partial p_{i}}\right]$.}, it can be proved that 
	\begin{equation}
		\frac{\partial \rho}{\partial t}=-\left\{\rho,H_{cl}\right\},\label{eq:poissonliou}
	\end{equation} 
	where $H_{cl}$ is the Hamiltonian that describes the system dynamics.
\end{itemize}
This approach allows for a statistical treatment of scenarios involving subjective ignorance about initial conditions, where each phase-space point follows a deterministic trajectory while $\rho$ introduces a probability distribution for these points.
\subsection{Subjective ignorance in Classical Mechanics}
\label{subsec:subjective}
\par A unidimensional system is considered, for the sake of simplicity. Although the system is treated under the scope of Classical Mechanics, it is assumed that there can be a subjective ignorance, that is, the observer cannot determine with certainty the position and momentum at the beginning of a test made on the system. This sort of ignorance may occur because the experiment apparatus does not provide a full precision for the measurement, for instance. It is fair to suppose, however, that the observer can determine the initial position and momentum mean values, by making measurements many times on copies of the same system. The observer can thus obtain 
\begin{equation}
\langle q_{0}\rangle= q_{0}^{'}\,\,\,\,\,\,\,\,\,\, \text{and}\,\,\,\,\,\,\,\,\,\,\langle p_{0}\rangle= p_{0}^{'}.
\end{equation}
as mean values.
Also, the observer can determine the initial variance associated with momentum and position:
\begin{equation}
\left\langle\left(q_{0}-q_{0}^{'}\right)^{2}\right\rangle = \sigma_{q,0}^{2}\,\,\,\,\,\,\,\,\,\, \text{and}\,\,\,\,\,\,\,\,\,\, \left\langle\left(p_{0}-p_{0}^{'}\right)^{2}\right\rangle= \sigma_{p,0}^{2}.
\end{equation} 
A statistical distribution that satisfies the equations above, commonly used in statistical analysis \cite{szczepinski2002error}, is the classical Gaussian distribution $\rho_{gcl}(q_{0},p_{0},0)$, defined as
\begin{equation}
\rho_{gcl}(q_{0},p_{0},0)=\frac{1}{2\pi \sigma_{p,0}\sigma_{q,0}}\mathrm{e}^{-\frac{1}{2}\left(\left(\frac{q_{0}-q_{0}^{'}}{\sigma_{q,0}}\right)^{2}+\left(\frac{p_{0}-p_{0}^{'}}{\sigma_{p,0}}\right)^{2}\right)}\equiv \mathcal{G}\left[q_{0}^{'},p_{0}^{'},\sigma_{q,0},\sigma_{p,0}\right].
\end{equation}
where the notation $\mathcal{G}\left[ q_{0}^{'},p_{0}^{'},\sigma_{q,0},\sigma_{p,0}\right]$ stands for the Gaussian distribution centered in $(q_{0}^{'},p_{0}^{'})$, with respective variances $\sigma_{q,0}$ and $\sigma_{p,0}$. 
For this distribution one can check that
\begin{align}
&\int_{-\infty}^{\infty}\int_{-\infty}^{\infty}\rho_{gcl}(q_{0},p_{0},0)dq_{0}dp_{0}=1,\\
&\int_{-\infty}^{\infty}\int_{-\infty}^{\infty}\rho_{gcl}(q_{0},p_{0},0)p_{0}^{2}dq_{0}dp_{0}=p_{0}^{'2}+\sigma_{p,0}^{2},\\
&\int_{-\infty}^{\infty}\int_{-\infty}^{\infty}\rho_{gcl}(q_{0},p_{0},0)q_{0}^{2}dq_{0}dp_{0}=q_{0}^{'2}+\sigma_{q,0}^{2}.
\end{align}
For future proposes, it is also interesting to consider a scenario where the observer has a relatively high certainty about two distinct phase-space points. This can be treated by statistical mixture of two Gaussian distributions, with different symmetric (for the sake of simplicity) centers $(q_{0}^{'},p_{0}^{'})$ and $(-q_{0}^{'},-p_{0}^{'})$. Considering the same variance for both distributions, i.e., $\sigma_{q,0}^{2}$ and $\sigma_{p,0}^{2}$, the mixed distribution is, therefore,
\begin{equation}
\rho_{mgcl}(t,q_{0},p_{0})=\frac{1}{2}\left(\mathcal{G}\left[ q_{0}^{'},p_{0}^{'},\sigma_{q,0},\sigma_{p,0}\right]+\mathcal{G}\left[ -q_{0}^{'},-p_{0}^{'},\sigma_{q,0},\sigma_{p,0}\right]\right)
\end{equation}
such that
\begin{align}
&\int_{-\infty}^{\infty}\int_{-\infty}^{\infty}\rho_{mgcl}(t,q_{0},p_{0})dq_{0}dp_{0}=1,\\
&\int_{-\infty}^{\infty}\int_{-\infty}^{\infty}\rho_{mgcl}(t,q_{0},p_{0})p_{0}dq_{0}dp_{0}=0,\\
&\int_{-\infty}^{\infty}\int_{-\infty}^{\infty}\rho_{mgcl}(t,q_{0},p_{0})p_{0}^{2}dq_{0}dp_{0}=p_{0}^{'2}+\sigma_{p,0}^{2},\\
&\int_{-\infty}^{\infty}\int_{-\infty}^{\infty}\rho_{mgcl}(t,q_{0},p_{0})q_{0}dq_{0}dp_{0}=0,\\
&\int_{-\infty}^{\infty}\int_{-\infty}^{\infty}\rho_{mgcl}(t,q_{0},p_{0})q_{0}^{2}dq_{0}dp_{0}=q_{0}^{'2}+\sigma_{q,0}^{2},\\
&\int_{-\infty}^{\infty}\int_{-\infty}^{\infty}\rho_{mgcl}(t,q_{0},p_{0})q_{0}p_{0}dq_{0}dp_{0}=q_{0}^{'}p_{0}^{'}.
\end{align}
\par By solving the Hamilton equations of motion one may obtain time-dependent functions for whatever dynamic variables, such as position $q(t)\equiv q(q_{0},p_{0},t)$, momentum, $p(t)\equiv p(q_{0},p_{0},t)$, velocity $\dot{q}(t)\equiv \dot{q}(q_{0},p_{0},t)$, momentum derivative $\dot{p}(t)\equiv \dot{p}(q_{0},p_{0},t)$, among others. Considering, a functional dependent on the functions related with the dynamical variables, say $f(t)\equiv f(q_{0},p_{0},t)\equiv f(q(t),p(t),\dot{q}(t),\dot{p}(t)...)$, it follows, from the Liouville theorem, that
\begin{equation}
\begin{array}{rl}
\displaystyle\langle f(t)\rangle&\displaystyle=\int_{-\infty}^{\infty}\int_{-\infty}^{\infty}\underbrace{\rho_{cl}(q(t),p(t),t)}_{\rho_{gcl}(q_{0},p_{0},0)}\underbrace{f(t)}_{f(q_{0},p_{0},t)}\underbrace{dq(t)dp(t)}_{dq_{0}dp_{0}}\\
&\displaystyle=\int_{-\infty}^{\infty}\int_{-\infty}^{\infty}\rho_{gcl}(q_{0},p_{0},0)f(q_{0},p_{0},t)dq_{0}dp_{0}.
\end{array}
\end{equation}
Therefore, by solving the equations of motion for an arbitrary initial condition $\left(q_{0},p_{0}\right)$, determining the expression $f(q_{0},p_{0},t)$ and integrating in the initial phase-space volume with the initial distribution is the same as determining $\rho_{cl}(q(t),p(t),t)$\footnote{Here $\rho_{cl}$ can assume $\rho_{gcl}$ for the Gaussian distribution or $\rho_{mgcl}$ for the mixed distribution.} and integrating over the phase-space volume at time $t$. This identity will be used throughout the work. 
\subsection{Statistical approach to Thermodynamics}
\label{subsec:stathermo}
\par The statistical description of a large system connects its microscopic effects with macroscopic results, as already mentioned. Fundamental for the development of this link, are the definitions of macrostate and microstate. A microstate in a Classical Mechanics system is a point in the phase space and, therefore, is defined by all the coordinates and momenta. Macroestate is the one defined by Thermodynamics, being characterized by few macroscopic variables (energy, volume and others\footnote{See Postulate \ref{postthermo1}.}). The definition of ensemble explicitly makes the connection between the microstates and the macrostate of a macroscopic system: an statistical ensemble is the collection of all the microstates which can represent the same macrostate, weighted by their frequency of occurrence. The equilibrium ensembles constitute a class commonly used in the scope of Thermodynamics. In the following, the main hypothesis considered in Statistical Physics textbooks in association with equilibrium ensembles are summarized.
\begin{itemize}
	\item \emph{Statistical equilibrium}: Some authors \cite{schwabl2006statistical,reif2009fundamentals} consider that $\frac{\partial \rho }{\partial t}=0$ defines the notion of statistical equilibrium. Reif \cite{reif2009fundamentals} states that it also characterizes the thermodynamical equilibrium. Landau and Lifshitz \cite{landau1968statistical} states that a macroscopic system is in statistical, thermodynamic, equilibrium if any macroscopic subsystem\footnote{A macroscopic part of the system that is small when compared with the whole system.} of it has its physical quantities  to a high degree of accuracy equal to their mean values.
	\item \emph{Equiprobability a priori and micro-canonical ensemble}: The equiprobabilty hypothesis treated by Tolman \cite{tolman1938principles} asserts that the microstates compatible with the macrostate of a physical system, when in thermodynamic equilibrium, are equiprobable, i.e. have the same weight. Such hypothesis is explicitly considered in order to define the micro-canonical ensemble. Consider that the energy of  system is assured to lie in the interval $\left[E,E+\Delta_{E}\right]$. Then the micro-canonical ensemble is composed of equally weighted microstates that lie in the region of the phase space such that the Hamiltonian of the system $H_{cl}(\underline{q},\underline{p})$ satisfies the condition $E\leq H_{cl}(\underline{q},\underline{p})\leq E+\Delta_{E}$. Such region is called by some authors as the \emph{energy shell}.  
	\item \emph{Ergodic hypothesis}: The (sometimes called quasi-)Ergodic hypothesis states that the point in the phase space corresponding to a system in equilibrium, will eventually pass arbitrarily close to any given microstate that belongs to the respective equilibrium ensemble \cite{gemmer20044,arnol1968ergodic,lebowitz1973modern}. Therefore, for sufficiently long times as compared with relevant microscopic time scales of the system, time averaging is equivalent to averaging in the statistical ensemble. Such a hypothesis is considered also by Landau and Lifshitz indirectly\cite{landau1968statistical}.  
	\item \emph{Entropy}: It can assume the form \cite{schwabl2006statistical}: $S=-k_{B}\int_{\Gamma}\frac{d\underline{p}d\underline{q}}{\hbar^{6N}}\rho_{cl}\left(\underline{p},\underline{q}\right)\ln\left[\rho_{cl}\left(\underline{p},\underline{q}\right) \right]$, where $k_{B}$ is frequently considered as the Boltzmann constant. It has more than one interpretation. In the case of the micro-canonical ensemble, it assumes the Boltzmann proposition for entropy $S=k_{B}\ln\left[\hat{\Omega}(E)\right]$, where $\hat{\Omega}(E)$ is the so-called energy shell volume. As mentioned by Landau and Lifshitz, and from the Thermodynamics Postulate \ref{post:entropy}, the direction of a process will be dictated by the maximization of the entropy, i.e. the process will tend to an equilibrium macrostate, at which the entropy is maximum. 
	\item \emph{Temperature}: It is frequently obtained considering instances involving an isolated system, with total energy E, with two partitions of fixed volumes that can transfer energy to each other and the interaction energy is small compared with the inner energy of each partition. In the most probable configuration \cite{schwabl2006statistical}, in which the energy of each partition is represented by $\hat{E}_{1}$ and $E-\hat{E}_{1}$ , one finds that $\left.\frac{\partial S_{b,1}(E_{1})}{\partial E_{1}}\right|_{\hat{E}_{1}}=\left.\frac{\partial S_{b,2}(E_{2})}{\partial E_{2}}\right|_{E-\hat{E}_{1}}$. Therefore, from the conditions of thermal equilibrium between two systems (zeroth law), the temperature emerges as $T^{-1}=\frac{\partial S}{\partial E}$. Note that such definition is made under the assumption that the system as a whole (composed by the two partitions) is in equilibrium.
	\item \emph{Canonical Ensemble}: The canonical ensemble is sometimes of simpler treatment and computation and, therefore, is frequently used for a system that energetically interacts with another system, called reservoir, at temperature $T$. The distribution assumes the form $\rho_{c}=\frac{\mathrm{e}^{-\frac{H_{cl}}{kT}}}{\mathcal{Z}}$ where $\mathcal{Z}$ is the partition function $\mathcal{Z}=\int_{\underline{q},\underline{p}}\mathrm{e}^{-\frac{H_{cl}}{kT}}\frac{d\underline{p}d\underline{q}}{\hbar^{6N}}$ and $H_{cl}$ is the Hamiltonian of the system. 
\end{itemize} 
\par In the context of QT problems, one aims at determining the dynamics and the thermodynamical properties in atomic time-space regime. However, an important open question for physics is the length- and time-scale domain to which thermodynamics laws still apply. This ambition is believed by some authors to be accomplished by studies in the QT scope. 
\par To finalize this brief review of Classical Thermodynamics and Statistical Physics, in what follows it is presented the Statistical Physics accounts for work and heat. 
\subsection{External parameters in Thermodynamics systems}
\label{subsec:externalparam}
\par  The present discussion almost entirelly follows the textbooks of Reif and Schwabl \cite{schwabl2006statistical,reif2009fundamentals}. Reif defines external parameters as \emph{"some macroscopically measurable independent parameters which are known to affect the equations of motion (i.e., appear in the Hamiltonian) of this system"}. By considering this definition he then distinguishes two types of interaction. First, for the so-called "purely thermal interaction", the external parameters remain fixed. Reif then defines heat as \emph{"the mean energy transferred from one system to the other as a result of purely thermal interaction"}, i.e., the energy transfer where the macroscopically measurable parameters, associated with the Hamiltonian, does not change. On the other hand, the author defines that a thermally isolated system \emph{"cannot interact thermally with any other system"}. For systems that are thermally isolated, there is then the \emph{"purely mechanical interaction"}. The \emph{"macroscopic work"}, as called by Reif, is the energy transfer between systems when they interact through a purely mechanical interaction. 
\par An example that illustrates the definition of macroscopic work given above can be considered \cite{schwabl2006statistical}: the external parameter is assumed to be the volume $V$, i.e. the Hamiltonian has a dependence on $V$, $H_{cl}\equiv H_{cl}(V)$. When this Hamiltonian dependence is considered, the energy shell volume $\hat{\Omega}(E)$ will\footnote{For more details, the reader is referred to \cite{schwabl2006statistical}.} also depends on $V$, $\hat{\Omega}(E)\equiv\hat{\Omega}(E,V)$. The entropy, considering the micro-canonical ensemble, will also depends on such parameter $S=S_{B}=k\ln \left(\hat{\Omega}(E,V)\right)$. Therefore, it can be proved that \cite{schwabl2006statistical}
\begin{equation}
	dS=\frac{1}{T}\left(dE-\left\langle\frac{\partial H_{cl}}{\partial V}\right\rangle dV\right).
\end{equation}
For a system confined in walls of volume $V$, the pressure $P$ is determined to be $P=-\left\langle\frac{\partial H_{cl}}{\partial V}\right\rangle$, such that 
\begin{equation}
	dE=TdS-PdV.
\end{equation}
The above equation, considering the work done on the confined system as $\delta \mathcal{W}=-PdV$, can be seen as a differential form of the first law of Thermodynamics for reversible processes, i.e., where $\delta \mathcal{Q}=TdS$. This example therefore evidences the purposes of the definitions given by Reif: the transfer energy $\delta \mathcal{W}=-PdV$ refers to the purely mechanical contribution to the total energy transfer and $\delta \mathcal{Q}=TdS$ to the purely thermal one. In fact, the connection between work and an external parameter changes is also considered by Jarzinsky in his well-known equality connecting non-equilibrium properties with equilibrium ones \cite{jarzynski1997nonequilibrium}. This seems as a reasonable \emph{definition} of macroscopic work and heat and is exported to the quantum realm by Alicki \cite{alicki1979quantum}, to be treated in the next chapter.

\chapter{THEORETICAL BACKGROUND: QUANTUM THEORY}
\label{cap:fundamentacao2}
% Intro: Falar sobre Nielsen e Chuang
% Primeiro se formaliza a m q;
% Heisemberg Picture
% Lindblad Picture: bem sucinto!!!!!
% Falar sobre a conexão com mc via ehrenfest
% Falar sobre David Bohn
% Estatística:
% Depois fala de dois formalismos: a conexão da estatistica com mq
% Relações com teoria da informação: entanglement, entropy, etc;
% Statistical considerations for a thermodynamical system;
% QT Final: Alicki and others approach
% ""       : criticism;
\par The formalism to be employed throughout the results presented in the following chapters is stated here. In analogy with the previous chapter, Quantum Mechanics (QM) is firstly presented in its textbook form, where pure states are considered, and then quantum statistics is introduced. It is important to emphasize that the postulates to be presented were taken from Ref. \cite{nielsen2010quantum}.
\par In the scope of pure states, the Heisenberg picture, to be adopted in the formulation of the work definition stated at the next chapter, is explored in section \ref{sec: Heisenberg}. Then, some discussions related to the interface between classical and quantum perspectives are described. The Gaussian states are considered and the Ehrenfest theorem is discussed, where the idea of "macroscopically measurable properties" stated at previous chapter is renewed.  
\par The attention is turned to quantum statistics in section \ref{sec:quantumstat}, where some postulates are set. There some concepts related to Thermodynamics are succinctly considered and a summary of the main ideas are treated.
\par QTh is finally discussed at section \ref{sec:QT}, where the definitions of work approached in the literature are described in general terms.

\section{\textbf{Quantum Mechanics postulates}}
% Dirac notation bracket (se for o caso, citar sakurai)
% Unit vector Iesus
% Linear combination: Iesus
\par The QM canvas is set by the first postulate:
\begin{postuladoq}
	Associated to any isolated physical system is a complex complete vector space with inner product (that is, a Hilbert space) known as the state space of the system. The system is completely described by its state vector, which is a unit vector in the system state space.
	\label{postqt1}
\end{postuladoq}
The condition that the state vector is a unit one is formally written, accordingly with Dirac bracket notation as $\langle\psi|\psi \rangle=1$. A set $S$ of unit vectors is said to be orthonormal, if for every $\ket{\psi_{i}},\ket{\psi_{j}}\in S$, then $\langle\psi_{i}|\psi_{j} \rangle=\delta_{ij}$ where $\delta_{ij}$ represents the Kronecker delta.  Note that this postulate does not mention anything related with the physics of the state vector: it just states that it must be an unit vector of a Hilbert space. Since such space is linear, a linear combination of its elements also belongs to it. Therefore the description of a state $\ket{\psi}$ can be a linear combination of other states vector, i.e.
\begin{equation}
	\ket{\psi}=\sum_{k}c_{k}\ket{\psi_{k}}.
\end{equation}
In the language of QM, the state $\ket{\psi}$ is a superposition of the states $\left\{ \ket{\psi_{k}}\right\}$.
\par The temporal evolution of a state vector is given as follows.
% Falar da equivalencia com a equação de Schrodinger
\begin{postuladoq}
	The evolution of a closed quantum system is described by a unitary transformation. That is, the state $\ket{\psi}$ of the system at time $t_{1}$ is related to the state $\ket{\psi^{'}}$ of the system at time $t_{2}$ by a unitary operator $\mathcal{U}$ which depends only on	the times $t_{1}$ and $t_{2}$,
	\begin{equation}
		\ket{\psi^{'}}=\mathcal{U}\ket{\psi}.
	\end{equation} \label{postqt2}
\end{postuladoq}
\par The postulate asserts that if a system is supposed to be closed, then its evolution is unitary. In this statement, however, one may question: what unitary operator should be used?  First, for the vast class of systems treated in QM, the evolution is continuous in time. In this case, the state must satisfy the \emph{Schrödinger equation},
\begin{equation}
	i\hbar \frac{\partial \ket{\psi}}{\partial t}=H\ket{\psi}
\end{equation}
where $\hbar$ is the reduced Planck constant, $i^{2}=-1$, and $H$ is the Hamiltonian operator. The unitary-operator form of Schrödinger equation is,
\begin{equation}
		i\hbar \frac{\partial \mathcal{U}}{\partial t}=H\mathcal{U}.\label{eq:schrun}
\end{equation}
For cases in which $H$ is time-independent, it can be checked that, for two different instant of time $t_{0}$ and  $t$, the respective states $\ket{\psi (t_{0})}$ and  $\ket{\psi (t)}$ are related by $\ket{\psi (t)}=\mathcal{U}(t,t_{0})\ket{\psi (t_{0})}=\mathrm{e}^{\frac{H\left(t-t_{0}\right)}{i\hbar}}\ket{\psi (t_{0})}$. In general,
\begin{equation}
	\mathcal{U}(t,t_{0})=\mathbb{1}+\sum_{n=1}^{\infty}\left(\frac{i}{\hbar}\right)^{n}\int_{t_{0}}^{t}dt_{1}\int_{t_{0}}^{t_{1}}dt_{2}\cdots\int_{t_{0}}^{t_{n-1}}dt_{n}H(t_{1})H(t_{2})\cdots H(t_{n}).\label{eq:soltot}
\end{equation}
\par Another question that can be made, regarding the postulate, is: which system is, indeed, closed? The closeness of a system is an approximation that can be made for a large range of physical systems, yielding great results. It is not assumed that these systems does not interact at all, but that its interaction are not relevant in its quantum mechanical description. Naturally, this approximation is not necessarily true for any system. In order to account for such interaction terms, the open quantum system theory is considered in many cases (see section \ref{sec:QT} for a brief review). 
\par The next postulate establishes the collapse of a quantum state, after a measurement process.
% A evolução enunciada no axioma acima pode ser qqr uma, ver aplicações de qbit, etc
% O sistema fechado é uma abstração física, que possui aplicações importante
% A evolução consideranto a continuidade do tempo é um caso especial fundamental na mq, eq de schrodinger

% Falar da projeção e do observável
\begin{postuladoq}
	Quantum measurements are described by a collection $\left\{M_{m}\right\}$ of measurement operators. These are operators acting on the state space of the system being measured. The index $m$ refers to the m-th measurement outcome that may occur in the experiment. If the state of the quantum system is $\ket{\psi}$ immediately before the measurement then the probability that result $m$ occurs is given by 
	\begin{equation}
		p(m)=\bra{\psi}M_{m}^{\dagger}M_{m}\ket{\psi}=\mathrm{Tr}\left(M_{m}^{\dagger}M_{m}\ket{\psi}\bra{\psi}\right),
	\end{equation}
	where $\mathrm{Tr}$ denotes the trace operation, and the state of the system after the measurement is
	\begin{equation}
		\frac{M_{m}\ket{\psi}}{\sqrt{\bra{\psi}M_{m}^{\dagger}M_{m}\ket{\psi}}}.
	\end{equation}
	The measurement operators satisfy the completeness equation,
	\begin{equation}
	\sum_{m}M_{m}^{\dagger}M_{m}=\mathbb{1}.
	\end{equation}
	\label{postqt3}
\end{postuladoq}
A special case of measurement is a projective measurement.
\begin{definicoes}
	A projective measurement is described by an observable $M$, an Hermitian operator on the state space of the system being observed. The observable has a spectral decomposition, 
	\begin{equation}
		M=\sum_{m}mP_{m}
	\end{equation}
	where $P_{m}$ is the projector onto the eigenspace of $M$ with eigenvalue $m$. The possible outcomes of the measurement correspond to the eigenvalues, $m$, of the observable. Upon measuring the state $\ket{\psi}$, the probability of getting result $m$ is given by
	\begin{equation}
		p(m)=\bra{\psi}P_{m}\ket{\psi}=\mathrm{Tr}\left(P_{m}\ket{\psi}\bra{\psi}\right).
	\end{equation}
	Given that outcome $m$ occurred, the state of the quantum system immediately after the measurement is
	\begin{equation}
		\frac{P_{m}\ket{\psi}}{\sqrt{p(m)}}.
	\end{equation}
	\label{def:observable}
\end{definicoes}
\par It is important to remark that the observable spectrum defines a basis of the Hilbert space, i.e., the set of eigenkets $S_{e}$, of which all elements are such that $M\ket{m_{j}}=m_{j}\ket{m_{j}}$. This set defines a basis for the state space defined at postulate \ref{postqt1}. The definition above also enables a physical interpretation of the observables: since they are Hermitian, their eigenvalues are real and can be associated with measurable quantities. The expectation value of an observable $M$, when the system state is $\ket{\psi}$ is defined as 
\begin{equation}
	\langle M\rangle=\bra{\psi}M\ket{\psi}=\mathrm{Tr}\left(M\ket{\psi}\bra{\psi}\right).
\end{equation}
Defining the operator
\begin{equation}
 \Delta M\equiv M-\bra{\psi}M\ket{\psi},\label{eq:DELTAM}
\end{equation}
the variance of the observable is given by
\begin{equation}
\langle \left(\Delta M\right)^{2}\rangle=\left\langle \left(M^{2}-2M\langle M\rangle+\langle M\rangle^{2}\right)\right\rangle=\langle M^{2}\rangle-\langle M\rangle^{2}.\label{eq:standardeviationfund}
\end{equation}
Such quantities will be central in the discussions established in the following chapter. From the above definition, the Heisenberg uncertainty can be stated with a proper statistical interpretation: for any two observables $A$ and $B$ acting on a Hilbert space related to a quantum system, it follows that \cite{sakurai2017modern}
\begin{equation}
	\langle \left(\Delta A\right)^{2}\rangle\langle \left(\Delta B\right)^{2}\rangle\geq \frac{1}{4}\left|\langle \left[A,B\right]\rangle\right|^{2},\label{eq:uncert}
\end{equation}
where $\left[A,B\right]=AB-BA$ is the commutator. This inequality establishes that for two non-commutating operators, it is not possible for one to attain absolute determination for both quantities, i.e. the variance of $A$ and $B$ can not assume zero simultaneously.
\par The following axiom establishes how composite systems are considered under QM mathematical framework.
% Explicar o que seria um sistema composito
% Mencionar em relação ao formalismo de Lindblad(a ser tratado e a conexão com trabalho, de Alicki, é feita mais ou menos a partir dele)
\begin{postuladoq}
	The state space of a composite physical system is the tensor product of the state spaces of the component physical systems. Moreover, if there are systems numbered $1$ through $n$, and the $i$-th system is prepared in the state $\ket{\psi_{i}}$, then the joint state of the total system is $\ket{\psi_{1}}\otimes\ket{\psi_{2}}\otimes\cdot \cdot\cdot \ket{\psi_{n}}$.
	\label{postqt4}
\end{postuladoq}
The ideas related with this postulate are essential for discussion considering quantum open systems. Also, the association of the composite system with a tensor product of the state spaces is crucial for the treatment of properties related with Information Theory as for instance the entanglement \cite{nielsen2010quantum}. 
\section{\textbf{Heisenberg Picture}}
\label{sec: Heisenberg}
% Falar sobre o valor esperado;
% Falar que existe duas interpretações, uma, dada anteriormente e outra onde os observaveis evoluem.
% Equações de movimento: caso onde A não depende do tempo
% Equações de movimento: quando depende
% E a velocidade, em termos de operador, no formalismo de Schrödinger? Escrever as equações!
% Decidir se vou colocar outra seção para tratar isso com rho ou só vou comentar depois!
\par Suppose that a system is described by the state vector $\ket{\psi_{0}}$ at the instant $t=0$. After a time $t$, the state will be represented by $\ket{\psi(t)}=\mathcal{U}(t,0)\ket{\psi_{0}}\equiv\mathcal{U}_{t}\ket{\psi_{0}}$, where $\mathcal{U}(t,0)\equiv\mathcal{U}_{t}$ is an unitary operator representing the time evolution of the state. As mentioned above, the expectation value of an time-independent observable $A$, considering the state $\ket{\psi(t)}$, is just $\langle A\rangle=\bra{\psi(t)}A\ket{\psi(t)}$. However, such expression can be written in two equivalent ways
\begin{equation}
	\langle A\rangle=\bra{\psi(t)}A\ket{\psi(t)}=\underbrace{\bra{\psi_{0}}\mathcal{U}^{\dagger}}_{\bra{\psi(t)}}A\underbrace{\mathcal{U}\ket{\psi_{0}}}_{\ket{\psi(t)}}=\bra{\psi_{0}}\underbrace{\mathcal{U}^{\dagger}A\mathcal{U}}_{A_{H}}\ket{\psi_{0}}=\bra{\psi_{0}}A_{H}\ket{\psi_{0}}.\label{eq:scheisen}
\end{equation} 
There are two possible interpretations for the equalities above:
\begin{itemize}
	\item \emph{(Schrödinger Picture)} The state of the system evolves to $\ket{\psi(t)}=\mathcal{U}_{t}\ket{\psi_{0}}$ and the operator is unchanged; 
	\item \emph{(Heisenberg Picture)} The operator evolves as $A\rightarrow A_{H}=\mathcal{U}_{t}^{\dagger}A\mathcal{U}_{t}$ and the state vector remains as $\ket{\psi_{0}}$.
\end{itemize}
Note that the sub-index $H$ is used to describe that the operator is in the Heisenberg picture. 
\par In cases where the Schrödinger operator $A(t)$ depends explicitly on time, one has the following rules.
\begin{itemize}
	\item \emph{(Schrödinger Picture)} The state of the system evolves to $\ket{\psi(t)}=\mathcal{U}_{t}\ket{\psi_{0}}$ while the operator keeps its form $A(t)$ for all times; 
	\item \emph{(Heisenberg Picture)} The operator assumes a new time dependency $A(0)\rightarrow A_{H}(t)=\mathcal{U}_{t}^{\dagger}A(t)\mathcal{U}_{t}$ and the state vector remains as $\ket{\psi_{0}}$.
\end{itemize}
\par Given a Heisenberg operator $A_{H}$ one may find its Schrödinger counterpart via the transformation:
\begin{equation}
\mathcal{U}A_{H}\mathcal{U}_{t}^{\dagger}=\underbrace{\mathcal{U}\mathcal{U}_{t}^{\dagger}}_{\mathbb{1}}A_{s}\underbrace{\mathcal{U}\mathcal{U}_{t}^{\dagger}}_{\mathbb{1}}=A_{s}.\label{eq:heischron}
\end{equation}
In the present discussion, the subindex $S$ is attached to denote Schrodinger operators, although this notation is suppressed, in latter discussions, for the sake of simplicity. This formula will show to be particularly interesting for the present work.
\par An equation of motion for the operators can be deduced under the Heisenberg picture, considering the Schrödinger equation. Consider the general case in which Schrödinger picture operator $A_{s}\equiv A_{s}(t)$ is explicitly time-dependent. It follows, considering Eq. \eqref{eq:schrun} and the unitarity of $\mathcal{U}_{t}$ that
\begin{equation}
	\frac{dA_{H}}{dt}=\frac{\left[A_{H},H_{H}\right]}{i\hbar}+\mathcal{U}_{t}^{\dagger}\frac{\partial A_{s}}{\partial t}\mathcal{U}_{t}.\label{eq:heisengeral}
\end{equation}
In the special case in which an operator $A_{s}$ is time-independent,
\begin{equation}
	\frac{dA_{H}}{dt}=\frac{\left[A_{H},H\right]}{i\hbar}\label{eq:Heiseqm}
\end{equation} 
which is the so-called Heisenberg equation of motion \cite{sakurai2017modern}. 
\par From the above discussion, it is therefore manageable to determine the time-derivative operator of a Heisenberg operator, by considering Eq. \eqref{eq:heisengeral}. For instance, if one considers one-dimensional position operator, $X_{H}$, it is possible to obtain the speed operator by substituting $A\rightarrow X$ in Eq. \eqref{eq:heisengeral}, so that
\begin{equation}
V_{H}\equiv\frac{dX_{H}}{dt}\equiv\dot{X}_{H}=\frac{\left[X_{H},H_{H}\right]}{i\hbar}+\mathcal{U}_{t}^{\dagger}\frac{\partial X_{s}}{\partial t}\mathcal{U}_{t}.\label{eq:derx}
\end{equation}
where $V_{H}$ is the speed operator in the Heisenberg picture. A tricky question (but necessary for future discussions) can be made: how can the speed operator be computed in Schrödinger picture? This question can be answered by substituting $A\rightarrow V$ in Eq. \eqref{eq:heischron}, resulting in
\begin{equation}
	V_{s}=\mathcal{U}V_{H}\mathcal{U}_{t}^{\dagger}.\label{eq:conversaoheischron}
\end{equation}
By considering Eq. \eqref{eq:derx},
 \begin{equation}
 \begin{array}{rl}
 V_{s}&\displaystyle=\mathcal{U}V_{H}\mathcal{U}_{t}^{\dagger}=\mathcal{U}\left(\frac{\left[X_{H},H_{H}\right]}{i\hbar}+\mathcal{U}_{t}^{\dagger}\frac{\partial X_{s}}{\partial t}\mathcal{U}_{t}\right)\mathcal{U}_{t}^{\dagger}\\
 &\displaystyle=\frac{\left[X_{s},H_{s}\right]}{i\hbar}+\frac{\partial X_{s}}{\partial t}.\label{eq:conversaoheischron3}
 \end{array}
 \end{equation}
Note that the results above are not restricted to the speed only: considering instead of $X$, a operator $A$ acting on the same space, it can be shown via similar arguments that
\begin{equation}
	\left(\dot{A}_{H}\right)_{s}=\frac{\left[A_{s},H_{s}\right]}{i\hbar}+\frac{\partial A_{s}}{\partial t},
\end{equation}
where $\left(\dot{A}_{H}\right)_{s}$ is\footnote{This observable is not written as $\dot{A}_{s}$ (as is done for $V$), in order to avoid confusion with the meaning of the dot, associated with time derivative.} \emph{the Schrödinger operator associated with the Heisenberg operator} $\dot{A}_{H}\equiv\frac{d A_{H}}{dt}$. Notice that \eqref{eq:conversaoheischron3} is different from the time derivative of the Schrödinger operator, which can be zero in general. In section \ref{sec:Ehren}, the Heisenberg picture is adopted as starting point for the derivation of an important result connecting classical and quantum mechanics, namely, the Ehrenfest theorem. Before doing so, it is convenient to have a look at a special class of states. 
\section{\textbf{Gaussian states}}
The position and momentum observables $X,P$ acting in a Hilbert space $\mathcal{H}$ are considered. Since position eigenkets $\left\{\ket{x}\right\}$ define a basis for the state space, then any system in such a space can be described as \cite{sakurai2017modern,lebedev2003functional}
\begin{equation}
	\ket{\psi}=\int_{x=-\infty}^{\infty}\ket{x}\langle x|\psi\rangle\equiv\int_{x=-\infty}^{\infty}\ket{x}\Psi(x),
\end{equation}
where $\Psi(x)=\langle x|\psi\rangle$ is the so-called wave function. The Gaussian state is obtained by taking the wave function as
\begin{equation}
	\Psi(x)=\frac{\mathrm{e}^{-\frac{\left(x-x_{0}\right)^{2}}{4\Delta_{x,0}^{2}}}\mathrm{e}^{\frac{ip_{0}x}{\hbar}}}{\left(2\pi \Delta_{x,0}^{2}\right)^{\frac{1}{4}}}\label{eq:Gaussian}
\end{equation}
or, considering the momentum bases, 
\begin{equation}
\langle p|\psi_{0} \rangle=\left(\frac{2\Delta_{x,0}^{2}}{\hbar^{2}\pi}\right)^{\frac{1}{4}}\exp\left[-\frac{\Delta_{x,0}^{2}\left(p-p_{0}\right)^{2}}{\hbar^{2}}-\frac{i\left(p-p_{0}\right)x_{0}}{\hbar}\right].\label{eq:Gaussianp}
\end{equation}
The following properties can be obtained for such state:
\begin{eqnarray}
\left\langle X\right\rangle=x_{0}, & &\left\langle\left(\Delta X\right)^{2}\right\rangle=\Delta_{x,0}^{2},\\
\left\langle P\right\rangle=p_{0}, & & \left\langle\left(\Delta P\right)^{2}\right\rangle=\frac{\hbar^{2}}{4\Delta_{x,0}^{2}}.
\end{eqnarray}
If the variance in position is such that, for instance, $\mathcal{O}\left(\Delta_{x,0}^{2}\right)\approx 10^{-17}$, then $\mathcal{O}\left(\Delta_{p,0}^{2}\right)\approx 10^{-17}$. Therefore, the Gaussian packet can offer a good approximation for treating particles.
% Apenas definição, falar das propriedades e feshow!
\section{\textbf{Ehrenfest Theorem}}
\label{sec:Ehren}
Consider a system described by the Hamiltonian 
\begin{equation}
	H=\frac{P^{2}}{2m}+\mathcal{V}(X),\label{eq:hsistiso}
\end{equation}
where $X$ and $P$ are the position and momentum operator in the Schrödinger picture\footnote{For simplicity, the subindex $S$ will be suppressed for Schrödinger operator, unless some ambiguity may occur.} and $\mathcal{V}(X)$ is a position dependent potential. The Heisenberg position operator is then given by $X_{H}=\mathcal{U}_{t}^{\dagger}X\mathcal{U}_{t}$, where $\mathcal{U}_{t}=\exp(-iHt/\hbar)$. The velocity and acceleration are obtained from the Heisenberg equation of motion:
\begin{equation}
\frac{dX_{H}}{dt}\equiv \dot{X}_{H}=\frac{\left[X_{H},H_{H}\right]}{i\hbar}=\frac{P_{H}}{m},\label{eq:ehr}
\end{equation} 
\begin{equation}
\frac{d^{2}X_{H}}{dt^{2}}\equiv \ddot{X}_{H}=\frac{\left[\dot{X}_{H},H_{H}\right]}{i\hbar}=\frac{\left[\frac{P_{H}}{m},H_{H}\right]}{i\hbar}=\frac{-\partial_{X_{H}}\mathcal{V}(X_{H})}{m}.\label{eq:aceehren}
\end{equation} 
From Eq. \eqref{eq:aceehren}, it follows that
\begin{equation}
m\ddot{X}_{H}=-\partial_{X_{H}}\mathcal{V}(X_{H}),\label{eq:newtonoperatoreq}
\end{equation} 
which can be viewed as an operator form of Newton's second law\footnote{Considering non-relativistic regimes and that the mass of the system does not change.}. In this sense one may \emph{define} $-\partial_{X_{H}}\mathcal{V}(X_{H})$ as the \emph{resultant force observable}. Note that this definition makes sense under the hypothesis that the system is described by the Hamiltonian \eqref{eq:hsistiso}. For more general Hamiltonians, the resultant force will naturally be different: the Caldirola-Kanai system to be treated in chapter \ref{cap:kanaicaldirola} has a different type of Hamiltonian. However a large class of problems refers to Hamiltonians of the form \eqref{eq:hsistiso}. As a matter of fact, such Hamiltonian was considered by Ehrenfest in his paper where the present theorem was first derived\footnote{In his formulation, it was considered the wave mechanics formalism, instead.} \cite{ehrenfest1927bemerkung}. In the present discussion only such class of problems shall be treated. 
\par The Ehrenfest theorem appears when one takes the expectation value on both sides of Eq. \eqref{eq:newtonoperatoreq}:
\begin{equation}
m\frac{d^{2}}{dt^{2}}\langle X\rangle=-\left\langle\partial_{X_{H}}\mathcal{V}(X_{H})\right\rangle.\label{eq:ehren}
\end{equation}
It is worth noticing at this point that this is precisely the same form that would be obtained in the Liouvillian formalism, with the pertinent adaptations. Following this equation, it is frequently concluded \cite{ballentine1994inadequacy,ehrenfest1927bemerkung} that the center of the distribution of the state will have its motion described in the same way as a classical system if:
\begin{itemize}
	\item the system is localized, i.e. if $\langle \left(\Delta X\right)^{2}\rangle$ is relatively small or
	\item the potential $\mathcal{V}$ has only polynomial terms with degree smaller than three.
\end{itemize}
To see how this expression compares with classical results, consider a scenario in which the initial conditions of a heavy particle are not known with full certainty or measurements on this particle are conducted with low resolution. The description of the mean position of this particle at an arbitrary instant can be written as 
\begin{equation}
\langle x\rangle^{2}=\langle x^{2}\rangle-\sigma_{x}^{2}
\end{equation} 
which implies 
\begin{equation}
\langle x\rangle=\sqrt{\langle x^{2}\rangle}\left(1-\frac{\sigma_{x}^{2}}{\langle x^{2}\rangle}\right)^{\frac{1}{2}}
\end{equation} 
Then, if $\frac{\sigma_{x}^{2}}{\langle x^{2}\rangle}\ll 1$, then there is no appreciable difference between $\langle x\rangle^{2}$ and $\langle x^{2}\rangle$ or, in general, between $\langle x\rangle^{n}$ and $\langle x^{n}\rangle$, so that $\langle f(x)\rangle\approx f\left(\langle x\rangle\right)$. It follows that there will be no significant difference between \eqref{eq:newtonoperatoreq} and 
\begin{equation}
m\frac{d^{2}}{dt^{2}}\langle x(t_{i})\rangle =-\left.\frac{\partial }{\partial x'} \mathcal{V}(\langle x(t_{i})\rangle)\right|_{x'=\langle x(t_{i})\rangle}.\label{eq:ehrenncl}
\end{equation}
On the other hand, when the uncertainty $\sigma_{x}$ is significant, then the form \eqref{eq:newtonoperatoreq} is the correct one in both the quantum and Liouvillian formalism, and none of these results will agree with the one predicted via single Newtonian trajectory. It is important to highlight that this discussion is not new in the literature\cite{angelo2003aspectos}. As a matter of fact, a similar discussion is treated at \cite{ballentine1994inadequacy}, where the author writes: \emph{"the centroid of a classical ensemble need not follow a classical trajectory if the width of the probability distribution is not negligible."}. 
\par The discussion above was made considering the position operator. However,the same rationale can be applied to the speed operator, $V_{H}=\dot{X}_{H}=\frac{P_{H}}{m}$. Suppose a task is given to an experimentalist which consists of determining the kinetic energy of a particle. By measuring the speed of the particle, she can only access $\langle V\rangle$ and $\langle \left(\Delta V\right)^{2}\rangle$, and then compute $\frac{m}{2}\langle V\rangle^{2}$. This, however, does not agree in general with statistical predictions, may them be quantum or Liouvillian, which provide $\frac{m}{2}\langle V^{2}\rangle$. Good agreement will emerge only when the uncertainty $\langle \left(\Delta V\right)^{2}\rangle=\frac{\langle \left(\Delta P\right)^{2}\rangle}{m}$ is sufficiently small so that $\langle V^{2}\rangle\approx\langle V\rangle^{2}$ (Notice that the essential difference between the quantum and the Liouvillian descriptions is the existence of the uncertainty principle in the former). 
\par The question then arises as to whether the correct form for the kinetic energy in the interface Thermo-Statistical Physics is $\frac{m}{2}\langle V^{2}\rangle$ or $\frac{m}{2}\langle V\rangle^{2}$. If the position and the speed of a macro system, as for instance a piston, are given by $\langle X\rangle$ and $\langle V\rangle$, which are macroscopically accessible quantities, then one might say the work and the kinetic energy would be given with regards to $d\langle X\rangle$ and $\frac{m}{2}\langle V\rangle^{2}$. The discussion around this treatment is postponed to the next chapter, since it is connected to the definition of work to be proposed there. 
\section{\textbf{Quantum statistical physics}}
\label{sec:quantumstat}
\par In QM, there is an intrinsic randomness associated with the collapse of the state. One can not known \emph{a priori} for which eigenstate of the measured observable the state vector will reduce to upon a measurement. On the other hand, precise preparation of a state $\ket{\psi_{0}}$ implies, via Schrodinger's equation, full determinism for the future state  $\ket{\psi(t)}$ of the system. Even with total information about $\ket{\psi(t)}$ one can not predict, however, the outcome of the collapse. This is a fundamental ignorance associated with the belief (expressed by Quantum Mechanics) that nature is irreducibly random. The uncertainties underlying the Liouvillian formalism is of a purely subjective nature: the classical paradigm imposes that a particle certainly occupies a given point in phase space (determinism), but one is operationally ignorant about it. There is no essential indefiniteness here. A formalism proposed by von Neumann \cite{schwabl2006statistical,von2010proof} can couple statistics assumptions within quantum systems, which shall be regarded next.
\subsection{Density operator}
\label{subsec:density}
\par Consider that an experimentalist possesses a large number of copies of a quantum system such that each copy is in a given state of the set $\left\{\ket{\psi_{i}}\right\}$. The relative probability of occurrence of $\ket{\psi_{i}}$ is $w_{i}$, with  $\sum_{i} w_{i}=1$. What will be the mean value of an observable $A$ in such system? If one evaluates the mean value considering only an specific state $\ket{\psi_{j}}$, the mean value shall be just $\bra{\psi_{j}}A\ket{\psi_{j}}$. However, the experimentalist in the present case does not actually know for which copy the measurement was performed and, therefore, to evaluate the mean value of $A$ she has to consider the ensemble average
\begin{equation}
	\langle A\rangle =\sum_{i} w_{i} \bra{\psi_{i}}A\ket{\psi_{i}}.
\end{equation}
Note that, by considering $\rho=\sum_{i}w_{i}\ket{\psi_{i}}\bra{\psi_{i}}$, one has
\begin{equation}
\displaystyle\sum_{i} w_{i} \bra{\psi_{i}}A\ket{\psi_{i}}=\sum_{i} w_{i} \mathrm{Tr}\left( \ket{\psi_{i}}\bra{\psi_{i}}A\right)=\mathrm{Tr}\left(\sum_{i} w_{i}\ket{\psi_{i}}\bra{\psi_{i}}A\right)=\mathrm{Tr}\left(\rho A\right). 
	\end{equation}
The descriptor $\rho$, which is commonly called density operator or density matrix, can be viewed as a generalization of the pure-state description of a state, since it is a convex combination of pure states. In this capacity, $\rho$ describes a mixed ensemble or a mixed state. In particular, for $w_{i} = \delta_{ij}$ one recovers the pure-state formalism, as $\rho = \ket{\psi_{j}}\bra{\psi_{j}}$.
\par Apart from the fact that $\rho$ is an operator acting on a vector space, because it is used to describe a collection of distinct pure states, and thus takes $w_{i}$ as a probability distribution, it can be thought as having a close analogy with the classical-statistical distribution $\rho_{cl}$ discussed in the Liouvillian formalism. In this sense, it is clear that $\rho$ can encode two "flavors" of uncertainty: a classical one, associated with the probability distribution $w_{i}$, and a quantum one, related to the intrinsic uncertainty of each element $\ket{\psi_{i}}$ of the ensemble.
\par Any $\rho$ considered for the description of quantum system must satisfy the properties:
\begin{itemize}
	\item Normalization: $\mathrm{Tr}\rho=1$;
	\item Hermiticity: $\rho=\rho^{\dagger}$;
	\item Positivity: for any state vector $\ket{\psi}$, $\bra{\psi}\rho\ket{\psi}\geq 0$;
	\item Purity: $\mathrm{Tr}\rho^{2}$ is defined as the \emph{purity} of $\rho$; it satisfies $\mathrm{Tr}\rho^{2}\leq 1$, where $\mathrm{Tr}\rho^{2}= 1$ if and only if the state is pure, i.e. there is a state $\ket{\psi}$ such that $\rho=\ket{\psi}\bra{\psi}$.
\end{itemize}
\par As for the vector-state case, some postulates follow. Here it is made a few remarks concerning only some consequences of the postulates, since the grounding in which they are founded is analogous to the pure state case\footnote{For more details, the reader is referred to \cite{nielsen2010quantum}.}. 
\begin{postuladoq2}
	Associated with any isolated physical system is a complex vector space with inner product (that is, a Hilbert space) known as the state space of the system. The system is completely described by its density operator, which is a positive operator $\rho$ with trace one, acting on the state space of the system. If a quantum system is in the state $\rho_{i}$ with probability $p_{i}$, then the density operator for the system is $\sum_{i}p_{i}\rho_{i}$.
\end{postuladoq2}
\begin{postuladoq2}
	The evolution of a closed quantum system is described by a unitary transformation. That is, the state $\rho$ of the system at time $t_{1}$ is related to the state $\rho^{'}$ of the system at time $t_{2}$ by a unitary operator $\mathcal{U}$ which depends only on	the times $t_{1}$ and $t_{2}$,
	\begin{equation}
	\rho^{'}=\mathcal{U}\rho\mathcal{U}^{\dagger}.\label{eq:evolrho}
	\end{equation}
	\label{post:2}
\end{postuladoq2}
\begin{postuladoq2}
	Quantum measurements are described by a collection $\left\{M_{m}\right\}$ of measurement operators. These are operators acting on the state space of the system being measured. The index $m$ refers to the measurement outcomes that may occur in the experiment. If the state of the quantum system is $\rho$ immediately before the measurement then the probability that the outcome $m$ occurs is given by 
	\begin{equation}
	p(m)=\mathrm{Tr}\left(M_{m}^{\dagger}M_{m}\rho\right)
	\end{equation}
	and the state of the system after the measurement is
	\begin{equation}
	\frac{M_{m}\rho M_{m}^{\dagger} }{\mathrm{Tr}\left(M_{m}^{\dagger}M_{m}\rho\right)}
	\end{equation}
	The measurement operators satisfy the completeness equation,
	\begin{equation}
	\sum_{m}M_{m}^{\dagger}M_{m}=\mathbb{1}.
	\end{equation}
\end{postuladoq2}
\begin{postuladoq2}
	The state space of a composite physical system is the tensor product of the state spaces of the component physical systems. Moreover, if there are systems numbered $1$ through $n$, and the $i$-th system is prepared in the state $\rho_{i}$, then the joint state of the total system is $\rho_{1}\otimes\rho_{2}\otimes\cdot \cdot\cdot \rho_{n}$.
\end{postuladoq2}
Postulate \ref{postqt2} can be restated in a different form, i.e., the density operator dynamics is described by the Liouville-von Neumann equation
\begin{equation}
\frac{\partial \rho}{\partial t}=\frac{\left[H,\rho\right]}{i\hbar}=-\frac{\left[\rho,H\right]}{i\hbar}.\label{eq:vonneumaneq}
\end{equation} 
As a matter of fact, it can be proved that if the unitary operator considered in Eq. \eqref{eq:evolrho} satisfies the Schrödinger equation (Eq. \eqref{eq:schrun}), then Eq. \eqref{eq:vonneumaneq} results. Note that this dynamical equation resembles Eq. \eqref{eq:poissonliou}, obtained under the scope of the Liouville theorem. In fact, \eqref{eq:vonneumaneq} emerges from \eqref{eq:poissonliou} via the usual quantization rule $\left\{\rho,\mathcal{H}\right\}\rightarrow\frac{\left[\rho,H\right]}{i\hbar}$. 
\par In analogy with the description of a pure state at the beginning of the chapter, the definition \ref{def:observable} is renewed from postulate \ref{postqt3}.
\begin{definicoes}
	A projective measurement is described by an observable $M$, an Hermitian operator on the state space of the system being observed. The observable has a spectral decomposition, 
	\begin{equation}
	M=\sum_{m}mP_{m},
	\end{equation}
	where $P_{m}$ is the projector onto the eigenspace of $M$ with eigenvalue $m$. The possible outcomes of the measurement correspond to the eigenvalues, $m$, of the observable. Upon measuring the state $\rho$, the probability of getting result $m$ is given by
	\begin{equation}
	p(m)=\mathrm{Tr}\left(P_{m}^{\dagger}P_{m}\rho\right).
	\end{equation}
	Given that the outcome $m$ occurred, the state of the quantum system immediately after the measurement is
	\begin{equation}
	\frac{P_{m}\rho P_{m}^{\dagger} }{\mathrm{Tr}\left(P_{m}^{\dagger}P_{m}\rho\right)}.
	\end{equation}
	\label{def:observable2}
\end{definicoes}
 The mean value of an observable $A$ is defined analogously to the pure state case $\langle A\rangle\equiv \mathrm{Tr}\left(\rho A\right)$. Note that, the Heisenberg picture can also be established in this context: considering that $\rho_{0}$ evolves to $\rho(t)=\mathcal{U}_{t}\rho_{0}\mathcal{U}_{t}^{\dagger}$, after an time interval $t$, then, by the cyclic property of the trace, one has
\begin{equation}
	\langle A_{S}(t)\rangle_{t}=\mathrm{Tr}\left(\rho(t) A_{S}(t)\right)=\mathrm{Tr}\left(\mathcal{U}_{t}\rho_{0}\mathcal{U}_{t}^{\dagger} A_{S}(t)\right)=\mathrm{Tr}\left(\rho_{0}\mathcal{U}_{t}^{\dagger} A_{S}(t)\mathcal{U}_{t}\right)=\mathrm{Tr}\left(\rho_{0}A_{H}(t)\right),\label{eq:heisen2}
\end{equation}
where $A_{H}(t)=\mathcal{U}_{t}^{\dagger} A_{S}(t)\mathcal{U}_{t}$ is the Heisenberg operator associated with the explicitly time-dependent Schrödinger operator $A_S(t)$.
\par Postulate \ref{postqt4} enables one to define a density operator for a composite system. However, as the system evolves, it is possible that the system can not, after a time $t$, be in a separable state of the form treated in the postulate. The system can, for instance, be \emph{entangled}, a concept defined as follows.
\begin{definicoes}
	A bipartite\footnote{ At the present work, it will be considered only bipartite systems in the discussions. The extension for more than two systems is considered in \cite{nielsen2010quantum}.} density operator $\rho$, that acts on the state space $\mathcal{H}=\mathcal{H}_{A}\otimes \mathcal{H}_{B}$, is said to be separable if and only if it can be written as the convex sum
	\begin{equation}
		\begin{array}{ccc}
		\rho=\sum_{i}w_{i}\rho_{A}^{i}\otimes\rho_{B}^{i}, & w_{i}\geq 0, & \sum_{i}w_{i}=1.\label{eq:nentang}
		\end{array}
	\end{equation}
	Otherwise, $\rho$ is said to be entangled. 
\end{definicoes}
Entanglement is a quantum resource that has far reaching consequences \cite{nielsen2010quantum,bilobran2015measure,bell2001einstein,einstein1935can}. A particularly important one manifests itself in the dynamics of reduced states.
\begin{definicoes}
	Consider a bipartite density operator $\rho$ that acts on the state space $\mathcal{H}=\mathcal{H}_{\mathcal{A}}\otimes \mathcal{H}_{\mathcal{B}}$.
	The reduced density operator for system $\mathcal{A}$ is defined by
	\begin{equation}
		\rho_{\mathcal{A}}\equiv \mathrm{Tr}_{\mathcal{B}}\left(\rho\right),
	\end{equation}
	where $ \mathrm{Tr}_{\mathcal{B}}$ is a map of operators known as the partial trace over system $\mathcal{B}$. The partial trace is defined by 
	\begin{equation}
		\mathrm{Tr}_{\mathcal{B}}\left(\ket{a_{1}}\bra{a_{2}}\otimes \ket{b_{1}}\bra{b_{2}}\right)\equiv \ket{a_{1}}\bra{a_{2}}\mathrm{Tr}\left( \ket{b_{1}}\bra{b_{2}}\right)=\ket{a_{1}}\bra{a_{2}}\langle b_{2}|b_{1}\rangle
	\end{equation}
	where $\ket{a_{1}}$ and $\ket{a_{2}}$ are any two vectors in the state space of $\mathcal{A}$, and $\ket{b_{1}}$ and $\ket{b_{2}}$ are any	two vectors in the state space of $\mathcal{B}$.
	\label{def:partialtrace}
\end{definicoes}
Therefore, if the parts of a system are not entangled, as described in Eq. \eqref{eq:nentang}, the local density operator of the subsystem $\mathcal{A}$ will be simply $\sum_{i}w_{i}\rho_{\mathcal{A}}^{i}$.  However, if the subsystems $\mathcal{A}$ and $\mathcal{B}$ are entangled, such result does not hold. Entanglement can be generated via physical interactions, which lead to non-unitary local (reduced) dynamics whereas the global one is unitary. The effects of non-unitary evolution are highlighted at subsection \ref{subsec:interacting}, where the Thermodynamics point of view is employed. 
\subsection{Entropy} 
\par Entropy is a fundamental concept for both Information Theory and Thermodynamics. In the Classical Thermodynamics context, entropy can be defined via the number of possible microstates (Gibbs' approach) or the volume of the energy shell (Boltzmann's approach). On the other hand, in classical information theory, the concept of information is quantified in terms of the Shannon entropy \cite{nielsen2010quantum}, whose form matches the Gibbsian entropy \cite{reichl2016modern}. In the scope of QM it is common to define the \emph{von Neumann entropy} as
\begin{equation}
	S_{v}\left(\rho\right)=-\mathrm{Tr}\left(\rho\ln \rho\right).
\end{equation} 
The von Neumann entropy is frequently considered in the scope of quantum information theory, since it has special properties that enables one to quantify entanglement, mutual information, discord, among others \cite{nielsen2010quantum,bilobran2015measure}. Some of these properties are listed below.
\begin{itemize}
	\item Purity: $0 \leq S_{v}(\rho) \leq \ln d$, where $S=0$ iff $\rho$ is pure and $d$ is the dimension of the Hilbert space on which $\rho$ acts;
	\item Invariance: $S_{v}$ is invariant under unitary transformations, i.e. given an unitary transformation $\mathcal{U}$, $S_{v}(\mathcal{U}\rho\mathcal{U}^{\dagger})=S_{v}(\rho)$;  
	\item Concavity: Given sets $\left\{\alpha_{i} \right\}$ and $\left\{\rho^{i}\right\}$ such that $\sum_{i}\alpha_{i}=1$ and $\rho^{i}$ are density operators, then $S_{v}\left(\sum_{i}\alpha_{i}\rho^{i}\right)\geq\sum_{i} \alpha_{i} S_{v}\left(\rho^{i}\right)$;
	\item Subadditivity: $S_{v}(\rho)\leq S_{v}(\rho_{\mathcal{A}})+S_{v}(\rho_{\mathcal{B}})$ (equality holding for $\rho = \rho_{\mathcal{A}} \otimes \rho_{\mathcal{B}}$).
\end{itemize}
The von Neumann entropy is also frequently considered in quantum statistics, under a thermodynamical point of view. Such approach shall be considered next, along with others concepts.

\section{\textbf{Thermodynamics perspective}} 
Some concepts often used in scenarios involving the quantum statistical description of the thermodynamics of macroscopic systems at equilibrium are succinctly reviewed in this section.
\begin{itemize}
	\item \emph{Statistical equilibrium}: The considerations, under statistical physics are the same as those considered classically; some authors consider that $\frac{\partial \rho }{\partial t}=0$ defines that the corresponding distribution $\rho$ is in statistical equilibrium.
	\item \emph{Equiprobability a priori and micro-canonical ensemble}: It is also established in analogy with the classical case: the microstates compatible with the macrostate of a physical system, when in thermodynamic equilibrium, are equiprobable. Such hypothesis is explicitly considered in order to define the micro-canonical ensemble for quantum systems. 
	\item \emph{Ergodic hypothesis}: It has been considered also in the scope of QM, and it was proved true for a class of problems \cite{gemmer20044,von2010proof}.    
	\item \emph{Entropy}: It is frequently given by the entropy of von Neumann in textbooks. In the context of quantum information theory, other entropic formulas are considered, such as the linear entropy, the Tsallis entropy, and the Rènyi entropy \cite{nielsen2010quantum,costa2013bayes}. It is also assumed that the process will tend to an equilibrium macrostate, where the entropy is maximum. By considering the von Neumann proposition, it can be proved that, for a fixed value of energy, the distribution of a closed system at equilibrium that yields the greatest entropy will actually be the micro-canonical ensemble one \cite{schwabl2006statistical}. 
	\item \emph{Canonical Ensemble} Defined similarly to the classical case, $\rho_{qc}=\frac{\mathrm{e}^{-\frac{H}{kT}}}{\mathcal{Z}}$ where $\mathcal{Z}=\mathrm{Tr}\left(\mathrm{e}^{-\frac{H}{kT}}\right)$ is the partition function and $H$ is the Hamiltonian of the system. 
\end{itemize} 
It is important to remark that the above concepts differ slightly from quantum systems to classical ones when considering macroscopic systems at equilibrium. However, the quantum description applies to problems that have no correspondence in classical physics, as for instance, ideal paramagnetic Spin-$\frac{1}{2}$ systems \cite{schwabl2006statistical,salinas1997introduccao}. 
\par Recently, there has been a renewed interest in  describing the thermodynamics of out-of-equilibrium few-particle systems. This is the main task of an emergent field sometimes referred to as Quantum Thermodynamics (QTh). In the following, the main aspects related with these subject are treated, with focus on the characterization of work and heat.
% Dinâmica não unitária!  Traço parcial
% Falar do Hamiltoniano geral: sub1, banho e interação
% Ao traçar o banho: equação de lindblad
% Comentar que uma análoga pode ser descrita no formalismo a la heisenberg 
% Chamar a atenção de que embora Hzao seja independente do tempo, a dinâmica reduzida não é: a interação gera uma dinamica local com hamiltoniano dependente do tempo.

\section{\textbf{QTh: work and heat}}
\label{sec:QT}
\par QTh is an area that has recently gained the attention of the scientific community \cite{gemmer20044}. There have been great efforts to extend thermodynamical concepts to the domain of quantum mechanical systems. However, only few concepts can be considered as well established, with some supporting experimental results. Due to the great extent and the yet incomplete character of the subject, only the definitions of work, in its many perspectives, will be discussed here\footnote{The reader is referred to a review of QTh given in \cite{gemmer20044}, for more details and reference on the area.}. Before treating specifically some definitions, it is given a general description of the systems discussed/studied in the QTh domain. 
% Revisão bibliografica
% Comentar que há intenso estudo, e a área já é grande e cheia de questionamento, embora nova
% "We focus here on work. Since the connection with heat is frequently treated, it is treated here also. "
\subsection{Interacting systems evolution}
\label{subsec:interacting}
% Algumas vezes, a parte unitaria da evolução, chamada de Hs, é função do tempo. Em alguns casos tbm, se descreve um Hamiltoniano unitário, dependente do tempo efetivo: neste caso não há dS mas há variação da energia interna; De qualquer forma, sistemas ditos não autonomous são aqueles que H varia no tempo e autonomous, que não variam. Neste trabalho será discutido cada classe de sistemas, em relação ao trabalho.

\par Consider a composite system described by a density operator $\rho$ acting on a Hilbert space $\mathcal{H}=\mathcal{H}_{\mathcal{S}}\otimes\mathcal{H}_{\mathcal{E}}$. Roughly, a system $\mathcal{S}$ is said to be open if there is an interaction between the elements of the system subspace $\mathcal{H}_{\mathcal{S}}$ with those of the environment subspace $\mathcal{H}_{\mathcal{E}}$\footnote{It is emphasized, as before, that it can be considered more than two systems. However, it simplifies the discussion here to consider only two.}. The Hamiltonian for the composite system $\mathcal{S}+\mathcal{E}$ is written, in general form, as
\begin{equation}
	H=H_{\mathcal{S}}\otimes \mathbb{1}_{\mathcal{E}}+\mathbb{1}_{\mathcal{S}}\otimes H_{\mathcal{E}}+H_{I},\label{eq:hamitongeral}
\end{equation} 
where $H_{\mathcal{S}}$ is Hamiltonian operator acting on $\mathcal{H}_{\mathcal{S}}$, $H_{\mathcal{E}}$ acts on $\mathcal{H}_{\mathcal{E}}$, $H_{I}$ the coupling term, acting on $\mathcal{H}$, and $\mathbb{1}_{\mathcal{S}}$ and $\mathbb{1}_{\mathcal{E}}$ are the identity operators on spaces  $\mathcal{H}_{\mathcal{S}}$ and $\mathcal{H}_{\mathcal{E}}$, respectively. Here, all operators are considered to be time-independent.
\par Frequently, one is interested in describing how the system $\mathcal{S}$ evolves in time under the coupling with the environment $\mathcal{E}$. To this end, two models are often employed. One, by considering an effective non-autonomous Hamiltonian for $\mathcal{S}$ and, two, by solving the composite dynamics and then discarding (via partial trace) the environment.
\par When the system is modeled via a non-autonomous dynamics, an effective time-dependent Hamiltonian
\begin{equation}
H_{ef}\equiv H_{ef}(a(t)).\label{eq:effectiveH}
\end{equation}
is adopted for governing the system dynamics. This operator explicitly depends on time through a control function $a(t)$, which encodes all the environment influence over $\mathcal{S}$\footnote{This was already considered in the classical case. See section \ref{subsec:externalparam} for more details.}. In this case, the system is considered to  undergone a unitary evolution given by
\begin{equation}
i\hbar \frac{\partial \mathcal{U}}{\partial t}=H_{ef}(a(t))\mathcal{U}.\label{eq:schrun2}
\end{equation}
Because the system is effectively closed, no entanglement can be described with the environment and, therefore, the entropy (purity of the state) is conserved. 
\par An alternative model consists of constructing the solution for the composite state $\rho$ and then to obtain the reduced state $\rho_{\mathcal{S}}$ via the partial trace. However, it may be difficult to get an exact solution due to huge number of degrees of freedom in the environment and the quantum correlations generated during the time evolution of the system. In order to avoid these technical difficulties, physical considerations such as, for example, weak interaction and no-memory effects (Born and Markov approximations, respectively) can be made, leading to derivation of master equations of the form
\begin{equation}
\frac{\partial \rho_{\mathcal{S}}}{\partial t}=\frac{\left[H_{u},\rho_{\mathcal{S}}\right]}{i\hbar}+\mathcal{L}(\rho_{\mathcal{S}}). \label{eq:lindblad}
\end{equation}
This equation determines the time evolution of the system state $\rho_\mathcal{S}$ in terms of an effective Hamiltonian $H_{u}$ on $\mathcal{H}_\mathcal{S}$ and a Liouvillian superoperator $\mathcal{L}$. The first term on right-hand side implies a unitary dynamics whereas the second accounts for non-unitary environmental effects, such as dissipation and decoherence, which invariably come with an irreversible production of entropy \cite{louisell1973quantum,gardiner2004quantum,angelo2003aspectos}.
% FALAR DA PARTE UNITARIA
\par The master-equation formalism provided great advance in different fields of physics as for instance in studies involving Brownian quantum motion \cite{gardiner2004quantum,wen2004quantum}, interaction of particles with Lasers \cite{louisell1973quantum}, and spin-boson models \cite{de2017dynamics}. However, as the coupling with the environment becomes stronger, those approximations no longer apply and non-Markovian approaches are necessary \cite{tan2011non,tu2008non,zhang2012general,allahverdyan2005work}.  
\par Formalisms based on master equations are frequently considered for autonomous systems, in the sense that the system as a whole is considered without a driven mechanism, as in the case of Eq. \eqref{eq:effectiveH}. However, it is important to remark that the master equation formalism can also be adopted for non-autonomous quantum systems. For instance, consider the case in which the system $\mathcal{S}$ is described by the effective Hamiltonian \eqref{eq:effectiveH}, however it is desired to describe the dynamics of a sub-system $\mathcal{S}'$ of $\mathcal{S}$, such that $\mathcal{S}=\mathcal{S}'+\mathcal{R}$. This is the case, for example, of a system $\mathcal{S}'$ interacting with heat reservoirs $\mathcal{R}$; the environment $\mathcal{E}$ then controls the interaction by changing $a(t)$ in $H_{eff}(a)$. As a matter of fact, the definitions of work and heat given by Alicki were formalized regarding indirectly a system with this features, as will be described next.
\subsection{Alicki's approach}
\label{subsec:alicki} 
% Alickis approach is used vastly
% Teorema de Flutuações: trabalho visto como variavel aleatória: resultados experimentais e utilizado no contexto clássico tbm: parâmetro externo como visto
% Trabalho não observável
% Várias comprovações experimentais
% Críticas
Alicki regarded an open system $\mathcal{S}'$ interacting with $N$ heat reservoirs systems $\mathcal{R}$. He then considered the master equation formalism for describing the evolution of $\rho_{\mathcal{S}'}$ such that
\begin{equation}
\frac{\partial \rho_{\mathcal{S}'}}{\partial t}=\frac{\left[H_{u},\rho_{\mathcal{S}'}\right]}{i\hbar}+\mathcal{L}(\rho_{\mathcal{S}'}), \label{eq:lindblad2}
\end{equation}
with the unitary part governed by $H_{u}=H_{\mathcal{S}'}+h_{t}$, where $H_{\mathcal{S}'}$ is the free Hamiltonian (as defined in Eq. \eqref{eq:hamitongeral}) of the open system $\mathcal{S}'$ and $h_{t}$ is an explicitly time-dependent self-joint operator representing the effect of changing some externally controllable condition in $\mathcal{E}$. In other words, $h_{t}$ represents the alteration in the energy of the system $\mathcal{S}'$, given some changes of the external conditions related with $\mathcal{E}$. The energy of the system $\mathcal{S}'$ is then written as
\begin{equation}
	E_{\mathcal{S}'}=\mathrm{Tr}\left(\rho_{\mathcal{S}'}(t)H_{u}(t)\right)\label{eq:Ealicki}
\end{equation} 
and its time derivative can be decomposed as
\begin{equation}
	\frac{\partial }{\partial t}E_{\mathcal{S}'}=\mathrm{Tr}\left(\frac{\partial \rho_{\mathcal{S}'}(t)}{\partial t}H_{u}(t)\right)+\mathrm{Tr}\left(\rho_{\mathcal{S}'}(t)\frac{\partial H_{u}(t)}{\partial t}\right).
\end{equation}
Alicki then introduces \cite{alicki1979quantum} the following identifications: 
\begin{equation}
		\dot{\mathcal{Q}}_{ak}\equiv \mathrm{Tr}\left(\frac{\partial \rho_{\mathcal{S}'}(t)}{\partial t}H_{u}(t)\right)\qquad\text{and}\qquad \dot{\mathcal{W}}_{ak}\equiv\mathrm{Tr}\left(\rho_{\mathcal{S}'}(t)\frac{\partial H_{u}(t)}{\partial t}\right),\label{eq:Alickdot}
\end{equation}
so that 
\begin{equation}
	\Delta E_{\mathcal{S}'}(t_{f}-t_{i})=\int_{t_{i}}^{t_{f}}\frac{\partial }{\partial t}E_{\mathcal{S}'}\,dt=\int_{t_{i}}^{t_{f}}\dot{\mathcal{Q}}_{ak}dt+\int_{t_{i}}^{t_{f}}\dot{\mathcal{W}}_{ak}dt= \mathcal{Q}_{ak}(t_{f}-t_{i})+ \mathcal{W}_{ak}(t_{f}-t_{i}).\label{eq:alickifund}
\end{equation}
Alicki interpreted $ \mathcal{W}_{ak}(t_{f}-t_{i})$ as \emph{the total work done on the system} $\mathcal{S}'$ and $ \mathcal{Q}_{ak}(t_{f}-t_{i})$ as \emph{the total heat that enters the system} $\mathcal{S}'$. Alicki's approach is predominant in the context of non-autonomous quantum systems. The model is mainly justified by the fact that the resulting equation is a clear expression of the first law of thermodynamics. Furthermore, Alicki proved that the efficiency of a cycle performed by an open system coupled to two Markovian reservoirs $\mathcal{R}_{1}$ and $\mathcal{R}_{2}$ at distinct temperatures $T_{1}$ and $T_{2}$, must be less than or equal to a Carnot one, i.e.
\begin{equation}
	 -\frac{\mathcal{W}_{ak}(t_{f}-t_{i})}{\mathcal{Q}_{ak}^{(1)}(t_{f}-t_{i})}\leq \frac{T_{1}-T_{2}}{T_{1}},
\end{equation}
where $\mathcal{Q}_{ak}^{(1)}$ is the heat flow from $\mathcal{R}_{1}$. This is an important result supporting Alicki's approach.
\par It is important to emphasize the essential role played by the choice of the system being analyzed. Had one considered the analysis of heat and work flow from a system composed, for instance, of $\mathcal{S}'$ and one of the $N$ reservoirs, then the energy of this new system, say $\mathcal{S}''$, would be computed by considering $H_{\mathcal{S}''}$ instead of $H_{\mathcal{S}'}$ in Eq. \eqref{eq:Ealicki}. Although it is not always trivial to define the form of $H_{\mathcal{S}^{''}}$ of these two interacting system, once such task is accomplished, it is possible, in principle, to determine the work and heat flow from Eqs. \eqref{eq:Alickdot} and \eqref{eq:alickifund}, regarding the point of view of Alicki. This key feature is revisited in chapter \ref{cap:kanaicaldirola}. 
\par As mentioned  in \cite{weimer2008local}, Alicki's proposal has received some criticisms: \emph{"(...) it is not obvious how to apply this definition to processes involving an internal transfer of work and heat" and "the microscopic foundation\footnote{In ref. \cite{weimer2008local}, the differential forms of Eq. \eqref{eq:Alickdot} were used, namely, $d\mathcal{W}_{ak}=\frac{\partial \mathcal{W}_{ak}}{\partial t}dt$ and $d\mathcal{Q}_{ak}=\frac{\partial \mathcal{Q}_{ak}}{\partial t}dt$. The form stated here was adopted in order to place the quoting into context.} of Eq. \eqref{eq:Alickdot} is rather unclear: As thermodynamic behavior may occur even in small quantum systems, it should, in principle, be possible to obtain $d\mathcal{W}_{ak}$ and $d\mathcal{Q}_{ak}$ even there"}. Other definitions for work has also been considered, frequently considering autonomous quantum systems, some of them considered next.
\subsection{Other approaches}
\par In the context of autonomous quantum machines, the mechanism that makes the external control in the non-autonomous case is explicitly included as a quantum system with a proper state space. Consider, for example, the system depicted in Fig. \ref{fig:autonomous}. 
\begin{figure}[!h]
	\begin{center}
		\includegraphics[angle=0, scale=0.5]{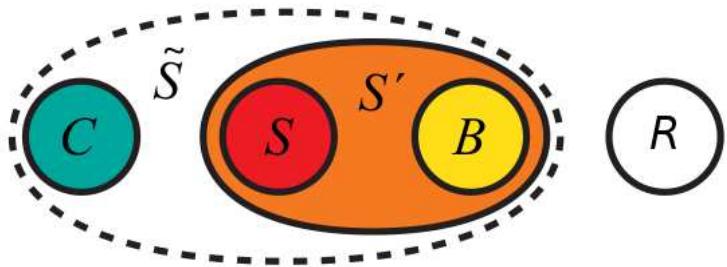}
		\caption{A system $S$ and an environment composed of a control $C$, a bath $B$ and a energy reservoir $R$. The reservoir $R$ provides energy for the system $S$, given the interaction between the latter with the bath $B$. Adapted from Ref. \cite{aaberg2018fully}.}
		\label{fig:autonomous}
	\end{center}
\end{figure}
\par The environment $\mathcal{E}$ is now composed of a reservoir $R$ that supplies energy to the system $\mathcal{S}\rightarrow S$, a thermal bath $B$, and a controller $C$ that governs the coupling between $B$ and $S$. By controlling the dynamic between the bath and the system $S$, it is possible to obtain the amount of energy that flows between the reservoir $R$ and the system $S$. As energy flows into (out of) the system, it is considered that work is performed on (by) the system. 
\par An analogous methodology has been adopted by other authors, however considering distinct mechanisms that provide energy to the system, and calling it as work or heat.
\par There are other approaches considered in the literature: 
\begin{itemize}
	\item Tonner and Mahler \cite{tonner2005autonomous} described work as $d\mathcal{W}=\zeta\cdot dL$, where $\zeta$ is an intensive variable and $L$ represents a mean value of the position of an oscillator;
	\item Weimer et al. \cite{weimer2008local} described work as the internal energy variation $\Delta U$, when satisfied the condition $\Delta S =0$, i.e. $\delta \mathcal{W}=\Delta U\iff \Delta S =0$;
	\item Departing from the closed-dynamics description of a quantum dipole  strongly coupled to a single-photon pulse, Valente et al. \cite{valente2017quantum} derived a master equation, from which they extracted a time-dependent Hamiltonian $H_{u}^{'}(t)$ for the dipole. Then, without any need for evoking self-joint operators $h_{t}$ to describe the effective open dynamics of the dipole, as in Alicki's proposal, the authors defined work and heat in the form 
	\begin{equation}
	\dot{\mathcal{Q}}_{ak}\equiv \mathrm{Tr}\left(\frac{\partial \rho_{\mathcal{S}'}(t)}{\partial t}H_{u}^{'}(t)\right)\qquad\text{and}\qquad \dot{\mathcal{W}}_{ak}\equiv\mathrm{Tr}\left(\rho_{\mathcal{S}'}(t)\frac{\partial H_{u}^{'}(t)}{\partial t}\right).\label{eq:valente}
	\end{equation}
	\item Bochkov and Kuzovlev \cite{bochkov1977general} defined the change in the Hamiltonian $H_{ef}$ of a system $\mathcal{S}$ as 
	\begin{equation}
	\mathcal{W}_{0}=H_{ef}(t)-H_{ef}(0)\equiv\int_{0}^{\tau}dt\lambda_{t}\dot{Q}_{t}
	\end{equation} 
	where $\lambda_{t}$ stands for the resulting external classical force acting on the system conjugated with the coordinate $Q_{t}$. The difference between the Hamiltonian at different instants of time $\mathcal{W}_{0}$ was interpreted as the work done on the system, during the interval $[0,t]$ \cite{campisi2011colloquium}. Considering these definitions, important fluctuation relations were obtained.
\end{itemize}  
\par Although the above definitions are not equivalent, they follow from the same line of reasoning: they are viewed as analogies with \emph{Classical Thermodynamics} perspective of work. Now, given that only few-particle quantum systems are under concern, one might wonder why not adopting a purely mechanical perspective for work, the one based on force and displacement, as in classical mechanics. The answer is immediate: because quantum mechanics precludes the notion of trajectory. In the next chapter, it is proposed a quantum mechanical form for work, which is still in analogy with Classical Mechanics although it does not demand the notion of trajectory.

\chapter{A NEW DEFINITION OF QUANTUM WORK}
\label{cap:definicaotrab}
\par It was discussed throughout the previous chapters, how work was defined in the scope of Classical Mechanics, Physical Statistics and Classical or Quantum Thermodynamics. The approach regarded under a Classical Mechanics perspective was not exported to Quantum Thermodynamics because it was argued that trajectories are not defined within Quantum Mechanics. As a result, a different path was adopted by the scientific community, in which work was defined in analogy with Classical Thermodynamics or Statistical Physics arguments. In this chapter, it is presented a different approach for work, within Quantum Mechanics domain, defined as an analogy with the Classical Mechanics perspective. It is avoided, however, the treatment of trajectories. 
\par The chapter is divided into two main parts: first, the new definition is established and the scope in which it is applied is formalized; then some properties of the new definition is approached at section \ref{sec:propw}. 
\section{\textbf{Formal definition}}
\par Work $\mathcal{W}_{cl}$ was established, under a Classical Mechanics perspective, in chapter \ref{cap:fundamentacao} as 
\begin{equation}
\mathcal{W}_{cl}= \int_{c}\underline{f}\cdot d\underline{x}\label{eq:wclnossa}.
\end{equation}
In the same chapter, it was mentioned that for a sufficiently continuous trajectories, Equation \eqref{eq:wclnossa} could be written as
\begin{equation}
\mathcal{W}_{cl}(c(t))= \int_{c(t)}\underline{f}\cdot \frac{d\underline{x}}{dt}dt= \int_{c(t)}\underline{f}\cdot \underline{v}dt\label{eq:wclnossa2}.
\end{equation}
where $\underline{v}$ is the velocity field. Unlike Eq. \eqref{eq:wclnossa}, the above formula does not employ the element of trajectory $d\underline{x}$; In fact, given the velocity and the force acting on a system, the path which the system pass through is implicitly taken into account in the time dependence of $\underline{v}$ and $\underline{f}$. Now, since there is a quantum counterpart for the velocity field, it is in principle possible to pursue a quantum mechanical work in analogy with Eq. \eqref{eq:wclnossa2}. To construct such definition, it is assumed that the system to be analyzed is non-relativistic and is represented by a unitarily evolving density operator $\rho(t)=\mathcal{U}_{t} \rho(0) \mathcal{U}_{t}^{\dagger}$ acting on $\mathcal{H}=\mathcal{H}_{\mathcal{S}}\otimes \mathcal{H}_{\mathcal{E}}$, where $\mathcal{H}_{\mathcal{E}}$ can be a multipartite space in general. In other words, the joint system $\mathcal{S}+\mathcal{E}$ is closed and the reduced density matrix is given by $\rho_{\mathcal{S}}(t) = Tr_{\mathcal{E}} \rho(t)$. 
\par While $\mathcal{E}$ is kept as generic as possible, $\mathcal{S}$ is considered to be a particle of constant mass $m$ moving, for simplicity, in one dimension. In this context, the pertinent observable for $\mathcal{S}$ is the position operator $X^{\mathcal{S}}$. In order to make the deductions to be presented in a simpler way, the superindex and subindex $\mathcal{S}$ will be suppressed ($X^{\mathcal{S}}\rightarrow X$ and $\rho_{\mathcal{S}}(t)\rightarrow\rho(t)$), unless some ambiguity may occur.
\par In search for a quantum counterpart for Eq. \eqref{eq:wclnossa2}, it is natural to employ the Heisenberg picture, through which one finds the Heisenberg position operator
\begin{equation}
		X_{H}(t)=\mathcal{U}_{t}^{\dagger}X\mathcal{U}_{t}
\end{equation} 
for any time $t$. It then follows from Eq. \eqref{eq:heisengeral} the time derivatives $\dot{X}_{H}$ and $\ddot{X}_{H}$, which are hereafter regarded as the quantum analogues of the classical velocity $v$ and acceleration $a$, respectively. On the other hand, the observable to be adopted as the quantum analogues of the force field acting on $\mathcal{S}$ has to be carefully defined. As in the Hamiltonian formulation of Classical Mechanics, here the notion of force will be taken to derive from the  model of interaction via the evaluation of the acceleration in the Heisenberg picture. Following the example given in Ref. \cite{sakurai2017modern}, one might consider a particle of mass $m$ and charge $q$. In terms of the time-independent scalar and vector potentials $\phi(\underline{X}_{H})$ and $\underline{A}_{H}\left(\underline{X}_{H}\right)$, the electric and magnetic fields read 
\begin{equation}
	\underline{E}_{H}=-\nabla_{\underline{X}_{H}}\phi,\,\,\,\,\,\,\underline{B}_{H}=-\nabla_{\underline{X}_{H}}\times \underline{A}_{H}
\end{equation}
From the Hamiltonian operator
\begin{equation}
	H=\frac{1}{2m}\left(\underline{P}-\frac{q}{c}\underline{A}\right)+q\phi,\label{eq:lorentz}
\end{equation}
one can compute the Heisenberg equations $\underline{\dot{X}}=\frac{[\underline{X},H]}{i\hbar}$ and $\underline{\dot{P}}=\frac{[\underline{P},H]}{i\hbar}$. Taking the time derivative of the former equation and using the latter, one finally obtains
\begin{equation}
	m\underline{\ddot{X}}_{H}=q\left[\underline{E}_{H}+\frac{1}{2c}\left(\underline{\dot{X}}_{H}\times \underline{B}_{H}-\underline{B}_{H}\times \underline{\dot{X}}_{H}\right)\right].\label{eq:lorentz2}
\end{equation}
The term in the right-hand side is then identified as the Lorentz force, in the quantum mechanical domain. Another example was explored in the context of Ehrenfest theorem, reproduced in section \ref{sec:Ehren}, where the forces that acts on the system were represented by
\begin{equation}
	-\partial_{X_{H}}\mathcal{V}(X_{H}).
\end{equation}
This is called a force operator in direct analogy with Newton's second law. The common characteristics of both example are:
\begin{itemize}
	\item The operators related with force have the expected physical dimension of force and;
	\item The quantum version of Newton's second law holds: the sum of the forces equals $m\underline{\ddot{X}}_{H}$.
\end{itemize}
Based on these aspects, the following concept is introduced.
 \begin{definicoes}
 	The resultant force operator $F_{H}^{R}$ associated with a particle of mass $m$ whose dynamics is governed by a Hamiltonian $H$ is given, in Heisenberg's picture, by 
 	\begin{equation}
 	F_{H}^{R}=\sum_{i=1}^{n}F_{H}^{i}=m\ddot{X}_{H}.\label{eq:newtonquantico}
 	\end{equation}  
 	\label{prin:1}
 \end{definicoes}
The notion of a particular basic force $F_{H}^{i}$ emerges by direct inspection of the parcels composing the acceleration, as in Eq. \eqref{eq:lorentz2}.
\par Given the definition of a quantum force operator, it is now possible to introduce a quantum mechanical definition of work. 
\begin{definicoes}
	The resultant quantum work imparted by the resultant quantum force $F_{H}^{R}$ (defined as in \eqref{eq:newtonquantico}) during a time interval $t_{f}-t_{i}$ on a particle of mass $m$ is given by
	\begin{equation}	
	\mathcal{W}_{q}^{R}\left(t_{f}-t_{i}\right)= \int_{t_{i}}^{t_{f}}\frac{1}{2}\left\langle\left\{F_{H}^{R},\dot{X}_{H}\right\}\right\rangle dt,\label{eq:wqmnossa}
	\end{equation}
	where $\left\langle\left\{F_{H}^{R},\dot{X}_{H}\right\}\right\rangle=\mathrm{Tr}\left(\rho(t_{i})\left\{F_{H}^{R},\dot{X}_{H}\right\}\right)$, 
	 $\left\{\,\,\,,\,\, \right\}$ is the anticommutator 
	\begin{equation}
	\left\{F_{H}^{R},\dot{X}_{H}\right\} =F_{H}^{R}\dot{X}_{H}+\dot{X}_{H} F_{H}^{R}\label{anticommu}
	\end{equation}
	and $\rho(t_{i})$ is the initial density operator acting on $\mathcal{H}_{\mathcal{S}}\otimes \mathcal{H}_{\mathcal{E}}$.
	\label{def:wr}
\end{definicoes}
\par The factor $\frac{1}{2}$ and the anticommutator $\left\{\,\,\,,\,\, \right\}$ were employed to make the work real. In fact, although $F_{H}^{R}$ and $\dot{X}_{H}$ are Hermitian operators, their product may not generally be. This can be checked via the expression
\begin{equation}
	\left(F_{H}^{R} \dot{X}_{H}\right)^{\dagger}=\dot{X}_{H}^{\dagger}  F_{H}^{R\,\dagger}=\dot{X}_{H} F_{H}^{R},
\end{equation}  
which does not necessarily equals $F_{H}^{R} \dot{X}_{H}$. Therefore, had one defined work as 
\begin{equation}
	\int_{t_{i}}^{t_{f}}\left\langle F_{H}^{R}  \dot{X}_{H}\right\rangle dt
\end{equation}
then complex values could possibly be obtained, an undesirable feature. In addition, notice that the work \eqref{eq:wqmnossa} is given in terms of an expectation value induced by the initial preparation $\rho(t_{i})$ of the system. It follows, therefore, that the work defined as in \eqref{eq:wqmnossa} is ensured to be a real function of time.
\par It is worth mentioning that the work \eqref{eq:wqmnossa} applies regardless of the characteristics of the coupling with the environment. Indeed, no assumption was made with respect to the strength of the interaction, the number of degrees of freedom in the environment, the amount of correlations in the initial state, or the like. Most importantly, given that its construction was guided by a mechanical rationale, it is expected for definition \eqref{eq:wqmnossa} to exhibit a well defined classical limit.
\section{\textbf{Properties of the new definition}}
\label{sec:propw} 
In order to compute $m\ddot{X}_{H}$, it can be discriminated different types of forces. One may consider, for instance, a case in which the Heisenberg acceleration $\ddot{X}_{H}$ is computed, given a velocity operator $\dot{X}_{H}$, from Eq. \eqref{eq:Heiseqm}, resulting in $m\ddot{X}_{H}=m\frac{d\dot{X}_{H}}{dt}=\frac{m}{i\hbar}[\dot{X}_{H},H]$. If the Hamiltonian operator can be written as  
\begin{equation}
	H=\sum_{i} \mathcal{V}_{i},\label{eq:hamiltonianmulti}
\end{equation}
 where each term $\mathcal{V}_{i}$ represents a different kind of interaction/energy, then
\begin{equation}
	\frac{m}{i\hbar}[\dot{X}_{H},\mathcal{V}_{i}]\label{eq:forceofeach}
\end{equation}
 may be related with a force applied on the system of mass $m$, associated with the interaction $\mathcal{V}_{i}$. The construction of an Hamiltonian in the form \eqref{eq:hamiltonianmulti} is actually considered in the examples given in the previous section. In the case of the Hamiltonian \eqref{eq:lorentz}, describing Classical Electrodynamics interactions, the terms $\frac{1}{2m}\left(\underline{P}-\frac{q}{c}\underline{A}\right)$ are related with the magnetic interaction and $q\phi$ with the electric interaction. Consequently, the Hamiltonian \eqref{eq:lorentz} could be written in the form \eqref{eq:hamiltonianmulti}, regarding $\mathcal{V}_{1}=\frac{1}{2m}\left(\underline{P}-\frac{q}{c}\underline{A}\right)$ and $\mathcal{V}_{2}=q\phi$,  describing the magnetic and electric interactions, respectively. As a result, it can be verified that the resultant force can be splitted into two terms
 \begin{equation}
 	\frac{m}{i\hbar}[\underline{\dot{X}}_{H},\mathcal{V}_{1}]=\frac{q}{2c}\left(\underline{\dot{X}}_{H}\times \underline{B}_{H}-\underline{B}_{H}\times \underline{\dot{X}}_{H}\right)
 \end{equation}
 and
  \begin{equation}
 \frac{m}{i\hbar}[\underline{\dot{X}}_{H},\mathcal{V}_{2}]=q\underline{E}_{H},
 \end{equation}
so that result \eqref{eq:lorentz2} follows by adding both interaction forces. In the case of the Hamiltonian considered in the Ehrenfest theorem, 
\begin{equation}
	H=\frac{P^2}{2m}+\mathcal{V}(X),
\end{equation} 
$\mathcal{V}(X)$ could be regarded as a central potential or a sum of two or more types of central interaction, i.e. $\mathcal{V}(X)=\mathcal{V}_{1}(X)+\mathcal{V}_{2}(X)+\cdots$ . Therefore, it is in principle possible to write the Hamiltonian in the form \eqref{eq:hamiltonianmulti}, where $\mathcal{V}_{1}$ corresponds to the kinetic energy term $\frac{P^2}{2m}$ and  each $\mathcal{V}_{i}$ ($i>1$) corresponds to a different type of central interaction, with a force identified by \eqref{eq:forceofeach}. 
\par From a general perspective, terms like $\mathcal{V}_{i}$ are already introduced in the classical Hamiltonian function, which, in principle, carries out the physical justification for introducing these fundamental interaction terms. In situations where many different force operators can be identified, it is sensible to speak of the work of a particular force.
\begin{definicoes}
	From definitions \ref{def:wr} and \ref{prin:1}, the resultant work can be divided as
	\begin{equation}	
	\mathcal{W}_{q}^{R}\left(t_{f}-t_{i}\right)= \int_{t_{i}}^{t_{f}}\frac{1}{2}\left\langle\left\{F_{H}^{R},\dot{X}_{H}\right\}\right\rangle dt=\sum_{i}\int_{t_{i}}^{t_{f}}\frac{1}{2}\left\langle\left\{F_{H}^{i},\dot{X}_{H}\right\}\right\rangle dt=\sum_{i}\mathcal{W}_{q}^{i},
	\end{equation}
	where 
	\begin{equation}
		\mathcal{W}_{q}^{i}=\int_{t_{i}}^{t_{f}}\frac{1}{2}\left\langle\left\{F_{H}^{i},\dot{X}_{H}\right\}\right\rangle dt
	\end{equation}
	 is defined as the quantum work imparted by a particular quantum force $F_{H}^{i}$ during a time interval $t_{f}-t_{i}$ on a particle of mass $m$.
	\label{def:tudo}
\end{definicoes}
\par It is also possible to define the \emph{kinetic energy} of $\mathcal{S}$:
\begin{definicoes}
	The kinetic energy of a particle of mass $m$ is defined as
	\begin{equation}
	K_{q}(t)=\frac{m}{2}\left\langle\dot{X}_{H}(t)\dot{X}_{H}(t)\right\rangle\equiv\frac{m}{2}\left\langle\left(\dot{X}_{H}\right)^2\right\rangle.
	\end{equation}
	\label{def:kq}
\end{definicoes}
\par Definitions \ref{def:wr} and \ref{def:kq} allow one to derive the following result.
\begin{teoremas}
	The resultant quantum work $\mathcal{W}_{q}^{R}$ (Definition \ref{def:wr}) done on a particle of mass $m$ in a time interval $t_{f}-t_{i}$ equals the variation in its kinetic energy $K_{q}$ (Definition \ref{def:kq}). Formally,
	\begin{equation}
	\mathcal{W}_{q}^{R}\left(t_{f}-t_{i}\right)=K_{q}(t_{f})-K_{q}(t_{i}).\label{eq:teotrabq1}
	\end{equation}
	\label{th:kinetiworq}
\end{teoremas}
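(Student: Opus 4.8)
The plan is to recognize the integrand defining $\mathcal{W}_q^R$ as a total time derivative --- precisely $\frac{d}{dt}K_q(t)$ --- and then apply the fundamental theorem of calculus. First I would differentiate the operator product $\dot X_H(t)\,\dot X_H(t)$ in time. The Leibniz rule $\frac{d}{dt}(AB)=\dot A B+A\dot B$ holds for operator-valued functions regardless of whether $A$ and $B$ commute, hence
\begin{equation}
\frac{d}{dt}\left(\dot X_H(t)\,\dot X_H(t)\right)=\ddot X_H(t)\,\dot X_H(t)+\dot X_H(t)\,\ddot X_H(t)=\left\{\ddot X_H(t),\dot X_H(t)\right\}.
\end{equation}
By Definition \ref{prin:1}, the resultant force operator obeys $F_H^R=m\ddot X_H$, so multiplying the previous display by $\frac{m}{2}$ yields the operator identity $\frac{1}{2}\{F_H^R,\dot X_H\}=\frac{d}{dt}\left(\frac{m}{2}(\dot X_H)^2\right)$.

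Next I would pass to expectation values in the fixed initial preparation $\rho(t_i)$. Because the Heisenberg picture keeps the state time independent, the derivative commutes with $\mathrm{Tr}(\rho(t_i)\,\cdot\,)$, so
\begin{equation}
\frac{1}{2}\left\langle\left\{F_H^R,\dot X_H\right\}\right\rangle=\mathrm{Tr}\left(\rho(t_i)\frac{d}{dt}\left(\frac{m}{2}(\dot X_H)^2\right)\right)=\frac{d}{dt}\,\mathrm{Tr}\left(\rho(t_i)\frac{m}{2}(\dot X_H)^2\right)=\frac{d}{dt}K_q(t),
\end{equation}
the last step being Definition \ref{def:kq}. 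Inserting this into Definition \ref{def:wr} gives
\begin{equation}
\mathcal{W}_q^R(t_f-t_i)=\int_{t_i}^{t_f}\frac{d}{dt}K_q(t)\,dt=K_q(t_f)-K_q(t_i),
\end{equation}
which is the asserted equality \eqref{eq:teotrabq1}.

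The main obstacle is not algebraic but analytic: $X_H$, $\dot X_H$, $\ddot X_H$ are typically unbounded operators, so the Leibniz rule for $(\dot X_H)^2$, the interchange of $\frac{d}{dt}$ with the trace against $\rho(t_i)$, and the use of the fundamental theorem of calculus each presuppose domain and smoothness conditions --- e.g. that $\rho(t_i)$ be supported on a common dense invariant domain and that $t\mapsto\langle(\dot X_H)^2\rangle$ be continuously differentiable. In this general setting I would record these as standing regularity hypotheses and note that they hold trivially in the case study of Chapter \ref{cap:kanaicaldirola}, where $X_H$ and $P_H$ turn out to be explicit linear combinations of $X$ and $P$ and all relevant moments are smooth functions of time. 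It is worth remarking, finally, that nothing about $\mathcal{E}$, the interaction strength, or the purity of $\rho(t_i)$ entered the argument, in accordance with the comments following Definition \ref{def:wr}.
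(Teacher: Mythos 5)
Your proposal is correct and follows essentially the same route as the paper's own proof: recognize $\tfrac{1}{2}\left\langle\left\{F_{H}^{R},\dot{X}_{H}\right\}\right\rangle$ as $\tfrac{m}{2}\frac{d}{dt}\left\langle\left(\dot{X}_{H}\right)^{2}\right\rangle$ using the time-independence of the state in the Heisenberg picture and Definition \ref{prin:1}, then integrate. Your added remarks on domain and regularity conditions for the unbounded operators are a sensible refinement but do not alter the argument, which the paper carries out formally in the same way.
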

\begin{proof}
	Since the quantum state is time-independent in the Heisenberg picture, one has
	\begin{equation}
		\frac{d}{dt}\left\langle\left(\dot{X}_{H}\right)^{2}\right\rangle=\left\langle\frac{d}{dt}\left(\dot{X}_{H}\right)^{2}\right\rangle=\left\langle\ddot{X}_{H}\dot{X}_{H}+\dot{X}_{H}\ddot{X}_{H}\right\rangle=\left\langle\left\{\ddot{X}_{H},\dot{X}_{H}\right\}\right\rangle.
	\end{equation}  
	Importing this result to Eq. \eqref{eq:wqmnossa} and regarding definition \ref{prin:1}, gives 
	\begin{equation}
	\begin{array}{rl}
	\displaystyle \mathcal{W}_{q}^{R}\left(t_{f}-t_{i}\right)&\displaystyle= \int_{t_{i}}^{t_{f}}\frac{1}{2}\left\langle\left\{F_{H}^{R},\dot{X}_{H}\right\}\right\rangle dt=\frac{m}{2} \int_{t_{i}}^{t_{f}}\left\langle\left\{\ddot{X}_{H},\dot{X}_{H}\right\}\right\rangle dt=\frac{m}{2} \int_{t_{i}}^{t_{f}}\frac{d}{dt}\left\langle\left(\dot{X}_{H}\right)^{2}\right\rangle dt\\
	&\displaystyle=\left.\frac{m}{2}\left\langle\left(\dot{X}_{H}\right)^{2}\right\rangle\right|_{t_{i}}^{t_{f}}.\\
	\end{array}
	\end{equation}
	Via definition \ref{def:kq}, the result \eqref{eq:teotrabq1} follows.
\end{proof}
Theorem \ref{th:kinetiworq} may be viewed as a quantum analogue of the classical result established in Eq. \eqref{eq:teotrabk}.
\subsection{Thermodynamical point of view}
In chapter \ref{cap:fundamentacao}, the notion of work was discussed within the scopes of Classical Thermodynamics and Statistical Physics. As put by Chandler \cite{chandler1987introduction} work has the form $\delta \mathcal{W}=\underline{f}_{G}\cdot \underline{dX}_{G}$, where $\underline{f}_{G}$ is the applied "force", $\underline{X}_{G}$ stands for a mechanical extensive variable and the $G$ subindex emphasizes that these variables are general and therefore can assume different forms \cite{chandler1987introduction}. A qualitative point of view was implicitly defended by Callen, differing heat (as an energy transfer \emph{"via the hidden atomic modes of motion"}) from work (as a transfer \emph{"that happen to be macroscopically observable"}). Both perspectives propose the idea that work is a macroscopically measurable form of energy transfer. On the other hand, heat is associated with the microscopic "hidden" motion. Meanwhile, by looking at Eq. \eqref{eq:wqmnossa}, it is not clear whether the proposed form for work $\mathcal{W}_{q}$ can be linked with a macroscopically observable measure. With these questioning in mind, the following discussion is conducted. 
\begin{definicoes}
	The resultant centroid work $\mathcal{W}_{c}^{R}$ of the force operator $F_{H}^{R}$ applied on $\mathcal{S}$ in a interval $t_{f}-t_{i}$ is defined as:
	\begin{equation}	
	\mathcal{W}_{c}^{R}\left(t_{f}-t_{i}\right)= \int_{t_{i}}^{t_{f}}\left\langle F_{H}^{R}\right\rangle \langle\dot{X}_{H}\rangle dt.\label{eq:wqmnossac}
	\end{equation}
	Furthermore, the centroid work $\mathcal{W}_{c}^{i}$ imparted by a particular quantum force $F_{H}^{i}$ during a time interval $t_{f}-t_{i}$ on a particle of mass $m$ is
	\begin{equation}
	\mathcal{W}_{c}^{i}=\int_{t_{i}}^{t_{f}}\langle F_{H}^{i}\rangle \langle\dot{X}_{H}\rangle dt, 
	\end{equation}
	such that 
	\begin{equation}
	\mathcal{W}_{c}^{R}=\sum_{i}\mathcal{W}_{c}^{i}.
	\end{equation}
	\label{def:tudoc}
\end{definicoes}
The idea addressed by definition \ref{def:tudoc} is that the "macroscopically observable" quantities mentioned before can be considered to be the mean force $\langle F_{H}\rangle$ and the mean velocity $\langle\dot{X}_{H}\rangle$, when concerning a macroscopic system. One might note that the quantum work reduces to the centroid work as $\frac{1}{2}\langle \left\{F_{H},\dot{X}_{H}\right\}\rangle \approx \langle F_{H}\rangle\langle\dot{X}_{H}\rangle$, that is, when some correlations can be neglected. In this sense, the centroid work is expected to be closer to the classical mechanical work. To better appreciate this point, consider the following examples.
\par A spring-block system in a frictionless table, mentioned at section \ref{sec:classical}, is depicted in Fig. \ref{fig:slidingblock}.
\begin{figure}[!h]
	\begin{center}
		\includegraphics[angle=0, scale=0.4]{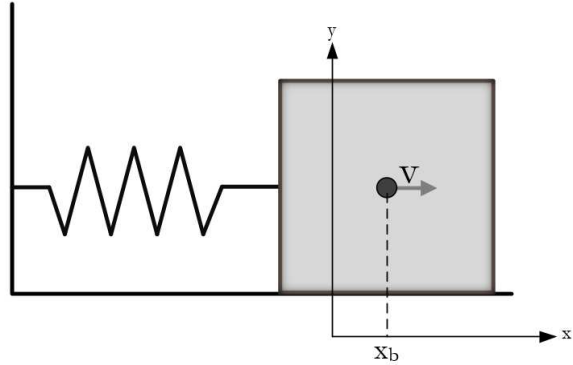}
		\caption{Sliding block on a frictionless table, revisited. The velocity of the block is $\dot{x}_{b}=\langle \dot{X}\rangle$ and the spring acts on the block with mean force $ -kx_{b}=\langle-kX\rangle=\langle F\rangle$.}
		\label{fig:slidingblock}
	\end{center}
\end{figure}
\par By continuously monitoring the block mean distance from the equilibrium position $\langle X\rangle\equiv x_{b}$ in a interval $(t_{i},t_{f})$, the mean speed $\langle \dot{X}\rangle=\frac{d}{dt}\langle X\rangle=\dot{x}_{b}$ can be computed, and, from Hooke's law, the mean force $\langle F\rangle=\langle-kX\rangle= -kx_{b}$ can be inferred. As a result, the centroid work can be evaluated by Eq. \eqref{eq:wqmnossac} as
\begin{equation}	
\mathcal{W}_{c}\left(t_{f}-t_{i}\right)= \int_{t_{i}}^{t_{f}}\left\langle F\right\rangle \langle\dot{X}\rangle dt=\int_{t_{i}}^{t_{f}}-kx_{b} \frac{d}{dt}x_{b} dt=-k\int_{x_{b}\left(t_{i}\right)}^{x_{b}\left(t_{f}\right)}x_{b} dx_{b} =-k\left. \frac{x_{b}^{2}}{2}\right|_{x_{b}\left(t_{i}\right)}^{x_{b}\left(t_{f}\right)}.
\end{equation}
 Similarly, from a thermodynamic perspective, a movable piston confining a gas is represented in Fig. \ref{fig:movablepiston}.
\begin{figure}[!h]
	\begin{center}
		\includegraphics[angle=0, scale=0.15]{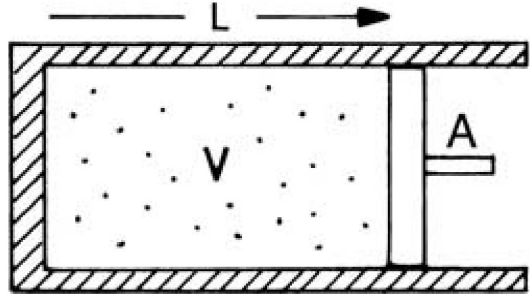}
		\caption{A movable piston with instantaneous position $\langle X\rangle=L $ confining a gas. The mean force applied by the gas with pressure $P$ on the piston is $PA$, where $A$ is the cross-sectional area.  Font: \cite{schwabl2006statistical}.}
		\label{fig:movablepiston}
	\end{center}
\end{figure}
\par The volume $V=LA$ occupied by the gas changes with the piston instantaneous position $\langle X\rangle=L $, where $A$ is the cross-sectional area. The mean force $\langle F\rangle$ applied by a gas with pressure $P$ on a piston with cross-sectional area $A$ can be approximated by $PA$ \cite{schwabl2006statistical}. Then, by measuring the mean velocity of the piston as $\langle \dot{X}\rangle=\frac{dL}{dt}$, the centroid work can be evaluated from Eq. \eqref{eq:wqmnossac}, resulting in
\begin{equation}	
\mathcal{W}_{c}\left(t_{f}-t_{i}\right)= \int_{t_{i}}^{t_{f}}\left\langle F\right\rangle \langle\dot{X}\rangle dt=\int_{t_{i}}^{t_{f}}PA \frac{dL}{dt} dt=\int_{L\left(t_{i}\right)}^{L\left(t_{f}\right)}PA dL=\int_{V\left(t_{i}\right)}^{V\left(t_{f}\right)}P dV,
\end{equation}
which recalls the frequently mentioned form for work applied on a gas $\delta W=-PdV$. Here, the sign is not negative since the gas is exerting work on the piston.  
\par It can be concluded, from the above special systems, that the centroid work can correctly reproduce the classical thermodynamical work. Of course, this is not to say that the purpose of definition \ref{def:tudoc} is to provide a general form of work to be applied in the description of macroscopic systems; \emph{it is just stated that once the macroscopically measurable properties are defined to be the mean force and mean position, then the centroid work is likely to satisfy some basic aspects of Classical Thermodynamics}. 
\par In the following, it is deduced a result that evidences the physical differences between $\mathcal{W}_{q}$ and $\mathcal{W}_{c}$.
\begin{teoremas}
	The resultant centroid work $\mathcal{W}_{c}^{R}$ done on a particle of mass $m$ during a time interval $t_{f}-t_{i}$ equals the variation in the centroid kinetic energy $K_{c}\equiv \frac{m}{2}\left\langle\dot{X}_{H}\right\rangle^{2}$. Formally,
	\begin{equation}
	\mathcal{W}_{c}^{R}\left(t_{f}-t_{i}\right)=\frac{m}{2}\left(\left\langle\dot{X}_{H}(t_{f})\right\rangle^{2}-\left\langle\dot{X}_{H}(t_{i})\right\rangle^{2}\right).\label{eq:teotrabq1c}
	\end{equation}
	\label{th:kinetiworqc}
\end{teoremas}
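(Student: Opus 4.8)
The plan is to mirror, almost verbatim, the argument used in the proof of Theorem \ref{th:kinetiworq}, but working now with c-number (scalar) quantities rather than operators, which makes the manipulation even more elementary. The starting point is Definition \ref{def:tudoc}, which gives $\mathcal{W}_{c}^{R}(t_{f}-t_{i})=\int_{t_{i}}^{t_{f}}\langle F_{H}^{R}\rangle\,\langle\dot{X}_{H}\rangle\,dt$, together with Definition \ref{prin:1}, which identifies $F_{H}^{R}=m\ddot{X}_{H}$ and hence $\langle F_{H}^{R}\rangle=m\langle\ddot{X}_{H}\rangle$.

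The one structural fact I would invoke is the same one used before: since the quantum state is time-independent in the Heisenberg picture, the expectation value and the time derivative commute, so that $\langle\ddot{X}_{H}\rangle=\big\langle\tfrac{d}{dt}\dot{X}_{H}\big\rangle=\tfrac{d}{dt}\langle\dot{X}_{H}\rangle$. Substituting this into the integrand turns $\langle F_{H}^{R}\rangle\langle\dot{X}_{H}\rangle$ into $m\,\big(\tfrac{d}{dt}\langle\dot{X}_{H}\rangle\big)\langle\dot{X}_{H}\rangle$, which — because $\langle\dot{X}_{H}\rangle$ is now just a real-valued function of $t$ — equals $\tfrac{m}{2}\,\tfrac{d}{dt}\langle\dot{X}_{H}\rangle^{2}$ by the ordinary chain rule. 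Integrating over $[t_{i},t_{f}]$ and applying the fundamental theorem of calculus yields $\mathcal{W}_{c}^{R}(t_{f}-t_{i})=\tfrac{m}{2}\big(\langle\dot{X}_{H}(t_{f})\rangle^{2}-\langle\dot{X}_{H}(t_{i})\rangle^{2}\big)$, which is precisely Eq. \eqref{eq:teotrabq1c}, and recognizing $\tfrac{m}{2}\langle\dot{X}_{H}\rangle^{2}=K_{c}$ completes the statement.

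There is essentially no hard step here; the only thing worth flagging is the contrast with Theorem \ref{th:kinetiworq}. In that proof one needs the anticommutator $\{\ddot{X}_{H},\dot{X}_{H}\}$ to symmetrize a product of non-commuting operators before it can be written as $\tfrac{d}{dt}(\dot{X}_{H})^{2}$; here, since we are multiplying two \emph{scalars} $\langle F_{H}^{R}\rangle$ and $\langle\dot{X}_{H}\rangle$, no such symmetrization is required and the factor $\tfrac12$ arises purely from differentiating a square. It is exactly this discrepancy — the extra correlation term hidden in $\tfrac12\langle\{F_{H}^{R},\dot{X}_{H}\}\rangle$ versus $\langle F_{H}^{R}\rangle\langle\dot{X}_{H}\rangle$ — that is the conceptual content the theorem is meant to expose, so in writing the proof I would present the two results side by side to make the structural parallel (and the single point of divergence) explicit.
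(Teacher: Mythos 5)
Your proof is correct and follows essentially the same route as the paper's: use $F_{H}^{R}=m\ddot{X}_{H}$, the fact that the Heisenberg-picture state is time-independent so $\langle\ddot{X}_{H}\rangle=\tfrac{d}{dt}\langle\dot{X}_{H}\rangle$, rewrite the integrand as $\tfrac{m}{2}\tfrac{d}{dt}\langle\dot{X}_{H}\rangle^{2}$, and integrate. The paper's proof is just the compressed version of this same argument (it cites the identity $\tfrac{d}{dt}\langle\dot{X}_{H}\rangle^{2}=2\langle\ddot{X}_{H}\rangle\langle\dot{X}_{H}\rangle$ and appeals to the reasoning of Theorem \ref{th:kinetiworq}), so there is nothing to add.
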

\begin{proof}
		It was employed similar arguments as the ones given for Theorem \ref{th:kinetiworq}. Considering $\frac{d}{dt}\left(\langle\dot{X}_{H}\rangle^{2}\right)=2\langle\ddot{X}_{H}\rangle\langle\dot{X}_{H}\rangle$, results in
	\begin{equation}
	\mathcal{W}_{c}^{R}\left(t_{f}-t_{i}\right)= \int_{t_{i}}^{t_{f}}\left\langle F_{H}^{R}\right\rangle \langle\dot{X}_{H}\rangle dt=m \int_{t_{i}}^{t_{f}}\langle\ddot{X}_{H}\rangle \langle\dot{X}_{H}\rangle dt=\frac{m}{2} \int_{t_{i}}^{t_{f}}\frac{d}{dt}\left(\langle\dot{X}_{H}\rangle^{2}\right) dt.\label{eq:kineticcele2}
	\end{equation}
	which implies the result \eqref{eq:teotrabq1c}.
\end{proof}
Now the difference between $\mathcal{W}_{q}^{R}$ and $\mathcal{W}_{c}^{R}$ can be made noticeable. From the variance definition for speed,
\begin{equation}
\left\langle \left(\Delta \dot{X}_{H}\right)^{2}\right\rangle=\left\langle \left(\dot{X}_{H}\right)^{2}\right\rangle-\left\langle \dot{X}_{H}\right\rangle^{2}\label{eq:squarmeandeviation}
\end{equation}
\begin{equation}
\left\langle \left(\dot{X}_{H}\right)^{2}\right\rangle=\left\langle \dot{X}_{H}\right\rangle^{2}+\left\langle \left(\Delta \dot{X}_{H}\right)^{2}\right\rangle.
\end{equation} 
Multiplying by  $\frac{m}{2}$ and taking variations in time lead to
\begin{equation}
\underbrace{\left.\frac{m}{2}\left\langle \left(\dot{X}_{H}\right)^{2}\right\rangle\right|_{t_{i}}^{t_{f}}}_{\mathcal{W}_{q}^{R}\left(t_{f}-t_{i}\right)}=\underbrace{\left.\frac{m}{2}\left\langle \dot{X}_{H}\right\rangle^{2}\right|_{t_{i}}^{t_{f}}}_{\mathcal{W}_{c}^{R}\left(t_{f}-t_{i}\right)}+\underbrace{\left.\frac{m}{2}\left\langle \left(\Delta \dot{X}_{H}\right)^{2}\right\rangle\right|_{t_{i}}^{t_{f}}}_{\mathcal{W}_{th}^{R}\left(t_{f}-t_{i}\right)}.\label{eq:centralprotrab}
\end{equation}
The extra term $\mathcal{W}_{th}^{R}\left(t_{f}-t_{i}\right)$ will be called \emph{thermal work} (for reasons to be explained). With the identifications given above, one sees that the resultant quantum work $\mathcal{W}_{q}^{R}\left(t_{f}-t_{i}\right)$ is the centroid work added with a quantum fluctuation term $\mathcal{W}_{th}^{R}\left(t_{f}-t_{i}\right)$, that is,
\begin{equation}
\mathcal{W}_{q}^{R}\left(t_{f}-t_{i}\right)= \mathcal{W}_{c}^{R}\left(t_{f}-t_{i}\right)+\mathcal{W}_{th}^{R}\left(t_{f}-t_{i}\right)
\end{equation}
Being directly proportional to the speed variance, it is clear that the thermal work $\mathcal{W}_{th}^{R}\left(t_{f}-t_{i}\right)$ $\equiv\left.\frac{m}{2}\left\langle \left(\Delta \dot{X}_{H}\right)^{2}\right\rangle\right|_{t_{i}}^{t_{f}}$ derives from intrinsic randomness. Now, recalling Callen's perspective, according to which heat is to be associated with some form of inaccessible random motion, then the conceptual link of $\mathcal{W}_{th}^{R}$ with heat becomes almost inescapable. However, up until a more deep examination be conducted, this connection will be put aside and the term thermal work will be used in reference to the ideas underlying the so-called thermal energy, as discussed in section \ref{sec:classical}.

\subsection{A Thermodynamics first law perspective}
\par Whenever the system of interest is a single point mass with no internal structure whatsoever, then there is no other form of energy that can be stored exclusively in this system but kinetic energy (see Sec. \ref{sec:classical} for a related discussion). If this particle is not isolated from the environment, then there will be some coupling energy, but arguably this energy is not confined to the particle; it actually is shared by the particle and the environment through the interacting potential. In this case, such interacting energy cannot be named "internal" to the particle. It follows from this that as far as a single particle of mass $m$ is elected as the system of interest, then its internal energy $U_i$ necessarily has to assume the form
	\begin{equation}
	U_{i}(t)=K_{q}(t)=\frac{m}{2}\left\langle\dot{X}_{H}(t)\dot{X}_{H}(t)\right\rangle.\label{eq:energiainternak}
	\end{equation}
Plugging this relation into Eq. \eqref{eq:centralprotrab} yields
\begin{equation}
\left.U_{i}\right|_{t_{i}}^{t_{f}}=\mathcal{W}_{c}^{R}\left(t_{f}-t_{i}\right)+\mathcal{W}_{th}^{R}\left(t_{f}-t_{i}\right),\label{eq:centralprotrab2}
\end{equation}
which in a more compact form, reads
\begin{equation}
	\Delta U_{i}= \mathcal{W}_{c}^{R}\left(t_{f}-t_{i}\right)+\mathcal{W}_{th}^{R}(t_{f}-t_{i}),
\end{equation}
where $\Delta \left(\cdot\right)=\left.\left(\cdot\right) \right|_{t_{i}}^{t_{f}}$. The centroid work $\mathcal{W}_{c}^{R}$, as argued in the previous subsection, can be viewed as a form of energy transfer that is encompassed by the notions of work usually adopted by Classical Thermodynamics. The first law of Thermodynamics, described in chapter \ref{cap:fundamentacao}, was written as: \emph{"the heat flux to a system in any process (at constant mole numbers) is simply the difference in internal energy between the final and initial states, diminished by the work done in that process"}. If the work done here, under a Thermodynamics point of view, is regarded as $ \mathcal{W}_{c}^{R}$ and the internal energy is given as in Eq. \eqref{eq:energiainternak}, then should not one definitely identify $ \mathcal{W}_{th}^{R}$ with heat?   This question will be left open for future studies, since this dissertation has no further arguments ensuring that this can indeed be the case. In particular, the eventual connection of $\mathcal{W}_{th}^{R}$ with entropy and its adequacy for many-particle systems are still to be investigated. In what follows other implications associated with the thermal work are pointed out.
\subsection{Centroid distribution work}
One of the topics covered in section \ref{sec:Ehren} was the description of the conditions under which the centroid of a (quantum or classical) distribution has its dynamics described in the same way as a Newtonian particle. That is, the conditions that allow for the approximation
  \begin{equation}
  m\left\langle \ddot{X}_{H}\right\rangle=-\left\langle\partial_{X_{H}} \mathcal{V}(X_{H})\right\rangle\approx -\partial_{\left\langle X_{H}\right\rangle} \mathcal{V}(\left\langle X_{H}\right\rangle)\label{eq:ehren2}.
  \end{equation}
Here the interest is in determining the necessary conditions for one to regard the total work realized on the particle as equal to the kinetic energy changes evaluated for the center of the distribution (the centroid). In order words, how accurate would a description of the total work done on a particle during an interval time $t_{f}-t_{i}$ in terms of $\frac{m}{2}\left.\left\langle\dot{X}_{H}\right\rangle^{2}\right|_{t_{i}}^{t_{f}}$ be? From the results above, this question refers to how close $\mathcal{W}_{q}^{R}$ and $\mathcal{W}_{c}^{R}$ are. To answer this question, one multiplies Eq. \eqref{eq:squarmeandeviation} by $\frac{m}{2}$ and organizes the result as
\begin{equation}
\frac{m}{2}\left\langle \dot{X}_{H}\right\rangle^{2}=\frac{m}{2}\left\langle \left(\dot{X}_{H}\right)^{2}\right\rangle\left(1-\frac{\frac{m}{2}\left\langle \left(\Delta \dot{X}_{H}\right)^{2}\right\rangle}{\frac{m}{2}\left\langle \left(\dot{X}_{H}\right)^{2}\right\rangle}\right).
\end{equation}
Therefore, if 
\begin{equation}
	\frac{\left\langle \left(\Delta \dot{X}_{H}\right)^{2}\right\rangle}{\left\langle \left(\dot{X}_{H}\right)^{2}\right\rangle}\ll 1 
\end{equation}
through the entire interval $t_{f}-t_{i}$, then  
\begin{equation}
	\frac{m}{2}\left\langle \dot{X}_{H}\right\rangle^{2}\approx\frac{m}{2}\left\langle \left(\dot{X}_{H}\right)^{2}\right\rangle.
\end{equation}
This means that by measuring the centroid kinetic energy $\frac{m}{2}\left\langle\dot{X}_{H}\right\rangle^{2}$ one approximately gets the total work imparted on the particle. In addition, one has
\begin{equation}
\mathcal{W}_{th}(t_{f}-t_{i})=\left.\frac{m}{2}\left\langle \left(\Delta \dot{X}_{H}\right)^{2}\right\rangle\right|_{t_{i}}^{t_{f}}\approx 0.
\end{equation}
This will be the case for instance in a situation where a heavy particle initially prepared in a Gaussian state with uncertainty $\Delta P = \frac{\Delta V}{m} $ and mean momentum $mV_{0}$ is submitted to an impulsive force that increases the mean momentum to $mV_{1}$ without appreciably changing the momentum uncertainty. In this case, the thermal work has no significant contribution to the quantum work.
\par In the next chapter, the theoretical framework developed here will be applied to a case study involving a dissipative model.

 % Falar do caso em que as medidas são geralmente em <X> e não em uma potência, como em Ehrenfest
 % Falar que a força ou torque etc tbm é assim: se determina a média desses valores. Dar o exemplo do pistão
 % Verificar que o resultante dá <dotX>^2
 % Definir trabalho no centro da distribuição W_{ce}
 % Notar que o trabalho da maneira que ta definido pode ser escrito em duas parcelas diferentes	
 % Separar as duas parcelas e da o nome da outra de flutuação
 % 

%
%
\chapter{CASE STUDY: A DISSIPATIVE MODEL}
\label{cap:kanaicaldirola}
The dynamics of an open quantum system, as mentioned in section \ref{sec:QT}, may be described considering the system as non-autonomous or autonomous. In the former case the influences of the environment are modeled via a time-dependent Hamiltonian. One of the first efforts aiming at adding dissipation for quantum systems, along this line of reasoning, were put forward by Caldirola \cite{caldirola1941forze} and Kanai \cite{kanai1948quantization}, independently. By considering the Hamiltonian operator
\begin{equation}
	H=\mathrm{e}^{-2\lambda t}\frac{P^{2}}{2m_{0}}+k_{0}\mathrm{e}^{2\lambda t}\frac{X^{2}}{2},\label{eq:CKFirst}
\end{equation}
with the real positive constants $\lambda$, $m_{0}$ and $k_{0}$, it was possible to describe the dynamics of a dissipative system as for instance a damped oscillator. This model has been applied, however, with other interpretations and in different areas \cite{brown1991quantum,schuch1999effective, schuch1997nonunitary}. Recently, it was shown that the Hamiltonian \eqref{eq:CKFirst} can be derived from the traditional formalism of open quantum systems \cite{sun1995exact}.  Here this model will be applied to analyze in a simple effective way, the energetic flows in a dissipative system, with particular emphasis to the notion of quantum work (as defined in the previous chapter), Alicki's approach for heat and work, and the classical notion of work. Hereafter, the Caldirola-Kanai model, will be referred to simply as CK, for simplicity. 
\par First it is analyzed the Classical analogue of the CK model, in which case the Hamilton equations of motion are shown to reproduce the Newtonian equation typical of a damped oscillator. Next, similar equations of motion are derived by submitting the Hamiltonian \eqref{eq:CKFirst} to the Heisenberg picture. With that, the aforementioned comparison among the various approaches are conducted. Then, a statistical treatment is considered, where it is verified for which cases the presently-introduced quantum work reproduces the classical limit. Finally, it is analyzed some special effects regarding quantum superposition.

\section{\textbf{Classical system}}
\label{sec:purecl}
The classical Hamiltonian, analogous to the quantum operator is described as
\begin{equation}
	H_{cl}=\frac{p^{2}}{2 m_{0}}\mathrm{e}^{-2\lambda t}+k_{0}\frac{x^{2}}{2 }\mathrm{e}^{2\lambda t}.
\end{equation}
From the Hamilton equations \cite{goldstein2011classical}
\begin{equation}
	\begin{array}{lcr}
		\dot{p}=-\frac{\partial}{\partial x}H_{cl}=-k_{0}x\mathrm{e}^{2\lambda t}& \text{and}& \dot{x}=\frac{\partial}{\partial p}H_{cl}=\frac{p}{ m_{0}}\mathrm{e}^{-2\lambda t},\label{eq:hamiltonclassica}
	\end{array}
\end{equation}
the equation of motion of a damped oscillator follows:
\begin{equation}
\begin{array}{ccc}
\displaystyle\ddot{x}=\frac{\dot{p}}{ m_{0}}\mathrm{e}^{-2\lambda t}-2\lambda\frac{p}{ m_{0}}\mathrm{e}^{-2\lambda t}&\Rightarrow&\displaystyle\ddot{x}+2\lambda\dot{x}+\omega x=0\label{eq:motioneq}
\end{array}
\end{equation}
being $\omega^{2}=\frac{k_{0}}{m_{0}}$ the natural frequency of the harmonic oscillator. The system, therefore, can be regarded as the oscillator depicted in Figures \ref{fig:slidingblockresultsa} or \ref{fig:slidingblockresultsb}, where a drag force $F_{d}=-2\lambda m_{0}\dot{x}$, with constant drag $2\lambda m_{0}$ coefficient, is considered to be acting on the system, $x$ is the position of the block, and $\dot{x}$ is its velocity. 
\begin{figure}[!h]
	\begin{center}
		\begin{subfigure}[t]{0.4\textwidth}
			\includegraphics[angle=0, scale=0.3]{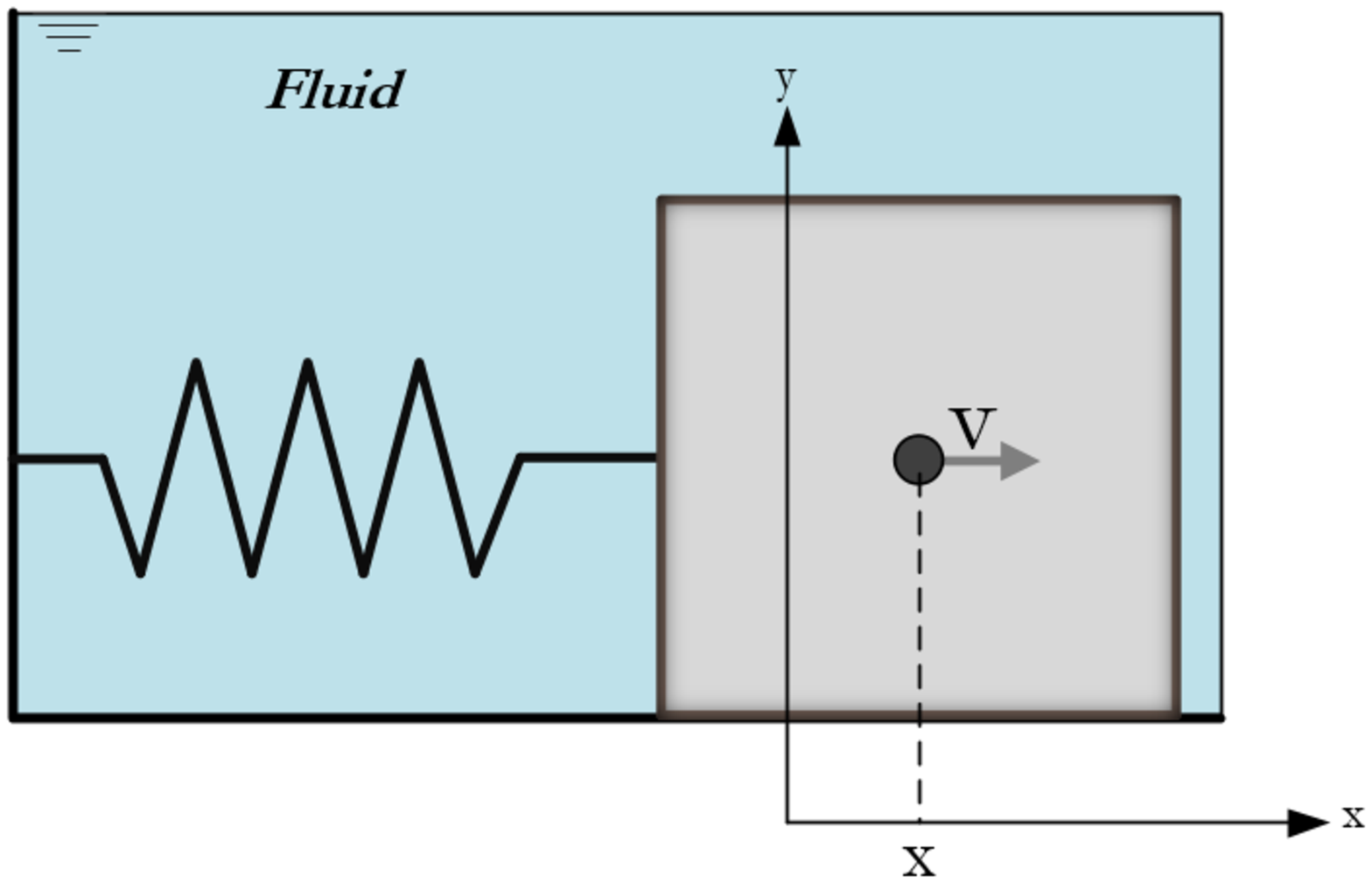} 
			\caption{Sliding block immersed.}
			\label{fig:slidingblockresultsa}
		\end{subfigure}
		\begin{subfigure}[t]{0.4\textwidth}
			\includegraphics[angle=0, scale=0.3]{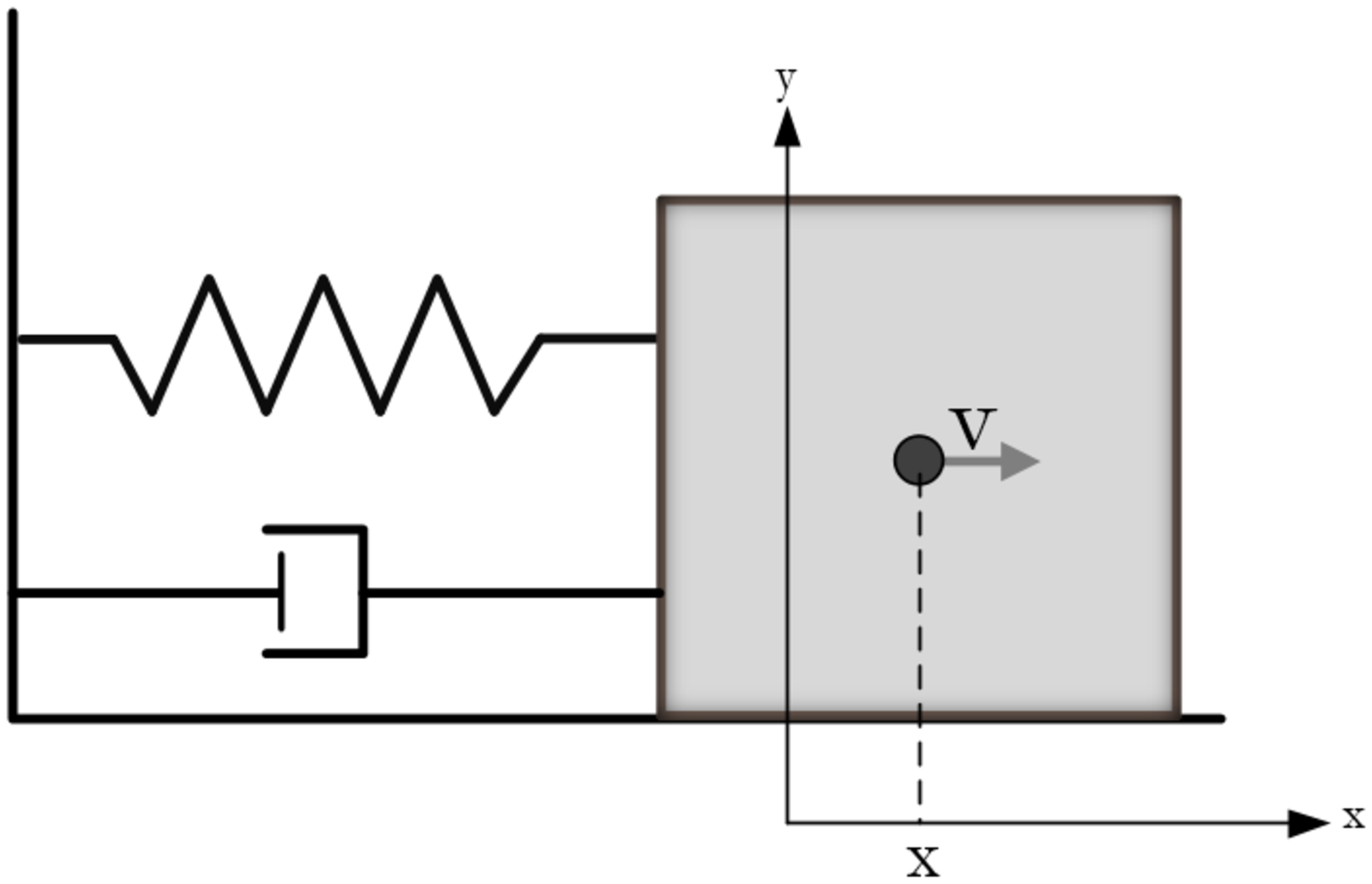}
			\caption{Viscous element added to the spring-block model.}
			\label{fig:slidingblockresultsb}
		\end{subfigure}
		\caption{Two forms of represent the system described by the equation of motion \eqref{eq:motioneq}: (a) the drag force $F_{d}=-2\lambda m_{0}\dot{x}$ is applied on the block of mass $m_{0}$ as it moves immersed on a fluid or (b) the viscous element, with viscosity coefficient $-2\lambda m_{0}$, exerts the same drag force.}
		\label{fig:slidingblockresults}
	\end{center}
\end{figure} 
\par The solutions are
\begin{equation}
x(\tau)=x_{0}\mathrm{e}^{-\tau}\left[\cosh\left(\zeta\tau\right)+\left(1+\frac{p_{0}}{x_{0}m_{0}\lambda}\right)\frac{\sinh\left(\zeta\tau\right)}{\zeta}\right]\label{eq:posicaocl1}
\end{equation}
and
\begin{equation}
p(\tau)=p_{0}\mathrm{e}^{\tau}\left[\cosh\left(\zeta\tau\right)-\left(1+\frac{k_{0}x_{0}}{p_{0}\lambda}\right)\frac{\sinh\left(\zeta\tau\right)}{\zeta}\right],\label{eq:ptau}
\end{equation}
where $\tau=\lambda t$ is a dimensionless time, $\zeta=\sqrt{1-\frac{\omega^{2}}{\lambda^{2}}}$, and $x_{0}$ and $p_{0}$ are the initial conditions. Notice that the canonical momentum $p$ differs from the mechanical momentum, since from Eq. \eqref{eq:hamiltonclassica} one has 
\begin{equation}
	m_{0}\dot{x}=p\mathrm{e}^{-2\lambda t}.\label{eq:ptau2}
\end{equation}
 Denoting the block velocity as $\dot{x}=v$, one finds from Eqs. \eqref{eq:ptau} and \eqref{eq:ptau2}
\begin{equation}
v(\tau)=\mathrm{e}^{-\tau}\left[\frac{p_{0}}{m_{0}}\cosh\left(\zeta\tau\right)-\left(\frac{p_{0}}{m_{0}}+\frac{k_{0}x_{0}}{m_{0}\lambda}\right)\frac{\sinh\left(\zeta\tau\right)}{\zeta}\right].\label{eq:velocity1ckcl}
\end{equation}
It follows that the classical kinetic energy can be written as
\begin{equation}
K_{cl}(\tau)=\frac{m_{0} v^{2}(\tau)}{2}=\frac{m_{0}}{2}\mathrm{e}^{-2\tau}\left[\frac{p_{0}}{m_{0}}\cosh\left(\zeta\tau\right)-\left(\frac{p_{0}}{m_{0}}+\frac{k_{0}x_{0}}{m_{0}\lambda}\right)\frac{\sinh\left(\zeta\tau\right)}{\zeta}\right]^{2}
\end{equation}
or, expanding the square,
\begin{equation}
\begin{array}{rl}
	\displaystyle K_{cl}(\tau)&\displaystyle=\frac{1}{2m_{0}}\mathrm{e}^{-2\tau}\left[p_{0}^{2}\cosh^{2}\left(\zeta\tau\right)-2p_{0}\cosh\left(\zeta\tau\right)\left(p_{0}+\frac{k_{0}x_{0}}{\lambda}\right)\frac{\sinh\left(\zeta\tau\right)}{\zeta}+\right.\\
	&\displaystyle+\left.\left(p_{0}+\frac{k_{0}x_{0}}{\lambda}\right)^{2}\frac{\sinh^{2}\left(\zeta\tau\right)}{\zeta^{2}}\right].\\
\end{array}
\label{eq:kineticccccc}
\end{equation}
Notice that the energy-work theorem fully applies to this effective model, for, given that the resultant force reads $m_{0}\ddot{x}$, one has
\begin{equation}
\mathcal{W}_{cl}(\tau-0)=	\int_{x\left(0\right)}^{x\left(\tau\right)}m_{0}\ddot{x}dx=m_{0}\int_{0}^{\tau}\ddot{x}\dot{x}d\tau^{'}=\frac{m_{0}}{2}\left.\dot{x}^{2}(\tau^{'})\right|_{0}^{\tau}\equiv\left.\frac{m_{0} v^{2}(\tau^{'})}{2}\right|_{0}^{\tau}.\label{eq:wcl2222}
\end{equation}
It is important to remark that $\mathcal{W}_{cl}(\tau)$ corresponds to the work done \emph{on the block} in the interval $\tau$. The work done on the spring or the environment are not explicitly encompassed by $\mathcal{W}_{cl}$, although the work done \emph{by} these systems on the block is. Throughout the present chapter, it will be adopted a framework within which the block is taken as the system of interest, that is, the one of which the energy flow is analyzed. In particular, the classical work \eqref{eq:wcl2222} will be compared with the results provided by the quantum and Liouvillian formalisms. The following scaling will prove convenient to make the desirable comparisons among the models.
\subsection{Dimensionless scheme}
The initial energy $E_{0}$ of the spring-block system can be written as 
\begin{equation}
E_{0}\equiv\frac{m_{0}\omega^{2} x_{0}^{2}}{2}+\frac{ p_{0}^{2}}{2m_{0}}.
\end{equation}
The total energy can be divided into two parcels, which can be written 
\begin{equation}
	\frac{m_{0}\omega^{2} x_{0}^{2}}{2}\equiv\varepsilon E_{0},\qquad\frac{ p_{0}^{2}}{2m_{0}}\equiv\left(1-\varepsilon\right)E_{0},
\end{equation}
with $0\leq\varepsilon\leq 1$. The parameter $\varepsilon$ gives the relative weight of the elastic energy in the total energy of the system. After some manipulations, Eq. \eqref{eq:kineticccccc} can be written as
\begin{equation}
\begin{array}{rl}
\displaystyle K_{cl}(\tau)&\displaystyle=E_{0}\mathrm{e}^{-2\tau}\left(\left(1-\varepsilon\right)\cosh^{2}(\zeta\tau)-\left(2\frac{\omega \sqrt{\varepsilon-\varepsilon^{2}}}{\lambda\zeta}+2\frac{1-\varepsilon}{\zeta}\right)\sinh(\zeta \tau)\cosh(\zeta\tau)+\right.\\
&\displaystyle\left.	+\left(\omega^{2}\frac{\varepsilon}{\lambda^{2}\zeta^{2}}+2\omega\frac{\sqrt{\varepsilon-\varepsilon^{2}}}{\lambda\zeta^{2}}+\frac{1-\varepsilon}{\zeta^{2}}\right)\sinh^{2}(\zeta \tau)\right).\\
\end{array}
\end{equation}
In order to have a dimensionless expression, both sides are divided by the initial kinetic energy $K_{cl}(0)\equiv K_{0}=\frac{ p_{0}^{2}}{2m_{0}}=\left(1-\varepsilon\right)E_{0}$, resulting in 
\begin{equation}
\begin{array}{rl}
\displaystyle \frac{K_{cl}(\tau)}{K_{0}}& \displaystyle =\frac{\mathrm{e}^{-2\tau}}{1-\varepsilon}\left(\left(1-\varepsilon\right)\cosh^{2}(\zeta\tau)-\left(2\frac{\omega \sqrt{\varepsilon-\varepsilon^{2}}}{\lambda\zeta}+2\frac{1-\varepsilon}{\zeta}\right)\sinh(\zeta \tau)\cosh(\zeta\tau)+\right.\\
&\displaystyle\left.	+\left(\omega^{2}\frac{\varepsilon}{\lambda^{2}\zeta^{2}}+2\omega\frac{\sqrt{\varepsilon-\varepsilon^{2}}}{\lambda\zeta^{2}}+\frac{1-\varepsilon}{\zeta^{2}}\right)\sinh^{2}(\zeta \tau)\right).\label{eq:classikal2}
\end{array}
\end{equation}
Consequently the dimensionless classical work, in an interval $t$, reads
\begin{equation}
\frac{\mathcal{W}_{cl}(\tau)}{K_{0}}=\frac{\left.K_{cl}(\tau^{'})\right|_{0}^{\tau}}{K_{0}}.\label{eq:classicalw2}
\end{equation}
Recalling that
\begin{equation}
\zeta= \sqrt{1-\frac{\omega^{2}}{\lambda^{2}}},\qquad \tau=\lambda t=\frac{\lambda}{\omega}\omega t,
\end{equation} 
it is possible to determine the value of the dimensionless work in a dimensionless interval $\omega t$, in terms of $\varepsilon$ and $\frac{\omega}{\lambda}$. The same scheme is used to write the position as
\begin{equation}
\displaystyle\frac{x(\tau)}{x_{0}}=\mathrm{e}^{-\tau}\left[\cosh (\zeta \tau)+\left(\frac{1}{\zeta}+\frac{\omega}{\lambda}\frac{\sqrt{1-\varepsilon }}{\zeta\sqrt{\varepsilon }}\right)\sinh(\zeta \tau)\right].
\end{equation}
This scaling is not well defined when $\varepsilon=\frac{m_{0}\omega^{2} x_{0}^{2}}{2E_{0}}=0$, since in this case $x_{0}=0$. In such situation, one can use a different scaling. From Eq. \eqref{eq:posicaocl1} one has
\begin{equation}
x(\tau)=\mathrm{e}^{-\tau}\frac{p_{0}}{m_{0}\lambda}\frac{\sinh\left(\zeta\tau\right)}{\zeta},
\end{equation}
which allows one to write
\begin{equation}
\frac{x(\tau)}{x_{m}}=\mathrm{e}^{-\tau}\frac{\sinh\left(\zeta\tau\right)}{\zeta},
\end{equation}
where $x_{m}\equiv \frac{p_{0}}{m_{0}\lambda}$ emerges as the convenient scale factor. 
\par The time-dependent terms appearing in the kinetic energy \eqref{eq:classikal2}, namely, $\sinh^{2}\left(\zeta\tau\right)$, $\sinh\left(\zeta\tau\right)\cosh\left(\zeta\tau\right)$ and $\cosh^{2}\left(\zeta\tau\right)$, will be recurrent in the results to be derived later. In other to further simplify the analysis and compare the results, Eq. \eqref{eq:classikal2} is rewritten as the inner product
\begin{equation}
\displaystyle \frac{K_{cl}(\tau)}{K_{0}}=\mathrm{e}^{-2\tau}\underline{\alpha}_{cl}\cdot\underline{\Gamma}(t),\label{eq:classikal}
\end{equation}
where
\begin{equation}
	\underline{\alpha}_{cl}=\frac{1}{1-\varepsilon}\left[
	\begin{array}{c}
	1-\varepsilon\\\displaystyle
	-\frac{2}{\zeta}\left(\frac{\omega \sqrt{\varepsilon-\varepsilon^{2}}}{\lambda}+\left(1-\varepsilon\right)\right)\\
	\displaystyle\left(\frac{\omega^{2}\varepsilon}{\lambda^{2}\zeta^{2}}+2\omega\frac{\sqrt{\varepsilon-\varepsilon^{2}}}{\lambda\zeta^{2}}+\frac{1-\varepsilon}{\zeta^{2}}\right)
	\end{array}
	\right], \qquad \underline{\Gamma}(t)=\left[
	\begin{array}{c}
	\displaystyle\cosh^{2}(\zeta\tau)\\
	\displaystyle\sinh(\zeta \tau)\cosh(\zeta\tau)\\
	\displaystyle\sinh^{2}(\zeta\tau)
	\end{array}
	\right].\label{eq:alphacldef}
\end{equation}
In this form, all the time-dependent terms are written in the compact form of a vector $\mathrm{e}^{-2\tau}\underline{\Gamma}(t)$ and the classical work \eqref{eq:classicalw2} can be written in a simpler form
\begin{equation}
\frac{\mathcal{W}_{cl}(\tau)}{K_{0}}=\underline{\alpha}_{cl}\cdot\left(\mathrm{e}^{-2\tau}\underline{\Gamma}(t)-\underline{\Gamma}(0)\right).\label{eq:classicalw}
\end{equation}
\par It is now convenient to analyze two main scenarios: an underdamped regime, for which  $\frac{\omega}{\lambda}> 1$, and an overdamped one, where $\frac{\omega}{\lambda}<1$. In the former case, $\zeta^{2}=1-\frac{\omega^{2}}{\lambda^{2}}<0$. Therefore, 
\begin{equation}
	\zeta=i\sqrt{\frac{\omega^{2}}{\lambda^{2}}-1}\label{eq:zeta}
\end{equation} 
where $i^{2}=-1$ and
\begin{align}
\underline{\alpha}_{cl}=\frac{1}{1-\varepsilon}&\left[
\begin{array}{c}
1-\varepsilon\\\displaystyle
\frac{2i}{\sqrt{\frac{\omega^{2}}{\lambda^{2}}-1}}\left(\frac{\omega \sqrt{\varepsilon-\varepsilon^{2}}}{\lambda}+\left(1-\varepsilon\right)\right)\\
\displaystyle\left(\omega^{2}\frac{\varepsilon}{\lambda^{2}\zeta^{2}}+2\omega\frac{\sqrt{\varepsilon-\varepsilon^{2}}}{\lambda\zeta^{2}}+\frac{1-\varepsilon}{\zeta^{2}}\right)
\end{array}
\right],\\
\underline{\Gamma}(t)=&\left[
\begin{array}{c}
\displaystyle\cos^{2}(\sqrt{\omega^{2}-\lambda^{2}}t)\\
\displaystyle i\sin(\sqrt{\omega^{2}-\lambda^{2}}t)\cos(\sqrt{\omega^{2}-\lambda^{2}}t)\\
\displaystyle-\sin^{2}(\sqrt{\omega^{2}-\lambda^{2}}t)
\end{array}
\right].
\end{align}
$\underline{\Gamma}(t)$ has a periodical behavior, although $\mathrm{e}^{-2\tau}\underline{\Gamma}(t)$ has not: the oscillation amplitude will decay at an exponential time rate. This behavior is depicted in Fig. \ref{fig:positionclassicaluod}, where the underdamped regime is considered with  $\frac{\omega}{\lambda}=10$ and $\varepsilon=0$.
\begin{figure}[h]
	\begin{center}
		\begin{subfigure}[h]{1\textwidth}
			\centering
			\includegraphics[angle=0, scale=0.55]{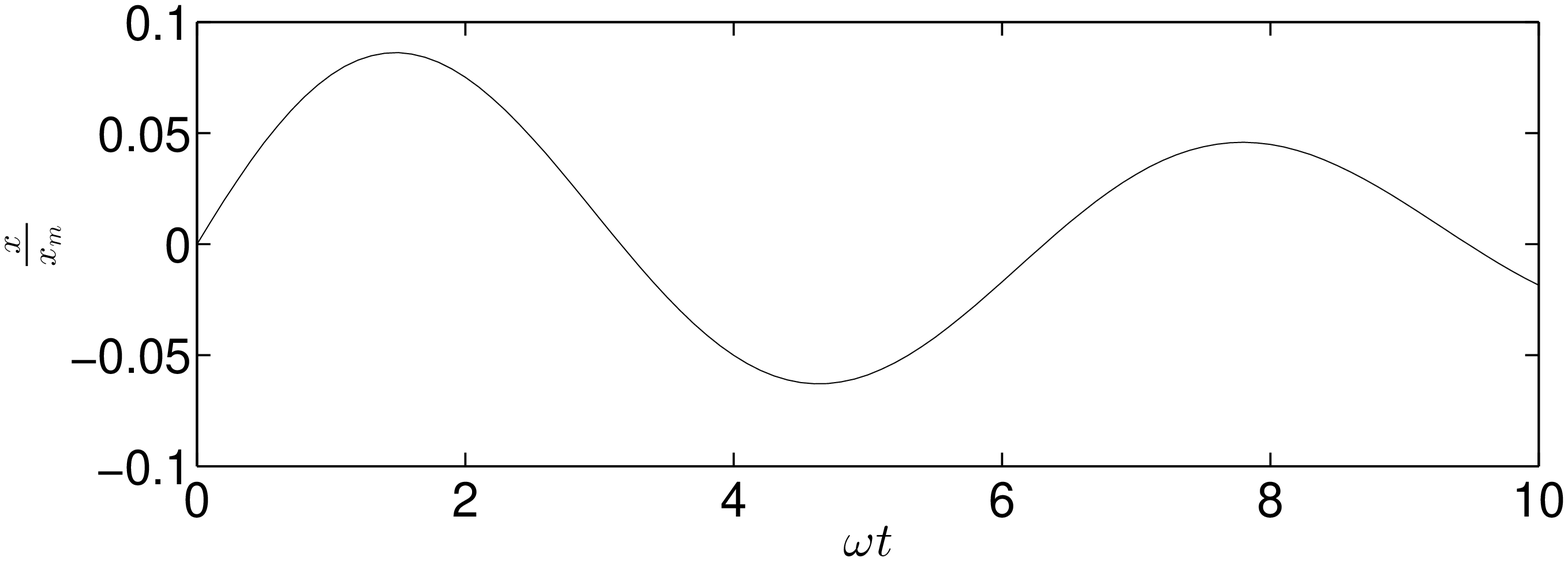} 
			\caption{ }
			\label{fig:positionclassicalud}
		\end{subfigure}
		\begin{subfigure}[h]{1\textwidth}
			\centering
			\includegraphics[angle=0, scale=0.55]{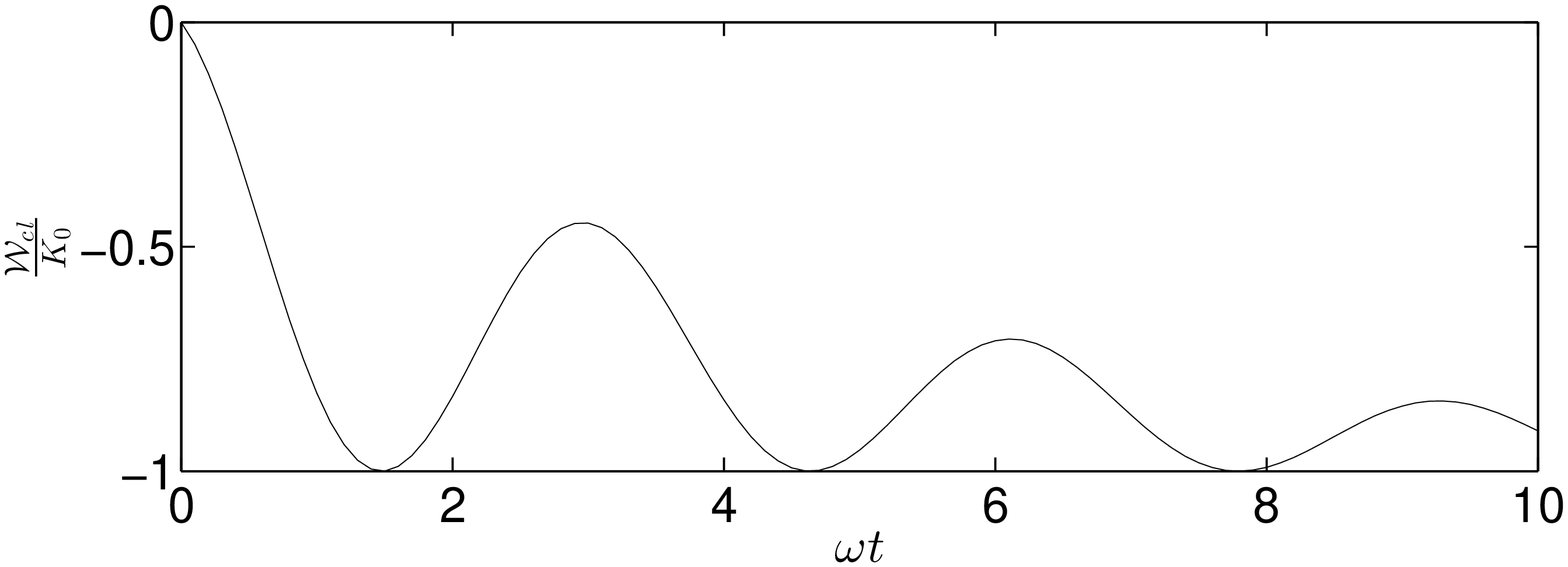}
			\caption{ }
			\label{fig:workclassicalud}
		\end{subfigure}
		\caption{Block dimensionless (a) position $\frac{x(\tau)}{x_{m}}$ and (b) classical work $\frac{\mathcal{W}_{cl}}{K_{0}}$ as a function of the dimensionless time $\omega t$ within the time interval $[0,10]$, for an underdamped regime ($\frac{\omega}{\lambda}=10$). The block is initially in the equilibrium position, with no potential energy stored ($\varepsilon=0$).}
		\label{fig:positionclassicaluod}
	\end{center}
\end{figure}
\par Because it has been chosen that the initial energy of the system is purely kinetic ($\varepsilon=0$, an initial condition that will be employed throughout this chapter), one finds that the kinetic energy decreases, so that $\mathcal{W}_{cl}(t)=K_{cl}(t)-K_{0}<0$. It can be stated thus that "the environment is doing work on the system, extracting its internal energy". The average decay, as expected, assumes an exponential form and some oscillations can be observed. 
When the regime considered is overdamped, then $\frac{\omega}{\lambda}<1$ and $\zeta$ has no imaginary part:
\begin{equation}
	\zeta=\sqrt{1-\frac{\omega^{2}}{\lambda^{2}}}.
\end{equation}
The time dependence is therefore described by 
\begin{equation}
\mathrm{e}^{-2\tau}\underline{\Gamma}(t)=\mathrm{e}^{-2\tau}\left[
\begin{array}{c}
\displaystyle\cosh^{2}(\sqrt{1-\frac{\omega^{2}}{\lambda^{2}}}\lambda t)\\
\displaystyle\sinh(\sqrt{1-\frac{\omega^{2}}{\lambda^{2}}}\lambda t)\\
\displaystyle\sinh^{2}(\sqrt{1-\frac{\omega^{2}}{\lambda^{2}}}\lambda t)
\end{array}
\right].
\end{equation}
Since $\sqrt{1-\frac{\omega^{2}}{\lambda^{2}}}\leq 1$ and recalling that $\tau=\lambda t$, then $\mathrm{e}^{\sqrt{1-\frac{\omega^{2}}{\lambda^{2}}}2\lambda t}\leq \mathrm{e}^{2\tau}$. As a consequence, the terms of the hyperbolic functions increases slower than $\mathrm{e}^{-2\tau}$ decreases. Therefore, $\mathrm{e}^{-2\tau}\underline{\Gamma}(t)$ will decrease exponentially. Such considerations are depicted in Fig. \ref{fig:workclassicaluod}, where it is regarded a case in which $\frac{\omega}{\lambda}=0.1$ and $\varepsilon=0$.
\begin{figure}[h]
	\begin{center}
		\begin{subfigure}[h]{1\textwidth}
			\centering
			\includegraphics[angle=0, scale=0.55]{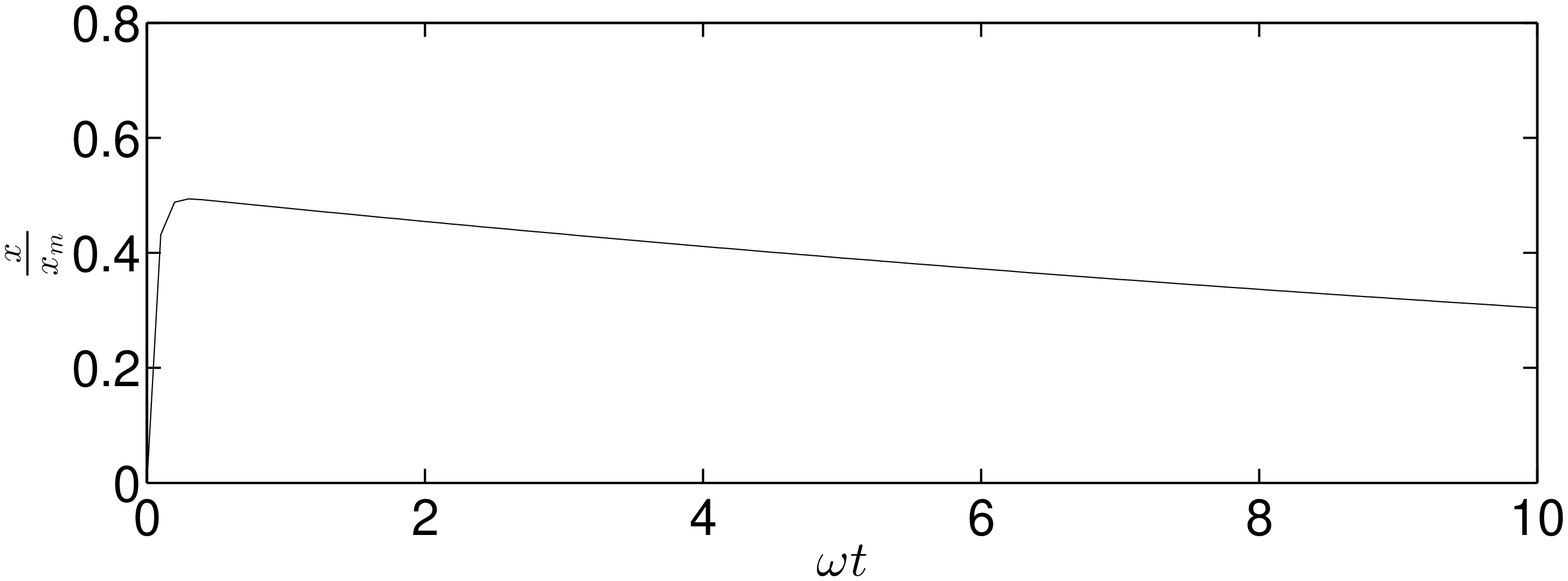} 
			\caption{Position.}
			\label{fig:positionclassicalod}
		\end{subfigure}
		\begin{subfigure}[h]{1\textwidth}
			\centering
			\includegraphics[angle=0, scale=0.55]{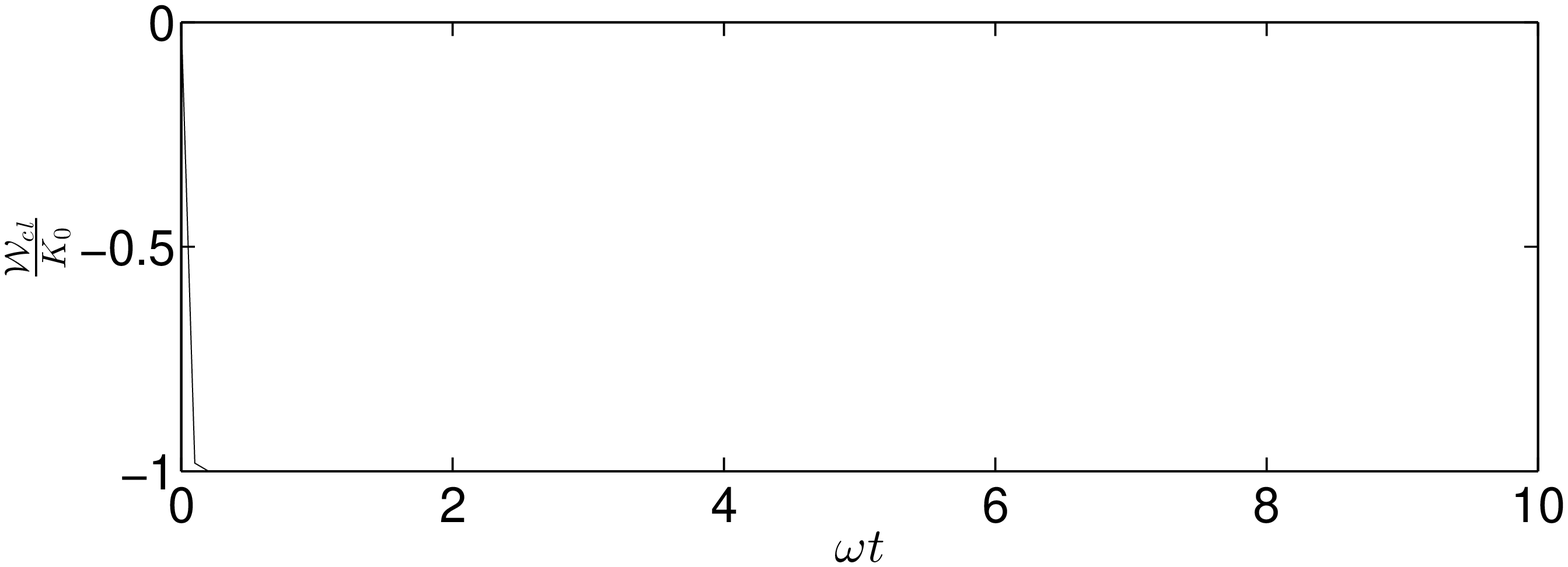}
			\caption{Work done by the environment.}
			\label{fig:workclassicalod}
		\end{subfigure}
		\caption{Block dimensionless position $\frac{x(\tau)}{x_{m}}$ and classical work $\frac{\mathcal{W}_{cl}}{K_{0}}$ as a function of the dimensionless time $\omega t$ within the time interval $[0,10\omega t]$, for an overdamped regime ($\frac{\omega}{\lambda}=0.1$). The block is initially in the equilibrium position, with no potential energy stored ($\varepsilon=0$).}
		\label{fig:workclassicaluod}
	\end{center}
\end{figure} 
The kinetic energy undergoes an exponential decay, without oscillations, as expected. The figures \ref{fig:positionclassicaluod} and \ref{fig:workclassicaluod} evidence the features that are expected for the damped oscillators, showing that the CK classic model can suitably describe such a system. In what follows, two specializations are considered for the model.
\subsection{Purely conservative or dissipative forces}
\par From the CK Hamiltonian, it is immediately seen that, for $\lambda=0$, one has a simple harmonic oscillator, in which case only a conservative elastic force acts on the block. No environment influence remains at all. The equation \eqref{eq:classikal} then reduces to
\begin{equation}
\frac{K_{cl}(\tau)}{K_{0}}=\underline{\alpha}_{cl}^{\lambda=0}\cdot\underline{\Gamma}^{\lambda=0}(t),
\end{equation}
where 
\begin{equation}
\underline{\alpha}_{cl}^{\lambda=0}=\frac{1}{1-\varepsilon}\left[
\begin{array}{c}
1-\varepsilon\\\displaystyle
-2\sqrt{\varepsilon-\varepsilon^{2}}\\
\varepsilon
\end{array}
\right], \qquad 
 \underline{\Gamma}^{\lambda=0}(t)=\left[
\begin{array}{c}
\displaystyle\cos^{2}(\omega t)\\
\displaystyle \sin(\omega t)\cos(\omega t)\\
\displaystyle-\sin^{2}(\omega t)
\end{array}
\right].\label{eq:alphacll0}
\end{equation}
 The sinusoidal form of the equations reveals the usual scenario: the spring performs work on the block, initially extracting its kinetic energy and storing it in potential form and then eventually delivering it back to the block, in a cyclic way. On the other hand, by setting $k_{0}=0$ one "turns off" the spring, which leads the CK model to emulate a purely dissipative drag force. In this case, 
\begin{equation}
\frac{K_{cl}(\tau)}{K_{0}}=\mathrm{e}^{-4\tau} \qquad \text{and} \qquad \frac{\mathcal{W}_{cl}(\tau)}{K_{0}}= \mathrm{e}^{-4\tau}-1.
\end{equation}
As expected, one finds that the environment constantly extracts the system energy, which thus asymptoticly drops to zero.  
\par It should be clear from the above that the CK classical model correctly simulate many interesting dynamical systems, as for instance a harmonic oscillator, a damped oscillator, and a continuously decelerated moving body. Next, the quantum counterpart of this model is investigated.
\section{\textbf{Quantum system}}
\label{sec:Quantumsystems}
The CK Hamiltonian operator, as described in the introduction of the chapter, reads
\begin{equation}
H=\frac{P^{2}}{2 m_{0}}\mathrm{e}^{-2\lambda t}+k_{0}\frac{X^{2}}{2 }\mathrm{e}^{2\lambda t}.\label{eq:Hamiltonianoquantuco}
\end{equation}
Within the Heisenberg picture (see Eq. \eqref{eq:heisengeral}), the velocity operator is computed as
\begin{equation}
\displaystyle V_{H}\equiv \dot{X}_{H}\displaystyle=\frac{\left[X_{H},H_{H}\right]}{i\hbar}=\frac{P_{H}}{m_{0}}\mathrm{e}^{-2\lambda t}
\label{eq:vhcalc}
\end{equation}
which, in the Schrödinger picture reduces to
\begin{equation}
V=\mathcal{U}V_{H}\mathcal{U}_{t}^{\dagger}=\frac{P}{m_{0}}\mathrm{e}^{-2\lambda t}.\label{eq:conversaoheischron2}
\end{equation}
The acceleration operator turns out to be
\begin{equation}
\begin{array}{rl}
\displaystyle A_{H}\equiv \ddot{X}_{H}&\displaystyle=\frac{\left[V_{H},H_{H}\right]}{i\hbar}+\mathcal{U}_{t}^{\dagger}\frac{\partial V}{\partial t}\mathcal{U}_{t}\\
&\displaystyle=\frac{\left[\frac{P_{H}}{m_{0}}\mathrm{e}^{-2\lambda t},k_{0}\frac{X_{H}^{2}}{2 }\mathrm{e}^{2\lambda t}\right]}{i\hbar}+\frac{P_{H}}{m_{0}}\frac{\partial \mathrm{e}^{-2\lambda t}}{\partial t}\\
&\displaystyle=-\omega^{2}X_{H}-2\lambda V_{H}\\
\end{array}
\end{equation}
yielding, via Eq. \eqref{eq:vhcalc},
\begin{equation}
\ddot{X}_{H}+2\lambda\dot{X}_{H}+\omega^{2}X_{H}=0.\label{eq:motioneq2}
\end{equation} 
The mathematical structure of this result is identical to that of a classical damped oscillator (see Eq. \eqref{eq:motioneq}). However, quantum mechanics is not about operators only; it is necessary to take quantum states into account. In other words, the Eq. \eqref{eq:motioneq2} \emph{per se} does not reveal what can be predicted for the system. For one to proceed with the calculation of the quantum work, it is adopted the Gaussian wave function
\begin{equation}
\displaystyle\langle x|\psi_{0}\rangle=\Psi(x,0)=\frac{\mathrm{e}^{-\frac{\left(x-x_{0}\right)^{2}}{4\Delta_{x,0}^{2}}}\mathrm{e}^{\frac{ip_{0}x}{\hbar}}}{\left(2\pi \Delta_{x,0}^{2}\right)^{\frac{1}{4}}},\label{eq:Gaussianresult}
\end{equation}
where $x_{0}$ and $p_{0}$, are to be identified with the classical initial conditions. This state is very convenient because it allows for (i) analytical computations (as will be seen) and (ii) a continuous interpolation between the regimes of high and low spatial delocalization (which is regulated by the uncertainty $\Delta_{x,0}$). In addition, because this is a minimum-uncertainty state ($\Delta_{p,0}\Delta_{x,0}=\frac{\hbar}{2}$), one may expect, in light of the Ehrenfest theorem, to have a good agreement between quantum and classical results at short times.
\par Now, standard techniques of quantum mechanics are employed. Using the CK Hamiltonian one can compute, via a method based on Lie algebras \cite{iesus}, the time propagator $\mathcal{U}_{t}$ and then the time-evolved vector state $\ket{\psi(t)}=\mathcal{U}_{t}\ket{\psi_{0}}$, where $\ket{\psi_{0}}=\int_{-{\infty}}^{\infty}\langle x|\psi_{0}\rangle\ket{x}dx$. From this, the evolved wave function $\Psi(x,t)=\langle x|\psi(t)\rangle$ is determined and expectations values are computed. The details of this lengthy calculation are given in Appendix \ref{cap:kanaicaldirolasol}. The main results are summarized below.
\begin{equation}
\Psi(x,t)= A(t)\exp\left[-\frac{\left(x-\mathrm{e}^{-\frac{c_{0}}{2}}\left(x_{0}-c_{-}p_{0}\right)\right)^{2}}{\mathrm{e}^{-c_{0}}\left(4\Delta_{x,0}^{2}-2i\hbar c_{-}\right)}+i\frac{c_{-}}{2\hbar}p_{0}^{2}+i\frac{e^{\frac{c_{0}}{2}}}{\hbar}p_{0}x-i\theta +i\frac{c_{+}}{2\hbar}x^{2}\right]\label{eq:funondaresults}
\end{equation}
where
\begin{align}
	&A(t)=\left(\frac{1}{2\pi \Delta_{x,t}^{2}}\right)^\frac{1}{4},\\
	&\Delta_{x,t}^{2}=\mathrm{e}^{-c_{0}}\Delta_{x,0}^{2}\left(1+\frac{\hbar^{2}c_{-}^{2}}{4\Delta_{x,0}^{4}}\right),\\
	&\theta(t)=\frac{1}{2}\arctan\left[-\frac{\hbar c_{-}}{2\Delta_{x,0}^{2}}\right].
\end{align}
The following expectation values and variances have been determined:
\begin{align}
	&\displaystyle\langle X\rangle_{q}=\displaystyle\frac{\mathrm{e}^{-\tau}}{\zeta\lambda m_{0}}\left[\left(\zeta \cosh (\zeta \tau)+\sinh(\zeta \tau)\right)\lambda m_{0}x_{0}+p_{0}\sinh(\zeta \tau)\right],\label{eq:posimeanres}\\
	&\displaystyle\langle X^{2}\rangle_{q}=\displaystyle\mathrm{e}^{-2\tau}\left[k_{4} \cosh^{2} (\zeta \tau)+k_{5}\sinh(\zeta \tau)\cosh (\zeta \tau)+k_{6}\sinh^{2}(\zeta \tau)\right],\label{eq:melhorQ2r}\\
	&\begin{array}{ll}
	\displaystyle\langle \left(\Delta X\right)^{2}\rangle_{q}&\displaystyle=\mathrm{e}^{-2\tau}\left(\Delta_{x,0}^{2} \cosh^{2} (\zeta \tau)+\frac{2\Delta_{x,0}^{2}}{\zeta} \sinh(\zeta \tau)\cosh (\zeta \tau)\right.+\\
	&\displaystyle+\left.\left(\frac{\Delta_{x,0}^{2}}{\zeta^{2}}+\frac{\Delta_{p,0}^{2}}{m_{0}^{2}\zeta^{2}\lambda^{2}}\right)\sinh^{2}(\zeta \tau)\right),
	\end{array}\\
	&\begin{array}{ll}
	\langle	 P\rangle_{q} &\displaystyle=\mathrm{e}^{\tau}\left(p_{0}^{2}\cosh^{2}(\zeta\tau)-\left(2\frac{k_{0}x_{0}p_{0}}{\lambda\zeta}+2\frac{p_{0}^{2}}{\zeta}\right)\sinh(\zeta \tau)\cosh(\zeta\tau)\right. +\\
	&\displaystyle+\left.\left(\frac{k_{0}^{2}x_{0}^{2}}{\lambda^{2}\zeta^{2}}+2\frac{p_{0}}{\zeta}\frac{k_{0}x_{0}}{\lambda\zeta}+\frac{p_{0}^{2}}{\zeta^{2}}\right)\sinh^{2}(\zeta \tau)\right)^{\frac{1}{2}},
	\end{array}\\
	&\displaystyle\langle P^{2}\rangle_{q}=\mathrm{e}^{2\tau}\left[k_{1}\cosh^{2}(\zeta\tau)-k_{2}\sinh(\zeta \tau)\cosh(\zeta\tau)+k_{3}\sinh^{2}(\zeta \tau)\right],\label{eq:melhorP2r}
	\end{align}
	\begin{align}
	&\begin{array}{rl}
	\displaystyle\langle\left(\Delta P\right)^{2} \rangle_{q}&\displaystyle=\mathrm{e}^{2\tau}\Delta_{x,0}^{2}\frac{k_{0}^{2}}{\lambda^{2}}\frac{\sinh^{2}(\zeta \tau)}{\zeta^{2}}+\\
	&\displaystyle+\mathrm{e}^{2\tau}\left(\frac{\hbar^{2}}{4\Delta_{x,0}^{2}}\cosh^{2}(\zeta\tau)-2\frac{\hbar^{2}}{4\Delta_{x,0}^{2}\zeta}\sinh(\zeta \tau)\cosh(\zeta\tau)+\frac{\hbar^{2}}{4\Delta_{x,0}^{2}\zeta^{2}}\sinh^{2}(\zeta \tau)\right).
	\end{array}\label{eq:desvmomimeanres}
\end{align}
where
\begin{align}
&\displaystyle k_{1}=p_{0}^{2}+\frac{\hbar^{2}}{4\Delta_{x,0}^{2}},\label{eq:k1r}\\
&\displaystyle k_{2}=2\frac{k_{0}x_{0}p_{0}}{\lambda\zeta}+2\frac{p_{0}^{2}}{\zeta}+2\frac{\hbar^{2}}{\Delta_{q,0}^{2}\zeta},\label{eq:k2r}\\
&\displaystyle k_{3}=\frac{k_{0}^{2}x_{0}^{2}}{\lambda^{2}\zeta^{2}}+2\frac{p_{0}}{\zeta}\frac{k_{0}x_{0}}{\lambda\zeta}+\frac{p_{0}^{2}}{\zeta^{2}}+\frac{\Delta_{x,0}^{2}k_{0}^{2}}{\lambda^{2}\zeta^{2}}+\frac{\hbar^{2}}{4\Delta_{x,0}^{2}\zeta^{2}},\label{eq:k3r}\\
&\displaystyle k_{4}=x_{0}^{2}+\Delta_{x,0}^{2},\label{eq:k4r}\\
&\displaystyle k_{5}=\frac{2\zeta\lambda^{2} m_{0}^{2}x_{0}^{2}+2\zeta\lambda m_{0}x_{0}p_{0}}{\zeta^{2}\lambda^{2} m_{0}^{2}}+\frac{2\Delta_{x,0}^{2}}{\zeta},\label{eq:k5r}\\
&\displaystyle k_{6}=\frac{p_{0}^{2} +2\lambda m_{0}x_{0}p_{0}+\lambda^{2} m_{0}^{2}x_{0}^{2}}{\zeta^{2}\lambda^{2} m_{0}^{2}}+\frac{\Delta_{x,0}^{2}}{\zeta^{2}}+\frac{\hbar^{2}}{4m_{0}^{2}\zeta^{2}\lambda^{2}\Delta_{x,0}^{2}}.\label{eq:k6r}
\end{align}
The subindex $q$ added to the formulas above is intended to distinguish the quantum results presented here from the classical-statistical ones derived in section \ref{sec:distribuicoes}. Now, the scaling procedure can be applied to the quantum results.
\subsection{Scaling the quantum results}
\label{subsec:dimensionless}
As for the classical system, it is defined an initial energy related with the mean values of position and momentum at $t=0$:
  \begin{equation}
  E_{0}\equiv \frac{m_{0}\omega^{2} x_{0}^{2}}{2}+\frac{ p_{0}^{2}}{2m_{0}}\label{eq:E02}
  \end{equation}
  and
  \begin{equation}
  \frac{m_{0}\omega^{2} x_{0}^{2}}{2}\equiv\varepsilon E_{0}\qquad \qquad\frac{ p_{0}^{2}}{2m_{0}}\equiv\left(1-\varepsilon\right)E_{0}
  \end{equation}
  with $0\leq\varepsilon\leq 1$. However, for quantum systems there are quantum fluctuations terms, which are proposed to be treated as follows:
  \begin{equation}
  	e_{0}\equiv \frac{m_{0}\omega^{2} \Delta_{x,0}^{2}}{2}+\frac{ \Delta_{p,0}^{2}}{2m_{0}},\label{eq:e02}
  \end{equation}
  where 
  \begin{equation}
  \frac{m_{0}\omega^{2} \Delta_{x,0}^{2}}{2}=\varepsilon_{\Delta} e_{0},\qquad\frac{ \Delta_{p,0}^{2}}{2m_{0}}=\left(1-\varepsilon_{\Delta}\right)e_{0},\label{eq:flutu1}
  \end{equation}
  where $0\leq\varepsilon_{\Delta}\leq 1$. The new parameters $\varepsilon_{\Delta}$ and $e_{0}$ are introduced in analogy with $\varepsilon$ and $E_{0}$, respectively. While $e_{0}$ is the amount of energy associated with the quantum fluctuations (quantum uncertainties), $\varepsilon_{\Delta}$ gives the weight of the elastic-energy fluctuation in the $e_{0}$. It is also convenient to introduce the parameter
  \begin{equation}
\vartheta=\frac{e_{0}}{E_{0}},\label{eq:flutu2}
\end{equation}
which measures the relevance of the fluctuation relative to mean energy of the system.
Since $\frac{\hbar}{2}=\Delta_{p,0}\Delta_{x,0}$, it follows from \eqref{eq:flutu1} that
  \begin{equation}
  	\displaystyle\frac{m_{0}\omega^{2} \Delta_{x,0}^{2}}{2}\frac{ \Delta_{p,0}^{2}}{2m_{0}}=\frac{\hbar^{2}}{16}\omega^{2}=\varepsilon_{\Delta}\left(1-\varepsilon_{\Delta}\right)e_{0}^{2}\qquad\Rightarrow\qquad
  	e_{0}=\frac{\hbar\omega}{4\sqrt{\varepsilon_{\Delta}\left(1-\varepsilon_{\Delta}\right)}}
  \end{equation}
  which yields, via Eq. \eqref{eq:flutu2},
  \begin{equation}
  \vartheta=\frac{\hbar\omega}{4E_{0}\sqrt{\varepsilon_{\Delta}\left(1-\varepsilon_{\Delta}\right)}}.\label{eq:vartheta}
  \end{equation}
The uncertainty principle precludes $\Delta_{q,0}$ and $\Delta_{p,0}$ to be simultaneously zero, so that $e_{0}$ and $\vartheta$ can never be strictly zero.  Most importantly, $\vartheta$ is found to play a very interesting role here. Disregarding the limiting case where $\Delta_{x,0}\rightarrow 0$ and $\Delta_{p,0}\rightarrow \infty$ (or vice-versa), which gives $\varepsilon_{\Delta}\rightarrow 0$ or $1$, for any intermediary $\varepsilon_{\Delta}$ the magnitude of  $\vartheta$ will be dominated by $\frac{\hbar \omega}{E_{0}}$. This ratio quantifies the relevance of the quantum of energy relatively to the mean energy. In this capacity, the relation $\vartheta=\frac{e_{0}}{E_{0}}\propto \frac{\hbar \omega}{E_{0}}$ shows that this parameter behaves as a \emph{semiclassical variable}, for the fact of it being sufficiently small directly implies that quantum fluctuations are negligible, which is a signature of the classical limit.
\par The quantities given in Eqs. \eqref{eq:posimeanres}-\eqref{eq:desvmomimeanres} can then be rewritten considering the scheme of Eqs. \eqref{eq:E02}-\eqref{eq:flutu2}. After some algebraic manipulations, it is obtained that
\begin{align}
&\begin{array}{lll}
&\displaystyle\frac{\langle X\rangle_{q}}{x_{0}}&\displaystyle=\mathrm{e}^{-\tau}\left[\cosh (\zeta \tau)+\left(\frac{1}{\zeta}+\frac{\omega}{\lambda}\frac{\sqrt{1-\varepsilon }}{\zeta\sqrt{\varepsilon }}\right)\sinh(\zeta \tau)\right],
\end{array}\\
&\begin{array}{lll}
&\displaystyle\frac{\langle X^{2}\rangle_{q}}{x_{0}^{2}}&\displaystyle=\mathrm{e}^{-2\tau}\left[\cosh (\zeta \tau)+\left(\frac{1}{\zeta}+\frac{\omega}{\lambda}\frac{\sqrt{1-\varepsilon }}{\zeta\sqrt{\varepsilon }}\right)\sinh(\zeta \tau)\right]^{2}+\\
& &\displaystyle+\frac{\varepsilon_{\Delta}^{2}\vartheta^{2}}{\varepsilon^{2}}\mathrm{e}^{-2\tau}\left[ \cosh^{2} (\zeta \tau)+\frac{2}{\zeta} \sinh(\zeta \tau)\cosh (\zeta \tau)\right.+\\
& &\displaystyle\left.+\frac{1}{\zeta^{2}}\left(1+\frac{\omega^{2}}{\lambda^{2}}\frac{1-\varepsilon_{\Delta}}{\varepsilon_{\Delta}}\right)\sinh^{2}(\zeta \tau)\right],\label{eq:ResultsX2}
\end{array}\\
&\begin{array}{rl}
\displaystyle\frac{\langle \left(\Delta X\right)^{2}\rangle_{q}}{x_{0}^{2}}&\displaystyle=\frac{\varepsilon_{\Delta}^{2}\vartheta^{2}}{\varepsilon^{2}}\mathrm{e}^{-2\tau}\left[ \cosh^{2} (\zeta \tau)+\frac{2}{\zeta}\sinh(\zeta \tau)\cosh (\zeta \tau)\right.+\\
&\displaystyle\left.+\frac{1}{\zeta^{2}}\left(1+\frac{\omega^{2}}{\lambda^{2}}\frac{1-\varepsilon_{\Delta}}{\varepsilon_{\Delta}}\right)\sinh^{2}(\zeta \tau)\right],
\end{array}\\
&\begin{array}{rl}
\displaystyle\frac{\langle P^{2}\rangle_{q}}{p_{0}^{2}}&\displaystyle=\frac{\vartheta\mathrm{e}^{2\tau}}{
	1-\varepsilon}\left[\left(\frac{\omega^{2}}{\lambda^{2}}\frac{\varepsilon_{\Delta}}{\zeta^{2}}+\frac{1-\varepsilon_{\Delta}}{\zeta^{2}}\right)\sinh^{2}(\zeta \tau)+\left(1-\varepsilon_{\Delta}\right)\cosh^{2}(\zeta\tau)\right.+\\
& \displaystyle\left.-2\frac{1-\varepsilon_{\Delta}}{\zeta}\sinh(\zeta \tau)\cosh(\zeta\tau)\right]+\\
&\displaystyle+\frac{\mathrm{e}^{2\tau}}{
	1-\varepsilon}\left[\left(1-\varepsilon\right)\cosh^{2}(\zeta\tau)-\left(2\frac{\omega \sqrt{\varepsilon-\varepsilon^{2}}}{\lambda\zeta}+2\frac{1-\varepsilon}{\zeta}\right)\sinh(\zeta \tau)\cosh(\zeta\tau)+\right.\\
&\displaystyle\left.	+\left(\omega^{2}\frac{\varepsilon}{\lambda^{2}\zeta^{2}}+2\omega\frac{\sqrt{\varepsilon-\varepsilon^{2}}}{\lambda\zeta^{2}}+\frac{1-\varepsilon}{\zeta^{2}}\right)\sinh^{2}(\zeta \tau)\right].\label{eq:ResultsP2}
\end{array}
\end{align}
\begin{align}
&\begin{array}{rl}
\displaystyle\frac{\langle P\rangle_{q}}{p_{0}}&\displaystyle=\frac{\mathrm{e}^{\tau}}{
	\left(1-\varepsilon\right)}\left(\left(1-\varepsilon\right)\cosh^{2}(\zeta\tau)-\left(2\frac{\omega \sqrt{\varepsilon-\varepsilon^{2}}}{\lambda\zeta}+2\frac{1-\varepsilon}{\zeta}\right)\sinh(\zeta \tau)\cosh(\zeta\tau)+\right.\\
&\displaystyle\left.	+\left(\omega^{2}\frac{\varepsilon}{\lambda^{2}\zeta^{2}}+2\omega\frac{\sqrt{\varepsilon-\varepsilon^{2}}}{\lambda\zeta^{2}}+\frac{1-\varepsilon}{\zeta^{2}}\right)\sinh^{2}(\zeta \tau)\right)^{\frac{1}{2}},\label{eq:ResultsP}
\end{array}\\
&\begin{array}{rl}
\displaystyle
\frac{\langle\left(\Delta P\right)^{2}\rangle_{q}}{p_{0}^{2}}&\displaystyle=\frac{\vartheta\mathrm{e}^{2\tau}}{
	1-\varepsilon}\left[\left(\frac{\omega^{2}}{\lambda^{2}}\frac{\varepsilon_{\Delta}}{\zeta^{2}}+\frac{1-\varepsilon_{\Delta}}{\zeta^{2}}\right)\sinh^{2}(\zeta \tau)+\left(1-\varepsilon_{\Delta}\right)\cosh^{2}(\zeta\tau)+\right.\\
&\displaystyle\left.-2\frac{1-\varepsilon_{\Delta}}{\zeta}\sinh(\zeta \tau)\cosh(\zeta\tau)\right].\label{eq:ResultsDP}
\end{array}
\end{align}
Notice that the variances vanish with $\vartheta$, as expected for a classical limit. Before proceeding with the calculation of the quantum work introduced in this dissertation, in the next subsection, results are presented for the current notions of work and heat as defined by Alicki. 
\subsection{Alicki's results for the system block}
\label{subsec:alik}
\par The definition of work employed for the classical system was considered to be evaluated regarding the particle of mass $m_{0}$ only. That is, the spring and the agent that performs drag force were not regarded as part of the system. Proceeding in the same way here, the only form of energy that can be stored exclusively in the system (block) is kinetic. Following the notation given in chapter \ref{cap:fundamentacao2}, the system $\mathcal{S}'$ being regarded here as the block and the system Hamiltonian $H_{u}$ assumes the form 
\begin{equation}
	H_{u}\equiv  \frac{m_{0}}{2}V^{2}
\end{equation}
where $V$ is the velocity operator, given in the Schrödinger picture according to the prescription given by Eq. \eqref{eq:conversaoheischron2}. The internal energy is then given by
\begin{equation}
E_{\mathcal{S}'}\equiv E=\mathrm{Tr}\left(\rho(t) H_{u}\right)\equiv\mathrm{Tr}\left(\rho(t) \frac{m_{0}}{2}V^{2}\right)=\frac{m_{0}}{2}\langle V^{2}\rangle_{q}=K_{q}(t)=\mathrm{Tr}\left(\rho(0) \frac{m_{0}}{2}\dot{X}_{H}^{2}(t)\right).\label{eq:EEquivK}
\end{equation}
The velocity operator $V$ can be computed from Eq.\eqref{eq:conversaoheischron2}, resulting in
\begin{equation}
	\frac{m_{0}}{2}V^{2}=\frac{\mathrm{e}^{-4\tau}}{2m_{0}}P^{2}.\label{eq:kinemoment}
\end{equation}
Inserting this result in Eq. \eqref{eq:EEquivK} gives 
\begin{equation}
E=\frac{m_{0}}{2}\langle V^{2}\rangle_{q}=\frac{\mathrm{e}^{-4\tau}}{2m_{0}}\langle P^{2}\rangle_{q}.
\end{equation}
Recalling the definition $K_{0}=\left(1-\varepsilon\right)=\frac{p_{0}^{2}}{2m_{0}}$, one can write
\begin{equation}
	\frac{\langle P^{2}\rangle_{q}}{p_{0}^{2}}=	\frac{\frac{\langle P^{2}\rangle_{q}}{2m_{0}}}{\frac{p_{0}^{2}}{2m_{0}}}=\frac{\frac{\langle P^{2}\rangle_{q}}{2m_{0}}}{K_{0}}
\end{equation}
which implies  
\begin{equation}
\frac{\frac{m}{2}\langle V^{2}\rangle_{q}}{K_{0}}=\mathrm{e}^{-4\tau}\frac{\langle P^{2}\rangle_{q}}{p_{0}^{2}}.
\end{equation}
 Using Eq. \eqref{eq:ResultsP2}, one then obtains
\begin{equation}
\displaystyle\frac{\frac{m_{0}}{2}\langle V^{2}\rangle_{q}}{K_{0}}=\mathrm{e}^{-2\tau}\left[\left(\underline{\alpha}_{cl}+\vartheta\underline{\beta}\right)\cdot \underline{\Gamma}(t)\right]\label{eq:Resultsv2}
\end{equation}
where 
\begin{equation}
\underline{\beta}=\frac{1}{1-\varepsilon}\left[
\begin{array}{c}
\displaystyle 1-\varepsilon_{\Delta}\\
\displaystyle -2\frac{1-\varepsilon_{\Delta}}{\zeta}\\
\displaystyle \frac{\omega^{2}}{\lambda^{2}}\frac{\varepsilon_{\Delta}}{\zeta^{2}}+\frac{1-\varepsilon_{\Delta}}{\zeta^{2}}
\end{array}
\right],\label{eq:betadef}
\end{equation}
and $\underline{\alpha}_{cl}$ and $\underline{\Gamma}(t)$ are given in Eq. \ref{eq:alphacldef}, rewritten here for convenience
\begin{equation}
\underline{\alpha}_{cl}=\frac{1}{1-\varepsilon}\left[
\begin{array}{c}
1-\varepsilon\\\displaystyle
-\frac{2}{\zeta}\left(\frac{\omega \sqrt{\varepsilon-\varepsilon^{2}}}{\lambda}+\left(1-\varepsilon\right)\right)\\
\displaystyle\left(\frac{\omega^{2}\varepsilon}{\lambda^{2}\zeta^{2}}+2\omega\frac{\sqrt{\varepsilon-\varepsilon^{2}}}{\lambda\zeta^{2}}+\frac{1-\varepsilon}{\zeta^{2}}\right)
\end{array}
\right], \qquad \underline{\Gamma}(t)=\left[
\begin{array}{c}
\displaystyle\cosh^{2}(\zeta\tau)\\
\displaystyle\sinh(\zeta \tau)\cosh(\zeta\tau)\\
\displaystyle\sinh^{2}(\zeta\tau)
\end{array}
\right].\label{eq:alphacldef2}
\end{equation}
From Eq. \eqref{eq:Resultsv2}, it can be seen that as $\vartheta\approx 0$, $\frac{\frac{m_{0}}{2}\langle V^{2}\rangle_{q}}{K_{0}}\approx\mathrm{e}^{-2\tau}\left(\underline{\alpha}_{cl}\cdot \underline{\Gamma}(t)\right)$ and, from the classical definition \eqref{eq:classikal}, $\frac{\frac{m_{0}}{2}\langle V^{2}\rangle_{q}}{K_{0}}\approx\frac{K_{cl}(\tau)}{K_{0}}$. In other words, the quantum kinetic energy $\frac{\frac{m_{0}}{2}\langle V^{2}\rangle_{q}}{K_{0}}$ approximates the classical one $\frac{K_{cl}(\tau)}{K_{0}}$, when $\vartheta$ becomes sufficiently small, showing again that the limit $\vartheta\rightarrow 0$ can consistently be regarded as a classical limit. Now, Alicki's work can be written as
 \begin{equation}
 \mathcal{W}_{ak}\left(t\right) =\int_{0}^{t} \dot{\mathcal{W}}_{a}(t^{'})dt^{'}\label{eq:trabalhopotenciak}
 \end{equation}
 \begin{equation}
 \dot{\mathcal{W}}_{ak}(t^{'})=\mathrm{Tr}\left(\rho(t^{'}) \frac{\partial H_{u}}{\partial t^{'}} \right)\equiv \mathrm{Tr}\left[\rho(t^{'}) \frac{\partial \left(\frac{m_{0}}{2}V^{2}(t^{'})\right)}{\partial t^{'}} \right].
 \end{equation}
  From Eq. \eqref{eq:kinemoment}, and recalling that $\tau=\lambda t$,
\begin{equation}
\dot{\mathcal{W}}_{ak}(t^{'})=\mathrm{Tr}\left[\rho(t^{'})\frac{\partial}{\partial t^{'}}\left(\frac{\mathrm{e}^{-4\lambda t^{'}}}{2m_{0}}P^{2}\right)\right]=-4\lambda\frac{\mathrm{e}^{-4\lambda t^{'}}}{2m_{0}}\mathrm{Tr}\left[\rho(t^{'})P^{2}\right]=-4\lambda \frac{m_{0}}{2}\langle V^{2}\rangle_{q} \label{eq:alickidwdt2}
\end{equation}
Dividing both sides by $K_{0}$, and performing the integral
\begin{equation}
\int_{0}^{t}\frac{\dot{\mathcal{W}}_{ak}(t^{'})}{K_{0}}dt'=\frac{\mathcal{W}_{ak}\left(t\right)}{K_{0}}=-4\lambda \int_{0}^{t}\frac{\frac{m_{0}}{2}\langle V^{2}\rangle_{q}}{K_{0}}dt',\label{eq:trabalhototal}
\end{equation}
one obtains by use of Eq. \eqref{eq:Resultsv2} that 
\begin{equation}
\displaystyle\frac{\mathcal{W}_{ak}\left(t\right)}{K_{0}}=\mathrm{e}^{-2\tau}\left[\left(\underline{\alpha}_{a}+\vartheta\underline{\beta}_{a}\right)\cdot \underline{\Gamma}(t)\right]-\left[\left(\underline{\alpha}_{a}+\vartheta\underline{\beta}_{a}\right)\cdot \underline{\Gamma}(0)\right]\label{eq:Walk}
\end{equation}
where 
\begin{equation}
\underline{\alpha}_{a}=-a_{1}\left[
\begin{array}{c}
\displaystyle\frac{\zeta a_{6}+a_{5}+a_{7}}{\zeta^{2}-1}-a_{5}+a_{7}\\
\displaystyle \frac{2}{\zeta^{2}-1}\left(\zeta\left(a_{5}+a_{7}\right)+a_{6}\right)\\
\displaystyle\frac{\zeta a_{6}+a_{5}+a_{7}}{\zeta^{2}-1}+a_{5}-a_{7}
\end{array}
\right], \qquad 
\underline{\beta}_{a}=-a_{1}\left[
\begin{array}{c}
\displaystyle\frac{\zeta a_{4}+a_{2}+a_{3}}{\zeta^{2}-1}-a_{3}+a_{2}\\
\displaystyle \frac{2}{\zeta^{2}-1}\left(\zeta\left(a_{2}+a_{3}\right)+a_{4}\right)\\
\displaystyle\frac{\zeta a_{4}+a_{2}+a_{3}}{\zeta^{2}-1}+a_{3}-a_{2}
\end{array}
\right],
\end{equation}
with the constants 
\begin{equation}
\begin{array}{ll}
a_{1}=\frac{1}{	1-\varepsilon},\quad a_{2}=\left(\frac{\omega^{2}}{\lambda^{2}}\frac{\varepsilon_{\Delta}}{\zeta^{2}}+\frac{1-\varepsilon_{\Delta}}{\zeta^{2}}\right),& a_{3}=1-\varepsilon_{\Delta},\quad a_{4}=-2\frac{1-\varepsilon_{\Delta}}{\zeta},\quad a_{5}=\left(1-\varepsilon\right), \\
a_{6}=-\left(2\frac{\omega \sqrt{\varepsilon-\varepsilon^{2}}}{\lambda\zeta}+2\frac{1-\varepsilon}{\zeta}\right), & a_{7}=\left(\omega^{2}\frac{\varepsilon}{\lambda^{2}\zeta^{2}}+2\omega\frac{\sqrt{\varepsilon-\varepsilon^{2}}}{\lambda\zeta^{2}}+\frac{1-\varepsilon}{\zeta^{2}}\right).\\
\end{array}
\end{equation}
Heat can be evaluated thus as
\begin{equation}
\frac{\mathcal{Q}_{ak}(t)}{K_{0}}=\frac{\left. E\right|_{0}^{t}}{K_{0}}-\frac{ \mathcal{W}_{ak}(t)}{K_{0}}\equiv\frac{\left. \frac{m_{0}}{2}\langle V^{2}\rangle_{q}\right|_{0}^{t}}{K_{0}}-\frac{ \mathcal{W}_{ak}(t)}{K_{0}}.\label{eq:Qalk}
\end{equation}
The expression resulting by use of  Eqs. \eqref{eq:Walk} and \eqref{eq:Resultsv2} is not really enlightening and will be omitted. In any case, heat can be computed from the kinetic energy and work.
\par In order to discriminate the time evolution of each term and comparing them with the classical results, Eqs. \eqref{eq:Walk}, \eqref{eq:Qalk} and \eqref{eq:Resultsv2} will be computed for the set of parameters $\varepsilon=0$, $\varepsilon_{\Delta}=0.5$ and $\vartheta=0.1$ for two distinct cases: the overdamped oscillator (OO), for which $\frac{\omega}{\lambda}=0.1$, and the underdamped oscillator (UO), with  $\frac{\omega}{\lambda}=10$. The interpretation associated with these parameters is as follows:
\begin{itemize}
	\item $\varepsilon=0$ implies that the initial energy $E_{0}$ (see Eq. \eqref{eq:E02}) is purely kinetic, since $1-\varepsilon=1=\frac{K_{0}}{E_{0}}$;
	\item the energy associated with the fluctuations terms $e_{0}=\frac{m_{0}\omega^{2} \Delta_{x,0}^{2}}{2}+\frac{ \Delta_{p,0}^{2}}{2m_{0}}$, defined in Eq. \eqref{eq:e02}, is equally distributed in potential and kinetic terms, for $\varepsilon_{\Delta}=\frac{m_{0}\omega^{2} \Delta_{x,0}^{2}}{2e_{0}}=0.5=1-\varepsilon_{\Delta}=\frac{ \Delta_{p,0}^{2}}{2m_{0}}$;
	\item $\vartheta=\frac{e_{0}}{E_{0}}=0.1$ indicates that the fluctuation in energy is about $10\%$ of the mean energy. 
\end{itemize} 
The heat $\frac{\mathcal{Q}_{ak}(t)}{K_{0}}$ and the work $\frac{\mathcal{W}_{ak}(t)}{K_{0}}$ as defined in the present section are compared with the classical work $\frac{\mathcal{W}_{cl}(t)}{K_{0}}$ in Figures \ref{fig:alickikvsclassicood} and \ref{fig:alickikvsclassicoud}, for the UO and OO cases, respectively.
\begin{figure}[h]
	\begin{center}
		\begin{subfigure}[h]{1\textwidth}
			\includegraphics[angle=0, scale=0.6]{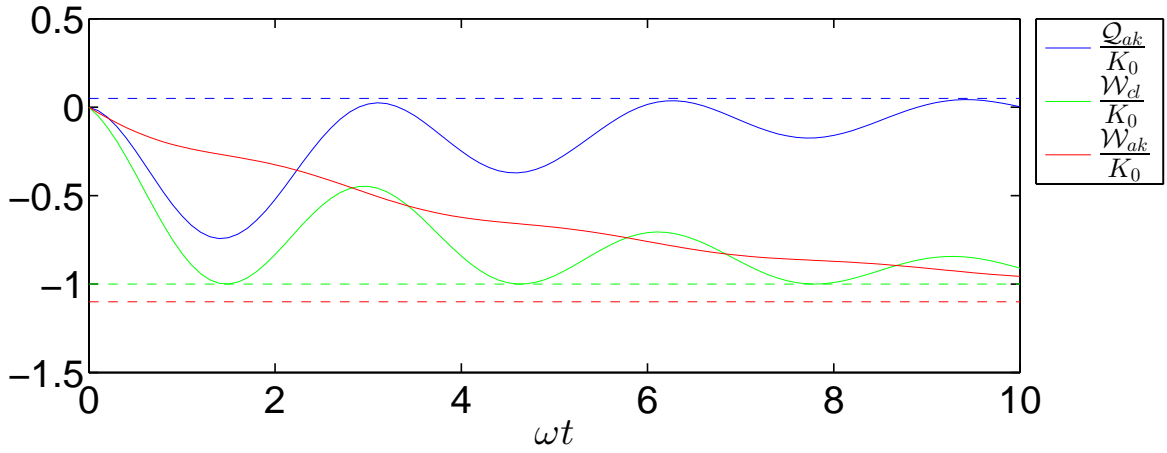} 
			\caption{Underdamped regime: $\frac{\omega}{\lambda}=10$.}
			\label{fig:alickikvsclassicood}
		\end{subfigure}
		\begin{subfigure}[h]{1\textwidth}
			\includegraphics[angle=0, scale=0.6]{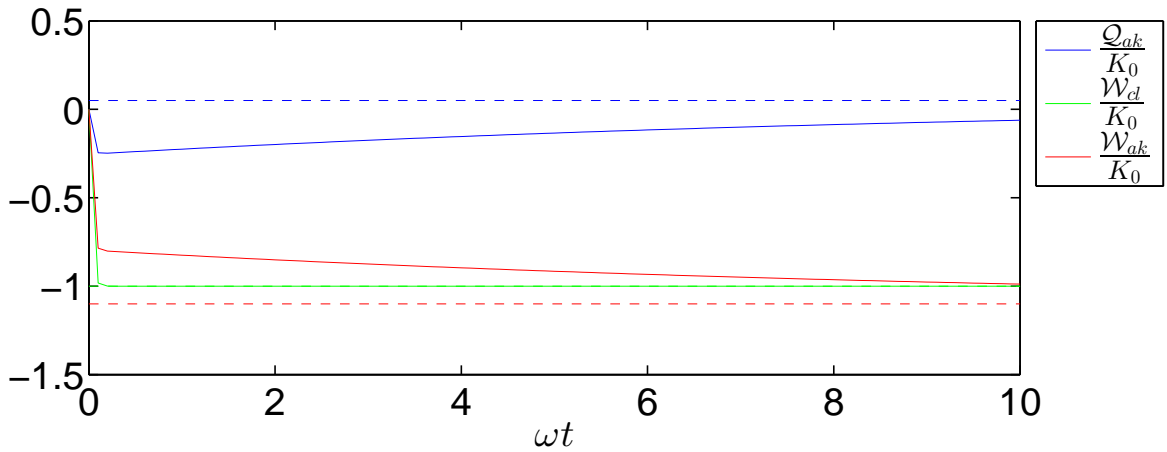}
			\caption{Overdamped regime: $\frac{\omega}{\lambda}=0.1$.}
			\label{fig:alickikvsclassicoud}
		\end{subfigure}
		\caption{Heat (blue line) and work (red line) according to Alicki's approach and the classical result for work (green line) for (a) the UO and (b) the OO, as a function of the dimensionless time $\omega t$. The corresponding asymptotic limits ($t\to\infty$) are represented by dashed lines with the same color code. }
		\label{fig:alickikvsclassico}
	\end{center}
\end{figure} 
\par It can be seen that the work does not match its classical counterpart. In principle, there should be some agreement, as $\vartheta=0.1$ corresponds to a semiclassical regime. One might suspect that this regime is still "too quantum", so that no classical correspondence could be supported. To check this thesis, the strictly classical limit $\vartheta=0$ is considered for the UO with the same set of parameters. The results are shown in Fig. \ref{fig:alickikvsclassicov0}.
\begin{figure}[h]
	\begin{center}
		\includegraphics[angle=0, scale=0.6]{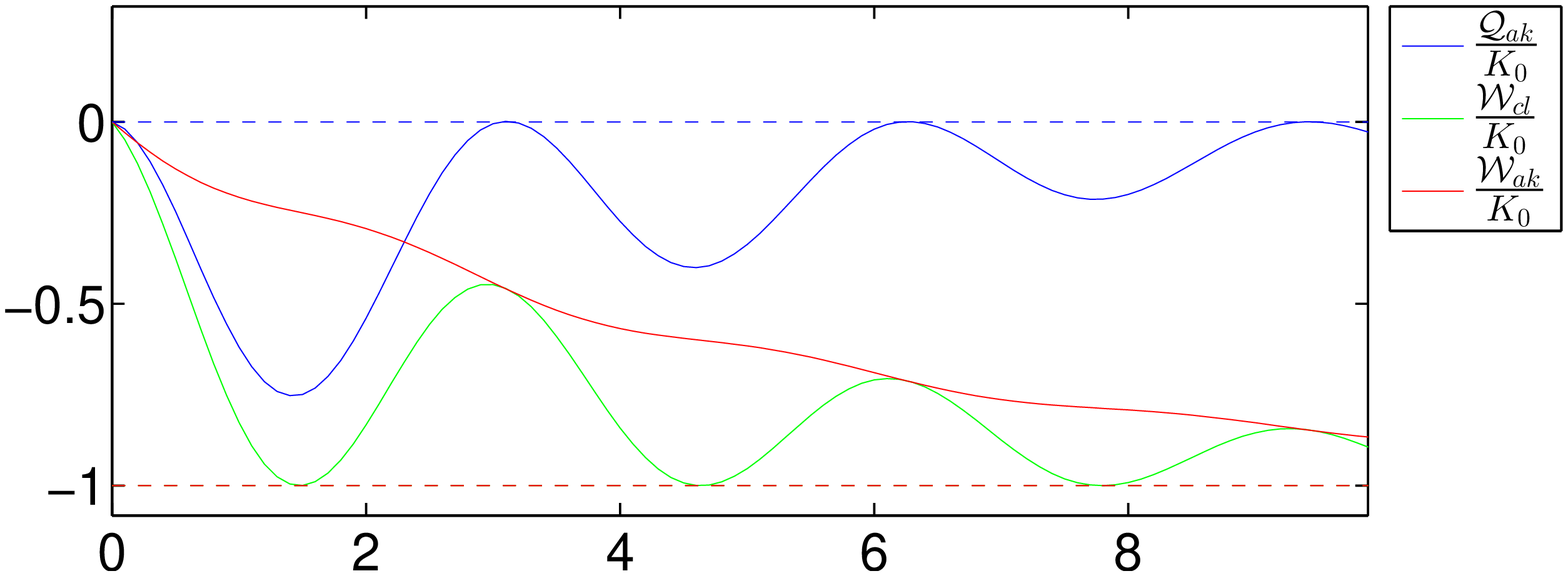}
		\caption{Heat (blue line) and work (red line) according to Alicki's approach with $\vartheta=0$ and the classical result for work (green line) for the UO ($\frac{\omega}{\lambda}=10$) as a function of the dimensionless time $\omega t$. The corresponding asymptotic limits ($t\to\infty$) are represented by dashed lines with the same color code.}
		\label{fig:alickikvsclassicov0}
	\end{center}
\end{figure} 
\par One sees no substancial coincidence between the classical work and its quantum counterpart, except at very specific instants. Also, as can be seen from the definition of heat $\mathcal{Q}_{ak}$, even if the state is very localized, there will still be heat flowing from the system. It is not clear, however, from a classical thermodynamic viewpoint, what is the nature of the heat. In order to analyze this point, a completely dissipative system is considered, where $k_{0}=0$, that is, the spring is turned off and the block continuously decelerates under the action of a drag force. From algebraic manipulations in Eqs. \eqref{eq:Walk}, \eqref{eq:Resultsv2} and \eqref{eq:Qalk}, it is obtained that
\begin{equation}
\frac{\mathcal{W}_{ak}\left(t\right)}{K_{0}}=\left.\frac{\frac{m_{0}}{2}\langle V^{2}\rangle_{q}}{K_{0}}\right|_{0}^{t},
\end{equation}
\begin{equation}
\frac{\frac{m_{0}}{2}\langle V^{2}(t)\rangle_{q}}{K_{0}}=\frac{\vartheta\left(1-\varepsilon_{\Delta}\right)\mathrm{e}^{-4\tau}}{
	1-\varepsilon}+\mathrm{e}^{-4\tau}\label{eq:kk00}
\end{equation}
and
\begin{equation}
\frac{\mathcal{Q}_{ak}\left(t\right)}{K_{0}}=0.
\end{equation}
In words, in the case where there is only dissipative terms, the kinetic energy decays exponentially and there is no heat flux for any set of parameters $\varepsilon$, $\varepsilon_{\Delta}$, $\lambda$. However, from a thermodynamical perspective, dissipation is frequently connected with heat flux, or more generally, thermal effects. 
\par It is also interesting to look at the limiting case where  $\lambda=0$. As already shown in the classical case, the Hamiltonian in Eq. \eqref{eq:Hamiltonianoquantuco} reduces to that of a harmonic oscillator. Then it follows that 
\begin{equation}
\frac{ \mathcal{W}_{ak}\left(t\right)}{K_{0}}=0
\end{equation}
and
\begin{equation}
\frac{ \mathcal{Q}_{ak}\left(t\right)}{K_{0}}=\left.\frac{\frac{m_{0}}{2}\langle V^{2}\rangle_{q}}{K_{0}}\right|_{0}^{t}
\end{equation}
with 
\begin{equation}
\frac{\frac{m_{0}}{2}\langle V^{2}\rangle_{q}}{K_{0}}=\left(\underline{\alpha}_{cl}^{\lambda=0}+\vartheta\underline{\beta}^{\lambda=0}\right)\cdot\underline{\Gamma}^{\lambda=0}(t).\label{eq:Kl0}
\end{equation}
where $\underline{\alpha}_{cl}^{\lambda=0}$ and $\underline{\Gamma}^{\lambda=0}(t)$ are rewritten here, from Eq. \eqref{eq:alphacll0}, for convenience 
\begin{equation}
\underline{\alpha}_{cl}^{\lambda=0}=\frac{1}{1-\varepsilon}\left[
\begin{array}{c}
1-\varepsilon\\\displaystyle
-2\sqrt{\varepsilon-\varepsilon^{2}}\\
\varepsilon
\end{array}
\right], \qquad 
\underline{\Gamma}^{\lambda=0}(t)=\left[
\begin{array}{c}
\displaystyle\cos^{2}(\omega t)\\
\displaystyle \sin(\omega t)\cos(\omega t)\\
\displaystyle-\sin^{2}(\omega t)
\end{array}
\right]\label{eq:alphgl0}
\end{equation}
and 
\begin{equation}
\underline{\beta}^{\lambda=0}=\frac{1}{1-\varepsilon}\left[
\begin{array}{c}
1-\varepsilon_{\Delta}\\\displaystyle
0\\
\varepsilon_{\Delta}
\end{array}
\right],\label{eq:betal0}
\end{equation}
As a result, for a harmonic oscillator, the only energy flux is in form of heat, for any set of parameters. Once again, this does not match with the expected behavior for work and heat from a mechanical or thermodynamical perspective, \emph{viz.} that for an harmonic oscillator, the spring will perform work on the system (block) and no energy will leak from the system in any form, including heat. 
\par To sum up, it has been shown that Alicki's proposal for work and heat fails in producing results that minimally agree with the classical intuition associated with the mechanics of a single particle system. In what follows, it is analyzed the definition of work proposed in this dissertation.
\subsection{The quantum mechanical work}
\label{subsec:nossa}
\par The quantum work\footnote{In order to simplify the discussion, it will be called in this way instead of \emph{resultant} quantum work. In addition, the super index $R$ was also suppressed. The same considerations were regarded for the centroid and thermal work.} $\mathcal{W}_{q}(t)$, introduced in the previous chapter, can be written just as the difference in the kinetic energy, i.e., 
\begin{equation}
	\frac{\mathcal{W}_{q}(t)}{K_{0}}=\left.\frac{\frac{m_{0}}{2}\langle V^{2}\rangle_{q}}{K_{0}}\right|_{0}^{t}
\end{equation}
which, from the expression \eqref{eq:Resultsv2}, can be written as 
\begin{equation}
\displaystyle\frac{\mathcal{W}_{q}(t)}{K_{0}}=\left(\underline{\alpha}_{cl}+\vartheta\underline{\beta}\right)\cdot\left(\mathrm{e}^{-2\tau}\underline{\Gamma}(t)-\underline{\Gamma}(0)\right) \label{eq:Resultsv2x}
\end{equation}
where $\underline{\alpha}_{cl}$, $\underline{\Gamma}(t)$ and $\underline{\beta}$ are defined in Eqs. \eqref{eq:alphacldef2} and \eqref{eq:betadef}.
\par The centroid work  $\mathcal{W}_{c}$ and thermal work $\mathcal{W}_{th}$ can be evaluated as follows: since, from Eq. \eqref{eq:conversaoheischron2}, $V=\frac{\mathrm{e}^{-2\lambda t}}{m_{0}}P$, then 
\begin{equation}
	\frac{\frac{m_{0}}{2}\langle V\rangle_{q}^{2}}{K_{0}}=\frac{\frac{m_{0}}{2}\frac{\mathrm{e}^{-4\lambda t}}{m_{0}^{2}}\langle P\rangle_{q}^{2}}{\frac{p_{0}^{2}}{2m_{0}}}=\mathrm{e}^{-4\lambda t}\frac{\langle P\rangle_{q}^{2}}{p_{0}^{2}} \label{eq:Ktop2}
\end{equation}
which, from \eqref{eq:ResultsP}, gives
\begin{equation}
\displaystyle\frac{\frac{m_{0}}{2}\langle V\rangle_{q}^{2}}{K_{0}}=\mathrm{e}^{-2\tau}\left(\underline{\alpha}_{cl}\cdot \underline{\Gamma}(t)\right)\label{eq:MMV2}
\end{equation}
By comparing Eqs. \eqref{eq:Resultsv2x} and \eqref{eq:MMV2}, one obtains
\begin{equation}
\displaystyle
\frac{\frac{m_{0}}{2}\langle\left(\Delta V\right)^{2}\rangle_{q}}{K_{0}}=\vartheta\mathrm{e}^{-2\tau}\left(\underline{\beta}\cdot \underline{\Gamma}(t)\right)\label{eq:MMDV2}
\end{equation}
From Eqs. \eqref{eq:MMV2} and \eqref{eq:MMDV2} and the definitions given in the previous chapter, the centroid $\mathcal{W}_{c}(t)$ and thermal work $\mathcal{W}_{th}(t)$ assumes
\begin{equation}
	\frac{\mathcal{W}_{c}(t)}{K_{0}}=\underline{\alpha}_{cl}\cdot \left(\mathrm{e}^{-2\tau}\underline{\Gamma}(t)-\underline{\Gamma}(0)\right)\label{eq:centroidw}
\end{equation}
and
\begin{equation}
\frac{\mathcal{W}_{th}(t)}{K_{0}}=\vartheta\underline{\beta}\cdot \left(\mathrm{e}^{-2\tau}\underline{\Gamma}(t)-\underline{\Gamma}(0)\right).\label{eq:thermalw}
\end{equation}
By direct inspection of Eqs. \eqref{eq:centroidw} and \eqref{eq:classicalw}, it immediately follows that the centroid work coincides with the classical work. However, the quantum work of the system $\mathcal{W}_{q}$ will not necessarily follows the same path as $\mathcal{W}_{c}$, since, it must be added to it the thermal work terms. On the other hand, as $\vartheta\rightarrow 0$, then, from Eq.\eqref{eq:Resultsv2x}, \eqref{eq:centroidw} and \eqref{eq:thermalw}, $\mathcal{W}_{q}\rightarrow \mathcal{W}_{c}$ and $\mathcal{W}_{th}\rightarrow 0$; in words, as the quantum fluctuations, quantified by the parameter $\vartheta=\frac{\hbar\omega}{E_{0}4\sqrt{\varepsilon_{\Delta}\left(1-\varepsilon_{\Delta}\right)}}$, becomes negligible, then the quantum work $\mathcal{W}_{q}(t)$ is equal to the classical.
\par  For an illustration of the behavior of each term mentioned above, the UO and the OO cases are retaken. The equations are evaluated for the time interval $[0,10\omega t]$, and the results are presented in Fig. \ref{fig:nossakvsclassicood} and \ref{fig:nossakvsclassicoud}, for the underdamped and overdamped oscillators, respectively.
\begin{figure}[h!]
	\begin{center}
		\begin{subfigure}[h]{1\textwidth}
			\includegraphics[angle=0, scale=0.6]{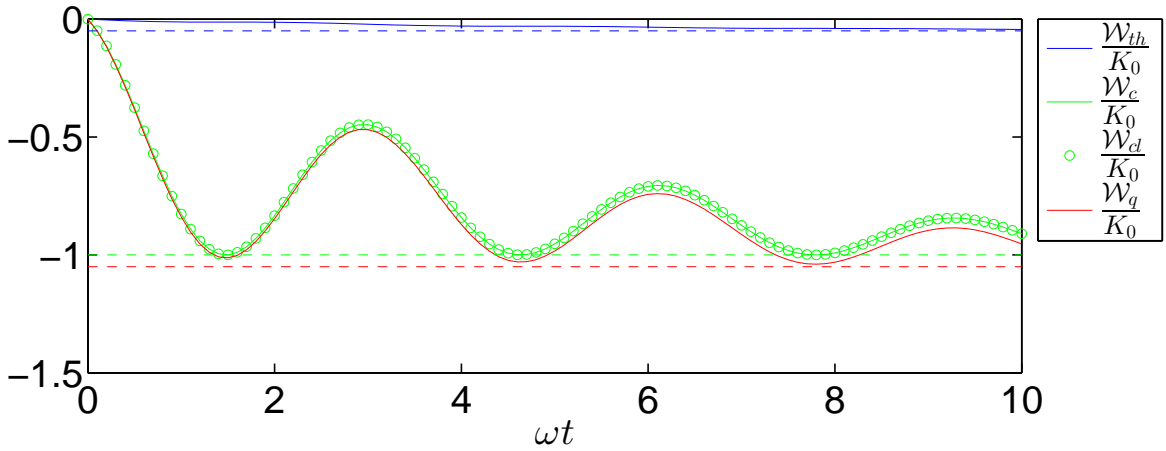} 
			\caption{Underdamped regime $\frac{\omega}{\lambda}=10$.}
			\label{fig:nossakvsclassicood}
		\end{subfigure}
		\begin{subfigure}[h]{1\textwidth}
			\includegraphics[angle=0, scale=0.6]{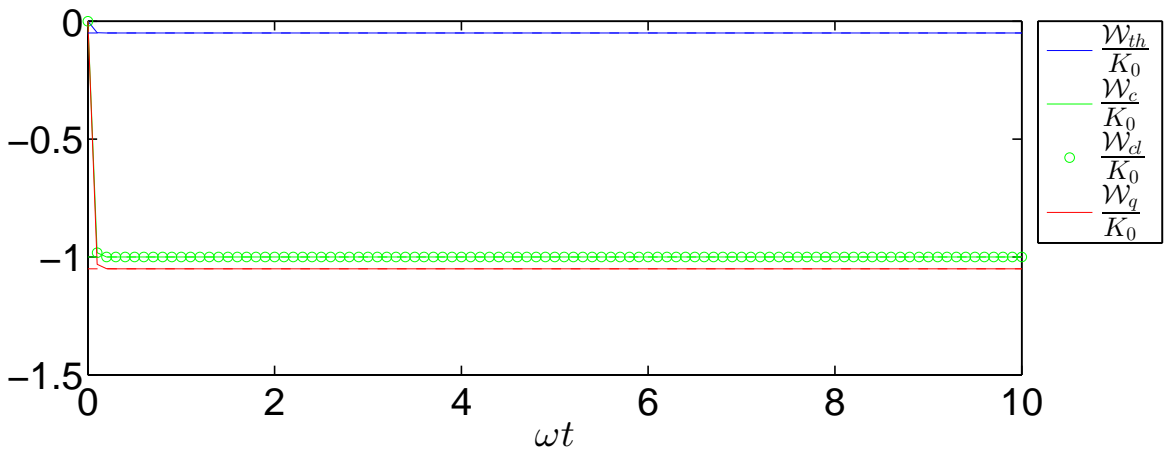}
			\caption{Overdamped regime $\frac{\omega}{\lambda}=0.1$.}
			\label{fig:nossakvsclassicoud}
		\end{subfigure}
		\caption{Centroid work $\mathcal{W}_{c}$ (green line), thermal work $\mathcal{W}_{th}$ (blue line) and quantum work $\mathcal{W}_{q}$ (red line) according to the proposed approach and the classical result for work (green circles) for the UO and OO as a function of the dimensionless time $\omega t$. The corresponding asymptotic limits ($t\to\infty$) are represented by dashed lines with the same color code.}
		\label{fig:nossakvsclassico}
	\end{center}
\end{figure} 
\par The centroid work $\mathcal{W}_{c}$ (green line) has the same behavior as the classical work, as it is expected from the analytical expressions. The thermal work $\mathcal{W}_{th}$ (blue line) decays slowly and consists of a small contribution to the quantum work $\mathcal{W}_{q}$ (red line). Most importantly, it is seen that $\mathcal{W}_{q}$ closely follows the classical work, a desirable behavior since the value $\vartheta = 0.1$ implies a semiclassical regime. 
\par A direct comparison with Alicki's results can be done for $k_{0}=0$ in which case the above formulas give 
\begin{equation}
\frac{\mathcal{W}_{q}(t)}{K_{0}}=-\left(1+\frac{\vartheta\left(1-\varepsilon_{\Delta}\right)}{
		1-\varepsilon}\right)\left(1-\mathrm{e}^{-4\tau}\right),
\end{equation}
\begin{equation}
\frac{\mathcal{W}_{c}(t)}{K_{0}}=-\left(1-\mathrm{e}^{-4\tau}\right).
\end{equation}
\begin{equation}
\frac{\mathcal{W}_{th}(t)}{K_{0}}=-\frac{\vartheta\left(1-\varepsilon_{\Delta}\right)}{
	1-\varepsilon}\left(1-\mathrm{e}^{-4\tau}\right),
\end{equation}
If $\mathcal{W}_{th}$ is to be regarded as a microscopic formulation of heat, then one may say from the result above that heat will be negative for all times, which is in accordance with the fact that energy is dissipated from the system. On the other hand, when no dissipation takes place ($\lambda=0$) one has, from Eq. \ref{eq:Kl0}
\begin{equation}
\frac{\mathcal{W}_{q}(t)}{K_{0}}=\left(\underline{\alpha}_{cl}^{\lambda=0}+\vartheta\underline{\beta}^{\lambda=0}\right)\cdot\left(\underline{\Gamma}^{\lambda=0}(t)-\underline{\Gamma}^{\lambda=0}(0)\right),
\end{equation}
\begin{equation}
\frac{\mathcal{W}_{c}(t)}{K_{0}}=\underline{\alpha}_{cl}^{\lambda=0}\cdot\left(\underline{\Gamma}^{\lambda=0}(t)-\underline{\Gamma}^{\lambda=0}(0)\right)
\end{equation}
and
\begin{equation}
\frac{\mathcal{W}_{th}(t)}{K_{0}}=\vartheta\underline{\beta}^{\lambda=0}\cdot\left(\underline{\Gamma}^{\lambda=0}(t)-\underline{\Gamma}^{\lambda=0}(0)\right),\label{eq:t}
\end{equation}
where $\underline{\alpha}_{cl}^{\lambda=0}$, $\underline{\Gamma}^{\lambda=0}(t)$ and $\underline{\beta}^{\lambda=0}$ are defined in Eqs. \eqref{eq:alphgl0} and \eqref{eq:betal0}.
As stated above the approach matches with the classical case if $\vartheta=0$ and can consider effects of quantum fluctuations in cases in which $\vartheta$ is not negligible.
\par An interesting question then arises from the above discussion. In the quantum domain the thermal work emerges from quantum fluctuations (irreducible quantum uncertainties). Is it possible to find some analogy for this term in classical-statistical mechanics, where fluctuations of a subjective nature (operational uncertainties) take place? This point is addressed in the next section.
\section{\textbf{Classical and quantum fluctuations}}
\label{sec:distribuicoes}
In Chapter \ref{cap:fundamentacao}, some forms have been discussed through which statistical elements can be incorporated to the framework of Classical Mechanics. As discussed in that occasion, one could conceive, in particular, a Gaussian  probability distribution with center at $(x_{0}^{*},p_{0}^{*})$ and variances $\sigma_{x,0}^{2}$ and $\sigma_{p,0}^{2}$ describing an ensemble of classical systems submitted to the same dynamics, with each element of the ensemble with a particular initial condition. The probability density for the occurrence of an initial position $(x_{0},p_{0})$ in such ensemble is given by
\begin{equation}
\rho_{gcl}(t,x_{0},p_{0})=\frac{1}{2\pi \sigma_{p,0}\sigma_{x,0}}\mathrm{e}^{-\frac{1}{2}\left(\left(\frac{x_{0}-x_{0}^{*}}{\sigma_{x,0}}\right)^{2}+\left(\frac{p_{0}-p_{0}^{*}}{\sigma_{p,0}}\right)^{2}\right)}=\mathcal{G}\left[ x_{0}^{*},p_{0}^{*},\sigma_{x,0},\sigma_{p,0}\right].
\end{equation}
As discussed in chapter \ref{cap:fundamentacao}, each element of the ensemble follows a classical trajectory and averages can be computed in the whole ensemble.  In Appendix \ref{cap:Work_classical}, details can be found for the computation of the mean velocity and the underlying variances in the context of the classical CK model. Using the classical velocity \eqref{eq:velocity1ckcl},  
\begin{equation}
v(t,p_{0},x_{0})=\mathrm{e}^{-\tau}\left[\frac{p_{0}}{m_{0}}\cosh\left(\zeta\tau\right)-\left(\frac{p_{0}}{m_{0}}+\frac{k_{0}x_{0}}{m_{0}\lambda}\right)\frac{\sinh\left(\zeta\tau\right)}{\zeta}\right],
\end{equation}
one obtains the statistical (Liouvillian) quantities:
\begin{equation}
\begin{array}{rl}
\langle v(t) \rangle_{gcl}^{2}=\mathrm{e}^{-2\tau}\left(\underline{\alpha}^{*}\cdot \underline{\Gamma}(t)\right)
\end{array}
\end{equation}
and
\begin{equation}
\langle v(t)^{2} \rangle_{gcl}=\mathrm{e}^{-2\tau}\left[\left(\underline{\alpha}^{*}+\underline{\beta}^{*}\right)\cdot \underline{\Gamma}(t)\right],
\end{equation}
where
\begin{equation}
\underline{\alpha}^{*}=\frac{1}{m_{0}^{2}}\left[
\begin{array}{c}
p_{0}^{*2}\\\displaystyle
-2\left(\frac{p_{0}^{*2}}{\zeta}+\frac{k_{0}p_{0}^{*}x_{0}^{*}}{\zeta\lambda}\right)\\
\displaystyle\left(\frac{p_{0}^{*2}}{\zeta^{2}}+2p_{0}^{*}\frac{k_{0}x_{0}^{*}}{\lambda\zeta^{2}}+\frac{k_{0}^{2}x_{0}^{*2}}{\lambda^{2}\zeta^{2}}\right)
\end{array}
\right],\qquad 
\underline{\beta}^{*}=\frac{1}{m_{0}^{2}}\left[
\begin{array}{c}
\sigma_{p,0}^{2}\\\displaystyle
-2\left(\frac{\sigma_{p,0}^{2}}{\zeta}\right)\\
\displaystyle\left(\frac{\sigma_{p,0}^{2}}{\zeta^{2}}+\frac{k_{0}^{2}\sigma_{x,0}^{2}}{\lambda^{2}\zeta^{2}}\right)
\end{array}
\right]
\end{equation}
and $\underline{\Gamma}(t)$ is given in Eq. \eqref{eq:alphacldef}. Since the goal is to compare the present classical-statistical results with the quantum ones, the same scaling process and set of parameters are adopted here, with pertinent notational adaptations:
 \begin{equation}
E_{0}^{*}\equiv \frac{m_{0}\omega^{2} x_{0}^{*2}}{2}+\frac{ p_{0}^{*2}}{2m_{0}},\label{eq:E02l}
\end{equation}
\begin{equation}
\frac{m_{0}\omega^{2} x_{0}^{*2}}{2}\equiv\varepsilon^{*} E_{0}^{*},\qquad K_{0}^{*}=\frac{ p_{0}^{*2}}{2m_{0}}\equiv\left(1-\varepsilon^{*}\right)E_{0}^{*},
\end{equation}
\begin{equation}
e_{0}^{*}\equiv \frac{m_{0}\omega^{2} \sigma_{x,0}^{2}}{2}+\frac{ \Delta_{p,0}^{2}}{2m_{0}},\label{eq:e02l}
\end{equation}
where 
\begin{equation}
\frac{m_{0}\omega^{2} \sigma_{x,0}^{2}}{2}=\varepsilon_{\Delta}^{*} e_{0}^{*},\qquad\frac{ \sigma_{p,0}^{2}}{2m_{0}}=\left(1-\varepsilon_{\Delta}^{*}\right)e_{0}^{*}\label{eq:flutu1l}
\end{equation}
with $0\leq\varepsilon^{*},\varepsilon_{\Delta}^{*}\leq 1$ and
\begin{equation}
\vartheta^{*}=\frac{e_{0}^{*}}{E_{0}^{*}}.\label{eq:flutu2l}
\end{equation}
After some tedious algebraic manipulations, it can be shown that  
\begin{equation}
\displaystyle\frac{\frac{m_{0}}{2}\langle v^{2}\rangle_{gcl}}{K_{0}}=\mathrm{e}^{-2\tau}\left[\left(\underline{\alpha}_{gcl}+\vartheta^{*}\underline{\beta}_{gcl}\right)\cdot \underline{\Gamma}(t)\right],\label{eq:Resultsv22}
\end{equation}
\begin{equation}
\displaystyle\frac{\frac{m_{0}}{2}\langle v\rangle_{gcl}^{2}}{K_{0}}=\mathrm{e}^{-2\tau}\left(\underline{\alpha}_{gcl}\cdot \underline{\Gamma}(t)\right)\label{eq:MMV22}
\end{equation}
and 
\begin{equation}
\displaystyle
\frac{\frac{m_{0}}{2}\langle\left(\sigma_{v}\right)^{2}\rangle_{gcl}}{K_{0}^{*}}=\vartheta^{*}\mathrm{e}^{-2\tau}\left(\underline{\beta}_{gcl}\cdot \underline{\Gamma}(t)\right),\label{eq:MMDV22}
\end{equation}
with 
\begin{equation}
\underline{\alpha}_{gcl}=\frac{1}{1-\varepsilon^{*}}\left[
\begin{array}{c}
1-\varepsilon^{*}\\\displaystyle
-2\left(\frac{\omega \sqrt{\varepsilon^{*}-\varepsilon^{*2}}}{\lambda\zeta}+\frac{1-\varepsilon^{*}}{\zeta}\right)\\
\displaystyle\left(\omega^{2}\frac{\varepsilon^{*}}{\lambda^{2}\zeta^{2}}+2\omega\frac{\sqrt{\varepsilon^{*}-\varepsilon^{*2}}}{\lambda\zeta^{2}}+\frac{1-\varepsilon^{*}}{\zeta^{2}}\right)
\end{array}
\right]
\end{equation}
and
\begin{equation}
\underline{\beta}_{gcl}=\frac{1}{1-\varepsilon^{*}}\left[
\begin{array}{c}
\displaystyle 1-\varepsilon_{\Delta}^{*}\\
\displaystyle -2\frac{1-\varepsilon_{\Delta}^{*}}{\zeta}\\
\displaystyle\left(\frac{\omega^{2}}{\lambda^{2}}\frac{\varepsilon_{\Delta}^{*}}{\zeta^{2}}+\frac{1-\varepsilon_{\Delta}^{*}}{\zeta^{2}}\right)
\end{array}
\right].
\end{equation}
The comparison of these results with Eqs. \eqref{eq:Resultsv2}, \eqref{eq:MMV2} and \eqref{eq:MMDV2} can be done via the identifications $K_{0}^{*}= K_{0}$, $\varepsilon^{*}= \varepsilon$, $\varepsilon_{\Delta}^{*}=\varepsilon_{\Delta}$. With that, one shows that the expressions deduced for $\frac{\frac{m_{0}}{2}\langle v^{2}\rangle_{gcl}}{K_{0}}$, $\frac{\frac{m_{0}}{2}\langle v\rangle_{gcl}^{2}}{K_{0}}$ and $\frac{\frac{m_{0}}{2}\langle\left(\sigma_{v}\right)^{2}\rangle_{gcl}}{K_{0}}$ are identical to that obtained for $\frac{\frac{m_{0}}{2}\langle V^{2} \rangle_{q}}{K_{0}}$, $\frac{\frac{m_{0}}{2}\langle V \rangle_{q}^{2}}{K_{0}}$ and $\frac{\frac{m_{0}}{2}\langle \left(\Delta V\right)^{2} \rangle_{q}}{K_{0}}$, respectively. In other words, the classical-statistical formalism with a Gaussian distribution $\rho_{gcl}$ perfectly emulates the quantum results for a Gaussian pure density operator $\ket{\psi(t)}\bra{\psi(t)}$:
\emph{similar expressions were obtained for the Gaussian quantum state $\ket{\psi(t)}$ and the statistical classical distribution $\rho_{gcl}$}. From the work-energy relation \eqref{eq:wcl2222}, one can compute work as
\begin{equation}
\mathcal{W}_{gcl}(t)=\left.\frac{m_{0} \langle v^{2}(t^{'})\rangle_{gcl}}{2}\right|_{0}^{t}.
\end{equation} 
Then, making the pertinent identifications $K_{0}^{*}= K_{0}$, $\varepsilon^{*}= \varepsilon$ and $\varepsilon_{\Delta}^{*}=\varepsilon_{\Delta}$ one finds a full correspondence between quantum and classical-statistical works:
\begin{equation}
\frac{\mathcal{W}_{q}(t)}{K_{0}}=\left.\frac{\frac{m_{0}}{2}\langle V^{2}\rangle_{q}}{K_{0}}\right|_{0}^{t}\Leftrightarrow \frac{\mathcal{W}_{gcl}(t)}{K_{0}}=\frac{\left.\frac{m_{0} \langle v^{2}(t^{'})\rangle_{gcl}}{2}\right|_{0}^{t}}{K_{0}},\label{eq:v2q2}
\end{equation}
with the correspondence applying also for the centroid and thermal parcels,
\begin{equation}
\frac{	\mathcal{W}_{c}(t)}{K_{0}}\Leftrightarrow \left.\frac{\frac{m_{0}}{2}\langle v\rangle_{gcl}^{2}}{K_{0}^{*}}\right|_{0}^{t}
\end{equation}
and
\begin{equation}
\frac{\mathcal{W}_{th}(t)}{K_{0}}\Leftrightarrow \left.\frac{\frac{m_{0}}{2}\langle\left(\sigma_{v}\right)^{2}\rangle_{gcl}^{2}}{K_{0}^{*}}\right|_{0}^{t}.
\end{equation}
Aiming at further appreciating the connections between the quantum and the classical-statistical formalism, superposition and mixture states are analyzed next.
\subsection{Mixed and superposition states}
\label{subsec:quanticadist}
As mentioned in subsection \ref{subsec:subjective} a mixed probability distribution can be written as
\begin{equation}
\rho_{mgcl}(t,x_{0},p_{0})=\frac{1}{2}\left(\mathcal{G}\left[ x_{0}^{*},p_{0}^{*},\sigma_{x,0},\sigma_{p,0}\right]+\mathcal{G}\left[ -x_{0}^{*},-p_{0}^{*},\sigma_{x,0},\sigma_{p,0}\right]\right)
\end{equation}
with centers at $\pm(x_{0}^{*},p_{0}^{*})$ and uncertainties $\sigma_{x,0}$ and $\sigma_{p,0}$. In Appendix \ref{cap:Work_classical}, expressions have been derived for the mean velocity and its underlying variance in the context of the classical CK model. From Eqs. \eqref{eq:vmcl2} and \eqref{eq:v2mcl} it follows that
\begin{equation}
\langle v(t) \rangle_{mgcl}^{2}=0\label{eq:vmcl22}
\end{equation}
and
\begin{equation}
\langle v(t)^{2} \rangle_{mgcl}=\langle v(t)^{2} \rangle_{gcl}.
\label{eq:v2mcl2}
\end{equation}
In other words, the same mean value for the square velocity is found for both distributions $\rho_{mgcl}$ and $\rho_{gcl}$. Therefore, the expression \eqref{eq:Resultsv22} can be used for describing $\frac{\frac{m_{0}}{2}\langle v(t)^{2} \rangle_{mgcl}}{K_{0}^{*}}$, and even more, it can be written in a dimensionless form, as done for the quantum analogous $\frac{\frac{m_{0}}{2}\langle V^{2} \rangle_{q}}{K_{0}}$. As a result, from Eq.\eqref{eq:v2mcl2} and given the correspondence described in Eq. \eqref{eq:v2q2}, the resultant work $\frac{\mathcal{W}_{mgcl}(t)}{K_{0}}$, related with the mixed Gaussian state, is found to be in full correspondence with $\frac{\mathcal{W}_{q}}{K_{0}}$, that is
\begin{equation}
\frac{\mathcal{W}_{mgcl}(t)}{K_{0}}=\left.\frac{\frac{m_{0} \langle v^{2}(t^{'})\rangle_{gcl}}{2}}{K_{0}}\right|_{0}^{t}\Leftrightarrow \frac{\mathcal{W}_{q}(t)}{K_{0}}=\left.\frac{\frac{m_{0}}{2}\langle V^{2}\rangle_{q}}{K_{0}}\right|_{0}^{t}.\label{eq:wmgcl}
\end{equation}
In words, had one considered the dimensionless expression for work, of a quantum Gaussian state, then the same expression would apply for the classical mixed state. However, the velocity mean value $\langle v(t) \rangle_{mgcl}$ is null, thus differing from the expressions obtained for the quantum state $\rho(t)$ and for $\rho_{gcl}$. To better appreciate the quantum-classical connection, the following density quantum operator is considered:
\begin{equation}
\rho^{\mu}(0)=\frac{\ket{\psi_{0}}\bra{\psi_{0}}+\ket{\psi_{0}^{-}}\bra{\psi_{0}^{-}}+\mathrm{e}^{-\mu}\left(\ket{\psi_{0}^{-}}\bra{\psi_{0}}+\ket{\psi_{0}}\bra{\psi_{0}^{-}}\right)}{N_{\mu}},\label{eq:statot2}
\end{equation}
where $\mu\geq 0$ is a real non-negative parameter and $\ket{\psi_{0}}$ is such that 
\begin{equation}
\langle x |\psi_{0}\rangle\equiv\Psi(x,0)=\left(\frac{1}{2\pi\Delta_{x,0}^{2}}\right)^{\frac{1}{4}}\exp \left[\frac{-\left(x-x_{0}\right)^{2}}{4\Delta_{x,0}^{2}}+i\frac{p_{0}x}{\hbar}\right],\label{eq:estnor2}
\end{equation}
with a symmetrical counterpart
\begin{equation}
\langle x |\psi_{0}^{-}\rangle\equiv\Psi^{-}(x,0)=\left(\frac{1}{2\pi\Delta_{x,0}^{2}}\right)^{\frac{1}{4}}\exp \left[\frac{-\left(x+x_{0}\right)^{2}}{4\Delta_{x,0}^{2}}-i\frac{p_{0}x}{\hbar}\right].\label{eq:estrev2}
\end{equation}
$N_{\mu}$ is the normalization factor, given, as usual as 
\begin{equation}
\begin{array}{rl}
N_{\mu}&=\displaystyle \mathrm{Tr}\left[\ket{\psi_{0}}\bra{\psi_{0}}+\ket{\psi_{0}^{-}}\bra{\psi_{0}^{-}}+\mathrm{e}^{-\mu}\left(\ket{\psi_{0}^{-}}\bra{\psi_{0}}+\ket{\psi_{0}}\bra{\psi_{0}^{-}}\right)\right],\\
&\displaystyle =2\left(1+\mathrm{e}^{-\left[\mu+\frac{x_{0}^{2}}{2\Delta_{x,0}^{2}}+\frac{p_{0}^{2}4\Delta_{x,0}^{2}}{2\hbar^{2}}\right]}\right) \\
\end{array}
\end{equation}
\par The parameter $\mu$ was introduced to allow one to address two important limits: 
\begin{itemize}
	\item When $\mu\rightarrow 0$, then the initial state is described as 
	\begin{equation}
		\rho^{0}(0)=\frac{\ket{\psi_{0}}\bra{\psi_{0}}+\ket{\psi_{0}^{-}}\bra{\psi_{0}^{-}}+\ket{\psi_{0}^{-}}\bra{\psi_{0}}+\ket{\psi_{0}}\bra{\psi_{0}^{-}}}{N_{0}}=\frac{\left(\ket{\psi_{0}}+\ket{\psi_{0}^{-}}\right)\left(\bra{\psi_{0}}+\bra{\psi_{0}^{-}}\right)}{N_{0}}
	\end{equation}
	which is the density matrix operator for a superposition state $\frac{\ket{\psi_{0}}+\ket{\psi_{0}^{-}}}{\sqrt{N_{0}}}$;
	\item In the limit  $\mu\rightarrow \infty$, then
	\begin{equation}
	\rho^{\infty}(0)=\frac{\ket{\psi_{0}}\bra{\psi_{0}}+\ket{\psi_{0}^{-}}\bra{\psi_{0}^{-}}}{N_{\infty}},
	\end{equation}
	which is the mixture state analogue to $\rho_{mgcl}$.    
\end{itemize}
    Therefore, $\mu$ is a control parameter that interpolates these extrema. 
    \par The calculations related to the time evolution of $\rho$ and relevant expectation values are presented in Appendix \ref{cap:Work_quantum}. Among the results obtained, the ones described by Eqs.\eqref{eq:MM2} and \eqref{eq:MMM2} are essential for the present discussion:
    \begin{equation}
    \langle P \rangle_{q}^{\mu}=\mathrm{Tr}\left(\rho^{\mu}(t)P\right)=0,\label{eq:MM22}
    \end{equation}
    \begin{equation}
    \begin{array}{rl}
    \langle P^{2} \rangle_{q}^{\mu}&=\mathrm{Tr}\left(\rho^{\mu}(t)P^{2}\right)=\bra{\psi(t)} P^{2}\ket{\psi(t)}+\\
    &-\left(\frac{p_{0}^{2}}{\frac{\hbar^{2}}{4\Delta_{x,0}^{2}}}+\frac{x_{0}^{2}}{\frac{4\Delta_{x,0}^{2}}{4}}\right)\frac{\bra{\psi(t)} \left(\Delta P\right)^{2}\ket{\psi(t)}}{1+\exp \left[\mu+\frac{1}{2}\left(\frac{p_{0}^{2}}{\frac{\hbar^{2}}{4\Delta_{x,0}^{2}}}+\frac{x_{0}^{2}}{\frac{4\Delta_{x,0}^{2}}{4}}\right)\right]}.
    \end{array}
    \label{eq:MMM22}
    \end{equation}
 Since $\bra{\psi(t)} P^{2}\ket{\psi(t)}=\mathrm{Tr}\left(\rho(t)P^{2}\right)=\langle P^{2}\rangle_{q}$, $\bra{\psi(t)} \left(\Delta P\right)^{2}\ket{\psi(t)}=\langle \left(\Delta P\right)^{2}\rangle_{q}$ and
\begin{equation}
V=\frac{P}{m_{0}}\mathrm{e}^{-2\lambda t},
\end{equation} 
then
\begin{equation}
\frac{m_{0}}{2}V^{2}=\frac{P^{2}}{2m_{0}}\mathrm{e}^{-4\lambda t}
\end{equation} 
and
\begin{equation}
\frac{m_{0}}{2}\left(\Delta V\right)^{2}=\frac{\left(\Delta P\right)^{2}}{2m_{0}}\mathrm{e}^{-4\lambda t}
\end{equation} 
yielding, from Eq. \eqref{eq:MM22},
\begin{equation}
\frac{m_{0}}{2}\left(\langle V \rangle_{q}^{\mu}\right)^{2}=0.\label{eq:MM3}
\end{equation}
From \eqref{eq:MMM22}, one finds
\begin{equation}
\frac{m_{0}}{2}\langle V^{2} \rangle_{q}^{\mu}=\frac{m_{0}}{2}\langle V^{2} \rangle_{q}-\left(\frac{p_{0}^{2}}{\frac{\hbar^{2}}{4\Delta_{x,0}^{2}}}+\frac{x_{0}^{2}}{\frac{4\Delta_{x,0}^{2}}{4}}\right)\frac{\frac{m_{0}}{2}\langle \left(\Delta V\right)^{2} \rangle_{q}}{1+\exp \left[\mu+\frac{1}{2}\left(\frac{p_{0}^{2}}{\frac{\hbar^{2}}{4\Delta_{x,0}^{2}}}+\frac{x_{0}^{2}}{\frac{4\Delta_{x,0}^{2}}{4}}\right)\right]}.\label{eq:MMM23}
\end{equation}
Dividing both sides of Eq. \eqref{eq:MMM23} by $K_{0}=\frac{p_{0}^{2}}{2m_{0}}$ and considering the same parameters defined in subsection \ref{subsec:dimensionless}, it is obtained
\begin{equation}
\frac{\frac{m_{0}}{2}\langle V^{2} \rangle_{q}^{\mu}}{K_{0}}=\frac{\frac{m_{0}}{2}\langle V^{2} \rangle_{q}}{K_{0}}-\frac{1}{\vartheta}\left(\frac{\varepsilon_{\Delta}+\varepsilon-2\varepsilon_{\Delta}\varepsilon}{\varepsilon_{\Delta}\left(1-\varepsilon_{\Delta}\right)}\right)\frac{\frac{\frac{m_{0}}{2}\langle \left(\Delta V\right)^{2} \rangle_{q}}{K_{0}}}{1+\exp \left[\mu+\frac{1}{2\vartheta}\left(\frac{\varepsilon_{\Delta}+\varepsilon-2\varepsilon_{\Delta}\varepsilon}{\varepsilon_{\Delta}\left(1-\varepsilon_{\Delta}\right)}\right)\right]}.\label{eq:auxxxx}
\end{equation}
With the relation (reproduced here from Eq. \eqref{eq:vartheta}) 
\begin{equation}
\vartheta=\frac{\hbar\omega}{E_{0}4\sqrt{\varepsilon_{\Delta}\left(1-\varepsilon_{\Delta}\right)}},\label{eq:vartheta2}
\end{equation}
 the following constant can be defined:
\begin{equation}
	\vartheta^{'}=\left[\frac{1}{\vartheta}\left(\frac{\varepsilon_{\Delta}+\varepsilon-2\varepsilon_{\Delta}\varepsilon}{\varepsilon_{\Delta}\left(1-\varepsilon_{\Delta}\right)}\right)\right]^{-1}=\vartheta\frac{\varepsilon_{\Delta}\left(1-\varepsilon_{\Delta}\right)}{\varepsilon_{\Delta}+\varepsilon-2\varepsilon_{\Delta}\varepsilon}=\frac{\hbar\omega}{4E_{0}}\frac{\sqrt{\varepsilon_{\Delta}\left(1-\varepsilon_{\Delta}\right)}}{\varepsilon_{\Delta}+\varepsilon-2\varepsilon_{\Delta}\varepsilon}.\label{eq:varthetal}
\end{equation}
Eq. \eqref{eq:auxxxx} can then be rewritten as
\begin{equation}
\frac{\frac{m_{0}}{2}\langle V^{2} \rangle_{q}^{\mu}}{K_{0}}=\frac{\frac{m_{0}}{2}\langle V^{2} \rangle_{q}}{K_{0}}-\frac{1}{\vartheta^{'}}\frac{\frac{\frac{m_{0}}{2}\langle  \left(\Delta V\right)^{2} \rangle_{q}}{K_{0}}}{1+\exp \left[\mu+\frac{1}{2\vartheta^{'}}\right]}. \label{eq:docarai}
\end{equation}
Finally, from Eq. \eqref{eq:MM3}, one finds
\begin{equation}
\frac{\frac{m_{0}}{2}\langle \left(\Delta V\right)^{2} \rangle_{q}^{\mu}}{K_{0}}=\frac{\frac{m_{0}}{2}\langle V^{2} \rangle_{q}^{\mu}}{K_{0}}=\frac{\frac{m_{0}}{2}\langle V^{2} \rangle_{q}}{K_{0}}-\frac{1}{\vartheta^{'}}\frac{\frac{\frac{m_{0}}{2}\langle V^{2} \rangle_{q}}{K_{0}}}{1+\exp \left[\mu+\frac{1}{2\vartheta^{'}}\right]}. \label{eq:docarai2}
\end{equation}
\par  As for the mixed classical distribution $\rho_{mgcl}$, the mean velocity will be null and, as a consequence, $\frac{m_{0}}{2}\left(\langle V \rangle_{q}^{\mu}\right)^{2}=0$, for any value of $\mu$. In particular, for  $\mu\rightarrow \infty$ it follows that 
\begin{equation}
	\frac{\frac{m_{0}}{2}\langle \left(\Delta V\right)^{2} \rangle_{q}^{\infty}}{K_{0}}=\frac{\frac{m_{0}}{2}\langle V^{2}\rangle_{q}^{\infty}}{K_{0}}\rightarrow \frac{\frac{m_{0}}{2}\langle V^{2} \rangle_{q}}{K_{0}}.\label{eq:minf}
\end{equation}
Therefore, for the quantum mixed state, $\rho_{q}^{\infty}$, the quantum work $\mathcal{W}_{q}^{\infty}$ can be written as
\begin{equation}
\frac{\mathcal{W}_{q}^{\infty}(t)}{K_{0}}=\left.\frac{\frac{m_{0}}{2}\langle V^{2}\rangle_{q}^{\infty}}{K_{0}}\right|_{0}^{t}=\left.\frac{\frac{m_{0}}{2}\langle V^{2}\rangle_{q}}{K_{0}}\right|_{0}^{t}= \frac{\mathcal{W}_{q}(t)}{K_{0}}.\label{eq:Wqrinf}
\end{equation}
Moreover, as stated explicitly in Eq. \eqref{eq:wmgcl}, the work $\mathcal{W}_{mgcl}$ defined classically for $\rho_{mgcl}$, has the same dimensionless expression as $\mathcal{W}_{q}$, which, from Eq. \eqref{eq:Wqrinf}, implies that 
\begin{equation}
\frac{\mathcal{W}_{mgcl}(t)}{K_{0}}=\left.\frac{\frac{m_{0} \langle v^{2}(t^{'})\rangle_{gcl}}{2}}{K_{0}}\right|_{0}^{t}\Leftrightarrow \frac{\mathcal{W}_{q}^{\infty}(t)}{K_{0}}=\left.\frac{\frac{m_{0}}{2}\langle V^{2}\rangle_{q}^{\infty}}{K_{0}}\right|_{0}^{t}.\label{eq:mixedcqr}
\end{equation}
That is, \emph{ the dimensionless quantum work for an incoherent mixture of Gaussian states equals the classical-statistical work}. Not only the quantum work but the centroid and thermal work have their classical correspondence as
\begin{equation}
\frac{\mathcal{W}_{c}^{\infty}(t)}{K_{0}}=\left.\frac{\frac{m_{0}}{2}\left(\langle V \rangle_{q}^{\infty}\right)^{2}}{K_{0}}\right|_{0}^{t}\Leftrightarrow \left.\frac{\frac{m_{0}}{2}\langle v\rangle_{mgcl}^{2}}{K_{0}}\right|_{0}^{t}\label{eq:mixedcqc}
\end{equation}
and
\begin{equation}
\frac{\mathcal{W}_{th}^{\infty}(t)}{K_{0}}=\left.\frac{\frac{m_{0}}{2}\langle \left(\Delta V\right)^{2} \rangle_{q}^{\mu}}{K_{0}}\right|_{0}^{t}\Leftrightarrow \left.\frac{\frac{m_{0}}{2}\langle\left(\sigma_{v}\right)^{2}\rangle_{mgcl}^{2}}{K_{0}}\right|_{0}^{t}.\label{eq:mixedcqt}
\end{equation}
It is important to remark that, even in cases in which the quantum fluctuations are significant ($\vartheta$ large) the results  \eqref{eq:mixedcqr}, \eqref{eq:mixedcqc}, and \eqref{eq:mixedcqt} are verified. Interestingly, this is an example of scenario where, thanks to the absence of quantum coherence ($\mu\rightarrow \infty$), the classical-statistical formalism can produce a faithful description of microscopic phenomena.
\par The differences between quantum and classical descriptions becomes more apparent when $\mu\rightarrow 0$. In this case, the state is a coherent superposition and
\begin{equation}
\frac{\frac{m_{0}}{2}\langle \left(\Delta V\right)^{2} \rangle_{q}^{0}}{K_{0}}=\frac{\frac{m_{0}}{2}\langle V^{2} \rangle_{q}^{0}}{K_{0}}=\frac{\frac{m_{0}}{2}\langle V^{2} \rangle_{q}}{K_{0}}-\frac{1}{\vartheta^{'}}\frac{\frac{\frac{m_{0}}{2}\langle  \left(\Delta V\right)^{2} \rangle_{q}}{K_{0}}}{1+\exp \left[\frac{1}{2\vartheta^{'}}\right]}. \label{eq:docarai3}
\end{equation} 	
In general, the forms for the work terms are, from \eqref{eq:docarai3} and \eqref{eq:MM3}, written as
\begin{equation}
\frac{\mathcal{W}_{q}^{0}(t)}{K_{0}}=\frac{\mathcal{W}_{q}(t)}{K_{0}}-\frac{1}{\vartheta^{'}}\frac{\frac{\mathcal{W}_{th}(t)}{K_{0}}}{1+\exp \left[\frac{1}{2\vartheta^{'}}\right]},
\end{equation}
\begin{equation}
\mathcal{W}_{c}^{0}(t)=0
\end{equation}
and
\begin{equation}
\frac{\mathcal{W}_{th}^{0}(t)}{K_{0}}=\frac{\mathcal{W}_{q}(t)}{K_{0}}-\frac{1}{\vartheta^{'}}\frac{\frac{\mathcal{W}_{th}(t)}{K_{0}}}{1+\exp \left[\frac{1}{2\vartheta^{'}}\right]}.
\end{equation}
Comparing these expressions with Eqs. \eqref{eq:minf}, \eqref{eq:Wqrinf} and \eqref{eq:MM3}, it can be explicitly verified the strong influence of $\vartheta^{'}$, connected with the quantum fluctuation terms. Since the centroid work will be null for both cases, the differences will be verified for the quantum and thermal work. In order to appreciate the differences implied by quantum coherence, it is considered in Fig. \ref{fig:superpos1} the comparison between the quantum work for an underdamped oscillator for which $\frac{\omega}{\lambda}=10$, $\varepsilon=0$, $\varepsilon_{\Delta}=0.5$ and $\vartheta=1$.
\begin{figure}[h]
	\begin{center}
		\includegraphics[angle=0, scale=0.6]{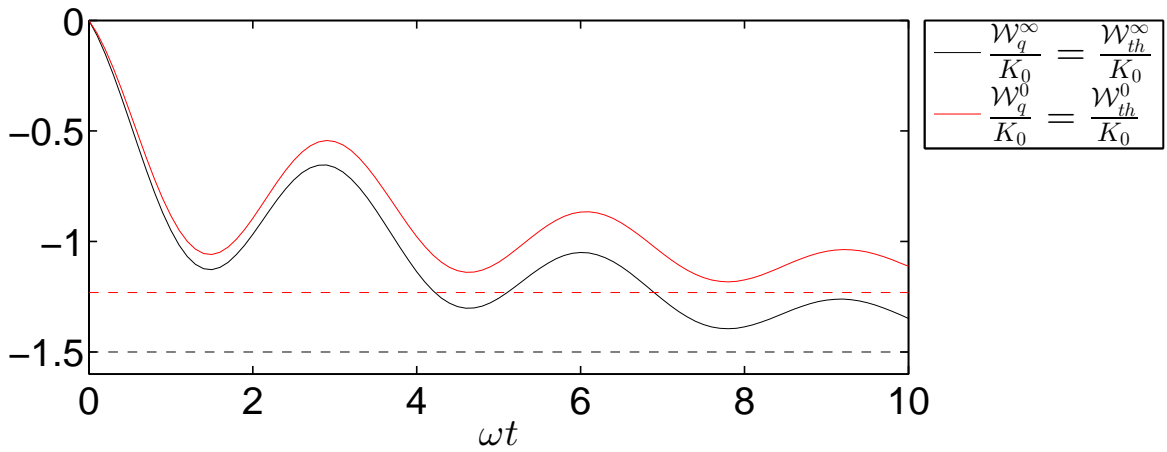} 
		\caption{Evolution of the quantum (and thermal) work for the Gaussian mixture (black line) and the coherent superposition state (red line), for $\frac{\omega}{\lambda}=10$, $\varepsilon=0$, $\varepsilon_{\Delta}=0.5$ and $\vartheta=1$. The corresponding asymptotic limits ($t\to\infty$) are represented by dashed lines with the same color code.}
		\label{fig:superpos1}
	\end{center}
\end{figure}
The energy flowing outside the system will be greater for the mixture than the superposition, as the environment performs work on the system. Furthermore, since the centroid work is null for both cases, $\mathcal{W}_{q}^{0}(t)=\mathcal{W}_{th}^{0}(t)$ and $\mathcal{W}_{q}^{\infty}(t)=\mathcal{W}_{th}^{\infty}(t)$ and the analysis depicted in Fig. \ref{fig:superpos1} also represents a comparison between thermal work, in a mixture and coherent superposition state.
\par Had one connected thermal work with heat, then it could be inferred that more heat flow from a mixture system than from a coherent superposition. That is, it would be possible, in principle, to differentiate both states knowing the heat flow, showing a connection between thermal mechanisms (heat) with information properties (coherence).
\par Another analysis can be made by comparing the superposition coherent state with the purely Gaussian one. First, it is noticed that when the quantum fluctuations terms are negligible ($\vartheta\approx 0$) then quantum coherence does not very much influence the quantum works regarding both states, since
\begin{equation}
\frac{\mathcal{W}_{q}^{0}(t)}{K_{0}}	=\left.\frac{\frac{m_{0}}{2}\langle V^{2}\rangle_{q}^{0}}{K_{0}}\right|_{0}^{t}\approx \left.\frac{\frac{m_{0}}{2}\langle V^{2}\rangle_{q}}{K_{0}}\right|_{0}^{t}=	\frac{\mathcal{W}_{q}(t)}{K_{0}}\approx \frac{\mathcal{W}_{c}(t)}{K_{0}}.
\end{equation}
However, the same can not be asserted for the centroid work, since
\begin{equation}
\frac{\mathcal{W}_{c}^{0}(t)}{K_{0}}=\left.\frac{\frac{m_{0}}{2}\left(\langle V\rangle_{q}^{0}\right)^{2}}{K_{0}}\right|_{0}^{t}=0\neq \left.\frac{\frac{m_{0}}{2}\langle V^{2}\rangle_{q}}{K_{0}}\right|_{0}^{t}.
\end{equation}
Furthermore, the thermal work will also be different for the superposition state in the case in which $\vartheta\approx 0$, since
\begin{equation}
\frac{\mathcal{W}_{th}^{0}(t)}{K_{0}}=\left.\frac{\frac{m_{0}}{2}\langle\left(\Delta  V\right)^{2}\rangle_{q}^{0}}{K_{0}}\right|_{0}^{t}=\left.\frac{\frac{m_{0}}{2}\langle  V^{2}\rangle_{q}^{0}}{K_{0}}\right|_{0}^{t}\approx  \left.\frac{\frac{m_{0}}{2}\langle V^{2}\rangle_{q}}{K_{0}}\right|_{0}^{t}=	\frac{\mathcal{W}_{q}(t)}{K_{0}}.
\end{equation}
In order to verify the differences between a superposition and a purely Gaussian quantum states, it is considered in Figs \ref{fig:superpos11} and \ref{fig:superpos2} the comparison between quantum, thermal and centroid work for an underdamped oscillator for which $\frac{\omega}{\lambda}=10$, $\varepsilon=0$, $\varepsilon_{\Delta}=0.5$ and $\vartheta=1$.
\begin{figure}[h]
	\begin{center}
		\begin{subfigure}[h]{1\textwidth}
			\includegraphics[angle=0, scale=0.6]{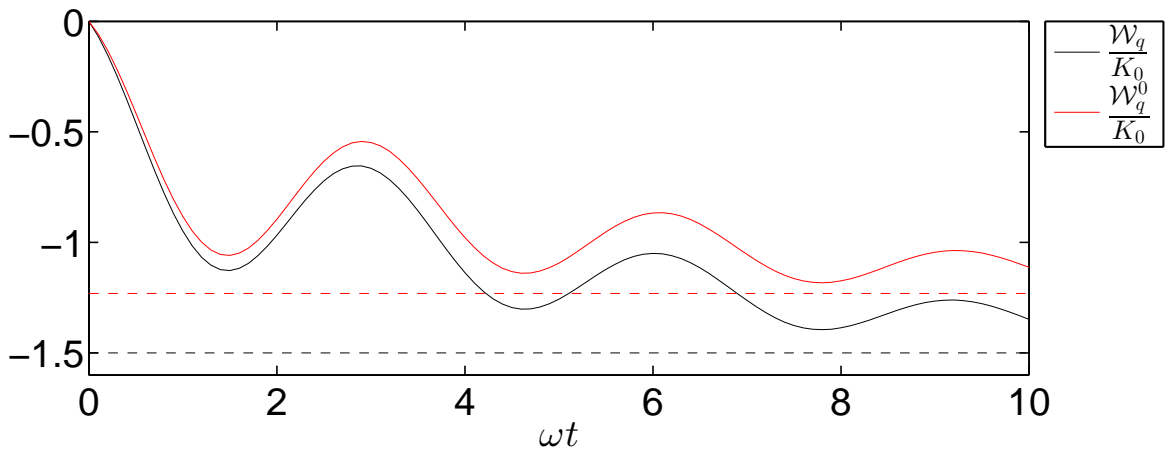} 
			\caption{Comparison between the quantum work regarding: a Gaussian state (black line) and a superposition coherent state (red line).}
			\label{fig:superpos11}
		\end{subfigure}
		\begin{subfigure}[h]{1\textwidth}
			\includegraphics[angle=0, scale=0.6]{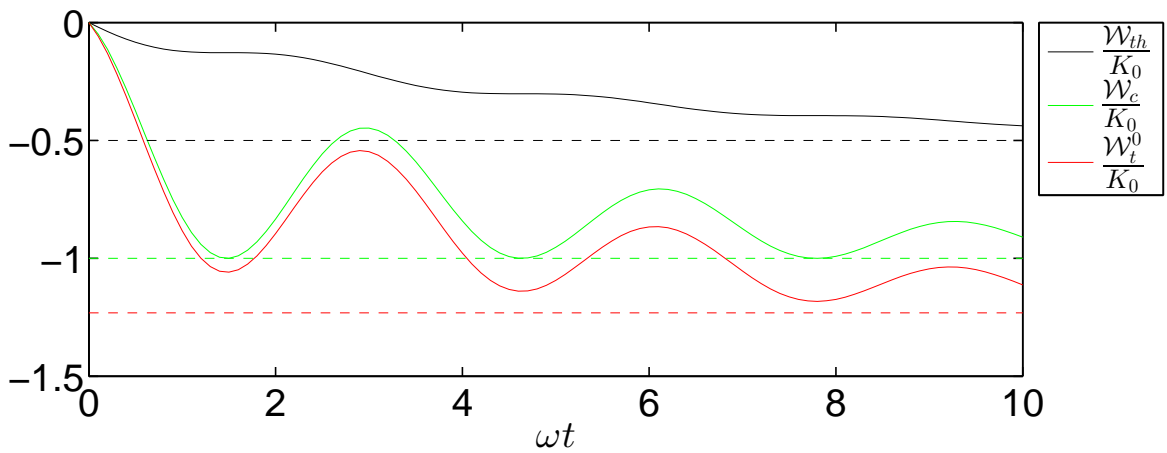}
			\caption{Comparison between the centroid work of a Gaussian state (green line), the thermal work regarding a Gaussian state (black line), and the thermal work for a superposition coherent state (red line).}
			\label{fig:superpos2}
		\end{subfigure}
		\caption{Evolution of the thermal and quantum work for the Gaussian and the coherent superposition state, for $\frac{\omega}{\lambda}=10$, $\varepsilon=0$, $\varepsilon_{\Delta}=0.5$ and $\vartheta=1$. The corresponding asymptotic limits ($t\to\infty$) are represented by dashed lines with the same color code.}
		\label{fig:superpos}
	\end{center}
\end{figure}
\par The results show that quantum coherence diminishes the loss of energy of the system (block) to the environment (spring + air) in comparison with the Gaussian state. However, it is worth noticing that, since $\mathcal{W}_{c}^{0}(t)=0$ for all $t$, that is, the centroid work is null, then an important conceptual difference emerges when comparing the energy loss for a superposition of Gaussian states with the one for a simple Gaussian state. In the former case, the energy loss is exclusively thermal, while in the latter it is just minimally thermal. Had one considered thermal work as heat and centroid work as the classic perspective of thermodynamical work, then one might conclude that  \emph{heat is the central form of energy transfer that takes place when a system is in a symmetric quantum superposition}.
\par It can be concluded, from the results of the present chapter, that for a dissipative system, as described by the CK model, the quantum work here introduced is in quite good agreement with the classical work for Gaussian states. Additionally, work has been computed also via the classical-statistical (Liouvillian) formalism and its adequacy has been verified for mixture states. In presence of quantum coherence, the essential role of thermal work has been identified. The commonly used Alicki's definition of heat and work was put into test, and it was verified that it differs significantly from the classical work in the strictly classical regime. In what follows, some concluding remarks are made.

\chapter{FINAL REMARKS}
\label{cap:consid}

The main objective of the present dissertation was to propose and analyze a new quantum mechanical definition of work. To accomplish this task, this work was constructed as follows.
\begin{enumerate}[label=\roman*]
	\item A review of the main aspects related with the definition of work and heat within the scope of Classical Mechanics, Thermodynamics and Statistical Physics was made;
	\item The Quantum Mechanics formalism was described and considered as basis for the definition proposed;
	\item The current approaches, regarding the definition of work and heat within the realm of Quantum Thermodynamics, were reviewed;
	\item The main aspects related with Alicki's definition of heat and work were described and some criticisms presented;
	\item The framework for the application of the new definition of work ($W_{q}^{R}$) was established and some general results obtained;
	\item The proposed quantum work $W_{q}^{R}$ was separated into centroid and thermal terms; the role of latter term as heat was discussed;
	\item For the Caldirola-Kanai model, Gaussian states were used to compare $W_{q}^{R}$ with both Alicki's proposal and the classical work; 
	\item The results obtained for $W_{q}^{R}$ were found to be in quite good agreement with the classical and classical-statistical formulas in (semi)classical regime. In quantum regimes, physical interpretations were proposed;
	\item Alicki's approach was shown not to suitably reproduce the classical limit for the model under investigation;
	\item Aspects related with work and heat were discussed for a scenario involving quantum coherence, in which case classical mimics do not apply.
\end{enumerate}
\par From a general perspective, the goal of this dissertation has been fully achieved, since a new notion of quantum mechanical work was (i) formally introduced, (ii) analytically computed in a case study, and (iii) compared with well established approaches showing, in particular, some conceptual advantages.
\par With regards to the presently-introduced notion of quantum mechanical work, many questions are left open for future works. In particular, the following research lines can be enumerated:
\begin{enumerate}[label=\roman*]
	\item Application of the present notion of work in many-particle systems or for the center of mass of a given part (subsystem) of such systems.
	\item Thorough investigation of the thermal work meaning. It is important to clarify what conceptual connections exist (if any) between thermal work and the thermodynamical notions of heat, temperature, and entropy.
	\item Applications to spin systems, where the mechanical notion of work is subtle.
	\item Investigation of the emergence of thermodynamical irreversibility from a microscopic quantum mechanical substratum. 
	\item Related notions such as thermal equilibrium, time averaging, ensemble averaging, and information discard, in connection with quantum work and heat, need to be understood from this fundamental viewpoint.
\end{enumerate}

\postextual
\bibliographystyle{unsrt}
{\normalsize \bibliography{referencias_080618}}

\apendices

\chapter{CALDIROLA-KANAI MODEL: SOLUTION AND EXPECTATION MEAN VALUES}
\label{cap:kanaicaldirolasol}
\par The Caldirola-Kanai model was used for testing the definition of work presented at the this dissertation and in comparison with Alicki's definition. The results of this analysis are presented  in chapter \ref{cap:kanaicaldirola}, using expressions described in this appendix. The structure of the appendix is organized as follows: to begin with, it is deduced the form of a Gaussian state after a time evolution under a Caldirola-Kanai model; then expressions for mean values of some observables are evaluated. The present discussion is strongly based on the arguments and deductions made in Ref. \cite{iesus}.
\section{\textbf{Time evolution}}
The Hamiltonian operator
\begin{equation}
	H=\frac{P^{2}}{2m}\mathrm{e}^{-2\lambda t}+\frac{1}{2}\mathrm{e}^{2\lambda t}k_{0}X^{2},
\end{equation}
 with $[X,P]=i\hbar$, describes the quantum dynamics of the Caldirola-Kanai model. It can also be written as
\begin{equation}
H=\frac{P^{2}}{2m(t)}+\frac{1}{2}k(t)X^{2}\label{eq:HCKt}
\end{equation}
where
\begin{equation}
\begin{array}{lr}
m=m_{0}\mathrm{e}^{2\lambda t}& k=k_{0}\mathrm{e}^{2\lambda t}.
\end{array}
\end{equation}
The Hamiltonian does not commute with itself at different times, due to its dependence with time.  Consequently, it is not a simple task to solve Eq. \eqref{eq:schrun} directly. However, it has been shown \cite{iesus,cheng1988evolution} that for a Hamiltonian of the form
\begin{equation}
	H=\sum_{k} a_{k}(t)J_{k},\label{eq:HJ}
\end{equation}
where the operator $J_{k}$ form a closed Lie algebra such that $[J_{i},J_{k}]\propto J_l$, the unitary time-evolution operator $\mathcal{U}_{t}$ can be written as
\begin{equation}
	\mathcal{U}_{t}=\exp \left(\sum_{k} b_{k}(t)J_{k}\right)
\end{equation}
or
\begin{equation}
\mathcal{U}_{t}=\prod_{k}\exp \left( c_{k}(t)J_{k}\right).\label{eq:Ut}
\end{equation}
If the operators and associated functions assume 
\begin{equation}
\begin{array}{lll}
	J_{+}=\frac{1}{2\hbar}X^{2}, & 	J_{-}=\frac{1}{2\hbar}P^{2}, & J_{0}=\frac{i}{4\hbar}\left(PX+XP\right),\\
	a_{+}=\hbar k(t),& a_{-}=\frac{\hbar}{m(t)}, & a_{0}=0,
\end{array}
\end{equation}
then the Hamiltonian \eqref{eq:HCKt} can be written as \eqref{eq:HJ}. Since the operators $J_{+}$, $J_{-}$ and $J_{0}$ form a closed Lie algebra, then it is possible to obtain the operator $\mathcal{U}_{t}$ in the form \eqref{eq:Ut}, with
\begin{equation}
\begin{array}{lcr}
c_{+}(t)=m(t)\frac{\dot{u}(t)}{u(t)},& c_{-}(t)=-u(0)^{2}\int_{0}^{t}\frac{dt^{'}}{m(t^{'})u^{2}(t^{'})}\frac{\hbar}{m(t)},& c_{0}=-\ln\frac{u^{2}(t)}{u^{2}(0)},
\end{array}
\end{equation}
and boundary conditions $c_{+}(0)=c_{-}(0)=c_{0}(0)=0$ and
\begin{equation}
	\ddot{u}+\frac{\dot{m}(t)}{m(t)}\dot{u}+\frac{k(t)}{m(t)}u=0,\label{eq:uevolv}
\end{equation}
such that $u(0)= 1$ and $\dot{u}(0)=0$.
\par The time evolved wave function is then given by
\begin{equation}
	\Psi (x,t)=\bra{x}\exp\left[ic_{+}(t)J_{+}\right]\exp\left[c_{0}(t)J_{0}\right]\exp\left[ic_{-}(t)J_{-}\right]\ket{\psi_{0}},
\end{equation}
where a Gaussian state $\ket{\psi_{0}}$ is employed. Since $J_{+}=\frac{1}{2\hbar}X^{2}$, then
\begin{equation}
\Psi (x,t)=\exp\left[ic_{+}(t)\frac{x^{2}}{2\hbar}\right]\bra{x}\exp\left[c_{0}(t)J_{0}\right]\exp\left[ic_{-}(t)J_{-}\right]\ket{\psi_{0}}.
\end{equation}
It can be shown that
\begin{equation}
	\bra{x}\exp\left[\frac{ic_{0}(t)}{2\hbar}XP\right]\ket{\phi}=\exp\left[\frac{c_{0}(t)}{2}x\partial_{x}\right]\langle x |\phi\rangle,
\end{equation}
 for any state $\ket{\phi}$. Therefore, the wave function can be rewritten as
\begin{equation}
\Psi (x,t)=\exp\left[ic_{+}(t)\frac{x^{2}}{2\hbar}\right]\exp\left[\frac{c_{0}(t)}{4}\right]\exp\left[\frac{c_{0}(t)}{2}x\partial_{x}\right]\underbrace{\bra{x}\exp\left[ic_{-}(t)J_{-}\right]\ket{\psi_{0}}}_{\Phi(x,t)} \label{eq:psinterm}
\end{equation}
where
\begin{equation}
\Phi(x,t)=\int_{-\infty}^{\infty}\bra{x}\exp\left[\frac{ic_{-}(t)}{2\hbar}P^{2}\right]\ket{p}\langle p|\psi_{0}\rangle dp=\frac{1}{\sqrt{2\pi\hbar}} \int_{-\infty}^{\infty}\exp\left[\frac{ic_{-}(t)}{2\hbar}p^{2}+\frac{ipx}{\hbar}\right]\langle p|\psi_{0}\rangle dp\label{eq:auxPhi}
\end{equation}
Because the initial state has the Gaussian form
\begin{equation}
\Psi(x)=\langle q|\psi_{0} \rangle=\frac{\mathrm{e}^{-\frac{\left(x-x_{0}\right)^{2}}{4\Delta_{x,0}^{2}}}\mathrm{e}^{\frac{ip_{0}x}{\hbar}}}{\left(2\pi \Delta_{x,0}^{2}\right)^{\frac{1}{4}}}\label{eq:Gaussianapp}
\end{equation}
its form in the momentum basis reads
\begin{equation}
\langle p|\psi_{0} \rangle=\left(\frac{2\Delta_{x,0}^{2}}{\hbar^{2}\pi}\right)^{\frac{1}{4}}\exp\left[-\frac{\Delta_{q,0}^{2}\left(p-p_{0}\right)^{2}}{\hbar^{2}}-\frac{i\left(p-p_{0}\right)x_{0}}{\hbar}\right].
\end{equation}
The integral in Eq. \eqref{eq:auxPhi} is then performed, resulting in
\begin{equation}
	\Phi(x,t)=\left(\frac{8\Delta_{x,0}^{2}\pi^{-1}}{16\Delta_{x,0}^{4}+4\hbar^{2}c_{-}^{2}}\right)^{\frac{1}{4}}\exp\left[-\frac{\left(x-x_{0}+c_{-}p_{0}\right)^{2}}{4\Delta_{x,0}^{2}-2i\hbar c_{-}}+\frac{ic_{-}}{2\hbar}p_{0}^{2}+\frac{i}{\hbar}p_{0}\left(x-x_{0}\right)-i\theta\right],\label{eq:phinterm}
\end{equation}
where
\begin{equation}
\theta(t)=\frac{1}{2}\arctan\left[-\frac{\hbar c_{-}}{2\Delta_{x,0}^{2}}\right].
\end{equation}
The following relation was proved in \cite{iesus}:
\begin{equation}
	\exp\left[\frac{c_{0}(t)}{2}q\partial_{q}\right]\Phi(q,t)=\Phi\left(\mathrm{e}^{\frac{c_{0}(t)}{2}}q,t\right).\label{eq:auxc0}
\end{equation}
From Eqs. \eqref{eq:auxc0}, \eqref{eq:psinterm} and \eqref{eq:phinterm}, it follows that
\begin{equation}
\Psi(x,t)= A(t)\exp\left[-\frac{\left(x-\mathrm{e}^{-\frac{c_{0}}{2}}\left(x_{0}-c_{-}p_{0}\right)\right)^{2}}{\mathrm{e}^{-c_{0}}\left(4\Delta_{x,0}^{2}-2i\hbar c_{-}\right)}+i\frac{c_{-}}{2\hbar}p_{0}^{2}+i\frac{e^{\frac{c_{0}}{2}}}{\hbar}p_{0}x-i\theta +i\frac{c_{+}}{2\hbar}x^{2}\right],\label{eq:funonda2}
\end{equation}
where
\begin{align}
&A(t)=\left(\frac{1}{\pi 2\Delta_{x,t}^{2}}\right)^\frac{1}{4},\\
&\Delta_{x,t}^{2}=\mathrm{e}^{-c_{0}}\Delta_{x,0}^{2}\left(1+\frac{\hbar^{2}c_{-}^{2}}{4\Delta_{x,0}^{4}}\right),\\
&\theta(t)=\frac{1}{2}\arctan\left[-\frac{\hbar c_{-}}{2\Delta_{x,0}^{2}}\right].
\end{align}
\subsection{General case}
The use of the above results demands the determination of the coefficients $c_{+}, c_{-}$ and $c_{0}$. Then, inserting the expressions 
\begin{equation}
\begin{array}{lr}
m=m_{0}\mathrm{e}^{2\lambda t}& k=k_{0}\mathrm{e}^{2\lambda t}
\end{array}
\end{equation}
in Eq. \eqref{eq:uevolv}, results 
\begin{equation}
	\ddot{u}+2\lambda\dot{u}+\frac{k_{0}}{m_{0}}u=\ddot{u}+2\lambda\dot{u}+\omega^{2} u.
\end{equation}
The solution of this differential equation is 
\begin{equation}
	u(t)=u_{0}\mathrm{e}^{-\tau}\left[\cosh\left(\zeta\tau \right)+\left(1+\frac{u_{0}^{'}}{\lambda u_{0}}\right)\frac{\sinh\left(\zeta \tau\right)}{\zeta}\right]. \label{eq:ut}
\end{equation}
where $\tau=\lambda t$ and $\zeta=\sqrt{1-\frac{\omega^{2}}{\lambda^{2}}}$. Considering the boundary conditions $u(0)= 1$ and $\dot{u}(0)=0$, Eq. \eqref{eq:ut} is rewritten as
\begin{equation}
u(\tau)=\mathrm{e}^{-\tau}\left[\cosh\left(\zeta\tau \right)+\frac{\sinh\left(\zeta \tau\right)}{\zeta}\right].
\end{equation}
Hence, the coefficients are given by
\begin{align}
	& \displaystyle c_{+}(t)=m(t)\frac{\dot{u}(t)}{u(t)}=-\frac{k_{0}}{\lambda}\frac{\mathrm{e}^{2\tau}}{\zeta \coth (\zeta \tau)+1},\label{eq:cmais}\\
	& \displaystyle c_{-}(t)=-u(0)^{2}\int_{0}^{t}\frac{dt^{'}}{m(t^{'})u^{2}(t^{'})}\frac{\hbar}{m(t)} =-\frac{1}{\lambda m_{0}}\frac{1}{\zeta \coth (\zeta \tau)+1},\\
	& \displaystyle c_{0}=\ln\frac{u^{2}(t)}{u^{2}(0)}=2\ln\left[\frac{\mathrm{e}^{\tau}}{\cosh(\zeta\tau)+\frac{\sinh(\zeta \tau)}{\zeta}}\right].\label{eq:c0}
\end{align}
With the wave function determined and the time dependent coefficients established, the expectation values can be computed.
 \section{\textbf{Mean values}}
 The probability distribution can then be written as
\begin{equation}
\left|\Psi(x,t)\right|^{2}= A(t)^{2}\exp\left[-\frac{1}{2\Delta_{x,t}^{2}}\left(x-\mathrm{e}^{-\frac{c_{0}}{2}}\left(x_{0}-c_{-}p_{0}\right)\right)^{2}\right].
\end{equation}
For the mean position one has
\begin{equation}
\langle X\rangle=\int_{-\infty}^{\infty} \Psi^{*}(x,t)x\Psi(x,t)dx=\int_{-\infty}^{\infty}x \left|\Psi(x,t)\right|^{2}dx\label{eq:position1}
\end{equation}
which yields
\begin{equation}
\langle X\rangle=\mathrm{e}^{-\frac{c_{0}}{2}}\left(x_{0}-c_{-}p_{0}\right).
\end{equation}
Similarly, 
\begin{equation}
\langle X^{2}\rangle=\int_{-\infty}^{\infty}x^{2} \left|\Psi(x,t)\right|^{2}dx=\Delta_{x,t}^{2}+\mathrm{e}^{-c_{0}}\left(x_{0}-c_{-}p_{0}\right)^{2}.
\end{equation}
Therefore, the position variance is given by
\begin{equation}
\langle \left(\Delta X\right)^{2}\rangle=\langle X^{2}\rangle-\langle X\rangle^{2}=\Delta_{x,t}^{2}.
\end{equation}
The mean value of momentum is calculated from  
\begin{equation}
\langle P\rangle=-i\hbar \int_{-\infty}^{\infty}\Psi^{*}(x,t)\frac{\partial}{\partial x}\Psi(x,t)dx,
\end{equation}
resulting in
\begin{equation}
\langle P\rangle=e^{-\frac{c_{0}}{2}}\left[\left(e^{c_{0}}-c_{+}c_{-}\right)p_{0}+c_{+}x_{0}\right].
\end{equation}
Proceeding in a similar way, one finds
\begin{equation}
\begin{array}{rl}
 	\langle P^{2}\rangle&=-\hbar^{2} \int_{-\infty}^{\infty}\Psi^{*}(x,t)\frac{\partial^{2}}{\partial x^{2}}\Psi(x,t)dx\\
 	&=\frac{\mathrm{e}^{-c_{0}}\hbar^{2}}{4\Delta_{x,0}^{2}}\left(\mathrm{e}^{c_{0}}-c_{+}c_{-}\right)^{2}+\mathrm{e}^{-c_{0}}\Delta_{x,0}^{2}c_{+}^{2} +\\
 	&+e^{-c_{0}}\left(\left(e^{c_{0}}-c_{+}c_{-}\right)p_{0}+c_{+}x_{0}\right)^{2}.\\\label{eq:MMM}
\end{array} 
\end{equation}
Finally, one obtains
\begin{equation}
\langle\left(\Delta P\right)^{2} \rangle=\langle P^{2}\rangle-\langle P\rangle^{2}=\frac{\mathrm{e}^{-c_{0}}\hbar^{2}}{4\Delta_{x,0}^{2}}\left(\mathrm{e}^{c_{0}}-c_{+}c_{-}\right)^{2}+\mathrm{e}^{-c_{0}}\Delta_{x,0}^{2}c_{+}^{2}.\label{eq:deltap2}
\end{equation}
Next, the associated functions $c_{+}$, $c_{-}$, and $c_{0}$ are inserted in order to furnish the explicit time-dependence for the above results. 
\subsection{Explicit expressions}
\par After some (lengthy) algebraic manipulations of Eqs. \eqref{eq:position1}-\eqref{eq:deltap2} using the time dependent functions described at Eqs. \eqref{eq:cmais}-\eqref{eq:c0}, the following expressions are obtained: 
\begin{itemize}
	\item Mean position 
	\begin{equation}
	\langle X\rangle=\frac{\mathrm{e}^{-\tau}}{\zeta\lambda m_{0}}\left[\left(\zeta \cosh (\zeta \tau)+\sinh(\zeta \tau)\right)\lambda m_{0}x_{0}+p_{0}\sinh(\zeta \tau)\right];
	\end{equation}
	\item Mean square position
	\begin{equation}
	\langle X^{2}\rangle=\mathrm{e}^{-2\tau}\left[k_{4} \cosh^{2} (\zeta \tau)+k_{5}\sinh(\zeta \tau)\cosh (\zeta \tau)+k_{6}\sinh^{2}(\zeta \tau)\right],\label{eq:melhorQ2}
	\end{equation}
	where
	\begin{align}
	&\displaystyle k_{4}=x_{0}^{2}+\Delta_{x,0}^{2},\label{eq:k4}\\
	&\displaystyle k_{5}=\frac{2\zeta\lambda^{2} m_{0}^{2}x_{0}^{2}+2\zeta\lambda m_{0}x_{0}p_{0}}{\zeta^{2}\lambda^{2} m_{0}^{2}}+\frac{2\Delta_{x,0}^{2}}{\zeta},\label{eq:k5}\\
	&\displaystyle k_{6}=\frac{p_{0}^{2} +2\lambda m_{0}x_{0}p_{0}+\lambda^{2} m_{0}^{2}x_{0}^{2}}{\zeta^{2}\lambda^{2} m_{0}^{2}}+\frac{\Delta_{x,0}^{2}}{\zeta^{2}}+\frac{\hbar^{2}}{4m_{0}^{2}\zeta^{2}\lambda^{2}\Delta_{x,0}^{2}};\label{eq:k6}
	\end{align}
	\item Position variance:
	\begin{equation}
	\begin{array}{rl}
	\langle X^{2}\rangle-\langle X\rangle^{2}&=\mathrm{e}^{-2\tau}\left(\Delta_{x,0}^{2} \cosh^{2} (\zeta \tau)+\frac{2\Delta_{x,0}^{2}}{\zeta} \sinh(\zeta \tau)\cosh (\zeta \tau)\right.+\\
	&+\left.\left(\frac{\Delta_{x,0}^{2}}{\zeta^{2}}+\frac{\Delta_{p,0}^{2}}{m_{0}^{2}\zeta^{2}\lambda^{2}}\right)\sinh^{2}(\zeta \tau)\right);
	\end{array}
	\end{equation}
	\item Mean momentum:
	\begin{equation}
	\begin{array}{rl}
	\langle	 P\rangle &=\mathrm{e}^{\tau}\left(p_{0}^{2}\cosh^{2}(\zeta\tau)-\left(2\frac{k_{0}x_{0}p_{0}}{\lambda\zeta}+2\frac{p_{0}^{2}}{\zeta}\right)\sinh(\zeta \tau)\cosh(\zeta\tau)\right. +\\
	&+\left.\left(\frac{k_{0}^{2}x_{0}^{2}}{\lambda^{2}\zeta^{2}}+2\frac{p_{0}}{\zeta}\frac{k_{0}x_{0}}{\lambda\zeta}+\frac{p_{0}^{2}}{\zeta^{2}}\right)\sinh^{2}(\zeta \tau)\right)^{\frac{1}{2}};
	\end{array}
	\end{equation}
	\item  Mean square momentum:
	\begin{equation}
	\langle P^{2}\rangle=\mathrm{e}^{2\tau}\left[k_{1}\cosh^{2}(\zeta\tau)-k_{2}\sinh(\zeta \tau)\cosh(\zeta\tau)+k_{3}\sinh^{2}(\zeta \tau)\right],\label{eq:melhorP2}
	\end{equation}
	where
	\begin{align}
	&\displaystyle k_{1}=p_{0}^{2}+\frac{\hbar^{2}}{4\Delta_{x,0}^{2}},\label{eq:k1}\\
	&\displaystyle k_{2}=2\frac{k_{0}x_{0}p_{0}}{\lambda\zeta}+2\frac{p_{0}^{2}}{\zeta}+2\frac{\hbar^{2}}{\Delta_{q,0}^{2}\zeta},\label{eq:k2}\\ &\displaystyle k_{3}=\frac{k_{0}^{2}x_{0}^{2}}{\lambda^{2}\zeta^{2}}+2\frac{p_{0}}{\zeta}\frac{k_{0}x_{0}}{\lambda\zeta}+\frac{p_{0}^{2}}{\zeta^{2}}+\frac{\Delta_{x,0}^{2}k_{0}^{2}}{\lambda^{2}\zeta^{2}}+\frac{\hbar^{2}}{4\Delta_{x,0}^{2}\zeta^{2}};\label{eq:k3}
	\end{align}
	\item Momentum variance:
	\begin{equation}
	\begin{array}{rl}
	\langle\left(\Delta P\right)^{2} \rangle&=\mathrm{e}^{2\tau}\Delta_{x,0}^{2}\frac{k_{0}^{2}}{\lambda^{2}}\frac{\sinh^{2}(\zeta \tau)}{\zeta^{2}}+\\
	&+\mathrm{e}^{2\tau}\left(\frac{\hbar^{2}}{4\Delta_{x,0}^{2}}\cosh^{2}(\zeta\tau)-2\frac{\hbar^{2}}{4\Delta_{x,0}^{2}\zeta}\sinh(\zeta \tau)\cosh(\zeta\tau)+\frac{\hbar^{2}}{4\Delta_{x,0}^{2}\zeta^{2}}\sinh^{2}(\zeta \tau)\right).
	\end{array}
	\end{equation}
\end{itemize}

\chapter{CLASSICAL STATISTICAL ANALYSIS}
\label{cap:Work_classical}

The velocity field associated with the classical Caldirola-Kanai model has the form
\begin{equation}
v(t,p_{0},x_{0})=\mathrm{e}^{-\tau}\left[\frac{p_{0}}{m_{0}}\cosh\left(\zeta\tau\right)-\left(\frac{p_{0}}{m_{0}}+\frac{k_{0}x_{0}}{m_{0}\lambda}\right)\frac{\sinh\left(\zeta\tau\right)}{\zeta}\right],
\end{equation}
as stated in section \ref{sec:purecl}. From this expression, it follows that
\begin{equation}
\begin{array}{rl}
v^{2}(t,p_{0},x_{0})&\displaystyle=\frac{1}{m_{0}^{2}}\mathrm{e}^{-2\tau}\left[p_{0}^{2}\cosh^{2}\left(\zeta\tau\right)-2\left(\frac{p_{0}^{2}}{\zeta}+\frac{k_{0}x_{0}p_{0}}{\zeta\lambda}\right)\sinh\left(\zeta\tau\right)\cosh\left(\zeta\tau\right)+\right.\\
&\displaystyle\left.+\left(\frac{p_{0}^{2}}{\zeta^{2}}+2p_{0}\frac{k_{0}x_{0}}{\lambda\zeta^{2}}+\frac{k_{0}^{2}x_{0}^{2}}{\lambda^{2}\zeta^{2}}\right)\sinh^{2}\left(\zeta\tau\right)\right],
\end{array}
\label{eq:energiacineticaclassica2}
\end{equation}
where $x_{0}$ e $p_{0}$ are the initial conditions. Regarding the normalized Gaussian classical distribution, 
\begin{equation}
\rho_{gcl}(t,x_{0},p_{0})=\frac{1}{2\pi \sigma_{p,0}\sigma_{x,0}}\mathrm{e}^{-\frac{1}{2}\left(\left(\frac{x_{0}-x_{0}^{*}}{\sigma_{x,0}}\right)^{2}+\left(\frac{p_{0}-p_{0}^{*}}{\sigma_{p,0}}\right)^{2}\right)}\equiv\mathcal{G}\left[ x_{0}^{*},p_{0}^{*},\sigma_{x,0},\sigma_{p,0}\right],
\end{equation}
the mean values related with the initial conditions are (see subsection \ref{subsec:subjective})
\begin{align}
&\displaystyle\langle p_{0}\rangle_{gcl}=\int_{-\infty}^{\infty}\int_{-\infty}^{\infty}\rho_{gcl}(t,x_{0},p_{0})p_{0}dx_{0}dp_{0}=p_{0}^{*},\\
&\displaystyle\langle p_{0}^{2} \rangle_{gcl}=\int_{-\infty}^{\infty}\int_{-\infty}^{\infty}\rho_{gcl}(t,x_{0},p_{0})p_{0}^{2}dx_{0}dp_{0}=p_{0}^{*2}+\sigma_{p,0}^{2},\\
&\displaystyle\langle x_{0} \rangle_{gcl}=\int_{-\infty}^{\infty}\int_{-\infty}^{\infty}\rho_{gcl}(t,x_{0},p_{0})x_{0}dx_{0}dp_{0}=x_{0}^{*},\\
&\displaystyle\langle x_{0}^{2} \rangle_{gcl}=\int_{-\infty}^{\infty}\int_{-\infty}^{\infty}\rho_{gcl}(t,x_{0},p_{0})x_{0}^{2}dx_{0}dp_{0}=x_{0}^{*2}+\sigma_{x,0}^{2},\\
&\displaystyle\langle x_{0}p_{0}\rangle_{gcl}= \langle p_{0} x_{0}\rangle_{gcl}= \int_{-\infty}^{\infty}\int_{-\infty}^{\infty}\rho_{gcl}(t,x_{0},p_{0})x_{0}p_{0}dx_{0}dp_{0}=x_{0}^{*}p_{0}^{*}.
\end{align}
For a generic function $f(x_{0},p_{0},t)$ one has 
\begin{equation}
\langle f(t)\rangle_{gcl}=\int_{-\infty}^{\infty}\int_{-\infty}^{\infty}\rho_{gcl}(x_{0},p_{0},0)f(x_{0},p_{0},t)dx_{0}dp_{0}.
\end{equation}
Therefore,
\begin{equation}
\begin{array}{rl}
\langle v(t) \rangle_{gcl}&=\int_{-\infty}^{\infty}\int_{-\infty}^{\infty}\rho_{gcl}(x_{0},p_{0},0)v(x_{0},p_{0},t)dx_{0}dp_{0}\\
&=\mathrm{e}^{-\tau}\left[\displaystyle\frac{\langle p_{0}\rangle_{gcl}}{m_{0}}\cosh\left(\zeta\tau\right)-\left(\frac{\langle p_{0}\rangle_{gcl}}{m_{0}}+\frac{k_{0}\langle x_{0}\rangle_{gcl}}{m_{0}\lambda}\right)\frac{\sinh\left(\zeta\tau\right)}{\zeta}\right]\\
&\displaystyle=\mathrm{e}^{-\tau}\left[\frac{p_{0}^{*}}{m_{0}}\cosh\left(\zeta\tau\right)-\left(\frac{p_{0}^{*}}{m_{0}}+\frac{k_{0}x_{0}^{*}}{m_{0}\lambda}\right)\frac{\sinh\left(\zeta\tau\right)}{\zeta}\right]
\end{array}
\end{equation}
and
\begin{equation}
\begin{array}{rl}
\langle v(t) \rangle_{gcl}^{2}&\displaystyle=\frac{1}{m_{0}^{2}}\mathrm{e}^{-2\tau}\left[p_{0}^{*2}\cosh^{2}\left(\zeta\tau\right)-2\left(\frac{p_{0}^{*2}}{\zeta}+\frac{k_{0}p_{0}^{*}x_{0}^{*}}{\zeta\lambda}\right)\sinh\left(\zeta\tau\right)\cosh\left(\zeta\tau\right)\right.+\\
&\displaystyle\left.+\left(\frac{p_{0}^{*2}}{\zeta^{2}}+2p_{0}^{*}\frac{k_{0}x_{0}^{*}}{\lambda\zeta^{2}}+\frac{k_{0}^{2}x_{0}^{*2}}{\lambda^{2}\zeta^{2}}\right)\sinh^{2}\left(\zeta\tau\right)\right].
\end{array}
\end{equation}
Similarly, one shows that
\begin{equation}
\begin{array}{rl}
\langle v(t)^{2} \rangle_{gcl} &\displaystyle=\frac{1}{m_{0}^{2}}\mathrm{e}^{-2\tau}\left[p_{0}^{*2}\cosh^{2}\left(\zeta\tau\right)-2\left(\frac{p_{0}^{*2}}{\zeta}+\frac{k_{0}p_{0}^{*}x_{0}^{*}}{\zeta\lambda}\right)\sinh\left(\zeta\tau\right)\cosh\left(\zeta\tau\right)+\right.\\
&\displaystyle\left.+\left(\frac{p_{0}^{*2}}{\zeta^{2}}+2p_{0}^{*}\frac{k_{0}x_{0}^{*}}{\lambda\zeta^{2}}+\frac{k_{0}^{2}x_{0}^{*2}}{\lambda^{2}\zeta^{2}}\right)\sinh^{2}\left(\zeta\tau\right)\right]+\frac{1}{m_{0}^{2}}\mathrm{e}^{-2\tau}\left[\sigma_{p,0}^{2}\cosh^{2}\left(\zeta\tau\right)\right.+\\
&\displaystyle\left.-2\left(\frac{\sigma_{p,0}^{2}}{\zeta}\right)\sinh\left(\zeta\tau\right)\cosh\left(\zeta\tau\right)+\left(\frac{\sigma_{p,0}^{2}}{\zeta^{2}}+\frac{k_{0}^{2}\sigma_{x,0}^{2}}{\lambda^{2}\zeta^{2}}\right)\sinh^{2}\left(\zeta\tau\right)\right].
\end{array}
\end{equation}
\section{\textbf{Mixed state}}
The classical-statistical counterpart of a quantum mixed state reads
\begin{equation}
\rho_{mgcl}(t,x_{0},p_{0})=\frac{1}{2}\left(\mathcal{G}\left[ x_{0}^{*},p_{0}^{*},\sigma_{x,0},\sigma_{p,0}\right]+\mathcal{G}\left[- x_{0}^{*},-p_{0}^{*},\sigma_{x,0},\sigma_{p,0}\right]\right)
\end{equation}
such that
\begin{align}
&\int_{-\infty}^{\infty}\int_{-\infty}^{\infty}\rho_{mgcl}(t,x_{0},p_{0})dx_{0}dp_{0}=1,\\
&\langle p_{0}\rangle_{mgcl}=\int_{-\infty}^{\infty}\int_{-\infty}^{\infty}\rho_{mgcl}(t,x_{0},p_{0})p_{0}dx_{0}dp_{0}=0,\\
&\langle p_{0}^{2}\rangle_{mgcl}=\int_{-\infty}^{\infty}\int_{-\infty}^{\infty}\rho_{mgcl}(t,x_{0},p_{0})p_{0}^{2}dx_{0}dp_{0}=p_{0}^{*2}+\sigma_{p,0}^{2},\\
&\langle x_{0}\rangle_{mgcl}=\int_{-\infty}^{\infty}\int_{-\infty}^{\infty}\rho_{mgcl}(t,x_{0},p_{0})x_{0}dx_{0}dp_{0}=0,\\
&\langle x_{0}^{2}\rangle_{mgcl}=\int_{-\infty}^{\infty}\int_{-\infty}^{\infty}\rho_{mgcl}(t,x_{0},p_{0})x_{0}^{2}dx_{0}dp_{0}=x_{0}^{*2}+\sigma_{x,0}^{2},\\
&\langle x_{0}p_{0}\rangle_{mgcl}=\langle p_{0}x_{0}\rangle_{mgcl}=\int_{-\infty}^{\infty}\int_{-\infty}^{\infty}\rho_{mgcl}(t,x_{0},p_{0})x_{0}p_{0}dx_{0}dp_{0}=x_{0}^{*}p_{0}^{*}.
\end{align}
It follows that 
\begin{equation}
\begin{array}{rl}
\langle v(t) \rangle_{mgcl}&\displaystyle=\int_{-\infty}^{\infty}\int_{-\infty}^{\infty}\rho_{mgcl}(x_{0},p_{0},0)v(x_{0},p_{0},t)dx_{0}dp_{0}\\
&\displaystyle=\mathrm{e}^{-\tau}\left[\displaystyle\frac{\langle p_{0}\rangle_{mgcl}}{m_{0}}\cosh\left(\zeta\tau\right)-\left(\frac{\langle p_{0}\rangle_{mgcl}}{m_{0}}+\frac{k_{0}\langle x_{0}\rangle_{mgcl}}{m_{0}\lambda}\right)\frac{\sinh\left(\zeta\tau\right)}{\zeta}\right]=0\\
\end{array}
\end{equation}
and 
\begin{equation}
\langle v(t) \rangle_{mgcl}^{2}=0.\label{eq:vmcl2}
\end{equation}
Since $v^{2}(t)$ depends only on terms like $x_{0}^{2}$, $p_{0}^{2}$, and $x_{0}p_{0}$, and the mean values of these terms are the same as the ones given by the pure Gaussian case above, then 
\begin{equation}
\begin{array}{rl}
\langle v(t)^{2} \rangle_{mgcl}=\langle v(t)^{2} \rangle_{gcl} &\displaystyle=\frac{1}{m_{0}^{2}}\mathrm{e}^{-2\tau}\left[p_{0}^{*2}\cosh^{2}\left(\zeta\tau\right)\right.+\\
&\displaystyle\left.-2\left(\frac{p_{0}^{*2}}{\zeta}+\frac{k_{0}p_{0}^{*}x_{0}^{*}}{\zeta\lambda}\right)\sinh\left(\zeta\tau\right)\cosh\left(\zeta\tau\right)+\right.\\
&\displaystyle\left.+\left(\frac{p_{0}^{*2}}{\zeta^{2}}+2p_{0}^{*}\frac{k_{0}x_{0}^{*}}{\lambda\zeta^{2}}+\frac{k_{0}^{2}x_{0}^{*2}}{\lambda^{2}\zeta^{2}}\right)\sinh^{2}\left(\zeta\tau\right)\right]+\\
&\displaystyle+\frac{1}{m_{0}^{2}}\mathrm{e}^{-2\tau}\left[\sigma_{p,0}^{2}\cosh^{2}\left(\zeta\tau\right)\right.+\\
&\displaystyle\left.-2\left(\frac{\sigma_{p,0}^{2}}{\zeta}\right)\sinh\left(\zeta\tau\right)\cosh\left(\zeta\tau\right)+\left(\frac{\sigma_{p,0}^{2}}{\zeta^{2}}+\frac{k_{0}^{2}\sigma_{x,0}^{2}}{\lambda^{2}\zeta^{2}}\right)\sinh^{2}\left(\zeta\tau\right)\right].
\end{array}
\label{eq:v2mcl}
\end{equation}

\chapter{QUANTUM MIXED STATES ANALYSIS}
\label{cap:Work_quantum}
The quantum state considered in subsection \ref{subsec:quanticadist} is described by the density operator 
\begin{equation}
\rho^{\mu}(0)=\frac{\ket{\psi_{0}}\bra{\psi_{0}}+\ket{\psi_{0}^{-}}\bra{\psi_{0}^{-}}+\mathrm{e}^{-\mu}\left(\ket{\psi_{0}^{-}}\bra{\psi_{0}}+\ket{\psi_{0}}\bra{\psi_{0}^{-}}\right)}{N_{\mu}},\label{eq:statot}
\end{equation}
where
\begin{equation}
\langle x |\psi_{0}\rangle\equiv\Psi(x,0)=\left(\frac{1}{2\pi\Delta_{x,0}^{2}}\right)^{\frac{1}{4}}\exp \left[\frac{-\left(x-x_{0}\right)^{2}}{4\Delta_{x,0}^{2}}+i\frac{p_{0}x}{\hbar}\right],\label{eq:estnor}
\end{equation}
\begin{equation}
\langle x |\psi_{0}^{-}\rangle\equiv\Psi^{-}(x,0)=\left(\frac{1}{2\pi\Delta_{x,0}^{2}}\right)^{\frac{1}{4}}\exp \left[\frac{-\left(x+x_{0}\right)^{2}}{4\Delta_{x,0}^{2}}-i\frac{p_{0}x}{\hbar}\right],\label{eq:estrev}
\end{equation}
and $N_{\mu}$ is the normalization factor, given by 
\begin{equation}
N_{\mu}=\mathrm{Tr}\left[\ket{\psi_{0}}\bra{\psi_{0}}+\ket{\psi_{0}^{-}}\bra{\psi_{0}^{-}}+\mathrm{e}^{-\mu}\left(\ket{\psi_{0}^{-}}\bra{\psi_{0}}+\ket{\psi_{0}}\bra{\psi_{0}^{-}}\right)\right].
\end{equation}
The evolved Gaussian state, evaluated in Appendix \ref{cap:kanaicaldirolasol}, is given by 
\begin{equation}
\langle x |\psi(t)\rangle= A\exp\left[-\frac{\left(x-\mathrm{e}^{-\frac{c_{0}}{2}}\left(x_{0}-c_{-}p_{0}\right)\right)^{2}}{\mathrm{e}^{-c_{0}}\left(4\Delta_{x,0}^{2}-2i\hbar c_{-}\right)}+i\frac{c_{-}}{2\hbar}p_{0}^{2}+i\frac{e^{\frac{c_{0}}{2}}}{\hbar}p_{0}x-i\theta +i\frac{c_{+}}{2\hbar}x^{2}\right],\label{eq:estevol}
\end{equation}
where
\begin{align}
&A\equiv A(t)=\left(\frac{1}{2\pi \Delta_{x,t}^{2}}\right)^\frac{1}{4},\\
&\Delta_{x,t}^{2}=\mathrm{e}^{-c_{0}}\Delta_{x,0}^{2}\left(1+\frac{\hbar^{2}c_{-}^{2}}{4\Delta_{x,0}^{4}}\right),\\
&\theta(t)=\frac{1}{2}\arctan\left[-\frac{\hbar c_{-}}{2\Delta_{x,0}^{2}}\right].
\end{align}
The expression \eqref{eq:estrev} reduces to \eqref{eq:estnor}, via the replacements $x_{0}\rightarrow -x_{0}$ $p_{0}\rightarrow -p_{0}$. Furthermore, the derivation of the solution \eqref{eq:estevol} does not make any restriction to the signs of  $x_{0}$ and $p_{0}$.  Therefore, the evolution of state $\langle x|\psi_{0}^{-}\rangle$ can be obtained from Equation \eqref{eq:estevol} via $x_{0}\rightarrow -x_{0}$ $p_{0}\rightarrow -p_{0}$, which yields
\begin{equation}
\langle x |\psi^{-}(t)\rangle= A\exp\left[-\frac{\left(x+\mathrm{e}^{-\frac{c_{0}}{2}}\left(x_{0}-c_{-}p_{0}\right)\right)^{2}}{\mathrm{e}^{-c_{0}}\left(4\Delta_{x,0}^{2}-2i\hbar c_{-}\right)}+i\frac{c_{-}}{2\hbar}p_{0}^{2}-i\frac{e^{\frac{c_{0}}{2}}}{\hbar}p_{0}x-i\theta +i\frac{c_{+}}{2\hbar}x^{2}\right].
\end{equation}
Therefore, the state \eqref{eq:statot} evolves as
\begin{equation}
\begin{array}{rl}
\rho^{\mu}(t)&=\mathcal{U}_{t}\frac{\ket{\psi_{0}}\bra{\psi_{0}}+\ket{\psi_{0}^{-}}\bra{\psi_{0}^{-}}+\mathrm{e}^{-\mu}\left(\ket{\psi_{0}^{-}}\bra{\psi_{0}}+\ket{\psi_{0}}\bra{\psi_{0}^{-}}\right)}{N_{\mu}}\mathcal{U}_{t}^{\dagger}\\
&=\frac{\ket{\psi(t)}\bra{\psi(t)}+\ket{\psi^{-}(t)}\bra{\psi^{-}(t)}+\mathrm{e}^{-\mu}\left(\ket{\psi^{-}(t)}\bra{\psi(t)}+\ket{\psi(t)}\bra{\psi^{-}(t)}\right)}{N_{\mu}}.
\end{array}
\end{equation}
It is important to remark that $N_{\mu}$ does not evolve in time. 
\par The objective here is to compute the expectation values of the operators $X$, $P$, $X^{2}$, and $P^{2}$ for any time $t$. To this end, some general properties are firstly considered. Given two state vectors $\ket{\alpha}$ and $\ket{\beta}$, and an observable $A$, it follows that
\begin{equation}
Tr\left[\ket{\alpha}\bra{\beta}A\right]=\int_{-\infty}^{\infty}\langle x|\alpha\rangle \bra{\beta}A\ket{x} dx=\bra{\beta}A\left(\int_{-\infty}^{\infty}\ket{x}\bra{x} dx\right)\ket{\alpha}=\bra{\beta}A\ket{\alpha}.\label{eq:traco}
\end{equation}
Since $A$ is hermitian, then
\begin{equation}
\bra{\beta}A\ket{\alpha}=\bra{\alpha}A^{\dagger}\ket{\beta}^{*}=\bra{\alpha}A\ket{\beta}^{*}.\label{eq:complexconj}
\end{equation} 
For an observable acting on the state space of $\ket{\psi(t)}$ and $\ket{\psi^{-}(t)}$, it follows that,
\begin{equation}
\begin{array}{ll}
\mathrm{Tr}\left(\rho^{\mu}(t)A\right)&\displaystyle=\frac{1}{N_{\mu}}\mathrm{Tr}\left[\ket{\psi(t)}\bra{\psi(t)}+\ket{\psi^{-}(t)}\bra{\psi^{-}(t)}+\right.\\
&\left.+\mathrm{e}^{-\mu}\left(\ket{\psi^{-}(t)}\bra{\psi(t)}+\ket{\psi(t)}\bra{\psi^{-}(t)}\right)A\right]\\
&\displaystyle=\frac{1}{N_{\mu}}\left[\bra{\psi(t)}A\ket{\psi(t)}+\bra{\psi^{-}(t)}A\ket{\psi^{-}(t)}+\right.\\
&\displaystyle\left.+\mathrm{e}^{-\mu}\left(\bra{\psi(t)}A\ket{\psi^{-}(t)}+\bra{\psi^{-}(t)}A\ket{\psi(t)}\right)\right]\\
&\displaystyle=\frac{1}{N_{\mu}}\left[\bra{\psi(t)}A\ket{\psi(t)}+\bra{\psi^{-}(t)}A\ket{\psi^{-}(t)}+\right.\\
&\displaystyle\left.+\mathrm{e}^{-\mu}\left(\bra{\psi^{-}(t)}A\ket{\psi(t)}^{*}+\bra{\psi^{-}(t)}A\ket{\psi(t)}\right)\right].\\
\end{array}
\end{equation} 
Therefore, from the expressions of $\bra{\psi^{-}(t)}A\ket{\psi(t)}$, $\bra{\psi(t)}A\ket{\psi(t)}$, and $\bra{\psi^{-}(t)}A\ket{\psi^{-}(t)}$, one finds
\begin{equation}
	\langle A \rangle_{q}^{\mu}\equiv\mathrm{Tr}\left(\rho^{\mu}(t)A\right).
\end{equation}
In order to obtain the expressions for $\langle X \rangle_{q}^{\mu}$, $\langle X^{2} \rangle_{q}^{\mu}$, $\langle P \rangle_{q}^{\mu}$ and $\langle P^{2} \rangle_{q}^{\mu}$, some remarks are in order: 
\begin{itemize}
	\item The derivation of $\bra{\psi(t)}A\ket{\psi(t)}$, with $A=X,X^{2},P,P^{2}$, is summarized in Appendix \ref{cap:kanaicaldirolasol};
	\item The expressions for the symmetrically reversed Gaussian $\bra{\psi^{-}(t)}A\ket{\psi^{-}(t)}$, with $A=X,X^{2},P,P^{2}$, can be obtained considering the same as for $\ket{\psi(t)}$, however, taking $x_{0}\rightarrow -x_{0}$ $p_{0}\rightarrow -p_{0}$;
	\item The terms $\bra{\psi^{-}(t)}A\ket{\psi(t)}$ and $\bra{\psi(t)}A\ket{\psi^{-}(t)}$, with $A=X,X^{2},P,P^{2}$, can be obtained considering Equation \eqref{eq:complexconj} and following the relations
	\begin{equation}
	\bra{\psi^{-}(t)}X\ket{\psi(t)}=\int_{-\infty}^{\infty}\langle x|\psi(t)\rangle\langle \psi^{-}(t)|x\rangle xdx,
	\end{equation}
	\begin{equation}
	\bra{\psi^{-}(t)}X^{2}\ket{\psi(t)}=\int_{-\infty}^{\infty}\langle x|\psi(t)\rangle\langle \psi^{-}(t)|x\rangle x^{2}dx,
	\end{equation}
	\begin{equation}
	\begin{array}{rl}
	\bra{\psi^{-}(t)}P\ket{\psi(t)}&\displaystyle=\mathrm{Tr}\left(P\ket{\psi(t)}\bra{\psi^{-}(t)}\right)=\int_{-\infty}^{\infty}\bra{q} P\ket{\psi(t)}\langle \psi^{-}(t)|x\rangle dx\\
	&\displaystyle=-i\hbar\int_{-\infty}^{\infty}\left(\frac{\partial}{\partial x}\langle x|\psi(t)\rangle\right)\langle \psi^{-}(t)|x\rangle dx,
	\end{array}
	\end{equation}
	and 
	\begin{equation}
	\bra{\psi^{-}(t)}P^{2}\ket{\psi(t)}=\left(-i\hbar\right)^{2}\int_{-\infty}^{\infty}\left(\frac{\partial^{2}}{\partial x^{2}}\langle x|\psi(t)\rangle\right)\langle \psi^{-}(t)|x\rangle dx.
	\end{equation}
\end{itemize}
The resulting expressions will be summarized next.
\section{\textbf{Normalization factor}}
Since
\begin{equation}
\langle \psi^{-}(t)|\psi^{-}(t)\rangle=\langle \psi(t)|\psi(t)\rangle=1
\end{equation}
and
\begin{equation}
\langle \psi(t)|\psi^{-}(t)\rangle=\langle \psi^{-}(t)|\psi(t)\rangle=\int_{-\infty}^{\infty}\Psi^{-}(x,t)\Psi(x,t)dx=\exp -\left[\frac{x_{0}^{2}}{2\Delta_{x,0}^{2}}+2\frac{p_{0}^{2}\Delta_{x,0}^{2}}{\hbar^{2}}\right]
\end{equation}
then
\begin{equation}
\begin{array}{rl}
N_{\mu}&=\mathrm{Tr}\left[\ket{\psi_{0}}\bra{\psi_{0}}+\ket{\psi_{0}^{-}}\bra{\psi_{0}^{-}}+\mathrm{e}^{-\mu}\left(\ket{\psi_{0}^{-}}\bra{\psi_{0}}+\ket{\psi_{0}}\bra{\psi_{0}^{-}}\right)\right]\\
&=\mathrm{Tr}\left[\ket{\psi(t)}\bra{\psi(t)}+\ket{\psi^{-}(t)}\bra{\psi^{-}(t)}+\mathrm{e}^{-\mu}\left(\ket{\psi^{-}(t)}\bra{\psi(t)}+\ket{\psi(t)}\bra{\psi^{-}(t)}\right)\right]\\
&=2\left(1+\mathrm{e}^{-\left[\mu+\frac{x_{0}^{2}}{2\Delta_{x,0}^{2}}+\frac{p_{0}^{2}4\Delta_{x,0}^{2}}{2\hbar^{2}}\right]}\right).
\end{array}
\end{equation}
\section{\textbf{Mean position}}
Given that
\begin{equation}
\bra{\psi(t)} X\ket{\psi(t)}=\mathrm{e}^{-\frac{c_{0}(t)}{2}}\left(x_{0}-c_{-}p_{0}\right),
\end{equation}
\begin{equation}
\bra{\psi^{-}(t)} X\ket{\psi^{-}(t)}=-\mathrm{e}^{-\frac{c_{0}(t)}{2}}\left(x_{0}-c_{-}p_{0}\right),
\end{equation}
\begin{equation}
\bra{\psi^{-}(t)}X\ket{\psi(t)}=\frac{i\exp \left[-\frac{c_{0}}{2}-\frac{x_{0}^{2}}{2\Delta_{x,0}^{2}}-\frac{p_{0}^{2}4\Delta_{x,0}^{2}}{2\hbar^{2}}\right]\left(16p_{0}\Delta_{x,0}^{4}+4c_{-}x_{0}\hbar^{2}\right)}{32\Delta_{x,0}^{4}\hbar^{2}},
\end{equation}
and
\begin{equation}
\bra{\psi(t)}X\ket{\psi^{-}(t)}=-\frac{i\exp \left[-\frac{c_{0}}{2}-\frac{x_{0}^{2}}{2\Delta_{x,0}^{2}}-\frac{p_{0}^{2}4\Delta_{x,0}^{2}}{2\hbar^{2}}\right]\left(16p_{0}\Delta_{x,0}^{4}+4c_{-}x_{0}\hbar^{2}\right)}{32\Delta_{x,0}^{4}\hbar^{2}},
\end{equation}
then
\begin{equation}
\langle X \rangle_{q}^{\mu}=\mathrm{Tr}\left(\rho^{\mu}(t)X\right)=0.
\end{equation}
\section{\textbf{Mean square position}}
From
\begin{equation}
\bra{\psi(t)} X^{2}\ket{\psi(t)}=\Delta_{x,t}^{2}+\mathrm{e}^{-c_{0}(t)}\left(x_{0}-c_{-}p_{0}\right)^{2},
\end{equation}
\begin{equation}
\bra{\psi^{-}(t)} X^{2}\ket{\psi^{-}(t)}=\Delta_{x,t}^{2}+\mathrm{e}^{-c_{0}(t)}\left(x_{0}-c_{-}p_{0}\right)^{2},
\end{equation}
\begin{equation}
\bra{\psi^{-}(t)}X^{2}\ket{\psi(t)}=\exp \left[-2\frac{x_{0}^{2}}{4\Delta_{x,0}^{2}}-2\frac{p_{0}^{2}4\Delta_{x,0}^{2}}{4\hbar^{2}}\right]\left(\Delta_{x,t}^{2}-\mathrm{e}^{-c_{0}}\left(2\frac{c_{-}\hbar x_{0}}{4\Delta_{x,0}^{2}}+\frac{p_{0}4\Delta_{x,0}^{2}}{2\hbar}\right)^{2}\right),
\end{equation}
and
\begin{equation}
\bra{\psi(t)}X^{2}\ket{\psi^{-}(t)}=\exp \left[-2\frac{x_{0}^{2}}{4\Delta_{x,0}^{2}}-2\frac{p_{0}^{2}4\Delta_{x,0}^{2}}{4\hbar^{2}}\right]\left(\Delta_{x,t}^{2}-\mathrm{e}^{-c_{0}}\left(c_{-}\frac{ x_{0}}{\Delta_{x,0}}\frac{\hbar}{2\Delta_{x,0}}+\frac{p_{0}}{\frac{\hbar}{2\Delta_{x,0}}}\Delta_{x,0}\right)^{2}\right)
\end{equation}
it follows that
\begin{equation}
\begin{array}{rl}
\langle X^{2} \rangle_{q}^{\mu}&=\mathrm{Tr}\left(\rho^{\mu}(t)X^{2}\right)\\
&\displaystyle=\Delta_{x,t}^{2}+\\
&\displaystyle+\frac{\mathrm{e}^{-c_{0}(t)}\left[\left(x_{0}-c_{-}p_{0}\right)^{2}-\exp \left[-\mu-2\frac{x_{0}^{2}}{4\Delta_{x,0}^{2}}-2\frac{p_{0}^{2}4\Delta_{x,0}^{2}}{4\hbar^{2}}\right]\left(c_{-}\frac{ x_{0}}{\Delta_{x,0}}\frac{\hbar}{2\Delta_{x,0}}+\frac{p_{0}}{\frac{\hbar}{2\Delta_{x,0}}}\Delta_{x,0}\right)^{2}\right]}{2\left(1+\exp -\left[\mu+2\frac{x_{0}^{2}}{4\Delta_{x,0}^{2}}+2\frac{p_{0}^{2}4\Delta_{x,0}^{2}}{4\hbar^{2}}\right]\right)}
\end{array}
\end{equation}
\section{\textbf{Mean momentum}}
Considering
\begin{equation}
\bra{\psi(t)} P\ket{\psi(t)}=e^{-\frac{c_{0}(t)}{2}}\left[\left(e^{c_{0}(t)}-c_{+}(t)c_{-}\right)p_{0}+c_{+}(t)x_{0}\right],
\end{equation}
\begin{equation}
\bra{\psi^{-}(t)} P\ket{\psi^{-}(t)}=-e^{-\frac{c_{0}(t)}{2}}\left[\left(e^{c_{0}(t)}-c_{+}(t)c_{-}\right)p_{0}+c_{+}(t)x_{0}\right],
\end{equation}
\begin{equation}
\bra{\psi^{-}(t)}P\ket{\psi(t)}=i\exp\left[-\frac{c_{0}}{2}-\frac{2x_{0}^{2}}{4\Delta_{x,0}^{2}}-\frac{   2p_{0}^{2}4\Delta_{x,0}^{2}}{4\hbar^{2}}\right],
\end{equation}
and
\begin{equation}
\bra{\psi(t)}P\ket{\psi^{-}(t)}=-i\exp\left[-\frac{c_{0}}{2}-\frac{2x_{0}^{2}}{4\Delta_{x,0}^{2}}-\frac{   2p_{0}^{2}\Delta_{q,0}^{2}}{4\hbar^{2}}\right],
\end{equation}
it follows that
\begin{equation}
\langle P \rangle_{q}^{\mu}=\mathrm{Tr}\left(\rho^{\mu}(t)P\right)=0.\label{eq:MM2}
\end{equation}
\section{\textbf{Mean square momentum}}
From
\begin{equation}
\begin{array}{rl}
\bra{\psi(t)} P^{2}\ket{\psi(t)}&\displaystyle=\frac{\mathrm{e}^{-c_{0}(t)}\hbar^{2}}{4\Delta_{x,0}^{2}}\left(\mathrm{e}^{c_{0}(t)}-c_{+}(t)c_{-}(t)\right)^{2}+\frac{\mathrm{e}^{-c_{0}(t)}4\Delta_{x,0}^{2}c_{+}^{2}(t)}{4} +\\
&\displaystyle+e^{-c_{0}(t)}\left(\left(e^{c_{0}(t)}-c_{+}(t)c_{-}\right)p_{0}+c_{+}(t)x_{0}\right)^{2},
\end{array}
\end{equation} 
\begin{equation}
\begin{array}{rl}
\bra{\psi^{-}(t)} P^{2}\ket{\psi^{-}(t)}&\displaystyle=\frac{\mathrm{e}^{-c_{0}(t)}\hbar^{2}}{4\Delta_{x,0}^{2}}\left(\mathrm{e}^{c_{0}(t)}-c_{+}(t)c_{-}(t)\right)^{2}+\frac{\mathrm{e}^{-c_{0}(t)}4\Delta_{x,0}^{2}c_{+}^{2}(t)}{4}+\\
&\displaystyle+e^{-c_{0}(t)}\left(\left(e^{c_{0}(t)}-c_{+}(t)c_{-}\right)p_{0}+c_{+}(t)x_{0}\right)^{2},
\end{array}
\end{equation}
\begin{equation}
\begin{array}{rl}
\bra{\psi^{-}(t)}P^{2}\ket{\psi(t)}&\displaystyle=\exp \left[-c_{0}-2\frac{x_{0}^{2}}{4\Delta_{x,0}^{2}}-2\frac{p_{0}^{2}4\Delta_{x,0}^{2}}{4\hbar^{2}}\right]\left[\frac{\hbar^{2}}{ 4\Delta_{x,0}^{2}}\left(\mathrm{e}^{c_{0}}-c_{+}c_{-}\right)^{2}+\frac{4\Delta_{x,0}^{2}c_{+}^{2}}{4 }\right.+\\
&\displaystyle\left.-\left(\left(\mathrm{e}^{c_{0}}-c_{+}c_{-}\right)\frac{\hbar}{2\Delta_{x,0}}\frac{x_{0}}{\Delta_{x,0}}-c_{+}\frac{p_{0}}{\frac{\hbar}{2\Delta_{x,0}}}\Delta_{x,0}\right)^{2}\right],
\end{array}
\end{equation}
and
\begin{equation}
\begin{array}{rl}
\bra{\psi(t)}P^{2}\ket{\psi^{-}(t)}&=\exp \left[-c_{0}-2\frac{x_{0}^{2}}{4\Delta_{x,0}^{2}}-2\frac{p_{0}^{2}4\Delta_{x,0}^{2}}{4\hbar^{2}}\right]\left[\frac{\hbar^{2}}{ 4\Delta_{x,0}^{2}}\left(\mathrm{e}^{c_{0}}-c_{+}c_{-}\right)^{2}+\frac{4\Delta_{x,0}^{2}c_{+}^{2}}{4 }\right.+\\
&\left.-\left(\left(\mathrm{e}^{c_{0}}-c_{+}c_{-}\right)\frac{\hbar}{2\Delta_{x,0}}\frac{x_{0}}{\frac{4\Delta_{x,0}}{2}}-c_{+}\frac{p_{0}}{\frac{\hbar}{2\Delta_{x,0}}}\Delta_{x,0}\right)^{2}\right],
\end{array}
\end{equation}
one shows, after some algebraic manipulation regarding Eqs. \eqref{eq:MMM} and \eqref{eq:deltap2}, that

\begin{equation}
\langle P^{2} \rangle_{q}^{\mu}=\mathrm{Tr}\left(\rho^{\mu}(t)P^{2}\right)=\bra{\psi(t)} P^{2}\ket{\psi(t)}-\left(\frac{p_{0}^{2}}{\frac{\hbar^{2}}{4\Delta_{x,0}^{2}}}+\frac{x_{0}^{2}}{\frac{4\Delta_{x,0}^{2}}{4}}\right)\frac{\bra{\psi(t)} \Delta P^{2}\ket{\psi(t)}}{1+\exp \left[\mu+\frac{1}{2}\left(\frac{p_{0}^{2}}{\frac{\hbar^{2}}{4\Delta_{x,0}^{2}}}+\frac{x_{0}^{2}}{\frac{4\Delta_{x,0}^{2}}{4}}\right)\right]}.\label{eq:MMM2}
\end{equation}
\section{\textbf{Summarizing the results}}
The expectation values obtained for the state \eqref{eq:statot}, are summarized as
\begin{equation}
\langle X \rangle_{q}^{\mu}=\mathrm{Tr}\left(\rho^{\mu}(t)X\right)=0,
\end{equation}
\begin{equation}
\langle P \rangle_{q}^{\mu}=\mathrm{Tr}\left(\rho^{\mu}(t)P\right)=0,
\end{equation}
\begin{equation}
\begin{array}{rl}
\langle X^{2} \rangle_{q}^{\mu}&=\mathrm{Tr}\left(\rho^{\mu}(t)X^{2}\right)\\
&\displaystyle=\Delta_{x,t}^{2}+\\
&\displaystyle+\frac{\mathrm{e}^{-c_{0}(t)}\left[\left(x_{0}-c_{-}p_{0}\right)^{2}-\exp \left[-\mu-2\frac{x_{0}^{2}}{4\Delta_{x,0}^{2}}-2\frac{p_{0}^{2}4\Delta_{x,0}^{2}}{4\hbar^{2}}\right]\left(c_{-}\frac{ x_{0}}{\Delta_{x,0}}\frac{\hbar}{2\Delta_{x,0}}+\frac{p_{0}}{\frac{\hbar}{2\Delta_{x,0}}}\Delta_{x,0}\right)^{2}\right]}{2\left(1+\exp -\left[\mu+2\frac{x_{0}^{2}}{4\Delta_{x,0}^{2}}+2\frac{p_{0}^{2}4\Delta_{x,0}^{2}}{4\hbar^{2}}\right]\right)}
\end{array}
\end{equation}
and
\begin{equation}
\langle P^{2} \rangle_{q}^{\mu}=\mathrm{Tr}\left(\rho^{\mu}(t)P^{2}\right)=\bra{\psi(t)} P^{2}\ket{\psi(t)}-\left(\frac{p_{0}^{2}}{\frac{\hbar^{2}}{4\Delta_{x,0}^{2}}}+\frac{x_{0}^{2}}{\Delta_{x,0}^{2}}\right)\frac{\bra{\psi(t)} \Delta P^{2}\ket{\psi(t)}}{1+\exp \left[\mu+\frac{1}{2}\left(\frac{p_{0}^{2}}{\frac{\hbar^{2}}{4\Delta_{x,0}^{2}}}+\frac{x_{0}^{2}}{\Delta_{x,0}^{2}}\right)\right]}.
\end{equation}

\end{document}